\def\techreport{}
\tikzstyle{distr}=[inner sep=0mm, minimum size=1mm, draw, circle, fill]
\tikzstyle{state}=[inner sep=0.3mm, thick, minimum size=4mm, draw]
\tikzstyle{trarr}=[semithick, -latex, rounded corners]
\def\newcommandwl#1#2{\newcommand{#1}{#2}\withlinks{#1}}%
\newtheorem{example}{Example}
\newtheorem{lemma}{Lemma}
\newtheorem{theorem}{Theorem}
\newtheorem{remark}{Remark}
\newtheorem{corollary}{Corollary}
\newtheorem{proposition}{Proposition}
\newtheorem{claim}{Claim}
\newcommand{\violating}[2]{{\cal C}(#1,#2)} %
\newcommand{\exact}[2]{{\cal D}(#1,#2)} %
\newcommand{\staymec}[1]{R_{#1}} %
\newcommandwl{\lraname}{\mathit{mp}} %
\newcommandwl{\lrvlname}{\mathit{lv}} %
\newcommandwl{\lrvhname}{\mathit{hv}} %
\newcommand{\lra}[1]{\lraname(#1)} %
\newcommand{\lrvl}[1]{\lrvlname(#1)} %
\newcommand{\lrvh}[3]{\lrvhname^{#1,#2}(#3)} %
\newcommandwl{\MECuni}{\mathcal{C}}
\newcommandwl{\MEC}{C}
\newcommand{\MECact}[1]{#1\cap A}
\newcommand{\MECstate}[1]{#1\cap S}
\newcommand{\BSCCuni}[1]{\mathit{BSCC}(#1)}
\newcommand{\BSCCact}[1]{#1\cap A}
\newcommand{\BSCCstate}[1]{#1\cap S}
\newcommandwl{\lrvname}{\mathit{lv}} %
\newcommandwl{\coBuchi}{\mathsf{coBuchi}}
\newcommandwl{\rvend}{X^{\mathit{end}}}
\newcommandwl{\rvswitch}{X^{\mathit{switch}}}
\newcommandwl{\rvsim}{X^{\mathit{sim}}}
\newcommandwl{\rvoth}{X^{\mathit{oth}}}
\newcommandwl{\counter}{\mathit{cntr}}
\newcommandwl{\rmdef}{\mathop{\stackrel{\mbox{\rm {\tiny def}}}{=}}}
\newcommand{\ceil}[1]{\lceil #1\rceil}
\newcommand{\tp}[1]{\langle #1\rangle}
\newcommandwl{\last}{\mathit{last}}
\newcommandwl{\init}{\mathit{init}}
\newcommandwl{\dist}{\mathit{dist}}
\newcommandwl{\St}{\Sigma}
\newcommandwl{\StFL}{\St^{FL}}
\newcommandwl{\StFRM}{\St^{FRM}}
\newcommandwl{\StM}{\St^{M}}
\newcommand{\seq}[1]{\langle{#1}\rangle}
\newcommand{\run}[1]{\mathit{Run}(#1)}
\newcommandwl{\Prb}{\mathbb{P}}
\renewcommand{\Pr}[3]{\Prb^{#1}_{#2}\hspace{-0.16em}\left[{#3}\right]}   %
\newcommandwl{\Exp}{\mathbb{E}}
\newcommandwl{\Var}{\mathbb{V}}
\newcommand{\Va}[3]{\Var^{#1}_{#2}\hspace{-0.16em}\left[{#3}\right]}   %
\newcommand{\Ex}[3]{\Exp^{#1}_{#2}\hspace{-0.16em}\left[{#3}\right]}   %
\newcommandwl{\reach}{\mathit{Reach}}
\newcommand{\lrI}[2]{\mathrm{lr}_{\mathrm{inf}}(#1,#2)}  %
\newcommand{\lrS}[2]{\mathrm{lr}_{\mathrm{sup}}(#1,#2)}  %
\newcommand{\lr}[2]{\mathrm{lr}(#1,#2)}  %
\newcommand{\lrLim}[1]{\mathrm{lr}(#1)}  %
\newcommand{\freq}[3]{\mathrm{freq}(#1,#2,#3)}  %
\newcommand{\freqI}[3]{\mathrm{freq}_{\mathrm{inf}}(#1,#2,#3)}
\newcommand{\freqS}[3]{\mathrm{freq}_{\mathrm{sup}}(#1,#2,#3)}
\newcommand{\comp}[2]{#1|#2}
\newcommand{\src}[1]{\mathit{Src}(#1)}
\newcommand{\act}[1]{\mathit{Act}(#1)}
\newcommand{\calF}{\mathcal{F}}
\newcommandwl{\indi}{\mathbf{1}}
\newcommand{\ind}[1]{\indi_{#1}}
\newcommand{\Prob}[1]{\mathit{Prob}(#1)}
\newcommandwl{\Mec}{\mathit{MEC}}
\newcommandwl{\Qset}{\mathbb{Q}}
\newcommandwl{\Nset}{\mathbb{N}}
\newcommandwl{\Rset}{\mathbb{R}}
\newcommandwl{\Zset}{\mathbb{Z}}
\newcommandwl{\PSPACE}{\mathbf{PSPACE}}
\newcommandwl{\NP}{\mathbf{NP}}
\newcommand{\tran}[1]{{}\mathchoice%
    {\stackrel{#1}{\rightarrow}}
    {\mathop {\smash\rightarrow}\limits^{\vrule width 0pt height 0pt depth 4pt\smash{#1}}}
    {\stackrel{#1}{\rightarrow}}
    {\stackrel{#1}{\rightarrow}}
{}}
\newcommandwl{\pat}{\omega}
\newcommandwl{\Pat}{\mathit{Runs}}
\newcommandwl{\fpat}{w}
\newcommandwl{\mem}{\mathsf{M}}
\newcommandwl{\Cone}{\mathit{Cone}}
\newcommandwl{\reals}{\mathbb{R}}
\newcommand{\lrIf}[1]{\mathrm{lr}_{\mathrm{inf}}(#1)}  %
\newcommand{\lrSf}[1]{\mathrm{lr}_{\mathrm{sup}}(#1)}  %
\newcommandwl{\AcEx}{\mathsf{AcEx}}
\newcommandwl{\AcPr}{\mathsf{AcSt}}
\begin{document}

\title{Trading Performance for Stability in Markov Decision Processes}

\author{\IEEEauthorblockN{Tom\'{a}\v{s} Br\'{a}zdil\IEEEauthorrefmark{1},
Krishnendu~Chatterjee\IEEEauthorrefmark{2},
Vojt\v{e}ch Forejt\IEEEauthorrefmark{3}, and
Anton\'{\i}n Ku\v{c}era\IEEEauthorrefmark{1}}
\IEEEauthorblockA{\IEEEauthorrefmark{1}Faculty of Informatics, Masaryk University \quad (\{xbrazdil,kucera\}@fi.muni.cz)}
\IEEEauthorblockA{\IEEEauthorrefmark{2}IST Austria \quad({krish.chat@gmail.com})}
\IEEEauthorblockA{\IEEEauthorrefmark{3}Department of Computer Science, University of Oxford \quad ({vojfor@cs.ox.ac.uk})}
}

\maketitle
\begin{abstract}
  We study the complexity of central controller synthesis 
problems for finite-state Markov decision processes, where the objective
is to optimize \emph{both} the expected mean-payoff performance of the system 
and its stability.
We argue that the basic theoretical notion of expressing the stability in
terms of the variance of the mean-payoff (called {\em global} variance in our 
paper) is not always sufficient, since it
ignores possible instabilities on respective runs. For this reason we
propose alernative definitions of stability, which we call {\em local}
and {\em hybrid} variance, and which express how rewards on each run deviate from 
the run's own mean-payoff and from the expected mean-payoff, respectively.

We show that a strategy ensuring both the expected mean-payoff and the variance
below given bounds requires randomization and memory, under all the above semantics of
variance. We then look at the problem of determining whether there is a such
a strategy. For the global variance, we show that the problem is in PSPACE, and that the answer can be
approximated in pseudo-polynomial time. For the hybrid variance, the analogous
decision problem is in NP, and a polynomial-time approximating algorithm also exists.
For local variance, we show that the decision problem is in NP.
Since the overall performance can be traded for stability (and vice versa),
we also present algorithms for approximating the associated Pareto curve in 
all the three cases.

Finally, we study a special case of the decision problems, where we require a 
given expected mean-payoff together with {\em zero} variance. Here we show 
that the problems can be all solved in polynomial time.

\end{abstract}

\section{Introduction}
\label{sec-intro}

Markov decision processes (MDPs) are a standard model for stochastic
dynamic optimization. Roughly speaking, an MDP consists of a finite
set of states, where in each state, one of the finitely many actions 
can be chosen by a controller. For every action, there is a fixed  
probability distribution over the states. The execution begins in
some initial state where the controller selects an outgoing action, 
and the system evolves into another state according
to the distribution associated with the chosen action. 
Then, another action is chosen by the controller, 
and so on.  A \emph{strategy} is a recipe for choosing actions. 
In general, a strategy may depend on the execution history
(i.e., actions may be chosen differently when revisiting the same 
state) and the choice of actions can be randomized (i.e., the 
strategy specifies a probability distribution over the available
actions). Fixing a strategy for the controller makes the behaviour
of a given MDP fully probabilistic and determines the usual 
probability space over its \emph{runs}, i.e., infinite
sequences of states and actions.

A fundamental concept of performance and dependability analysis
based on MDP models is \emph{mean-payoff}. Let us assume that every
action is assigned some rational \emph{reward},
which corresponds to some costs (or gains) caused by the action.
The mean-payoff of a given run is then defined as the long-run
average reward per executed action, i.e., the limit of partial
averages computed for longer and longer prefixes of a given run. 
For every strategy~$\sigma$, the overall performance (or throughput) of 
the system controlled by~$\sigma$ then corresponds to the 
expected value of mean-payoff, i.e., the \emph{expected mean-payoff}. 
It is well known (see, e.g., \cite{Puterman:book}) that optimal 
strategies for minimizing/maximizing the expected mean-payoff are
positional (i.e., deterministic and independent of execution history),
and can be computed in polynomial time. However, the quality of
services provided by a given system often depends not only on
its overall performance, but also on its \emph{stability}. 
For example, an optimal controller for a live video streaming 
system may achieve the expected throughput of approximately $2$~MBits/sec.
That is, if a user connects to the server many times, he gets
$2$~Mbits/sec connection \emph{on average}. If an acceptable video
quality requires at least $1.8$~Mbits/sec, the user is also
interested in the likelihood that he gets at least 
$1.8$~Mbits/sec. That is, he requires a certain level of
\emph{overall stability} in service quality, which can be measured 
by the \emph{variance} of mean-payoff, called \emph{global variance} 
in this paper. The basic computational question is \emph{``given
rationals $u$ and $v$, is there a strategy that achieves the
expected mean-payoff $u$ (or better) and variance $v$ (or better)?''}.
Since the expected mean-payoff can be ``traded'' for smaller global 
variance, we are also interested in approximating the associated 
\emph{Pareto curve} consisting of all points $(u,v)$ such that 
(1)~there is a strategy achieving the expected mean-payoff $u$ and
global variance $v$; and (2)~no strategy can improve $u$ or $v$ 
without worsening the other parameter. 

The global variance says how much the actual mean-payoff of a run tends
to deviate from the expected mean-payoff. However, it does not say 
\emph{anything} about the stability of individual runs. To see this, 
consider again the video streaming system example, where we now assume
that although the connection is guaranteed to be fast on average, 
the amount of data delivered per second may change
substantially along the executed run for example due to a faulty network infrastructure.
For simplicity, let us suppose that performing 
one action in the underlying MDP model
takes one second, and the reward assigned to a given action corresponds to 
the amount of transferred data. The above scenario can be modeled by saying that
$6$~Mbits are downloaded every third action, and $0$ Mbits are downloaded
in other time frames. Then the user gets $2$~Mbits/sec connection almost 
surely, but since the individual runs are apparently ``unstable'',
he may still see a lot of stuttering in the video stream.
As an appropriate measure
for the stability of individual runs, we propose \emph{local variance},
which is defined as the long-run average of $(r_i(\pat) - \lra{\pat})^2$,
where $r_i(\pat)$ is the reward of the $i$-th action executed in a run 
$\pat$ and $\lra{\pat}$ is the mean-payoff of~$\pat$. Hence, local variance 
says how much the rewards of the actions executed along a given 
run deviate from the mean-payoff of the run on average. For example,
if the mean-payoff of a run is $2$~Mbits/sec and all of the executed
actions deliver $2$~Mbits, then the run is ``absolutely smooth'' and its
local variance is zero.  The level of ``local stability''
of the whole system (under a given strategy) then corresponds to the
\emph{expected local variance}. The basic algorithmic problem for
local variance is similar to the one for global variance, i.e., 
\emph{``given rationals $u$ and $v$, is there a strategy that achieves the
expected mean-payoff $u$ (or better) and the expected local variance 
$v$ (or better)?''}. We are also interested in the underlying
Pareto curve. 

Observe that the global variance and the expected local variance capture
different and to a large extent \emph{independent} forms of systems' 
(in)stability. Even if the global variance is small, the expected local
variance may be large, and vice versa. In certain situations, we might
wish to minimize \emph{both} of them at the same.
Therefore, we propose 
another notion of \emph{hybrid variance} as a measure for ``combined''
stability of a given system. Technically, the hybrid variance of
a given run $\pat$ is defined as the long-run average of 
$(r_i(\pat) - \Ex{}{}{\lraname})^2$, where $\Ex{}{}{\lraname}$ is the
expected mean-payoff. That is, hybrid variance says how much
the rewards of individual actions executed along a given 
run deviate from the expected mean-payoff on average. The combined
stability of the system then corresponds to the
\emph{expected hybrid variance}. One of the most crucial properties that motivate the
definition of hybrid variance is that the expected hybrid
variance is small iff both the global variance and the expected local
variance are small (in particular, for a prominent class of strategies
the expected hybrid variance is a sum of expected local and global variances). The studied algorithmic problems for hybrid
variance are analogous to the ones for global and local variance.

\smallskip\noindent\textbf{The Results.} 
Our results are as follows:
\begin{enumerate}

\item \emph{(Global variance).} 
The global variance problem was considered before but only under the
restriction of memoryless strategies~\cite{Sobel94}. We first show that in 
general randomized memoryless strategies are not sufficient for Pareto optimal
points for global variance (Example~\ref{ex:global-mem}).
We then establish that 2-memory strategies are sufficient.
We show that the basic algorithmic problem for global variance is
in PSPACE, and the approximate version can be solved in pseudo-polynomial
time.

\item \emph{(Local variance).} 
The local variance problem comes with new conceptual challenges. 
For example, for unichain MDPs, deterministic memoryless strategies are
sufficient for global variance, whereas we show (Example~\ref{ex:local-mem}) 
that even for unichain MDPs both randomization and memory is required for local
variance.
We establish that 3-memory strategies are
sufficient for Pareto optimality for local variance. 
We show that the basic algorithmic problem (and hence also the approximate version)
is in NP.

\item \emph{(Hybrid variance).} 
After defining hybrid variance, we establish that for Pareto optimality 
2-memory strategies are  sufficient, and in general randomized memoryless 
strategies are not.
We show the basic algorithmic problem for hybrid variance is
in NP, and the approximate version can be solved in polynomial time.

\item \emph{(Zero variance).} 
Finally, we consider the problem where the variance is optimized
to zero (as opposed to a given non-negative number in the general case).
In this case, we present polynomial-time algorithms to compute the optimal
mean-payoff that can be ensured with zero variance (if zero variance can 
be ensured) for all the three cases.
The polynomial-time algorithms for zero variance for mean-payoff objectives 
is in sharp contrast to the NP-hardness for cumulative reward MDPs~\cite{ICML2011Mannor_156}.
\end{enumerate}

To prove the above results, one has to overcome various obstacles.
For example, although at multiple places
we build on the techniques of \cite{EKVY:multi-objectives} and \cite{BBCFK:MDP-two-views} which allow us to deal with
maximal end components of an MDP separately, we often need to extend these techniques, since unlike the above
works which study multiple ``independent'' objectives, in the case of global and hybrid variance
any change of value in the expected mean payoff implies a change of value of the variance.
Also, since we do not
impose any restrictions on the structure of the strategies, we cannot even assume
that the limits defining the mean-payoff and the respective variances exist; this becomes
most apparent in the case of local and hybrid variance, where we need to rely on delicate
techniques of selecting runs from which the limits can be extracted.
Another complication is that while most of the work on multi-objective verification deals
with objective functions which are linear, our objective functions are inherently quadratic due to the definition of variance.

The summary of our results is presented in Table~\ref{table:results}.
A simple consequence of our results is that the Pareto curves can be approximated in pseudo-polynomial
time in the case of global and hybrid variance, and in exponential time for local variance.

\noindent
\begin{table*}
\begin{center}
\begin{tabular}{|l|l|l|l|l|}
\hline
& Memory size & Complexity & Approx. complexity & Zero-var. complexity \\
\hline
\hline
Global & 2-memory & PSPACE (Theorem~\ref{thm:global}) & Pseudo-polynomial (Theorem~\ref{thm:global}) & PTIME (Theorem~\ref{thm:zero}) \\
&  LB: Example~\ref{ex:global-mem}, UB: Theorem~\ref{thm:global}  &  &  & \\
\hline
Local & LB: 2-memory (Example~\ref{ex:local-mem}) & NP (Theorem~\ref{thm:local-np-alg}) & NP & PTIME (Theorem~\ref{thm:zero})\\
& UB: 3-memory (Theorem~\ref{thm:local-np-alg}) & & & \\
\hline
Hybrid & 2-memory & NP (Theorem~\ref{thm:hybrid}) & PTIME (Theorem~\ref{thm:hybrid}) & Quadratic (Theorem~\ref{thm:zero})\\
 & LB: Example~\ref{ex:hybrid-mem-ran}, UB: Theorem~\ref{thm:hybrid} & & & \\
\hline
\end{tabular}
\end{center}
\caption{Summary of the results, where LB and UB denotes lower- and upper-bound, respectively.\label{table:results}}
\end{table*}

\noindent\textbf{Related Work.} %
Studying the trade-off between multiple objectives in an MDP
has attracted significant attention in the recent years 
(see~\cite{Altman} for overview).
In the verification area,
MDPs with multiple mean-payoff objectives~\cite{BBCFK:MDP-two-views}, 
discounted objectives~\cite{CHMH:multi-objectives}, 
cumulative reward objectives~\cite{FKP12}, and 
multiple $\omega$-regular objectives~\cite{EKVY:multi-objectives} 
have been studied.
As for the stability of a system,
the variance penalized mean-payoff problem (where the mean-payoff is penalized by a 
constant times the variance) under memoryless (stationary) strategies was studied in~\cite{FKL89}.
The mean-payoff variance trade-off problem for unichain MDPs was considered
in~\cite{Chung94}, where a solution using quadratic programming was designed;
under memoryless (stationary) strategies the problem was considered in~\cite{Sobel94}.
All the above works for mean-payoff variance trade-off consider the global 
variance, and are restricted to memoryless strategies.
The problem for 
general strategies and global variance was not solved before.
Although restrictions to unichains or memoryless strategies
are feasible in some areas, many systems modelled as MDPs might require more general approach. For example,
a decision of a strategy to shut the system down might make it impossible to return the running state again, yielding
in a non-unichain MDP.
Similarly, it is natural to synthesise strategies that change their decisions over time.

As regards other types of objectives, no work considers the local and hybrid variance problems.
The variance problem for {\em discounted} reward MDPs was studied in~\cite{Sobel82}.
The trade-off of expected value and variance of {\em cumulative} reward in MDPs was studied 
in~\cite{ICML2011Mannor_156}, showing the zero variance problem to be NP-hard.
This contrasts with our results, since in our setting we present polynomial-time algorithms for zero variance.

\section{Preliminaries}
\label{sec-prelim}

We use $\Nset$, $\Zset$, $\Qset$, and $\Rset$ to denote the sets of
positive integers, integers, rational numbers, and real numbers,
respectively. 
We assume familiarity with basic notions of probability
theory, e.g., \emph{probability space}, \emph{random variable}, or 
\emph{expected value}.
As usual, a \emph{probability distribution} over a finite or 
countable set $X$ is a function
$f : X \rightarrow [0,1]$ such that \mbox{$\sum_{x \in X} f(x) = 1$}. 
We call $f$ \emph{positive} if 
$f(x) > 0$ for every $x \in X$, \emph{rational} if $f(x) \in
\Qset$ for every $x \in X$, and \emph{Dirac} if $f(x) = 1$ for some 
$x \in X$. The set of all distributions over $X$ is denoted by
$\dist(X)$.

For our purposes, a \emph{Markov chain} is a triple 
\mbox{$M = (L,\tran{},\mu)$} where $L$ is a finite or countably 
infinite set of \emph{locations}, 
\mbox{${\tran{}} \subseteq L \times (0,1] \times L$} is a 
\emph{transition relation}
such that for each fixed $\ell \in L$, $\sum_{\ell \tran{x} \ell'} x = 1$, and
$\mu$ is the \emph{initial probability distribution} on~$L$.
A \emph{run} in $M$ is an infinite sequence $\pat = \ell_1 \ell_2 \ldots$
of locations such that $\ell_i \tran{x} \ell_{i{+}1}$ for every $i \in \Nset$.
A \emph{finite path} in $M$ is a finite prefix of a run. Each finite
path $\fpat$ in $M$ determines the set $\Cone(\fpat)$ consisting of
all runs that start with $\fpat$. To $M$ we associate the probability 
space $(\Pat_M,\calF,\mathbb{P})$, where $\Pat_M$ is the set of all 
runs in $M$, $\calF$ is the $\sigma$-field generated by all $\Cone(\fpat)$ for finite paths $\fpat$,
and $\mathbb{P}$ is the unique probability measure such that
$\mathbb{P}(\Cone(\ell_1,\ldots,\ell_k)) = 
\mu(\ell_1) \cdot \prod_{i=1}^{k-1} x_i$, where
$\ell_i \tran{x_i} \ell_{i+1}$ for all \mbox{$1 \leq i < k$} (the empty
product is equal to $1$).

\smallskip\noindent{\bf Markov decision processes.} 
A \emph{Markov decision process} (MDP) is a tuple $G=(S,A,\mathit{Act},\delta)$ 
where $S$ is a \emph{finite} set of states, $A$ is a \emph{finite} set 
of actions, $\mathit{Act} : S\rightarrow 2^A\setminus \{\emptyset\}$ is 
an action enabledness 
function that assigns to each state $s$ the set $\act{s}$ of actions enabled 
at $s$, and $\delta : S\times A\rightarrow \dist(S)$ is a probabilistic 
transition function  that given a state $s$ and an action 
$a \in \act{s}$ enabled at $s$ gives a probability distribution over the 
successor states.
For simplicity, we assume that every action is enabled in exactly one state, 
and we denote this state $\src{a}$. Thus, henceforth we will assume that 
$\delta: A \rightarrow \dist(S)$.

A \emph{run} in $G$ is an infinite alternating sequence of states
and actions $\pat=s_1 a_1 s_2 a_2\ldots$
such that for all $i \geq 1$, $\src{a_i}=s_i$ and
$\delta(a_i)(s_{i+1}) > 0$. 
We denote by $\Pat_G$ the set of all runs in~$G$.
A \emph{finite path} of length~$k$ in~$G$ is a finite prefix
$\fpat = s_1 a_1\ldots a_{k-1} s_k$ of a run, and we use
$\last(\fpat)=s_k$ for the last state of $\fpat$.
Given a run $\pat \in \Pat_G$,  we denote by $A_i(\pat)$ the 
$i$-th action $a_i$ of~$\pat$.

A pair $(T,B)$ with $\emptyset\neq T\subseteq S$ and $B\subseteq \bigcup_{t\in T}\act{t}$
is an \emph{end component} of $G$
if (1) for all $a\in B$, if $\delta(a)(s')>0$ then $s'\in T$;
and (2) for all $s,t\in T$ there is a finite path 
$\fpat = s_1 a_1\ldots a_{k-1} s_k$ such that $s_1 = s$, $s_k=t$, and all states
and actions that appear in $\fpat$ belong to $T$ and $B$, respectively.
An end component $(T,B)$ is a \emph{maximal end component (MEC)}
if it is maximal wrt.\ pointwise subset ordering. The set of all MECs of
$G$ is denoted by $\Mec(G)$. Given an end component
$C=(T,B)$, we sometimes abuse notation by considering $C$ as the disjoint
union of $T$ and $B$ (for example, we write $S\cap C$ to denote the set $T$).
For a given $C \in \Mec(G)$, we use $\staymec{C}$ to denote the set of all runs 
$\pat =s_1 a_1 s_2 a_2\ldots$ that eventually \emph{stay} in $C$, i.e., 
there is $k \in \Nset$ such that for all $k' \geq k$ we have that
$s_{k'},a_{k'} \in C$.

\smallskip\noindent{\bf Strategies and plays.} 
Intuitively, a strategy in an MDP $G$ is a ``recipe'' to choose actions.
Usually, a strategy is formally defined as a function 
$\sigma : (SA)^*S \to \dist(A)$ that given a finite path~$\fpat$, representing 
the execution history, gives a probability distribution over the 
actions enabled in~$\last(\fpat)$. In this paper we adopt a definition which is equivalent to the standard
one, but more convenient for our purpose.
Let $\mem$ be a finite or countably
infinite set of \emph{memory elements}. A \emph{strategy}
is a triple
$\sigma = (\sigma_u,\sigma_n,\alpha)$, where 
$\sigma_u: A\times S \times \mem \to \dist(\mem)$ and 
$\sigma_n: S \times \mem \to \dist(A)$ are \emph{memory update}
and \emph{next move} functions, respectively, and $\alpha$ is
an initial distribution on memory elements. We require that for 
all $(s,m) \in S \times \mem$, the distribution $\sigma_n(s,m)$ assigns a
positive value only to actions enabled at~$s$. The set of all
strategies is denoted by $\St$ (the underlying MDP~$G$ will be always
clear from the context).

A \emph{play} of $G$ determined by an initial state
$s\in S$ and a strategy $\sigma$ is a Markov chain 
$G^\sigma_s$ (or $G^\sigma$ if $s$ is clear from the context)
where the set of locations is $S \times \mem \times A$,
the initial distribution $\mu$ is positive only on (some) elements 
of $\{s\} \times  \mem \times A$ where
$\mu(s,m,a) = \alpha(m) \cdot \sigma_n(s,m)(a)$, 
and $(t,m,a) \tran{x} (t',m',a')$ iff 
$
   x  = 
   \delta(a)(t')  \cdot \sigma_u(a,t',m)(m') \cdot \sigma_n(t',m')(a')
      >   0
$.
Hence, $G^\sigma_s$ starts in a location chosen randomly according
to $\alpha$ and $\sigma_n$. In a current location $(t,m,a)$, 
the next action to be performed is $a$, hence the probability of entering
$t'$ is $\delta(a)(t')$. The probability of updating the memory to $m'$
is $\sigma_u(a,t',m)(m')$, and the probability of selecting $a'$ as 
the next action is $\sigma_n(t',m')(a')$. Since these choices
are independent (in the probability theory sense), we obtain the product above. 

Note that every run in $G^\sigma_s$ determines a unique run in~$G$.
Hence, every notion originally defined for the runs in~$G$ can
also be used for the runs in $G^\sigma_s$, and we use this fact
implicitly at many places in this paper. For example, we use the 
symbol $\staymec{C}$ to denote the set of all runs in~$G^\sigma_s$ that 
eventually stay in~$C$, certain functions originally defined over $\Pat_G$
are interpreted as random variables over the runs in~$G^\sigma_s$, etc.

\smallskip\noindent{\bf Strategy types.}
In general, a strategy may use infinite memory, and both 
$\sigma_u$ and $\sigma_n$ may randomize. A strategy is 
\emph{pure} (or \emph{deterministic}) if $\alpha$ is Dirac and
  both the memory update and the next move functions give a Dirac
  distribution for every argument, and
 \emph{stochastic-update} if $\alpha$, $\sigma_u$, and $\sigma_n$
  are unrestricted.
Note that every pure strategy is 
stochastic-update. 
A \emph{randomized} strategy is a strategy which is not necessarily pure.
We also classify the strategies according to the size of memory
they use. Important subclasses are 
\emph{memoryless} strategies, in which $\mem$ is a singleton,
\emph{$n$-memory} strategies, in which $\mem$ has exactly $n$~elements, and
\emph{finite-memory} strategies, in which $\mem$ is finite.

 For a finite-memory strategy $\sigma$, a 
 \emph{bottom strongly connected component} (BSCC) of $G^\sigma_s$ is 
 a subset of locations \mbox{$W \subseteq S\times \mem \times A$} 
 such that for all $\ell_1 \in W$ and \mbox{$\ell_2 \in S\times \mem\times A$}
 we have that (i) if $\ell_2$ is reachable from $\ell_1$, then 
 $\ell_2 \in W$, and (ii) for all $\ell_1,\ell_2 \in W$ we have that
 $\ell_2$ is reachable from $\ell_1$.
 Every BSCC $W$ determines a unique end component 
 $(\{s\mid (s,m,a)\in W\},\{a\mid (s,m,a)\in W\})$,
 and we sometimes do not distinguish between $W$ 
 and its associated end component.

An MDP is {\em strongly connected} if all its states form a
single (maximal) end component.
A strongly connected  MDP is a {\em unichain}
if for all end components $(T,B)$ we have $T=S$.

Throughout this paper we will use the following standard result
about MECs.

\begin{lemma}[{\cite[Proposition~3.1]{CY:MDP-regular-TAC}}]\label{lemma:stay-mec}
Almost all runs eventually end in a MEC, i.e. $\Pr{\sigma}{s}{\bigcup_{C\in Mec(G)} \staymec{C}}=1$ for all $\sigma$ and $s$.
\end{lemma}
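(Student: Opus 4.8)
The plan is to show that for $\Pr{\sigma}{s}{\cdot}$-almost every run $\pat$ of the play $G^\sigma_s$, the set of state-action pairs occurring infinitely often along $\pat$ forms an end component. Since every end component is contained in some $C\in\Mec(G)$, and a run visits only finitely often every state and action outside its ``infinitely often'' set, this immediately gives that almost every run eventually stays in some $C\in\Mec(G)$, i.e. lies in $\staymec{C}$. Fix $\sigma$ and $s$. For a run $\pat=s_1a_1s_2a_2\ldots$ write $\mathit{Inf}(\pat)=(T(\pat),B(\pat))$, where $T(\pat)\subseteq S$ and $B(\pat)\subseteq A$ are the sets of states and of actions, respectively, that occur infinitely often along $\pat$. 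As $S$ and $A$ are finite, $\mathit{Inf}$ takes only finitely many values, so it suffices to prove $\Pr{\sigma}{s}{\mathit{Inf}(\pat)=(T,B)}=0$ for every pair $(T,B)$ that is not an end component.

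The key auxiliary fact is: for every action $a$ and state $s'$ with $\delta(a)(s')>0$, almost surely, if $a$ is played infinitely often then $s'$ is visited infinitely often. I would prove this by a conditional Borel--Cantelli (L\'evy) argument: letting $\mathcal{F}_k$ be the $\sigma$-field generated by the first $k$ locations of $G^\sigma_s$ and $E_k$ the event ``the $k$-th action is $a$ and the $(k{+}1)$-st state is $s'$'', we have $\Pr{\sigma}{s}{E_k \mid \mathcal{F}_k} = \delta(a)(s')\cdot\mathbf{1}[\text{the $k$-th action is }a]$, since in the Markov chain $G^\sigma_s$ the next state is obtained from the current action independently of the memory update and of the next-action choice. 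Hence on the event ``$a$ played infinitely often'' the sum $\sum_k \Pr{\sigma}{s}{E_k \mid \mathcal{F}_k}$ is infinite, so $E_k$ occurs infinitely often almost surely, i.e. $s'$ is visited infinitely often. Applying this simultaneously to all (finitely many) pairs $a,s'$ shows that almost surely $\mathit{Inf}(\pat)=(T,B)$ satisfies end-component condition~(1): every successor of an action of $B$ lies in $T$. Moreover every $t\in T$ is visited infinitely often, so some action of $\act{t}$ is played infinitely often, whence $B$ contains an action with source $t$. Consequently, $\Pr{\sigma}{s}{\mathit{Inf}(\pat)=(T,B)}=0$ whenever $(T,B)$ violates~(1) or has a state with no outgoing action in $B$.

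It remains to rule out pairs $(T,B)$ that satisfy~(1), have an enabled action of $B$ from every state of $T$, but are not strongly connected (hence not end components). For such a pair, consider the finite digraph on $T$ with an edge $t\to t'$ whenever some $a\in B$ with $\src{a}=t$ has $\delta(a)(t')>0$; by~(1) this digraph stays inside $T$, and since it is not strongly connected it has a bottom strongly connected component $T'\subsetneq T$. Set $B'=\{a\in B \mid \src{a}\in T'\}$; then $(T',B')$ is an end component: condition~(1) holds because $T'$ is bottom, and every state of $T'$ has an outgoing $B'$-action by the previous paragraph. Now on the event $\mathit{Inf}(\pat)=(T,B)$ there is a finite step after which every action played lies in $B$ and (by~(1)) every state visited lies in $T$; from that point on, once the run enters $T'$ it can never leave, because no action of $B$ with source in $T'$ escapes $T'$. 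Thus the run eventually stays in $T'$, contradicting that the states of the nonempty set $T\setminus T'$ are visited infinitely often. Hence $\Pr{\sigma}{s}{\mathit{Inf}(\pat)=(T,B)}=0$, completing the argument.

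The main obstacle is the auxiliary ``played i.o.\ implies visited i.o.'' fact: the relevant events are not independent because the strategy may correlate its choices across time and use infinite memory, so one must work with conditional probabilities and invoke L\'evy's conditional Borel--Cantelli lemma (equivalently a martingale / strong-Markov argument) rather than the classical independent version. The bottom-SCC decomposition and the ``no escape from a bottom SCC'' step are routine.
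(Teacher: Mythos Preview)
The paper does not prove this lemma at all; it simply cites it as Proposition~3.1 of Courcoubetis--Yannakakis. Your argument is correct and is essentially the classical proof: show that $\mathit{Inf}(\pat)$ is almost surely an end component by combining the conditional Borel--Cantelli lemma (to get closure under successors) with a bottom-SCC trapping argument (to get strong connectivity), and then observe that every end component sits inside some MEC. There is nothing to compare against in the paper itself; your write-up would serve as a self-contained proof where the paper only provides a reference.
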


\smallskip\noindent{\bf Global, local, and hybrid variance.}
Let $G=(S,A,\mathit{Act},\delta)$ be an MDP, and $r : A \to \Qset$ a
\emph{reward function}.  We define the \emph{mean-payoff} of a run 
$\pat \in \Pat_G$ by
\[
\lra{\pat} = \limsup_{n\rightarrow \infty} \frac{1}{n}\sum_{i=0}^{n-1} r(A_i(\pat)) \,.
\]
The expected value and variance of $\lraname$ in~$G^\sigma_s$ are denoted by
$\Ex{\sigma}{s}{\lraname}$ and $\Va{\sigma}{s}{\lraname}$, respectively
(recall that $\Va{\sigma}{s}{\lraname} =  
\Ex{\sigma}{s}{(\lraname - \Ex{\sigma}{s}{\lraname})^2} =
\Ex{\sigma}{s}{\lraname^2} - (\Ex{\sigma}{s}{\lraname})^2$).
Intuitively, $\Ex{\sigma}{s}{\lraname}$ corresponds to the ``overall
performance'' of~$G^\sigma_s$, and $\Va{\sigma}{s}{\lraname}$ is a measure
of ``global stability'' of~$G^\sigma_s$ indicating how
much the mean payoffs of runs in~$G^\sigma_s$ tend to deviate from
$\Ex{\sigma}{s}{\lraname}$ (see Section~\ref{sec-intro}).
In the rest of this paper, we refer to $\Va{\sigma}{s}{\lraname}$ 
as \emph{global variance}.

The stability of a given run $\pat \in \Pat_G$ (see Section~\ref{sec-intro})
is measured by its \emph{local variance} defined as follows:
\[
  \lrvl{\pat}  =
  \limsup_{n\rightarrow \infty} 
     \frac{1}{n}\sum_{i=0}^{n-1} \big( r(A_i(\pat)) - \lra{\pat} \big)^2
\]
Note that $\lrvl{\pat}$ is not really a ``variance'' in the usual sense
of probability theory\footnote{By investing some effort, one could perhaps 
find a random variable $X$ such that $\lrvl{\pat}$ is the variance of~$X$,
but this question is not really relevant---we only use $\lrvlname$ 
as a \emph{random variable which measures the level of local stability 
of runs}. One could perhaps study the variance of $\lrvlname$, but this
is beyond the scope of this paper. The same applies to the function
$\lrvhname$.}.
We call the function $\lrvl{\pat}$  ``local variance'' 
because we find this name suggestive; $\lrvl{\pat}$ is the long-run 
average square of the distance 
from~$\lra{\pat}$. The expected value of $\lrvlname$ in~$G^\sigma_s$ 
is denoted by $\Ex{\sigma}{s}{\lrvlname}$.

Finally, given a run $\pat$ in~$G^\sigma_s$, we define the 
\emph{hybrid variance} of $\omega$ in~$G^\sigma_s$ as follows:
\[
  \lrvhname(\pat)  =
  \limsup_{n\rightarrow \infty} 
     \frac{1}{n}\sum_{i=0}^{n-1} \left( r(A_i(\pat)) - 
     \Ex{\sigma}{s}{\lraname} \right)^2
\] 
Note that the definition of $\lrvhname(\pat)$ depends on the expected
mean payoff, and hence it makes sense only after fixing a strategy
$\sigma$ and an initial state~$s$. Sometimes we also write
$\lrvh{\sigma}{s}{\pat}$ instead of $\lrvhname(\pat)$ to prevent
confusions about the underlying $\sigma$ and~$s$. The expected value
of $\lrvhname$ in~$G^\sigma_s$ is denoted by
$\Ex{\sigma}{s}{\lrvhname}$. Intuitively, $\Ex{\sigma}{s}{\lrvhname}$
measures the ``combined'' stability of $G^\sigma_s$ (see 
Section~\ref{sec-intro}). 

\smallskip\noindent{\bf Pareto optimality.}
We say that a strategy $\sigma$ is \emph{Pareto optimal} in~$s$
wrt.\ global variance if for every 
strategy $\zeta$ we have that 
$(\Ex{\sigma}{s}{\lraname},\Va{\sigma}{s}{\lraname}) \ge
 (\Ex{\zeta}{s}{\lraname},\Va{\zeta}{s}{\lraname})$
implies $(\Ex{\sigma}{s}{\lraname},\Va{\sigma}{s}{\lraname}) =
 (\Ex{\zeta}{s}{\lraname},\Va{\zeta}{s}{\lraname})$, where
$\ge$ is the standard component-wise ordering.
Similarly,
we define Pareto optimality of $\sigma$ wrt.\ local and hybrid variance
by replacing $\Va{\alpha}{s}{\lraname}$ with  $\Ex{\alpha}{s}{\lrvlname}$
and $\Ex{\alpha}{s}{\lrvhname}$, respectively.
We choose the order $\ge$ for technical convenience,
if one wishes to maximize the expected value while minimizing the variance, it suffices to multiply all rewards by $-1$.
The \emph{Pareto curve}
for $s$ wrt.\ global, local, and hybrid variance consists of
all points of the form
$(\Ex{\sigma}{s}{\lraname},\Va{\sigma}{s}{\lraname})$, 
$(\Ex{\sigma}{s}{\lraname},\Ex{\sigma}{s}{\lrvlname})$, and
$(\Ex{\sigma}{s}{\lraname},\Ex{\sigma}{s}{\lrvhname})$, where $\sigma$
is a Pareto optimal strategy wrt.\  global, local, and hybrid variance,
respectively.

\smallskip\noindent{\bf Frequency functions.}
Let $C$ be a MEC. We say that $f:\MECact{C}\rightarrow [0,1]$ is a {\em frequency function on $C$} if
\begin{itemize}
\item $\sum_{a\in \MECact{C}} f(a)=1$
\item $\sum_{a\in \MECact{C}} f(a)\cdot \delta(a)(s)=\sum_{a\in \mathit{Act}(s)} f(a)$ for every $s\in \MECstate{C}$
\end{itemize}
Define $\lraname[f]:=\sum_{a\in C} f(a)\cdot r(a)$ and $\lrvname[f]:=\sum_{a\in C} f(a)\cdot (r(a)-\lraname[f])^2$.

\smallskip\noindent{\bf The studied problems.}
In this paper, we study the following basic problems connected
to the three stability measures introduced above (below $V_s^\sigma$
is either
$\Va{\sigma}{s}{\lraname}$, $\Ex{\sigma}{s}{\lrvlname}$, or $\Ex{\sigma}{s}{\lrvhname}$):
\begin{itemize}
\item \emph{Pareto optimal strategies and their memory}. Do Pareto optimal strategies exist for all points on the Pareto curve?
 Do Pareto optimal strategies require memory and randomization in general? Do strategies achieving non-Pareto points require memory and randomization in general?
\item \emph{Deciding strategy existence}.
  For a given MDP $G$, an initial
  state~$s$, a rational reward function $r$, and a point $(u,v) \in
  \Qset^2$, we ask whether there exists a strategy $\sigma$ such that
  $(\Ex{\sigma}{s}{\lraname},V_s^\sigma) \leq (u,v)$.
\item \emph{Approximation of strategy existence}. 
  For a given MDP $G$, an initial
  state~$s$, a rational reward function $r$, a number $\varepsilon$ and a point $(u,v) \in
  \Qset^2$, we want to get an algorithm which (a) outputs ``yes'' if there is a strategy $\sigma$ such that
  $(\Ex{\sigma}{s}{\lraname},V_s^\sigma) \leq (u -\varepsilon,v - \varepsilon)$;
  (b) outputs ``no'' if there is no strategy such that
  $(\Ex{\sigma}{s}{\lraname},V_s^\sigma) \leq (u,v)$.
\item \emph{Strategy synthesis.} If there exists a strategy  $\sigma$ such that
  $(\Ex{\sigma}{s}{\lraname},V_s^\sigma) \leq (u,v)$,
  we wish to \emph{compute} such strategy. Note that it is 
  not \emph{a priori} clear that $\sigma$ is finitely representable,
  and hence we also need to answer the question what \emph{type}
  of strategies is needed to achieve Pareto optimal points.
\item \emph{Optimal performance with zero-variance.} 
  Here we are interested in deciding if there exists a Pareto point
  of the form $(u,0)$ and computing the value of $u$,
  i.e., the optimal expected mean payoff
  achievable with ``absolute stability'' (note that the variance is always non-negative
  and its value $0$ corresponds to stable behaviours).
\end{itemize}

\begin{remark}\label{remark1}
If the approximation of strategy existence problem is decidable, 
we design the following algorithm to approximate the Pareto curve up to an arbitrarily
small given $\varepsilon > 0$. We compute a finite
set of points $P \subseteq \Qset^2$ such that (1) for every Pareto point
$(u,v)$ there is $(u',v') \in P$ with 
$(|u-u'|,|v-v'|) \leq (\varepsilon,\varepsilon)$, and (2) for every
$(u',v') \in P$ there is a Pareto point $(u,v)$ such that
$(|u-u'|,|v-v'|) \leq (\varepsilon,\varepsilon)$. Let 
$R = \max_{a \in A} |r(a)|$. Note that $|\Ex{\sigma}{s}{\lraname}| \leq R$
and $V_s^\sigma \leq R^2$ for an arbitrary strategy $\sigma$.
Hence, the set $P$ is computable by a naive algorithm which decides the 
approximation of strategy existence for $\mathcal{O}(|R|^3/\varepsilon^2)$ 
points in the corresponding $\varepsilon$-grid and puts
$\mathcal{O}(|R|^2/\varepsilon)$ points into~$P$. The question
whether the three Pareto curves can be approximated more efficiently by 
sophisticated methods based on deeper analysis of their properties
is left for future work.
\end{remark}

\section{Global variance}
\label{sec-global}

In the rest of this paper, unless specified otherwise, we suppose we work with a fixed MDP $G=(S,A,\mathit{Act},\delta)$
and a reward function $r : A \to \Qset$.
We start by proving that  both memory and randomization is needed
even for achieving non-Pareto points; this implies that memory and randomization is needed even to approximate
the value of Pareto points. Then we show that
2-memory stochastic update strategies are sufficient, which gives a tight
bound. 
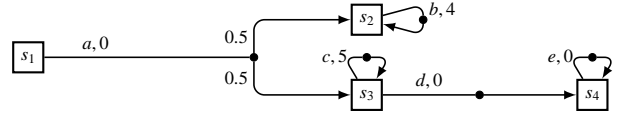
\begin{figure}[t]
\centering
\begin{tikzpicture}[every node/.style={inner sep=.7mm},x=1.5cm,y=1cm,font=\scriptsize]
\node[state] at (-1,0) (s1) {$s_1$};
\node[distr] at (1,0) (s1a) {};
\node[state] at (2,0.5) (s2) {$s_2$};
\node[state] at (2,-0.5) (s3) {$s_3$};
\node[distr] at (2.5,0.5) (s2a) {};
\node[distr] at (2,0) (s3a) {};
\node[distr] at (3,-0.5) (s3b) {};
\node[state] at (4,-0.5) (s4) {$s_4$};
\node[distr] at (4,0) (s4a) {};

\draw[trarr] (s1) -- node[pos=0.25,above] {$a,0$} (s1a)|- node[pos=0.25,left] {$0.5$} (s2);
\draw[trarr] (s1a) |- node[pos=0.25,left] {$0.5$} (s3);
\draw[trarr] (s2) -- +(0.5,0.2) -- node[pos=0.5,right] {$b,4$} (s2a) -- +(0,-0.2) -- (s2);
\draw[trarr] (s3) -- +(-0.2,0.5) -- node[pos=0.5,left] {$c,5$} (s3a) -- +(0.2,0) -- (s3);
\draw[trarr] (s4) -- +(-0.2,0.5) -- node[pos=0.5,left] {$e,0$}(s4a) -- +(0.2,0) -- (s4);
\draw[trarr] (s3) --  node[pos=0.5,above] {$d,0$} (s3b) -- (s4);
\end{tikzpicture}
\caption{An MDP witnessing the need for memory and randomization in Pareto
optimal strategies for global variance.}
\label{fig-MDP-mem-rand}  
\end{figure}

\begin{example}\label{ex:global-mem}
Consider the MDP of Fig.~\ref{fig-MDP-mem-rand}. 
Observe that the point $(4,2)$ is achievable by a strategy 
$\sigma$ which selects 
$c$ with probability $\frac{4}{5}$ and $d$ with probability 
$\frac{1}{5}$ upon the \emph{first} visit to $s_3$; 
in every other visit
to $s_3$, the strategy $\sigma$ selects $c$ with probability~$1$. Hence,
$\sigma$ is a 2-memory randomized strategy which stays in MEC
$C=(\{s_3\},\{c\})$ with probability 
$\frac{1}{2}\cdot \frac{4}{5} = \frac{2}{5}$. Clearly, 
$\Ex{\sigma}{s_1}{\lraname} = \frac{1}{2} \cdot 4 + 
\frac{1}{2}\cdot \frac{4}{5} \cdot 5 + \frac{1}{2}\cdot \frac{1}{5} \cdot 0=
4$ and \mbox{$\Va{\sigma}{s_1}{\lraname} = \frac{1}{2} \cdot 4^2 + 
\frac{1}{2}\cdot \frac{4}{5} \cdot 5^2 + \frac{1}{2}\cdot \frac{1}{5} \cdot 
0^2 - 4^2 = 2$}. Further, note that every strategy $\bar\sigma$ which stays
in $C$ with probability $x$ satisfies
$\Ex{\bar\sigma}{s_1}{\lraname} = \frac{1}{2}\cdot 4 + x\cdot 5$
and
$\Va{\bar\sigma}{s_1}{\lraname} = \frac{1}{2} \cdot 4^2 + x\cdot 5^2 - (2 + x\cdot 5)^2$.
For $x>\frac{2}{5}$ we get $\Ex{\bar\sigma}{s_1}{\lraname} > 4$,
and for $x< \frac{2}{5}$ we get $\Va{\bar\sigma}{s_1}{\lraname} > 2$,
so $(4,2)$ is indeed 
a Pareto point. Every deterministic (resp. memoryless) strategy
can stay in $C$ with probability either $\frac{1}{2}$ or $0$, giving $\Ex{\bar\sigma}{s_1}{\lraname}=\frac{9}{2}$
or $\Va{\bar\sigma}{s_1}{\lraname}=4$.
So, both memory and randomization are needed to achieve the Pareto point $(4,2)$ or a non-Pareto point $(4.1, 2.1)$.
\end{example}

Interestingly, if the MDP is strongly connected, memoryless deterministic strategies
always suffice, because in this case a memoryless strategy that minimizes the expected mean payoff
immediately gets zero variance. This is in contrast with local and hybrid variance, where
we will show that memory and randomization is required in general already for 
unichain MDPs.
For the general case of global variance, the sufficiency of 2-memory 
strategies is captured by the following theorem.

\begin{theorem}\label{thm:global}
  If there is a strategy $\zeta$ satisfying
  $(\Ex{\zeta}{s}{\lraname}, \Va{\zeta}{s}{\lraname})
     \leq (u,v)$,
  then there is a 2-memory strategy with the same properties.
  Moreover, Pareto optimal strategies always exist, the problem whether there is a strategy achieving a point $(u,v)$ is in PSPACE,
  and approximation of the answer can be done in pseudo-polynomial time.
\end{theorem}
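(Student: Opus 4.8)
The plan is to reduce the whole problem to a low‑dimensional optimization by decomposing $G$ into its maximal end components (MECs), in the spirit of \cite{EKVY:multi-objectives,BBCFK:MDP-two-views}, but with extra care because the variance objective is quadratic and is coupled with the expected mean payoff. First, fix an arbitrary strategy $\zeta$ and, using Lemma~\ref{lemma:stay-mec}, put $x_C=\Pr{\zeta}{s}{\staymec{C}}$ for every MEC $C$, so $\sum_C x_C=1$. Conditioned on $\staymec{C}$, the value $\lra{\pat}$ of a run lies almost surely in the compact interval $I_C$ of mean payoffs realizable by frequency functions on $C$: its endpoints are the minimal and maximal expected mean payoff achievable inside $C$ (classical, computable in polynomial time), and any limit point of the empirical action frequencies of a run staying in $C$ is a frequency function, so its $\limsup$‑average reward stays in $I_C$. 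Writing $\mu_C$ for the conditional expectation of $\lraname$ given $\staymec{C}$, we have $\Ex{\zeta}{s}{\lraname}=\sum_C x_C\mu_C$ and, by Jensen, $\Ex{\zeta}{s}{\lraname^2}\ge\sum_C x_C\mu_C^2$; hence replacing the behaviour inside each reached MEC by one attaining the single value $\mu_C\in I_C$ almost surely keeps $\Ex{}{}{\lraname}$ unchanged and does not increase $\Va{}{}{\lraname}$. So it suffices to realize, for a suitable achievable $(x_C)$ and values $\rho_C=\mu_C\in I_C$, a strategy whose runs staying in $C$ all have mean payoff exactly $\rho_C$.

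Second, I would show that this is realizable by a 2‑memory stochastic‑update strategy, which together with Example~\ref{ex:global-mem} gives the tight bound. The key sub‑lemma: for every MEC $C$ and every $\rho\in I_C$ there is a \emph{memoryless} strategy under which $\lra{\pat}=\rho$ for almost all runs staying in $C$. For the two endpoints of $I_C$ use a memoryless deterministic gain‑optimal strategy, whose induced chain on $C$ has all recurrent classes of gain $\rho$; for $\rho$ in the open interior, pick a full‑support frequency function $f$ on $C$ with $\lraname[f]=\rho$ (it exists because the affine map $\lraname[\cdot]$ sends the relative interior of the polytope of frequency functions on $C$, which is exactly the set of full‑support ones, onto the open interval), and let the memoryless strategy play $a$ in $\src a$ with probability proportional to $f(a)$; this makes $C$ a single bottom SCC with stationary average reward $\rho$. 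The two memory elements of the global strategy are then ``transient'' and ``committed'': in the transient element the strategy follows a memoryless stochastic‑update strategy that, by the LP‑based characterization of reachable end‑component distributions of \cite{EKVY:multi-objectives,BBCFK:MDP-two-views}, realizes any vector $(x_C)$ in the corresponding polytope $\mathcal{P}$, and a stochastic memory update switches to the committed element, in which $\sigma_n(\cdot,\text{committed})$ plays the per‑MEC recurrent memoryless strategy chosen above.

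Third, combining the two steps, the image of the compact set $\mathcal{P}\times\prod_C I_C$ under the continuous map $((x_C),(\rho_C))\mapsto\bigl(\sum_C x_C\rho_C,\ \sum_C x_C\rho_C^2-(\sum_C x_C\rho_C)^2\bigr)$ contains, for every achievable $(\Ex{\sigma}{s}{\lraname},\Va{\sigma}{s}{\lraname})$, a point with the same expected mean payoff and no larger variance; hence this compact image has the same Pareto frontier as the achievable set, and every Pareto point is attained by a 2‑memory strategy, so Pareto optimal strategies exist. Moreover the decision problem ``is there $\sigma$ with $(\Ex{\sigma}{s}{\lraname},\Va{\sigma}{s}{\lraname})\le(u,v)$'' is equivalent to the solvability of $\sum_C x_C\rho_C\le u$ and $\sum_C x_C\rho_C^2-(\sum_C x_C\rho_C)^2\le v$ subject to $(x_C)\in\mathcal{P}$ and $\rho_C\in I_C$. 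As $\mathcal{P}$ is cut out by a polynomial‑size linear program and the two remaining constraints have bounded degree, this is an existential first‑order sentence over the reals of polynomial size, hence decidable in $\PSPACE$ by the standard polynomial‑space procedure for real‑closed fields.

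Finally, for the pseudo‑polynomial approximation I would linearize by the substitution $z_C=x_C\rho_C$: the box constraint becomes $x_C\cdot\min I_C\le z_C\le x_C\cdot\max I_C$, and $\sum_C x_C\rho_C^2=\sum_C z_C^2/x_C$ is a sum of convex perspective functions, so for a fixed target $t$ the minimum of $\sum_C z_C^2/x_C$ subject to $\sum_C z_C=t$, $(x_C)\in\mathcal{P}$ and the box constraints is a second‑order cone program solvable in polynomial time. Scanning $t$ over an $\varepsilon$‑grid of $[-R,R]$ with $R=\max_{a\in A}|r(a)|$ and testing whether some $t\le u$ yields minimum value at most $v+t^2$ decides the approximate problem after $\mathcal{O}(R/\varepsilon)$ such solves, i.e.\ in pseudo‑polynomial time; with Remark~\ref{remark1} this also gives the claimed Pareto‑curve approximation. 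The step I expect to be the main obstacle is the sub‑lemma that inside a MEC \emph{every} value of $I_C$ is an almost‑sure mean payoff of some memoryless strategy: one must turn the $\limsup$ defining $\lraname$ into a genuine limit along almost all runs, treat the two endpoints separately, and verify that a full‑support frequency function really forces a single bottom SCC; a close second is the bookkeeping that glues the transient reachability phase and the per‑MEC recurrent phases into one 2‑memory strategy realizing an arbitrary point of $\mathcal{P}\times\prod_C I_C$, and that identifies $\mathcal{P}$ with the exact set of achievable MEC‑reaching distributions.
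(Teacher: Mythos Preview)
Your proposal is correct and, for the sufficiency of 2-memory strategies, Pareto existence, and the $\PSPACE$ bound, follows essentially the paper's argument: the same MEC decomposition, the same key sub-lemma (the paper's Lemma~\ref{lem-x_c-almost-surely}), the same transient/committed 2-memory construction via the machinery of \cite{EKVY:multi-objectives,BBCFK:MDP-two-views}, the same compactness argument for Pareto optimality, and the same reduction to the existential theory of the reals.

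The only substantive difference is in the pseudo-polynomial approximation. The paper first establishes an additional structural result (Item~2 of Proposition~\ref{prop:global-lp}): there is a \emph{single} real $z$ such that the optimal per-MEC target $\rho_C$ equals $z$ whenever $z\in I_C$ and the nearer endpoint of $I_C$ otherwise; it then grids over this one parameter $z$ and, for each candidate, the system $L$ collapses to a rank-$1$ negative-semidefinite quadratic program, approximable in polynomial time. Your route via the substitution $z_C=x_C\rho_C$, a grid on the target mean $t=\sum_C z_C$, and an SOCP in $(x_C,z_C)$ for each fixed $t$ is a legitimate alternative that is arguably more direct, though it does not yield the single-$z$ structural insight.
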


Note that every $C \in \Mec(G)$ can be seen as a strongly connected
MDP. By using standard linear programming methods (see, e.g., 
\cite{Puterman:book}), for every $C \in \Mec(G)$ we can compute the 
\emph{minimal} and the \emph{maximal} expected mean payoff achievable 
in $C$, denoted by $\alpha_C$ and $\beta_C$, in polynomial time (since
$C$ is strongly connected, the choice of initial state is irrelevant). 
Thus, we can also compute the system $L$ of Fig.~\ref{fig-L} in
polynomial time. We show the following:

\begin{figure}[t]
\centering
  \begin{align}\small
   \lefteqn{\!\!\!\!\!\!\!\!\!\!\!\mathbf{1}_{s}(t) + \sum_{a\in A} y_{a}\cdot \delta(a)(t) =
   \sum_{a\in \mathit{Act}(t)} y_{a} + y_t \hspace*{1em} 
        \text{ for all $t\in S$}\label{eq:gl:ya}}\\
   \sum_{\substack{C\in \Mec(G)\\t \in S\cap C}}\! y_t & =  1\label{eq:gl:yt}\\
   y_\kappa & \geq  0 \hspace*{4.5em} \text{for all $\kappa \in S \cup A$}\label{eq:gl:pos}\\
   \alpha_C & \le  x_C \hspace*{4em}
        \text{for all $C \in \Mec(G)$}\label{eq:gl:min}\\
   x_C & \le  \beta_C \hspace*{4em}
   \text{for all $C \in \Mec(G)$}\label{eq:gl:max}\\
   u & \ge  \sum_{C \in \Mec(G)} x_C \cdot \sum_{t\in S\cap C} y_t \label{eq:gl:exp}\\
   v & \ge  \Big(\!\sum_{C \in \Mec(G)}\!\!\! x_C^2 {\cdot} 
                     \sum_{t\in S\cap C}y_t\Big) - 
    \Big(\!\sum_{C \in \Mec(G)}\!\! x_C {\cdot} 
                     \sum_{t \in S \cap C}y_t\smash{\Big)^2} \label{eq:gl:var}
\end{align}
\caption{The system $L$. (Here $\mathbf{1}_{s_0}(s)=1$ 
if $s = s_0$, and $\mathbf{1}_{s_0}(s)=0$ otherwise.)}
\label{fig-L}
\end{figure}

\begin{proposition}\label{prop:global-lp}
Let $s \in S$ and $u,v\in \Rset$. 
\begin{enumerate}
\item \label{thm:global-main-stop} 
  If there is a strategy $\zeta$ satisfying
  $(\Ex{\zeta}{s}{\lraname}, \Va{\zeta}{s}{\lraname})
     \leq (u,v)$
  then the system $L$ of Fig.~\ref{fig-L} has a solution.
\item \label{thm:global-main-ptos} 
  If the system $L$ of Fig.~\ref{fig-L} has a solution, 
  then there exist a 2-memory stochastic-update strategy $\sigma$ 
  and $z \in \Rset$ such that 
  $(\Ex{\sigma}{s}{\lraname}, \Va{\sigma}{s}{\lraname}) \leq (u,v)$
  and for every $C \in \Mec(G)$ we have the following: If 
  $\alpha_C > z$, then $x_C = \alpha_C$; if $\beta_C < z$, then $x_C = \beta_C$;
  otherwise (i.e., if $\alpha_C \leq z \leq \beta_C$) $x_C = z$.
\end{enumerate} 
\end{proposition}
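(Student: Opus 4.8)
The plan is to split into the two implications, exploiting the structure established by Lemma~\ref{lemma:stay-mec}: almost every run under any strategy eventually stays in some MEC $C$, and conditioned on staying in $C$, both the mean-payoff and its variance are governed by the long-run behaviour inside $C$.

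\emph{Part~(1): strategy $\Rightarrow$ solution of $L$.} Fix a strategy $\zeta$ with $(\Ex{\zeta}{s}{\lraname},\Va{\zeta}{s}{\lraname})\leq(u,v)$. For each $C\in\Mec(G)$ let $p_C=\Pr{\zeta}{s}{\staymec{C}}$; by Lemma~\ref{lemma:stay-mec} we have $\sum_C p_C=1$. I would first argue that on $\staymec{C}$ the mean-payoff $\lra{\pat}$, whenever this is positive probability, has conditional expectation some value $x_C$ lying in $[\alpha_C,\beta_C]$: the lower/upper bounds come from the fact that inside $C$ no strategy can beat the optimal memoryless mean-payoff values $\alpha_C,\beta_C$ computable by the standard LP for strongly connected MDPs, and an averaging/convexity argument shows the conditional expectation still lies in this interval even for history-dependent $\zeta$. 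Then I set $y_t$ to be the expected frequency of ``settling in $C$ at state $t$'', i.e. the probability that the run stays in $C$ and the ``entry point'' bookkeeping is captured by the flow equations~\eqref{eq:gl:ya}--\eqref{eq:gl:pos}; these are the usual transient-flow / Kirchhoff equations for reaching the MECs, whose feasibility is classical (cf.\ \cite{EKVY:multi-objectives,BBCFK:MDP-two-views}). With $p_C=\sum_{t\in S\cap C} y_t$, the law of total expectation gives $\Ex{\zeta}{s}{\lraname}=\sum_C p_C x_C$, matching~\eqref{eq:gl:exp}, and the law of total variance together with $\Va{}{}{\lraname\mid\staymec{C}}\geq 0$ gives $\Va{\zeta}{s}{\lraname}\geq \sum_C p_C x_C^2-(\sum_C p_C x_C)^2$, matching~\eqref{eq:gl:var}. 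So $(y_\kappa,x_C)$ solves $L$.

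\emph{Part~(2): solution of $L$ $\Rightarrow$ 2-memory strategy.} Given a solution $(y_\kappa,x_C)$, I would construct $\sigma$ with two memory modes. In the first (``transient'') mode the strategy realizes the flow $y_a$ on actions to reach the MECs — a standard construction turning a flow into a stochastic-update strategy that reaches MEC $C$ with probability exactly $\sum_{t\in S\cap C} y_t$ and, upon entering $C$ for the ``last'' time, switches to the second memory mode. In the second (``recurrent'') mode, inside each $C$ the strategy plays a memoryless strategy achieving expected mean-payoff exactly $x_C$; such a strategy exists because $x_C\in[\alpha_C,\beta_C]$ and, $C$ being strongly connected, the set of achievable memoryless mean-payoffs is the whole interval $[\alpha_C,\beta_C]$ (take a suitable convex combination / frequency function $f$ on $C$ with $\lraname[f]=x_C$, realized by a memoryless randomized strategy). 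Crucially, since inside $C$ the chosen memoryless strategy yields an ergodic-like behaviour, the mean-payoff is almost surely the constant $x_C$ on $\staymec{C}$, so the conditional variance inside each $C$ is zero; hence $\Ex{\sigma}{s}{\lraname}=\sum_C p_C x_C\leq u$ by~\eqref{eq:gl:exp} and $\Va{\sigma}{s}{\lraname}=\sum_C p_C x_C^2-(\sum_C p_C x_C)^2\leq v$ by~\eqref{eq:gl:var}. The threshold $z$ and the ``bang-bang'' characterization of $x_C$ (equal to $\alpha_C$, $\beta_C$, or $z$) is obtained by a final rebalancing step: among all solutions of $L$ achieving the same pair $(\Ex{}{}{},\Va{}{}{})$ one can move the $x_C$'s toward a common value $z$ without increasing either coordinate (decreasing the spread of the $x_C$ around their $p_C$-weighted mean decreases the variance while keeping the mean fixed, up to the feasibility constraints $\alpha_C\le x_C\le\beta_C$), so an optimal solution has the stated form.

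\emph{Main obstacle.} The delicate point is Part~(1): because $\zeta$ is an arbitrary (infinite-memory, randomized) strategy and the mean-payoff is defined with a $\limsup$, the limit need not exist run-by-run, and one must be careful that the conditional expectation of $\lraname$ on $\staymec{C}$ is well-defined and still confined to $[\alpha_C,\beta_C]$. I expect to handle this by the standard trick of analysing, inside $C$, the expected frequencies of actions along the run (Cesàro averages), extracting a frequency function $f$ in the limit via a compactness argument, and bounding $\limsup$-averages of rewards by $\beta_C$ from above and $\alpha_C$ from below; the law of total variance step then only needs $\Va{}{}{\cdot\mid\staymec{C}}\ge 0$, which is immediate. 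The second obstacle, comparatively minor, is verifying that the flow-to-strategy construction correctly produces a genuinely $2$-memory stochastic-update strategy (not more), which follows because only one bit — ``still transient'' vs.\ ``settled'' — needs to be remembered.
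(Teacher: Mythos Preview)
Your plan is essentially the paper's own proof: Part~(1) uses the conditional expectations $x_C=\Ex{\zeta}{s}{\lraname\mid\staymec{C}}$ together with the law of total variance to get~\eqref{eq:gl:var}, and Part~(2) couples the flow-to-strategy construction from~\cite{EKVY:multi-objectives,BBCFK:MDP-two-views} with a memoryless strategy realising $x_C$ in each MEC, followed by a rebalancing step for~$z$. Two small points where the paper is more explicit than your sketch: (i) the fact that a memoryless randomised strategy in $C$ achieves $\lraname=x_C$ \emph{almost surely} (not just in expectation) is non-trivial because such a strategy can have several BSCCs with different mean-payoffs; the paper isolates this as Lemma~\ref{lem-x_c-almost-surely} and builds the strategy by mixing frequencies with a strategy that visits all actions, forcing a single BSCC; and (ii) the transient phase coming out of the flow construction is not automatically memoryless, so the paper invokes the reduction of~\cite{BBCFK:MDP-two-views} to collapse it, which is what actually yields the $2$-memory bound rather than just finite memory.
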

Observe that the existence of Pareto optimal strategies follows from the above proposition,
since we define points $(u,v)$ that some strategy can achieve by a continous function
from values $x_C$ and $\sum_{t\in S\cap C} y_t$ for $C\in\Mec(G)$ to $\Rset^2$. Because the domain is bounded (all $x_C$
and $\sum_{t\in S\cap C} y_t$ have minimal and maximal values they can achieve) and closed (the points of the domain
are expressible as a projection of feasible solutions of a linear program), it is also compact, and a continuous
map of a compact set is compact \cite{Royden88}, and hence closed.

Let us briefly sketch the proof of Proposition~\ref{prop:global-lp}, which
combines new techniques with results of \cite{BBCFK:MDP-two-views,%
EKVY:multi-objectives}. 
We start with Item~1. Let $\zeta$ be a strategy satisfying  
$(\Ex{\zeta}{s}{\lraname}, \Va{\zeta}{s}{\lraname}) \leq (u,v)$.
First, note that almost every run of $G^\zeta_s$ eventually 
stays in some MEC of $G$ by Lemma~\ref{lemma:stay-mec}.
The way how $\zeta$ determines the values of all $y_\kappa$, where
$\kappa \in S \cup A$, is exactly the same as in
\cite{BBCFK:MDP-two-views} and it is based on the ideas of
\cite{EKVY:multi-objectives}. The details are given in 
\ifthenelse{\isundefined{\techreport}}{\cite{techrepLics13}}{Appendix~\ref{app-global-stay-mecs}}.
The important property preserved is that for every $C \in \Mec(G)$
and every state $t \in S \cap C$, the value of $y_t$ corresponds
to the probability that a run stays in $C$ and enters $C$ via
the state $t$. Hence, $\sum_{t \in S\cap C} y_t$ is the probability 
that a run of $G^\zeta_s$ eventually stays in~$C$. The way how $\zeta$
determines the value of $y_a$, where $a \in A$, is explained
in
\ifthenelse{\isundefined{\techreport}}{\cite{techrepLics13}}{Appendix~\ref{app-global-stay-mecs}}.
The value of
$x_C$ is the conditional expected mean payoff under the condition
that a run stays in~$C$, i.e., $x_C = \Ex{\zeta}{s}{\lraname \mid R_C}$.
Hence, $\alpha_C \leq x_C \leq \beta_C$, which means that 
\eqref{eq:gl:min}~and~\eqref{eq:gl:max} are satisfied.
Further, $\Ex{\zeta}{s}{\lraname} = 
\sum_{C \in \Mec(G)} x_C \cdot \sum_{t\in S\cap C} y_t$, and 
hence~\eqref{eq:gl:exp} holds. Note that $\Va{\zeta}{s}{\lraname}$
is \emph{not} necessarily equal to the right-hand side of~\eqref{eq:gl:var},
and hence it is not immediately clear why~\eqref{eq:gl:var} should hold.  
Here we need the following lemma (a proof is given in 
\ifthenelse{\isundefined{\techreport}}{\cite{techrepLics13}}{Appendix~\ref{app-global-x_c-almost-surely}}):
\begin{lemma}
\label{lem-x_c-almost-surely}
  Let $C \in \Mec(G)$, and let $z_C \in [\alpha_C,\beta_C]$. Then there exists
  a memoryless randomized strategy  $\sigma_{z_C}$ such that for every
  state $t \in C\cap S$ we have that $\Pr{\sigma_{z_C}}{t}{\lraname {=} z_C} = 1$.
\end{lemma}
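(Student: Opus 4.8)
The plan is to realise $\sigma_{z_C}$ as the memoryless randomized strategy induced by a well-chosen frequency function $f$ on $C$, namely $\sigma_{z_C}(s)(a) := f(a)/\sum_{b \in \act{s}} f(b)$ for $s \in \MECstate{C}$. The workhorse observation is that if $f(a) > 0$ for \emph{every} $a \in \MECact{C}$, then $\sigma_{z_C}$ is defined in every state of $\MECstate{C}$ and chooses every action of $C$ with positive probability; since $C$ is strongly connected, the play $C^{\sigma_{z_C}}$ is then an irreducible finite Markov chain on $\MECstate{C}$. By the ergodic theorem for finite irreducible chains, the partial averages $\frac{1}{n}\sum_{i=0}^{n-1} r(A_i(\pat))$ converge almost surely, \emph{from every initial state}, to the constant $\sum_{a \in \MECact{C}} f(a)\,r(a) = \lraname[f]$ (recall $\sum_a f(a) = 1$); hence the $\limsup$ in the definition of $\lra{\pat}$ equals $\lraname[f]$ almost surely, i.e.\ $\Pr{\sigma_{z_C}}{t}{\lraname{=}\lraname[f]} = 1$ for all $t \in \MECstate{C}$. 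Since the frequency functions on $C$ form a convex set and $f \mapsto \lraname[f]$ is affine, it therefore suffices to exhibit positive frequency functions whose $\lraname$-values bracket $z_C$ and take the appropriate convex combination.

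First I would fix one such positive frequency function $g$: let $\tau$ be the memoryless strategy playing, in each state of $C$, the uniform distribution over enabled actions; strong connectivity makes $C^\tau$ irreducible on $\MECstate{C}$, and its positive stationary distribution $\pi$ yields in the usual way a frequency function $g$ with $g(a) = \pi(\src a)/|\act{\src a}| > 0$ for all $a \in \MECact{C}$. Put $\gamma := \lraname[g] \in [\alpha_C,\beta_C]$. Next I would dispatch the two extremal targets $z_C \in \{\alpha_C,\beta_C\}$ directly. By standard results on mean-payoff MDPs (e.g.\ \cite{Puterman:book}) there is a memoryless deterministic strategy $\sigma_{\beta}$ attaining the maximal expected mean payoff $\beta_C$ from every state of $C$. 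Every BSCC $D$ of $C^{\sigma_{\beta}}$ has almost-sure mean payoff equal to some constant $\rho_D = \Ex{\sigma_{\beta}}{d}{\lraname} \le \beta_C$ (for $d \in D$); but for any state $s$, $\beta_C = \Ex{\sigma_{\beta}}{s}{\lraname}$ is a convex combination of the $\rho_D$ over the BSCCs $D$ reachable from $s$, so each such $\rho_D$ must equal $\beta_C$. As almost every run of $C^{\sigma_{\beta}}$ enters a BSCC, $\Pr{\sigma_{\beta}}{t}{\lraname{=}\beta_C} = 1$ for every $t$; symmetrically for a memoryless deterministic $\sigma_{\alpha}$ and $\alpha_C$. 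In passing, the stationary distribution of a single BSCC of $C^{\sigma_{\alpha}}$ (resp.\ $C^{\sigma_{\beta}}$) furnishes a frequency function $h_\alpha$ (resp.\ $h_\beta$) on $C$ with $\lraname[h_\alpha] = \alpha_C$ (resp.\ $\lraname[h_\beta] = \beta_C$).

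It then remains to treat $z_C$ strictly between $\alpha_C$ and $\beta_C$ (if $\alpha_C = \beta_C$ then $\gamma = \alpha_C = \beta_C = z_C$ and $g$ itself works). If $z_C \le \gamma$ then $\gamma > \alpha_C$, so for $\lambda := (\gamma - z_C)/(\gamma - \alpha_C) \in [0,1)$ the frequency function $f := (1-\lambda)g + \lambda h_\alpha$ satisfies $f(a) \ge (1-\lambda)g(a) > 0$ for all $a$ and $\lraname[f] = (1-\lambda)\gamma + \lambda\alpha_C = z_C$; if $z_C \ge \gamma$, mix $g$ with $h_\beta$ in the same way. In either case the memoryless strategy induced by $f$ is the desired $\sigma_{z_C}$, by the first paragraph. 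The main obstacle — indeed the only genuine subtlety — is that a frequency function need not have connected or full support, so the induced memoryless strategy could, from some states, drift into a recurrent class realising the wrong mean payoff; keeping the full-support $g$ inside every interior mixture, and using honestly optimal deterministic strategies at the two endpoints, is precisely what prevents this.
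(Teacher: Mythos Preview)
Your argument is correct and follows essentially the same route as the paper's proof: both take a full-support memoryless randomized strategy on $C$ (your uniform $\tau$ with frequency function $g$, the paper's arbitrary $\sigma'$ visiting every action with nonzero frequency), mix its frequencies with those of a memoryless deterministic mean-payoff-optimal strategy to hit the target $z_C$, and use irreducibility of the resulting chain to conclude almost-sure convergence of the mean payoff to $\lraname[f]$ from every state. Your treatment is in fact slightly more careful at the endpoints $\alpha_C,\beta_C$, where you argue that every BSCC of the optimal deterministic strategy has the extremal value; the paper simply stipulates that $\sigma_1,\sigma_2$ ``only yield one BSCC for any initial state''.
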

Using Lemma~\ref{lem-x_c-almost-surely}, we can define another strategy
$\zeta'$ from $\zeta$ such that for every $C \in \Mec(G)$ we have the following: 
(1) the probability of $R_C$ in $G^\zeta_s$ and in  $G^{\zeta'}_s$ is 
the same; (2) almost all runs $\pat \in R_C$ satisfy $\lraname(\pat) = x_C$. 
This means that $\Ex{\zeta}{s}{\lraname} = \Ex{\zeta'}{s}{\lraname}$,
and we show that 
$\Va{\zeta}{s}{\lraname} \geq \Va{\zeta'}{s}{\lraname}$ (see
\ifthenelse{\isundefined{\techreport}}{\cite{techrepLics13}}{Appendix~\ref{app-global-variance-improves}}).
Hence, $(\Ex{\zeta'}{s}{\lraname}, \Va{\zeta'}{s}{\lraname})
\leq (u,v)$, and therefore \eqref{eq:gl:ya}--\eqref{eq:gl:exp} also 
hold if we use $\zeta'$ instead of $\zeta$ to determine the values 
of all variables. Further, the right-hand side of~\eqref{eq:gl:var} is 
equal to $\Va{\zeta'}{s}{\lraname}$, and hence~\eqref{eq:gl:var} holds.
This completes the proof of Item 1.

Item~2 is proved as follows. Let $y_\kappa$, where
$\kappa \in S \cup A$, and $x_C$, where $C \in \Mec(G)$, be
a solution of~$L$. For every $C \in \Mec(G)$, we put
$y_C = \sum_{t \in S \cap C} y_t$. By using the results of Sections~3 and~5
of~\cite{EKVY:multi-objectives} and the modifications 
presented in \cite{BBCFK:MDP-two-views}, we first construct a 
finite-memory stochastic update strategy $\varrho$ such that the
probability of $R_C$ in $G^\varrho_s$ is equal to $y_C$. Then,
we construct a strategy
$\hat{\sigma}$ which plays according to $\varrho$ until a bottom 
strongly connected component $B$ of $G^\varrho_s$
is reached. %
Observe that the set of all states and actions which appear in 
$B$ is a subset of some $C \in \Mec(G)$. From that point on,
the strategy $\hat{\sigma}$ ``switches'' to the memoryless randomized strategy 
$\sigma_{x_C}$ of~Lemma~\ref{lem-x_c-almost-surely}. Hence,
$\Ex{\varrho}{s}{\lraname}$ and $\Va{\varrho}{s}{\lraname}$ are equal
to the right-hand sides of \eqref{eq:gl:exp} and \eqref{eq:gl:var},
respectively, and thus we get
$(\Ex{\varrho}{s}{\lraname},\Va{\varrho}{s}{\lraname}) \leq (u,v)$.
Note that $\hat{\sigma}$ may use more than 2-memory 
elements. A 2-memory strategy is obtained by modifying the initial part of $\hat{\sigma}$
(i.e., the part before the switch) into a memoryless strategy in the
same way as in~\cite{BBCFK:MDP-two-views}. Then, $\hat{\sigma}$ only needs
to remember whether a switch has already been performed or not, and hence
2~memory elements are sufficient. Finally, we transform $\hat{\sigma}$\label{page:hatsigma}
into another 2-memory stochastic update strategy $\sigma$ which satisfies
the extra conditions of Item~2 for a suitable~$z$. This is achieved 
by modifying the behaviour of $\hat{\sigma}$ in some MECs so that the
probability of staying in every MEC is preserved, the expected mean payoff
is also preserved, and the global variance can only decrease. This part
is somewhat tricky and the details are given in
\ifthenelse{\isundefined{\techreport}}{\cite{techrepLics13}}{Appendix~\ref{app-global}}.

We can solve
the strategy existence problem by encoding the existence of a solution
to~$L$ as a closed formula $\Phi$ of the existential fragment 
of $(\Rset,+,*,\leq)$. Since $\Phi$ is computable in polynomial time
and the existential fragment of $(\Rset,+,*,\leq)$ is decidable in
polynomial space \cite{Canny:Tarski-exist-PSPACE}, we obtain Theorem~\ref{thm:global}.

The pseudo-polynomial-time approximation algorithm is obtained as follows.
First note that if we had the number $z$ above, we could 
simplify
the system~$L$ of Fig.~\ref{fig-L} by substituting all $x_C$ variables
with constants. Then, \eqref{eq:gl:min}~and~\eqref{eq:gl:max} can be 
eliminated, \eqref{eq:gl:exp}~becomes a linear constraint,
and \eqref{eq:gl:var}~the only quadratic constraint. Thus, the system~$L$  
can be transformed into a quadratic program $L_{z}$ in which the quadratic constraint is negative semi-definite
with rank 1 (see
\ifthenelse{\isundefined{\techreport}}{\cite{techrepLics13}}{Appendix~\ref{app:global-approx-matrix}}),
and hence approximated in polynomial time
\cite{Vavasis92}. 
Since we do not know the precise number $z$ we try different
candidates $\bar z$, namely we approximate the value (to the precision $\frac{\varepsilon}{2}$) of $L_{\bar z}$
for all numbers $\bar z$ between $\min_{a\in A} r(a)$ and $\max_{a\in A} r(a)$ 
that are a multiple of $\tau = \frac{\varepsilon}{8 \max \{N,1\}}$ where $N$ is the maximal absolute value of an assigned reward.
If any $L_{\bar z}$
has a solution lower than $u-\frac{\varepsilon}{2}$, we output ``yes'', otherwise we output ``no''. The correctness
of the algorithm is proved in
\ifthenelse{\isundefined{\techreport}}{\cite{techrepLics13}}{Appendix~\ref{app:global-approx-cor}}.

Note that if we \emph{knew} the constant~$z$ we would even get that the approximation
problem can be solved in polynomial time (assuming that the number of digits 
in $z$ is polynomial in the size of the problem instance). Unfortunately,
our proof of Item~2 does not give a procedure for computing~$z$, and
we cannot even conclude that $z$~is rational.
We conjecture that the constant~$z$ \emph{can} actually be chosen as 
a rational number with small number of digits (which would immediately lower
the complexity of strategy existence to~$\NP$ using the results of~\cite{Vavasis90} for solving negative semi-definite quadratic programs). 
Also note that Remark~\ref{remark1} and Theorem~\ref{thm:global} immediately 
yield the following result.

\begin{corollary}
The approximate Pareto curve for global variance can be computed in 
pseudo-polynomial time.
\end{corollary}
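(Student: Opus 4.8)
The plan is to instantiate the generic scheme of Remark~\ref{remark1} with the pseudo-polynomial approximation procedure supplied by the second half of Theorem~\ref{thm:global}. Recall that Remark~\ref{remark1} reduces the task of computing an $\varepsilon$-approximation of the Pareto curve to answering the approximation of strategy existence problem for each point of a suitable $\varepsilon$-grid: writing $R = \max_{a \in A}|r(a)|$, every achievable pair $(\Ex{\sigma}{s}{\lraname},\Va{\sigma}{s}{\lraname})$ lies in the box $[-R,R]\times[0,R^2]$, so it suffices to run the approximation of strategy existence for the $\mathcal{O}(|R|^3/\varepsilon^2)$ grid points at resolution $\varepsilon$ inside this box and collect the $\mathcal{O}(|R|^2/\varepsilon)$ accepted points into the output set $P$. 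The two conditions imposed in Remark~\ref{remark1} on $P$ then guarantee that $P$ is within $\varepsilon$ (component-wise) of the Pareto curve in both directions, i.e.\ that it is an honest approximation rather than merely an over- or under-approximation.

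First I would invoke Theorem~\ref{thm:global}, which states that the approximation of strategy existence for global variance is solvable in pseudo-polynomial time; this is the subroutine called at each grid point. Next I would bound the number of calls: the grid has $\mathcal{O}(|R|^3/\varepsilon^2)$ points, and since $R$ is at most exponential in the bit-size of the reward function while $1/\varepsilon$ is polynomial in the size of the (unary) encoding of $\varepsilon$, this count is itself pseudo-polynomial in the size of the input. A pseudo-polynomial number of pseudo-polynomial-time computations takes pseudo-polynomial time in total, which is the desired bound; the trivial post-processing that removes dominated points from $P$, if a clean Pareto-style description is wanted, adds only polynomial overhead in $|P|$.

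There is essentially no hard step here — the corollary is a bookkeeping consequence of Theorem~\ref{thm:global} and Remark~\ref{remark1}, exactly as the remark preceding the statement indicates. The only point deserving a line of care is to make explicit that ``pseudo-polynomially many pseudo-polynomial-time queries'' compose to a pseudo-polynomial-time procedure, i.e.\ that the grid size $\mathcal{O}(|R|^3/\varepsilon^2)$ does not push us out of the pseudo-polynomial regime, and that the two-sided proximity guarantee of Remark~\ref{remark1} is what upgrades the per-point decision into a genuine $\varepsilon$-approximation of the whole curve.
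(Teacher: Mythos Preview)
Your proposal is correct and follows exactly the approach the paper takes: the corollary is stated as an immediate consequence of Remark~\ref{remark1} together with the pseudo-polynomial approximation algorithm of Theorem~\ref{thm:global}, with no further argument given. Your additional bookkeeping about composing pseudo-polynomially many pseudo-polynomial-time queries is a reasonable explicit justification of what the paper leaves implicit.
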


\section{Local variance}
\label{sec-local}
In this section we analyse the problem for local variance. As before,
we start by showing the lower bounds for memory needed by strategies, and then provide an upper bound together
with an algorithm computing a Pareto optimal strategy.
As in the case of global variance, Pareto optimal strategies require both 
randomization and memory, however, in contrast to global variance where for
unichain MDPs deterministic memoryless strategies are sufficient we show 
(in the following example) that for local
variance both memory and randomization is required even for unichain MDPs.
\begin{figure} 
\begin{center}
 \begin{tikzpicture}[every node/.style={inner sep=0.5mm}][font=\small]

\node[state] at (0,0) (s1) {$s_1$};
\node[distr] at (1,0.5) (s1a) {};
\node[distr] at (1,-0.5) (s1b) {};
\node[state] at (2,0) (s2) {$s_2$};
\node[distr] at (1,0) (s2c) {};

\draw[trarr] (s1) |- (s1a) node[midway,left] {$a, 0$} -| (s2);
\draw[trarr] (s1) |- (s1b) node[midway,left] {$b, 2$} -| (s2);
\draw[trarr] (s2) -- (s2c) node[midway,above] {$c, 2$} -- (s1);

\end{tikzpicture}
 \end{center}
\caption{An MDP showing that Pareto optimal strategies need randomization/memory for local and hybrid variance.\label{fig:unichain-local}}
\end{figure}
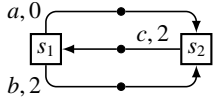  
\begin{example}\label{ex:local-mem}
Consider the MDP from Figure~\ref{fig:unichain-local}
and consider a strategy $\sigma$ that in the first step in $s_1$ makes a random choice uniformly between $a$ and $b$,
and then, whenever the state $s_1$ is revisited,
it chooses the action that was chosen in the first step.
The expected mean-payoff under such strategy is $0.5\cdot 2+0.5\cdot 1 = 1.5$
and the variance is
$\Big(0.5\cdot \big(0.5\cdot(0-1)^2 + 0.5\cdot(2-1)^2\big)\Big) + \Big(0.5\cdot (2-2)^2\Big)
 =0.5$.
 We show that the point $(1.5, 0.5)$ cannot be achieved by any memoryless randomized strategy $\sigma'$. %
Given $x\in \{a,b,c\}$, denote by $f(x)$ the frequency of the action $x$ under $\sigma'$.
Clearly, $f(c)=0.5$ and $f(b) = 0.5 - f(a)$. If $f(a)< 0.2$, then the mean-payoff $\Ex{\sigma'}{s_1}{\lraname} = 2\cdot (f(c) + f(b)) = 2 - 2f(a)$ is greater than $1.6$. Assume that $0.2\leq f(a)\leq 0.5$.
Then $\Ex{\sigma'}{s_1}{\lraname} \leq 1.6$ but
the variance is at least $0.64$ (see
\ifthenelse{\isundefined{\techreport}}{\cite{techrepLics13}}{Appendix~\ref{app:local-example}}
for computation).
Insufficiency of deterministic history-dependent strategies is proved using the same
equations and the fact
that there is only one run under such a strategy.

Thus have shown that memory and randomization is needed to achieve
a non-Pareto point $(1.55, 0.6)$.
The need of memory and randomization to achieve Pareto points will follow later from the fact that there always exist
Pareto optimal strategies.
\end{example}

In the remainder of this section we prove the following.
\begin{theorem}\label{thm:local-np-alg}
If there is a strategy $\zeta$ satisfying
$(\Ex{\zeta}{s_0}{\lraname}, \Ex{\zeta}{s_0}{\lrvlname})\leq (u,v)$
then there is a 3-memory strategy with the same properties.
The problem whether such a strategy exists belongs to $\mathbf{NP}$. Moreover, Pareto optimal strategies always exist.
\end{theorem}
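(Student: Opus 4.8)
The plan is to establish, in the spirit of Proposition~\ref{prop:global-lp}, a characterization of the achievable pairs $\big(\Ex{\zeta}{s_0}{\lraname},\Ex{\zeta}{s_0}{\lrvlname}\big)$ by a system of (in)equalities, and to derive the three assertions from it, combining the end-component machinery of \cite{EKVY:multi-objectives,BBCFK:MDP-two-views} with new, variance-specific arguments.

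\emph{Reduction to MECs.} By Lemma~\ref{lemma:stay-mec}, almost every run of $G^\zeta_{s_0}$ eventually stays in a single MEC, and neither $\lraname$ nor $\lrvlname$ is affected by a finite prefix, so $\Ex{\zeta}{s_0}{\lraname}=\sum_{C\in\Mec(G)}\Pr{\zeta}{s_0}{\staymec{C}}\cdot\Ex{\zeta}{s_0}{\lraname\mid\staymec{C}}$, and the analogous identity holds with $\lrvlname$ in place of $\lraname$. Crucially, in contrast to global variance this decomposition of the expected local variance is \emph{linear} in the per-MEC data, which is ultimately what brings the complexity down to $\NP$. It therefore suffices to understand, for each MEC $C$, which pairs $\big(\Ex{}{}{\lraname\mid\staymec{C}},\Ex{}{}{\lrvlname\mid\staymec{C}}\big)$ are realizable by strategies that stay in $C$.

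\emph{The behaviour inside a MEC.} For a memoryless strategy with frequency function $f$ on $C$, a strong-law argument shows that $\lraname=\lraname[f]$ and $\lrvlname=\lrvname[f]$ hold almost surely, and mixing two such strategies by an initial coin flip realizes any convex combination of the corresponding points of $W_C:=\{(\lraname[f],\lrvname[f]):f\text{ a frequency function on }C\}$; by Carathéodory's theorem in the plane, each point on the lower-left frontier of $\mathrm{conv}(W_C)$ is attained using only \emph{two} frequency functions. The key lemma is the converse: for \emph{every} strategy $\zeta$ with $\Pr{\zeta}{s_0}{\staymec{C}}>0$, the pair $\big(\Ex{\zeta}{s_0}{\lraname\mid\staymec{C}},\Ex{\zeta}{s_0}{\lrvlname\mid\staymec{C}}\big)$ dominates, componentwise, a point of $\mathrm{conv}(W_C)$. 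I would prove this run-by-run: for almost every $\pat\in\staymec{C}$, fix a sequence $n_k\to\infty$ witnessing the $\limsup$ defining $\lrvl{\pat}$; passing to a sub-subsequence, the empirical action-frequency vectors along $\pat$ converge to some distribution $f_\pat$, and — because empirical transition frequencies converge almost surely to the probabilities $\delta(a)(\cdot)$, and the numbers of entries and exits of a state along $\pat$ differ by at most one — $f_\pat$ is a frequency function supported on $C$. Then $\lra{\pat}\ge\lraname[f_\pat]$ (mean payoff is a $\limsup$) and, by a direct computation, $\lrvl{\pat}=\lrvname[f_\pat]+(\lra{\pat}-\lraname[f_\pat])^2\ge\lrvname[f_\pat]$; averaging the points $(\lraname[f_\pat],\lrvname[f_\pat])\in W_C$ over the runs of $\staymec{C}$ gives a point of $\mathrm{conv}(W_C)$ dominated componentwise by the two conditional expectations above. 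Making the simultaneous choice of $n_k$, the measurable selection of $f_\pat$, and the interchange of limits with expectations rigorous, with no assumption on $\zeta$, is the main obstacle; it is exactly the difficulty of ``extracting limits from runs'' flagged in the introduction.

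\emph{Putting it together.} The characterization shows that $(\Ex{\zeta}{s_0}{\lraname},\Ex{\zeta}{s_0}{\lrvlname})\le(u,v)$ holds for some $\zeta$ iff there exist a flow $(y_\kappa)_{\kappa\in S\cup A}$ as in \eqref{eq:gl:ya}--\eqref{eq:gl:pos} (encoding, via the construction of \cite{EKVY:multi-objectives,BBCFK:MDP-two-views}, the probabilities of reaching and entering the MECs) and, for each $C\in\Mec(G)$, two frequency functions $f_C^1,f_C^2$ on $C$ and a split $\lambda_C\in[0,1]$ such that, writing $y_C=\sum_{t\in S\cap C}y_t$,
\[u\ \ge\ \sum_{C}y_C\big(\lambda_C\lraname[f_C^1]+(1-\lambda_C)\lraname[f_C^2]\big)\quad\text{and}\quad v\ \ge\ \sum_{C}y_C\big(\lambda_C\lrvname[f_C^1]+(1-\lambda_C)\lrvname[f_C^2]\big);\]
here two frequency functions per MEC are enough because any point of $\mathrm{conv}(W_C)$ that is relevant may be replaced by a componentwise-smaller point of its frontier. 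For membership in $\NP$, one guesses the flow, the two frequency functions per MEC (which can be taken to be vertices of polytopes with polynomially bounded descriptions, hence of polynomially bounded bit-size), the splits, and the induced mode means, and then checks the remaining polynomial (in)equalities over these guessed rationals in polynomial time. The $3$-memory bound comes from realizing a feasible solution: play a memoryless strategy that reaches the MECs with the probabilities given by the flow (one memory element, as in the proof of Proposition~\ref{prop:global-lp}), and, upon switching, move — via a stochastic memory update selecting mode $j$ with probability $\lambda_C$ in MEC $C$ — to the memoryless strategy realizing $f_C^j$ (two further memory elements, one per mode); Example~\ref{ex:local-mem} shows that two memory elements do not suffice, even for unichains. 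Finally, the achievable pairs form the image under a continuous map of the compact set of feasible tuples $\big((y_\kappa),(f_C^1),(f_C^2),(\lambda_C)\big)$, hence a compact, in particular closed, set, so its Pareto frontier is attained and Pareto optimal strategies always exist.
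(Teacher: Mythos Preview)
Your overall architecture is right and matches the paper's: decompose into MECs, show that the conditional pair $\big(\Ex{\zeta}{s_0}{\lraname\mid\staymec{C}},\Ex{\zeta}{s_0}{\lrvlname\mid\staymec{C}}\big)$ is dominated by a point of $\mathrm{conv}(W_C)$, reduce to two frequency functions per MEC, and realise everything with a memoryless ``reach'' phase plus two memoryless ``stay'' phases. Two places, however, are genuinely incomplete.

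\textbf{Extracting $f_\omega$ from runs.} You correctly identify the obstacle (measurable selection of $f_\omega$, interchange of limit and expectation), but you do not resolve it. The paper sidesteps measurability entirely: it proves an $\varepsilon$-approximate version first, partitioning $\staymec{C}$ into finitely many boxes $A^{\ell,k}=\{\omega:(\ell\varepsilon,k\varepsilon)\le(\lra{\omega},\lrvl{\omega})<((\ell{+}1)\varepsilon,(k{+}1)\varepsilon)\}$, applies Lemma~\ref{lem:approx-two} to the finite convex combination $\sum_{\ell,k}\Pr{\zeta}{s_0}{A^{\ell,k}\mid\staymec{C}}\cdot(\ell\varepsilon,k\varepsilon)$ to single out just \emph{two} boxes, and then picks \emph{one} run from each of those two boxes to extract $f$ and $f'$. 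No integration over runs, hence no measurable-selection issue. Finally it lets $\varepsilon\to 0$ and uses compactness of the frequency polytope. Your direct ``average $(\lraname[f_\omega],\lrvname[f_\omega])$ over $\omega$'' would work, but you would have to construct $\omega\mapsto f_\omega$ measurably, which is considerably heavier.

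\textbf{The $\NP$ argument.} Your claim that the two frequency functions per MEC ``can be taken to be vertices of polytopes with polynomially bounded descriptions'' is the crux, and you give no argument for it. The map $f\mapsto(\lraname[f],\lrvname[f])$ is quadratic, so extreme points of $W_C$ need not come from vertices of the frequency polytope. The paper supplies the missing step: for any frequency function $f$ one writes $f=\sum_k\alpha_k g_k$ with $g_k$ coming from memoryless \emph{deterministic} single-BSCC strategies (Derman's theorem), and then uses that $x\mapsto\sum_a g_k(a)(r(a)-x)^2$ is minimised at $x=\lraname[g_k]$ (Lemma~\ref{lem:min-var}) to get $\lrvname[f]\ge\sum_k\alpha_k\lrvname[g_k]$. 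This de-randomisation is what lets one guess \emph{deterministic} data. Concretely, the paper's $\NP$ algorithm does not guess frequency functions and a flow at all: it guesses two memoryless deterministic strategies $\pi,\pi'$, builds the product MDP $G[\pi,\pi']$ with a two-dimensional reward $(\Ex{\pi}{s}{\lraname},\Ex{\pi}{s}{\lrvlname})$ on the absorbing copies, and then invokes the polynomial-time algorithm of \cite{BBCFK:MDP-two-views} for two-dimensional expected mean payoff. This both certifies feasibility and \emph{computes} the remaining continuous parameters (your $y_\kappa$ and $\lambda_C$) via an LP, so there is nothing further to guess.

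A small correction: Example~\ref{ex:local-mem} only shows that memoryless strategies are insufficient (hence the lower bound is $2$-memory); it does not show that $2$-memory strategies are insufficient, and indeed the paper leaves the gap between the $2$-memory lower bound and the $3$-memory upper bound open.
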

We start by proving that 3-memory stochastic update strategies achieve all achievable points wrt. local variance.
\begin{proposition}\label{prop:local-main}
For every strategy $\zeta$ there is a 3-memory stochastic-update strategy $\sigma$ satisfying 
\[
(\Ex{\sigma}{s_0}{\lraname}, \Ex{\sigma}{s_0}{\lrvlname})\quad \leq\quad (\Ex{\zeta}{s_0}{\lraname}, \Ex{\zeta}{s_0}{\lrvlname})
\]
Moreover, the three memory elements of $\sigma$, say $m_1,m_2,m'_2$, satisfy the following:
\begin{itemize}
\item The memory element $m_1$ is initial, $\sigma$ may randomize in $m_1$ and may stochastically update its memory either to $m_2$, or to $m'_2$.
\item In $m_2$ and $m'_2$ the strategy $\zeta$ behaves {\em deterministically} and never changes its memory.
\end{itemize}
\end{proposition}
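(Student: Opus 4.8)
The plan is to mirror the structure of the proof of Proposition~\ref{prop:global-lp}: decompose almost every run according to the MEC in which it eventually stays (Lemma~\ref{lemma:stay-mec}), analyse the behaviour inside each MEC through frequency functions, and reassemble the pieces using the reachability constructions of~\cite{EKVY:multi-objectives,BBCFK:MDP-two-views}. Two features make the local case harder than the global one. First, $\lrvlname$ is built from the long-run average of the \emph{squared} rewards, and neither this average (write $\overline{r^2}$ for it) nor the mean payoff need converge under an arbitrary $\zeta$, so we cannot start from a clean limiting picture. Second, and in contrast with global variance---where replacing the conditional behaviour in a MEC by one with an almost surely constant mean payoff can only help---here the conditional expected local variance in $C$ equals $\Ex{\zeta}{s_0}{\overline{r^2}\mid \staymec{C}} - \Ex{\zeta}{s_0}{\lraname^2\mid \staymec{C}}$, so it actually pays to keep the mean payoff inside each MEC as \emph{spread out} as possible; this is precisely what forces both randomization and one extra memory element.

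\emph{Step 1 (extracting the limits).} First I would replace $\zeta$ by a strategy under which, on almost every run, the long-run frequencies of the actions exist and form a frequency function $f_\pat$ on the MEC $C$ in which the run stays, so that $\lra{\pat}=\lraname[f_\pat]$ and $\lrvl{\pat}=\lrvname[f_\pat]$. This is the step requiring the delicate run-selection alluded to in the introduction: from the $\limsup$ definition one extracts, on a set of runs of sufficient measure, longer and longer prefixes along which all the relevant averages converge, and uses them to build a strategy whose conditional expectations of $\lraname$ and $\lrvlname$ in every MEC are no larger than those of $\zeta$. After this step the conditional law of $(\lraname,\lrvlname)$ on $\staymec{C}$ is the image of a probability measure $\nu_C$ on the polytope $F_C$ of frequency functions on $C$ under the map $f\mapsto(\lraname[f],\lrvname[f])$.

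\emph{Step 2 (two deterministic memoryless strategies per MEC).} Fix a MEC $C$, let $p_C$ be the probability that $\zeta$ stays in $C$, and let $\bar f_C=\int f\,d\nu_C$ be the average frequency function, so that $C$ contributes $p_C\cdot\lraname[\bar f_C]$ to the expected mean payoff and $p_C\cdot\big(\sum_{a\in C}\bar f_C(a)r(a)^2 - \Ex{\nu_C}{}{\lraname[f]^2}\big)$ to the expected local variance. The heart of the argument is to show that this pair can be reproduced, with the mean payoff unchanged and the expected local variance not increased, by letting the strategy make a single random choice, upon settling in $C$, between two \emph{deterministic memoryless} strategies $\pi_C,\pi'_C$ on $C$ with frequency functions $f_C,f'_C$, taken with probabilities $\lambda_C,1-\lambda_C$. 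The relevant facts are: a deterministic memoryless strategy on the strongly connected $C$ yields, on almost every run, a \emph{constant} mean payoff $\lraname[f]$ and a \emph{constant} local variance $\lrvname[f]$ with $f$ a vertex of $F_C$; the $\lraname$-coordinate of the two-dimensional polytope $P_C=\{(\lraname[f],\sum_{a\in C}f(a)r(a)^2):f\in F_C\}$ ranges over $[\alpha_C,\beta_C]$; and for a prescribed extremal value of the mean payoff the minimal value of $\sum_{a\in C}f(a)r(a)^2$ is again attained at a vertex of $F_C$. Using convexity of $x\mapsto x^2$ together with a Carath\'eodory/boundary argument for $P_C$, one selects $f_C,f'_C$ among the vertices of $F_C$ and a weight $\lambda_C$ so that the mixture keeps $\lraname$ equal to $\lraname[\bar f_C]$, makes $\lambda_C\lraname[f_C]^2+(1-\lambda_C)\lraname[f'_C]^2\ge\Ex{\nu_C}{}{\lraname[f]^2}$ (so the subtracted second-moment term does not shrink), and makes $\lambda_C\lrvname[f_C]+(1-\lambda_C)\lrvname[f'_C]$ no larger than the original conditional local variance. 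Because a deterministic memoryless strategy never updates its memory, one memory element suffices per committed behaviour, and the two committed behaviours can be taken to be the restrictions to the MECs of two \emph{global} deterministic memoryless strategies (their choices in the transient part are irrelevant, since the commitment happens only when the run settles in a MEC); hence three memory elements $m_1,m_2,m'_2$ suffice overall, with exactly the structure claimed.

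\emph{Step 3 (reassembly) and main obstacle.} It remains to realise the required probabilities of entering each MEC $C$ and, there, of committing to $\pi_C$ versus $\pi'_C$. Exactly as in the proof of Proposition~\ref{prop:global-lp}, these ``stay-in-$C$'' probabilities and their splits $\lambda_C p_C,(1-\lambda_C)p_C$ are realised by a finite-memory stochastic-update strategy that plays an $\varepsilon$-optimal reaching strategy and then, by one stochastic memory update, jumps either to $m_2$ (thereafter playing the global deterministic strategy assembled from the $\pi_C$) or to $m'_2$ (the one assembled from the $\pi'_C$); the transient part is then collapsed into a memoryless strategy in $m_1$ as in~\cite{BBCFK:MDP-two-views}. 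I expect the main difficulty to lie in Steps~1 and~2: the $\limsup$ in the definition of $\lrvlname$ prevents us from assuming a priori that any of the relevant averages converge, so the run-selection of Step~1 is technically delicate, and bringing the number of committed behaviours down to exactly two while insisting that they be deterministic and memoryless requires a precise understanding of which boundary points of $P_C$ are vertex images together with simultaneous control of the expected-$\lraname^2$ term.
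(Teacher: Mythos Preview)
Your overall architecture---decompose by MECs via Lemma~\ref{lemma:stay-mec}, reduce the behaviour inside each MEC to a lottery over two deterministic memoryless strategies, and then reassemble using the reachability machinery of~\cite{EKVY:multi-objectives,BBCFK:MDP-two-views}---matches the paper, and your Step~3 is essentially what the paper does. The difficulties are in Steps~1 and~2, and there your route diverges from the paper's in ways that leave a real gap.

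\medskip
\noindent\textbf{Step~1.} The paper never ``replaces $\zeta$ by a strategy under which frequencies exist.'' Instead it works with $\zeta$ directly: partition the runs in $\staymec{C}$ into $\varepsilon$-boxes according to $(\lra{\pat},\lrvl{\pat})$, and for any box of positive measure pick a \emph{single} run $\pat$ and extract, via iterated subsequence selection, indices $T_1[\pat],T_2[\pat],\dots$ along which the action frequencies converge to some $f_\pat$ with $\lraname[f_\pat]=\lra{\pat}$ and $\lrvname[f_\pat]\le\lrvl{\pat}$. This yields, for each $\varepsilon$, frequency functions whose two-point mixture $\varepsilon$-dominates the conditional pair; a compactness/limit argument then removes the $\varepsilon$. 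Your proposal to manufacture a whole replacement strategy, and in particular to obtain a measure $\nu_C$ on $F_C$ pushing forward to the conditional law of $(\lraname,\lrvlname)$, is much more than what is needed and you do not indicate how to build it.

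\medskip
\noindent\textbf{Step~2.} Your reduction to two deterministic strategies asks for three conditions simultaneously: (a) $\lambda_C\lraname[f_C]+(1-\lambda_C)\lraname[f'_C]=\lraname[\bar f_C]$; (b) $\lambda_C\lraname[f_C]^2+(1-\lambda_C)\lraname[f'_C]^2\ge \Exp_{\nu_C}[\lraname[f]^2]$; and (c) the mixed local variance does not increase. But (b) implies (c) only if you \emph{also} control the $\overline{r^2}$ term, i.e.\ $\lambda_C\sum_a f_C(a)r(a)^2+(1-\lambda_C)\sum_a f'_C(a)r(a)^2\le\sum_a \bar f_C(a)r(a)^2$, and nothing in your Carath\'eodory/boundary sketch for $P_C$ delivers that together with (a) and (b) at vertices of $F_C$. (Note too that vertices of the $2$-dimensional image $P_C$ need not be images of vertices of $F_C$, and a deterministic memoryless strategy on a MEC need not have a single BSCC, so ``vertex of $F_C$'' is not the same as ``deterministic memoryless behaviour''.)

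The paper avoids this whole balancing act by never decomposing $\lrvlname$ as $\overline{r^2}-\lraname^2$. It uses a single elementary $2$-dimensional lemma: any convex combination of points $(a_i,b_i)\in\Rset^2$ is componentwise $\ge$ a convex combination of just two of them. Applied to the pairs $(\lraname,\lrvlname)$---first over the $\varepsilon$-boxes, then over BSCCs of the induced memoryless strategies, and finally (after Derman's decomposition of a memoryless randomized strategy as a convex combination of deterministic ones) over deterministic strategies---this lemma, together with the observation that replacing the centring constant by the BSCC's own mean can only decrease $\sum f(a)(r(a)-c)^2$, yields $\pi_C,\pi'_C$ and $h_C$ with $h_C(\Ex{\pi_C}{s}{\lraname},\Ex{\pi_C}{s}{\lrvlname})+(1-h_C)(\Ex{\pi'_C}{s}{\lraname},\Ex{\pi'_C}{s}{\lrvlname})\le(\Ex{\zeta}{s_0}{\lraname\mid R_C},\Ex{\zeta}{s_0}{\lrvlname\mid R_C})$. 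No equality in the first coordinate is required, and no second-moment bookkeeping is needed.
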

\begin{proof}
By Lemma~\ref{lemma:stay-mec} $\sum_{C\in \Mec(G)}\ \Prb(R_C)=1$, and
\begin{multline*}
(\Ex{\zeta}{s_0}{\lraname},\Ex{\zeta}{s_0}{\lrvlname})\\\quad=\Big(\sum_{C\in \Mec(G)}\!\!\!\! \Prb(R_C)\cdot \Ex{\zeta}{s_0}{\lraname\mid R_C},\sum_{C\in \Mec(G)}\!\!\!\! \Prb(R_C)\cdot \Ex{\zeta}{s_0}{\lrvlname\mid R_C}\Big).
\end{multline*}
In what follows we sometimes treat each MEC $C$ as a standalone MDP obtained by restricting $G$ to $C$. Then, for example, $C^{\kappa}$ denotes the Markov chain obtained by applying the strategy $\kappa$ to the component $C$.

The next proposition formalizes the main idea of our proof:
\begin{proposition}\label{prop:strong-opt-local}
Let $C$ be a MEC.
There are two frequency functions $f_C:C\rightarrow \Rset$ and $f'_C:C\rightarrow \Rset$ on $C$, and a number $p_C\in [0,1]$ such that the following holds
\begin{multline*}
p_C\cdot (\lraname[f_C],\lrvname[f_C])+(1-p_C)\cdot
 (\lraname[f'_C],\lrvname[f'_C])\\
  \leq (\Ex{\zeta}{s_0}{\lraname|R_C}, \Ex{\zeta}{s_0}{\lrvlname|R_C})\,.
\end{multline*}
\end{proposition}
The proposition is proved in
\ifthenelse{\isundefined{\techreport}}{\cite{techrepLics13}}{Appendix~\ref{app-strong-opt}},
where we first show that it follows
from a relaxed version of the proposition which gives us, for any $\varepsilon>0$, frequency functions $f_\varepsilon$ and $f'_\varepsilon$ and number $p_\varepsilon$ such that
\begin{multline*}
p_\varepsilon\cdot (\lraname[f_\varepsilon],\lrvname[f_\varepsilon])+(1-p_\varepsilon)\cdot
 (\lraname[f'_\varepsilon],\lrvname[f'_\varepsilon])\\
  \leq (\Ex{\zeta}{s_0}{\lraname|R_C}, \Ex{\zeta}{s_0}{\lrvlname|R_C}) + (\varepsilon, \varepsilon)\,.
\end{multline*}
Then we show that the weaker version holds by showing that there are runs $\omega$ from which we can extract
the frequency functions $f_\varepsilon$ and $f_\varepsilon'$. The selection of runs is rather involved, since it is not clear a priori which runs to pick or even how
to extract the frequencies from them (note that the naive approach of considering the average ratio of taking a given action $a$ does not work, since the averages might not be defined).

Proposition~\ref{prop:strong-opt-local} implies that any expected mean payoff and local variance achievable on a MEC $C$
can be achieved by a composition of two memoryless randomized strategies giving precisely the frequencies of actions specified by $f_C$ and $f'_C$ (note that $\lrvname[f_C]$ and $\lrvname[f'_C]$ may not be equal to the expected local variance of such strategies, but we show that the ``real'' expected local variance cannot be larger). 
By further selecting BSCCs of these strategies and using some de-randomization tricks we obtain, for every MEC $C$, two memoryless deterministic strategies $\pi_C$ and $\pi'_C$ and a constant $h_C$ such that for every
$s\in \MECstate{C}$ the value of
$h_C (\Ex{\pi_C}{s}{\lraname}, \Ex{\pi_C}{s}{\lrvlname})+(1-h_C) (\Ex{\pi'_C}{s}{\lraname}, \Ex{\pi'_C}{s}{\lrvlname})$
is equal to a fixed $(u',v')$ (since both $C^{\pi_C}$ and $C^{\pi'_C}$ have only one BSCC) satisfying $(u',v') \leq (\Ex{\zeta}{s_0}{\lraname|R_C}, \Ex{\zeta}{s_0}{\lrvlname|R_C})$. We define two memoryless deterministic strategies $\pi$ and $\pi'$ that in every $C$ behave as $\pi_C$ and $\pi'_C$, respectively.
Details of the steps above are
\ifthenelse{\isundefined{\techreport}}{in \cite{techrepLics13}}{ postponed to Appendix~\ref{app-local-main}}.

Using similar arguments as in~\cite{BBCFK:MDP-two-views} (that in turn depend on results of~\cite{EKVY:multi-objectives}) one may show that there is a $2$-memory stochastic update strategy $\sigma'$, with two memory locations $m_1,m_2$, satisfying the following properties: In $m_1$, the strategy $\sigma'$ may randomize and may stochastically update its memory to $m_2$. In $m_2$, the strategy $\sigma'$ never changes its memory. Most importantly, the probability that $\sigma'$ updates its memory from $m_1$ to $m_2$ in a given MEC $C$ is equal to $\Pr{\zeta}{s_0}{R_C}$.

We modify the strategy $\sigma'$ to the desired 3-memory $\sigma$ by splitting the memory element $m_2$ into two elements $m_2,m'_2$. Whenever $\sigma'$ updates to $m_2$, the strategy $\sigma$ further chooses randomly whether to update either to $m_2$ (with prob. $h_C$), or to $m'_2$ (with prob. $1-h_C$). Once in $m_2$ or $m'_2$, the strategy $\sigma$ never changes its memory and plays according to $\pi$ or $\pi'$, respectively.
For every MEC $C$ we have
$\Prb^{\sigma}_{s_0}(\text{update to } m_2 \text{ in }C)=\Prb(R_C)\cdot h_C$ and
$\Prb^{\sigma}_{s_0}(\text{update to } m'_2 \text{ in }C)=\Prb(R_C)\cdot (1-h_C)$. Thus we get
\begin{equation}
(\Ex{\zeta}{s_0}{\lraname},\Ex{\zeta}{s_0}{\lrvlname})
 = (\Ex{\sigma}{s_0}{\lraname},\Ex{\sigma}{s_0}{\lrvlname})\label{eqn-t2}
\end{equation}
as shown in
\ifthenelse{\isundefined{\techreport}}{\cite{techrepLics13}}{Appendix~\ref{app-eqn-t2}}.
\end{proof}
Proposition~\ref{prop:local-main} combined with results of~\cite{BBCFK:MDP-two-views} allows us to finish the proof of Theorem~\ref{thm:local-np-alg}.
\begin{proof}[Proof (of Theorem~\ref{thm:local-np-alg})]
Intuitively, the non-deterministic polynomial time algorithm works as follows: First, guess two memoryless deterministic strategies $\pi$ and $\pi'$. Verify whether there is a 3-memory stochastic update strategy $\sigma$ with memory elements $m_1,m_2,m'_2$ which in $m_2$ behaves as $\pi$, and in $m'_2$ behaves as $\pi'$ such that
$(\Ex{\sigma}{s_0}{\lraname}, \Ex{\sigma}{s_0}{\lrvlname}) \leq  (u,v)$. Note that it suffices to compute the probability distributions chosen by $\sigma$ in the memory element $m_1$ and the probabilities of updating to $m_2$ and $m'_2$.
This can be done by a reduction to the controller synthesis problem for two dimensional mean-payoff objectives studied in~\cite{BBCFK:MDP-two-views}.

More concretely, we construct a new MDP $G[\pi,\pi']$ with 
\begin{itemize}
\item the set of states $S':=\{s_{in}\} \cup (S\times \{m_1,m_2,m'_2\})$

(Intuitively, the $m_1,m_2,m'_2$ correspond to the memory elements of $\sigma$.)
\item the set of actions%
  \footnote{To keep the presentation simple, here we do not require that every action is enabled in at most one step.}
  $A\cup \{[\pi],[\pi'], \mathit{default}\}$

\item the mapping $\mathit{Act}'$ defined by $\mathit{Act}'(s_{in}) = \{[\pi],[\pi'],\mathit{default}\}$,
 $\mathit{Act}'((s,m_1))=\mathit{Act}(s)\cup \{[\pi],[\pi']\}$ 
 and $\mathit{Act}'((s,m_2))=\mathit{Act}'((s,m'_2))=\{\mathit{default}\}$

(Intuitively, the actions $[\pi]$ and $[\pi']$ simulate the update of the memory element $m_2$ and to $m'_2$, respectively, in $\sigma$. As $\sigma$ is supposed to behave in a fixed way in $m_2$ and $m'_2$, we do not need to simulate its behavior in these states in $G[\pi,\pi']$. Hence, the $G[\pi,\pi']$ just loops under the action $\mathit{default}$ in the states $(s,m_2)$ and $(s,m'_2)$. The action $\mathit{default}$ is also used in the initial state to denote that the initial memory element is $m_1$.)
\item the probabilistic transition function $\delta'$ defined as follows:
  \begin{itemize}
  \item $\delta'(s_{in})(\mathit{default})((s_0,m_1))=\delta(s_{in},[\pi])((s_0,m_2))=\delta(s_{in},[\pi'])((s_0,m'_2))=1$ for $a\in A$ and $t\in S$
  \item $\delta'((s,m_1),a)((t,m_1))=\delta(s,a)(t)$ for $a\in A$ and $t\in S$
  \item $\delta'((s,m_1),[\pi])((s,m_2))=\\\delta'((s,m_1),[\pi'])((s,m'_2))=1$ 
  \item $\delta'((s,m_2),\mathit{default})((s,m_2))=\\\delta'((s,m'_2),\mathit{default})((s,m'_2))=1$
  \end{itemize}
\end{itemize}
We define a vector of rewards $\vec{r}:S'\rightarrow \Rset^2$ as follows: $\vec{r}((s,m_2)):=(\Ex{\pi}{s}{\lraname}, \Ex{\pi}{s}{\lrvlname})$ and $\vec{r}((s,m'_2)):=(\Ex{\pi'}{s}{\lraname}, \Ex{\pi'}{s}{\lrvlname})$ and
$\vec{r}(s_{in})=\vec{r}((s,m_1)):=(\max_{a\in A} r(a)+1,(\max_{a\in A} r(a)-\min_{a\in A} r(a))^2+1)$.
(Here the rewards are chosen in such a way that no (Pareto) optimal scheduler can stay in the states of the form $(s,m_1)$ with positive probability.) Note that $\vec{r}$ can be computed in polynomial time using standard algorithms for computing mean-payoff in Markov chains~\cite{Norris:book}.

In
\ifthenelse{\isundefined{\techreport}}{\cite{techrepLics13}}{Appendix~\ref{app-local-np-alg}}
we show that if there is a strategy $\zeta$ for $G$ such that
$(\Ex{\zeta}{s_0}{\lraname}, \Ex{\zeta}{s_0}{\lrvlname})\leq (u,v)$, then there is a (memoryless randomized) strategy $\rho$ in $G[\pi,\pi']$ such that $(\Ex{\rho}{s_{in}}{\lraname^{\vec{r}_1}},\Ex{\rho}{s_{in}}{\lraname^{\vec{r}_2}})\leq (u,v)$. Also, we show that such $\rho$ can be computed in polynomial time using results of~\cite{BBCFK:MDP-two-views}.
Finally, it is straightforward to move the second component of the states of $G[\pi,\pi']$ to the memory of a stochastic update strategy which gives a 3-memory stochastic update strategy $\sigma$ for $G$ with the desired properties. Thus a non-deterministic polynomial time algorithm works as follows: (1) guess $\pi,\pi'$ (2) construct $G[\pi,\pi']$ and $\vec{r}$ (3) compute $\rho$ (if it exists). As noted above, $\rho$ can be transformed to the 3-memory stochastic update strategy $\sigma$ in polynomial time.

Finally, we can show that Pareto optimal strategies exist by a reasoning similar to the one used in global variance.
\end{proof}

Theorem~\ref{thm:local-np-alg} and Remark~\ref{remark1} give the following corollary.

\begin{corollary}
The approximate Pareto curve for local variance can be computed in 
exponential time.
\end{corollary}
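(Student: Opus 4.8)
The plan is to derive this directly from Theorem~\ref{thm:local-np-alg} together with the generic grid-search reduction already spelled out in Remark~\ref{remark1}. First I would record that, by Theorem~\ref{thm:local-np-alg}, deciding whether some strategy $\sigma$ satisfies $(\Ex{\sigma}{s_0}{\lraname},\Ex{\sigma}{s_0}{\lrvlname}) \le (u,v)$ lies in $\NP$. Consequently the approximation variant of the strategy-existence problem is decidable: it can be answered by a single exact decision query (e.g.\ on the target obtained from $(u,v)$ by decreasing each coordinate by $\varepsilon/2$), since any strategy witnessing $(\Ex{\sigma}{s_0}{\lraname},\Ex{\sigma}{s_0}{\lrvlname}) \le (u-\varepsilon,v-\varepsilon)$ also witnesses the shifted target, and conversely any strategy witnessing the shifted target witnesses $(u,v)$; and $\NP$ is contained in deterministic exponential time, as an accepting certificate can be found by exhaustive search.

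Second, I would invoke the algorithm of Remark~\ref{remark1} with this decision procedure plugged in as the oracle. Writing $R = \max_{a\in A}|r(a)|$, every strategy $\sigma$ satisfies $|\Ex{\sigma}{s_0}{\lraname}| \le R$ and $\Ex{\sigma}{s_0}{\lrvlname} \le R^2$, so it suffices to probe the $\mathcal{O}(|R|^3/\varepsilon^2)$ points of the corresponding $\varepsilon$-grid; from the answers one assembles, exactly as in Remark~\ref{remark1}, a set $P$ of $\mathcal{O}(|R|^2/\varepsilon)$ points meeting the required two-sided $\varepsilon$-approximation guarantee for the Pareto curve.

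The only thing left is the running-time accounting. Both $R$ and $1/\varepsilon$ are bounded only by an exponential in the bit size of the input, so the grid has at most exponentially many points; since each point triggers one call to a procedure running in deterministic exponential time, the whole computation still terminates in exponential time. I do not anticipate any genuine obstacle here: the real content already resides in Theorem~\ref{thm:local-np-alg} (the $\NP$ membership) and in the reduction of Pareto-curve approximation to grid search provided by Remark~\ref{remark1}; the corollary is merely the bookkeeping that composes an $\NP$ oracle with a pseudo-polynomial number of queries.
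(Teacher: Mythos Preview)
Your proposal is correct and follows exactly the route the paper intends: the corollary is stated immediately after Theorem~\ref{thm:local-np-alg} with the one-line justification that it follows from that theorem together with Remark~\ref{remark1}, and your write-up simply unpacks this composition. One small simplification: since the exact decision problem is in $\NP$, you can solve the approximation variant by a single exact query on $(u,v)$ itself (no $\varepsilon/2$ shift is needed), because any strategy achieving $(u-\varepsilon,v-\varepsilon)$ a fortiori achieves $(u,v)$.
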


\section{Hybrid variance}
\label{sec-hybrid}

We start by showing that memory or randomization is needed for 
Pareto optimal strategies in unichain MDPs for hybrid variance; and then 
show that both memory and randomization is required for hybrid variance for
general MDPs.

\begin{example}
Consider again the MDP from Fig.~\ref{fig:unichain-local}, and any memoryless deterministic strategy. There are in fact two of these. One, which choses $a$ in $s_1$,
yields the variance $1$, and the other, which chooses $b$ in $s_1$, yields the expectation $2$.

However, a memoryless randomized strategy $\sigma$ which randomizes uniformly between $a$ and $b$
yields the expectation $1.5$ and variance
\ifthenelse{\isundefined{\techreport}}{$0.75$}{
\begin{multline*}
 \Big(0.5\cdot \big(0.5\cdot(0-1.5)^2 + 0.5\cdot(2-1.5)^2\big)\Big) + \Big(0.5\cdot (2-0.15)^2\Big)\\
 = 0.25\cdot 2.25 + 0.75\cdot 0.25 = 0.75
\end{multline*}}
which makes it incomparable to either of the memoryless deterministic strategies.
Similarly, the deterministic strategy which alternates between $a$ and $b$ on subsequent visits of
$s_1$ yields the same values as the $\sigma$ above. This gives us that memory or randomization is
needed even to achieve a non-Pareto point $(1.6, 0.8)$.
\end{example}

Before proceeding with general MDPs, we give the following proposition, which states
an interesting and important relation between the three notions of variance%
\footnote{Note that Proposition~\ref{prop:relation} does {\em not} simplify the decision problem for hybrid 
variance, since it does not imply that the algorithms for global and local variance could be combined.}.
The proposition is proved in
\ifthenelse{\isundefined{\techreport}}{\cite{techrepLics13}}{Appendix~\ref{app:relation}}.
\begin{proposition}\label{prop:relation}
Suppose $\sigma$ is a strategy under which for almost all $\omega$ the limits exists for
$\lrvhname(\omega)$, $\lraname(\omega)$, and $\lrvname(\omega)$ (i.e. the $\limsup$ in their definitions can be
swapped for $\lim$). Then
\[\Ex{\sigma}{s}{\lrvhname} = \Va{\sigma}{s}{\lraname} + \Ex{\sigma}{s}{\lrvname}\;.\]
\end{proposition}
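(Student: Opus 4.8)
The plan is to reduce the statement to a pointwise identity that holds for almost every run and then integrate. Fix the strategy $\sigma$ and the initial state $s$, and abbreviate $E := \Ex{\sigma}{s}{\lraname}$; note that $E$ is a constant once $\sigma$ and $s$ are fixed. For a run $\pat$ write $r_i := r(A_i(\pat))$ and $m := \lra{\pat}$. By hypothesis, for almost every $\pat$ the three quantities $\lra{\pat}$, $\lrvname(\pat)$, and $\lrvhname(\pat)$ are genuine limits (not merely $\limsup$'s), so in particular $m = \lim_{n\to\infty}\frac{1}{n}\sum_{i=0}^{n-1} r_i$ exists.

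For such a $\pat$ I would expand the square inside the definition of $\lrvhname$, namely $(r_i - E)^2 = (r_i - m)^2 + 2(m-E)(r_i - m) + (m-E)^2$, and average over $i = 0,\dots,n-1$. As $n\to\infty$: the average of the first term converges to $\lrvname(\pat)$; the last term is the constant $(m-E)^2$; and the middle term equals $2(m-E)\big(\frac{1}{n}\sum_{i=0}^{n-1} r_i - m\big)$, which tends to $0$ because $\frac{1}{n}\sum_{i=0}^{n-1} r_i \to m$. Hence for almost every $\pat$ we obtain the pointwise identity
\[
\lrvhname(\pat) = \lrvname(\pat) + \big(\lra{\pat} - \Ex{\sigma}{s}{\lraname}\big)^2 .
\]

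It then remains to take $\Ex{\sigma}{s}{\cdot}$ of both sides. Since $|r(a)| \le R := \max_{a\in A}|r(a)|$ for every action, all the random variables in play are bounded (e.g. $|\lraname| \le R$ and $0 \le \lrvhname, \lrvname \le 4R^2$) and measurable (as a.s.\ limits of measurable functions), so expectations are finite and linearity of expectation applies, giving
\[
\Ex{\sigma}{s}{\lrvhname} = \Ex{\sigma}{s}{\lrvname} + \Ex{\sigma}{s}{(\lraname - \Ex{\sigma}{s}{\lraname})^2} = \Ex{\sigma}{s}{\lrvname} + \Va{\sigma}{s}{\lraname},
\]
where the last equality is the definition of variance recalled in the preliminaries. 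I do not expect a genuine obstacle here: the only points requiring care are (i) using the hypothesis that the $\limsup$'s are in fact limits, so that the elementary arithmetic-of-limits argument is valid on each fixed run (without it the cross term cannot be controlled), and (ii) the integrability/measurability bookkeeping needed to pass expectations through the identity, which is immediate from boundedness of the rewards.
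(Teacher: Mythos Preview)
Your proof is correct and follows essentially the same approach as the paper's: both expand the square and use the hypothesis that the relevant Ces\`aro limits exist to split the limit into pieces. The only cosmetic difference is that you center $(r_i-E)^2$ around $m=\lra{\pat}$ to obtain the pointwise identity $\lrvhname(\pat)=\lrvname(\pat)+(\lra{\pat}-E)^2$ before integrating, whereas the paper expands both $\lrvname$ and $\lrvhname$ in terms of the common quantity $\Ex{\sigma}{s}{\lim_n \tfrac{1}{n}\sum_i r(A_i)^2}$ and then compares; these are equivalent one-line algebraic rearrangements.
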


Now we can show that both memory and randomization is needed, by extending Example~\ref{fig-MDP-mem-rand}.
\begin{example}\label{ex:hybrid-mem-ran}
Consider again the MDP from Fig.~\ref{fig-MDP-mem-rand}. Under every strategy,
every run $\omega$ satisfies $\lrvname(\omega)=0$, and
the limits for $\lraname(\omega)$, $\lrvname(\omega)$
and $\lrvhname(\omega)$ exist. Thus
$\Ex{\zeta}{s}{\lrvname} = 0$ for all $\zeta$ and by
Proposition~\ref{prop:relation} we get
$\Ex{\zeta}{s}{\lrvhname} = \Va{\zeta}{s}{\lraname}$.
Hence we can use Example~\ref{fig-MDP-mem-rand} to reason that both
memory and randomization is needed to achieve the Pareto point $(4,2)$ in
Fig.~\ref{fig-MDP-mem-rand}.
\end{example}

Now we prove the main theorem of this section.
\begin{theorem}\label{thm:hybrid}
  If there is a strategy $\zeta$ satisfying
  $(\Ex{\zeta}{s}{\lraname}, \Ex{\zeta}{s_0}{\lrvhname})
     \leq (u,v)$,
  then there is a 2-memory strategy with the same properties.
The problem whether such a strategy exists belongs to $\mathbf{NP}$,
and approximation of the answer
  can be done in polynomial time. Moreover, Pareto optimal strategies always exist.
\end{theorem}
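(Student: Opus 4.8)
The plan is to transplant the machinery of the global variance case (Proposition~\ref{prop:global-lp} and Theorem~\ref{thm:global}) using one simple but decisive observation: the hybrid variance is, up to a shift, an ordinary mean payoff. Once $\sigma$ and $s$ are fixed, $E:=\Ex{\sigma}{s}{\lraname}$ is a \emph{constant}, and $r(A_i(\pat))$ enters the definition of $\lrvhname$ only through the shifted reward function $r_E(a):=(r(a)-E)^2$; hence $\lrvh{\sigma}{s}{\pat}=\lraname^{r_E}(\pat)$, the $\limsup$-mean payoff of $\pat$ under $r_E$, and therefore $\Ex{\sigma}{s}{\lrvhname}=\Ex{\sigma}{s}{\lraname^{r_E}}$, with \emph{no} assumption on the existence of limits. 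Consequently, achievability of $(u,v)$ is equivalent to the existence of $e\le u$ and a strategy with $\Ex{\sigma}{s}{\lraname}=e$ and $\Ex{\sigma}{s}{\lraname^{r_e}}\le v$, i.e. a two-dimensional expected mean-payoff problem for the reward pair $(r,r_e)$, parametrized by the scalar $e$.

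First I would prove the hybrid analogue of Proposition~\ref{prop:global-lp}: a system $L_{\mathrm{hyb}}$ whose feasibility is equivalent to achievability of $(u,v)$. Using Lemma~\ref{lemma:stay-mec}, an arbitrary witness $\zeta$ is decomposed according to its final MEC; variables $y_\kappa$ ($\kappa\in S\cup A$) again record, via the constructions of \cite{EKVY:multi-objectives,BBCFK:MDP-two-views}, the probabilities $y_C:=\sum_{t\in S\cap C}y_t$ of eventually staying in $C\in\Mec(G)$. Inside each MEC I would carry two further variables: the conditional expected mean payoff $\ell_C$ and the conditional expected long-run average of the \emph{squared} reward $m_C$. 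The crucial point is that the pairs $(\ell_C,m_C)$ realizable in $C$ form the polytope $\{(\sum_a f(a)r(a),\ \sum_a f(a)r(a)^2):f\text{ a frequency function on }C\}$, which is cut out by linear constraints in fresh per-MEC frequency variables; a \emph{single} frequency function per MEC already realizes every pair, since this set is convex (this is exactly why $2$, not $3$, memory elements are needed). Replacing $\zeta$ inside each MEC by a frequency-function strategy (in the spirit of Lemma~\ref{lem-x_c-almost-surely}, so that the relevant limits exist almost surely and neither objective increases) gives $\Ex{\sigma}{s}{\lraname}=\sum_C y_C\ell_C$ and, expanding $(r(a)-E)^2$ and using the a.s.\ existence of limits inside each MEC, $\Ex{\sigma}{s}{\lrvhname}=\big(\sum_C y_C m_C\big)-\big(\sum_C y_C\ell_C\big)^2$. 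Thus $L_{\mathrm{hyb}}$ consists of the flow equations for $y$, the per-MEC polytope constraints, and the two inequalities $u\ge\sum_C y_C\ell_C$ and $v\ge(\sum_C y_C m_C)-(\sum_C y_C\ell_C)^2$; its only nonlinearity is the square of the linear form $e:=\sum_C y_C\ell_C$, a rank-one negative semidefinite term, exactly as in Theorem~\ref{thm:global}.

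The four assertions then follow. \emph{(2-memory.)} From a solution of $L_{\mathrm{hyb}}$ one builds, as in Proposition~\ref{prop:global-lp}(2), a strategy that plays a memoryless (de-randomized, ``split'') version of the \cite{BBCFK:MDP-two-views} prefix strategy until reaching a bottom SCC inside some MEC $C$ with the prescribed probabilities $y_C$, and then switches — one extra memory bit — to a memoryless randomized strategy inside $C$ realizing the chosen frequency function (whose action in a state of $C$ is determined by that state, so one bit handles all MECs uniformly); such a strategy induces a finite Markov chain on $C$, so the long-run averages of $r$ and of $r^2$, and hence $\lrvhname$, take their intended values almost surely, giving $(\Ex{\sigma}{s}{\lraname},\Ex{\sigma}{s}{\lrvhname})\le(u,v)$. \emph{(NP.)} Guess the rational value of $e$; it may be taken with polynomially many bits, being the value of the linear form $e$ at a vertex of the polyhedron of linear constraints of $L_{\mathrm{hyb}}$ at which the concave objective $(\sum_C y_C m_C)-e^2$ attains its minimum. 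Substituting this constant kills the term $-e^2$, turning $L_{\mathrm{hyb}}$ into a linear program whose feasibility is checked in polynomial time (equivalently, run the polynomial-time two-dimensional mean-payoff algorithm of \cite{BBCFK:MDP-two-views} on $(r,r_e)$); this is exactly the route by which \cite{Vavasis90} would place global variance in NP, and here it applies unconditionally. \emph{(Polynomial-time approximation.)} Unlike the global case there is no unknown threshold $z$ to enumerate, so $L_{\mathrm{hyb}}$ is directly a quadratic program with a single rank-one negative semidefinite term, which is approximated in polynomial time by \cite{Vavasis92}. \emph{(Pareto optimality.)} The set of achievable $(\Ex{\sigma}{s}{\lraname},\Ex{\sigma}{s}{\lrvhname})$ is a continuous image of the bounded closed (hence compact) feasible region of $L_{\mathrm{hyb}}$, so it is compact and the Pareto curve is attained, as argued after Proposition~\ref{prop:global-lp}.

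The main obstacle is the analogue of Proposition~\ref{prop:global-lp}(1): showing that an arbitrary strategy — possibly with infinite memory and with none of the relevant limits existing — can be matched, MEC by MEC, by a frequency-function strategy for which limits exist almost surely, without increasing either $\Ex{\sigma}{s}{\lraname}$ or $\Ex{\sigma}{s}{\lrvhname}$, and that the resulting values are precisely the right-hand sides of $L_{\mathrm{hyb}}$. This is the hybrid counterpart of the delicate ``extract the frequencies from carefully selected runs'' step of Proposition~\ref{prop:strong-opt-local}, now applied to the pair $(\lraname,\overline{r^2})$; one must argue that conditioning on staying in a MEC cannot make the expected long-run average of the squared reward smaller than what the best compatible frequency function achieves. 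Note that Proposition~\ref{prop:relation} does not short-circuit this step — it only confirms that on the well-behaved strategies we ultimately construct the value equals $\Va{\sigma}{s}{\lraname}+\Ex{\sigma}{s}{\lrvname}$, consistently with the footnote.
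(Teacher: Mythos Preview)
Your proposal is correct and follows essentially the same route as the paper: reduce to a quadratic system whose only nonlinearity is the rank-one term $-\big(\sum_a x_a r(a)\big)^2$, derive the 2-memory strategy from a solution via the construction of \cite{BBCFK:MDP-two-views}, obtain NP and polynomial-time approximation from the rank-one negative semi-definite structure via \cite{Vavasis90,Vavasis92}, and get Pareto optimality by compactness. Your identification of the ``main obstacle'' is also exactly what the paper does --- it passes through an intermediate 3-memory step (the hybrid analogue of Proposition~\ref{prop:local-main}, extracting frequency functions from arbitrary runs) before showing that the resulting frequencies already define a solution of $L_H$; the paper's $L_H$ uses action-level scaled frequencies $x_a$ directly, which is what your per-MEC variables $(\ell_C,m_C)$ must unfold to in order to avoid the bilinear terms $y_C\ell_C$, so be sure your ``fresh per-MEC frequency variables'' are the scaled ones $x_a=y_{C(a)}f_{C(a)}(a)$ rather than unnormalized $f_C$.

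One small remark on the NP argument: your ``guess $e$, then LP'' is a valid restatement of why \cite{Vavasis90} gives NP here, but the justification that $e$ has polynomially many bits should be stated as ``the concave objective attains its minimum at a vertex of the (rational) polytope, hence all coordinates --- in particular $e$ --- are rational of polynomial size,'' rather than guessing $e$ alone. The paper simply cites \cite{Vavasis90}; your version makes explicit why no additional unknown parameter (like the $z$ of the global case) obstructs NP membership, which is a nice clarification but not a different proof.
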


We start by proving that 2-memory stochastic update strategies are 
sufficient for Pareto optimality wrt.\ hybrid variance. 

\begin{figure}[t]
\begin{align}
\mathbf{1}_{s_0}(s) + \sum_{a\in A} y_{a}\cdot \delta(a)(s) & = 
 \sum_{a\in \act{s}} y_{a} + y_s \hspace*{1em} \text{for all  $s\in S$}
\label{eq:yah}\\
\sum_{C\in \Mec(G)}\sum_{s\in S\cap C}y_{s} & =  1 &
\label{eq:ys1h}\\
 \sum_{s\in C} y_{s} & =  \sum_{a\in A\cap C}\!\! x_{a} \hspace*{1.1em}
   \text{for all $C\in\Mec(G)$}
\label{eq:yCh}\\
\sum_{a\in A} x_{a}\cdot \delta(a)(s) & = 
\sum_{a\in \act{s}} x_{a} \hspace*{1em} \text{for all  $s\in S$}
\label{eq:xah}\\
u &\geq \sum_{a\in A} x_{a}\cdot r(a)
\label{eq:rewh}\\
v &\geq\sum_{a\in A} x_{a}\cdot r^2(a) -\bigg(\sum_{a\in A} x_{a}\cdot r(a)\bigg)^2
\label{eq:varh}
\end{align}
\caption{The system $L_H$. (Here $\mathbf{1}_{s_0}(s)=1$ 
if $s = s_0$, and $\mathbf{1}_{s_0}(s)=0$ otherwise.)}
\label{system-LH}
\end{figure}

\begin{proposition}\label{prop:hybrid-details}
Let $s_0\in S$ and $u,v\in \Rset$.
\begin{enumerate}
\item \label{thm:hybrid-main-stop} 
If there is a strategy $\zeta$ satisfying
$(\Ex{\zeta}{s_0}{\lraname}, \Ex{\zeta}{s_0}{\lrvhname}) \leq (u,v)$,
then the system $L_H$ (Fig.~\ref{system-LH}) has a non-negative solution.
\item \label{thm:hybrid-main-ptos} If there is a non-negative solution for the 
system $L_H$ (Fig.~\ref{system-LH}), 
then there is a 2-memory stochastic-update strategy $\sigma$ satisfying 
$(\Ex{\sigma}{s_0}{\lraname}, \Ex{\sigma}{s_0}{\lrvhname}) \leq (u,v)$.
\end{enumerate}
\end{proposition}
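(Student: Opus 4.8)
The plan is to establish the two implications of Proposition~\ref{prop:hybrid-details} separately, mirroring the argument for global variance (Proposition~\ref{prop:global-lp}) but exploiting a key simplification: the expected hybrid variance depends \emph{only} on the long-run frequencies of actions. Once a strategy $\sigma$ is fixed, $E_\sigma:=\Ex{\sigma}{s_0}{\lraname}$ is a constant, so $a\mapsto(r(a)-E_\sigma)^2$ is an ordinary reward function and $\lrvhname$ is merely its $\limsup$ mean-payoff. Hence if the action frequencies under $\sigma$ are $(x_a)$ with $x_a\ge 0$, $\sum_a x_a=1$, and the relevant limits exist almost surely, then with $\mu=\sum_a x_a r(a)$ one has $E_\sigma=\mu$ and $\Ex{\sigma}{s_0}{\lrvhname}=\sum_a x_a(r(a)-\mu)^2=\sum_a x_a r^2(a)-\mu^2$, which is precisely the right-hand side of~\eqref{eq:varh}. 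This identity is what the system $L_H$ encodes.

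For Item~1, let $\zeta$ realise $(\Ex{\zeta}{s_0}{\lraname},\Ex{\zeta}{s_0}{\lrvhname})\le(u,v)$ and put $E=\Ex{\zeta}{s_0}{\lraname}$. I would obtain the non-negative values $y_\kappa$ ($\kappa\in S\cup A$) from the reachability part of $G^\zeta_{s_0}$ exactly as in~\cite{BBCFK:MDP-two-views,EKVY:multi-objectives}, so that~\eqref{eq:yah},~\eqref{eq:ys1h} hold and $\sum_{s\in C}y_s=\Pr{\zeta}{s_0}{\staymec{C}}$ for each $C\in\Mec(G)$. For the $x_a$, let $N_a^n$ count occurrences of $a$ among the first $n$ actions; the bounded sequences $\tfrac1n\Ex{\zeta}{s_0}{N_a^{n}}$ admit a common convergent subsequence $(n_k)$ (there are finitely many $a$), and I set $x_a:=\lim_k\tfrac1{n_k}\Ex{\zeta}{s_0}{N_a^{n_k}}\ge 0$. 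Dividing the elementary balance identity for expected visit counts by $n_k$ and letting $k\to\infty$ yields~\eqref{eq:xah} and $\sum_a x_a=1$; and since $\tfrac1n\#\{i<n:A_i\in\MECact{C}\}\to\ind{\staymec{C}}$ almost surely (Lemma~\ref{lemma:stay-mec} and disjointness of MECs) and is bounded, dominated convergence gives $\sum_{a\in\MECact{C}}x_a=\Pr{\zeta}{s_0}{\staymec{C}}=\sum_{s\in C}y_s$, i.e.~\eqref{eq:yCh}. Finally, for any bounded reward $g$ the reverse Fatou lemma gives $\sum_a x_a g(a)=\lim_k\tfrac1{n_k}\Ex{\zeta}{s_0}{\sum_{i<n_k}g(A_i)}\le\limsup_n\tfrac1n\Ex{\zeta}{s_0}{\sum_{i<n}g(A_i)}\le\Ex{\zeta}{s_0}{\limsup_n\tfrac1n\sum_{i<n}g(A_i)}$; with $g=r$ this is~\eqref{eq:rewh}, and with $g=(r-E)^2$ it gives $\sum_a x_a(r(a)-E)^2\le\Ex{\zeta}{s_0}{\lrvhname}\le v$. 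As $\sum_a x_a=1$, the map $t\mapsto\sum_a x_a(r(a)-t)^2$ is minimised at $t=\mu:=\sum_a x_a r(a)$, so $\sum_a x_a r^2(a)-\mu^2\le\sum_a x_a(r(a)-E)^2\le v$, which is~\eqref{eq:varh}.

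For Item~2, let $(y_\kappa),(x_a)$ be a non-negative solution of $L_H$. Since $(x_a)$ is a non-negative balanced flow by~\eqref{eq:xah}, its support lies inside $\bigcup_{C\in\Mec(G)}\MECact{C}$ (a standard fact, provable via the stationary distribution it induces), so $\sum_a x_a=1$ by~\eqref{eq:ys1h}--\eqref{eq:yCh}. For each $C$ with $y_C:=\sum_{s\in C}y_s>0$, the normalisation $f_C:=x|_{\MECact{C}}/y_C$ is a frequency function on $C$ by~\eqref{eq:xah}--\eqref{eq:yCh}, and by the constructions of~\cite{EKVY:multi-objectives,BBCFK:MDP-two-views} there is a memoryless randomised strategy $\pi_C$ whose induced chain on $C$ has, from any state reachable after committing, a single BSCC realising these frequencies; in particular the limits defining $\lraname$ and $\lrvhname$ exist almost surely under $\pi_C$. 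I then build a $2$-memory stochastic-update $\sigma$ with memory $\{m_1,m_2\}$: in $m_1$ it runs the reachability strategy built from $(y_\kappa)$ as in~\cite{BBCFK:MDP-two-views,EKVY:multi-objectives}, which commits (moves to $m_2$) inside $C$ with probability exactly $y_C$; in $m_2$ it plays $\pi_C$ forever in whichever $C$ it committed to. As in the global-variance case, the $m_1$-part can be made memoryless following~\cite{BBCFK:MDP-two-views}, so two memory elements suffice. Since a run a.s.\ commits in some $C$ and then follows $\pi_C$, limits exist a.s., and writing $\mu:=\Ex{\sigma}{s_0}{\lraname}$ we get $\Ex{\sigma}{s_0}{\lraname}=\sum_C y_C\,\lraname[f_C]=\sum_a x_a r(a)=\mu\le u$ by~\eqref{eq:rewh} and $\Ex{\sigma}{s_0}{\lrvhname}=\sum_C y_C\sum_{a\in\MECact{C}}f_C(a)(r(a)-\mu)^2=\sum_a x_a r^2(a)-\mu^2\le v$ by~\eqref{eq:varh}.

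The main obstacle is, as everywhere in this paper, the outer $\limsup$ in $\lrvhname$. In Item~1 the frequencies of an unrestricted $\zeta$ need not exist almost surely, so they must be extracted as limits of expectations along a common subsequence, and every inequality has to be pushed in the right direction using reverse Fatou (and dominated convergence for the ``$\sum_{a\in\MECact{C}}$'' identity); moreover the reference value $E$ inside $(r(A_i)-E)^2$ is fixed by $\zeta$ itself, which is exactly why the ``minimised at $t=\mu$'' comparison is unavoidable. In Item~2 the in-MEC strategies must be engineered so that the $\limsup$ genuinely collapses to an almost-sure limit realising the prescribed frequencies; otherwise $\Ex{\sigma}{s_0}{\lrvhname}$ would only be lower-bounded by $\sum_a x_a r^2(a)-\mu^2$, which is the wrong direction. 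Verifying that a balanced flow is supported on MECs, and threading the reachability-then-commit construction of~\cite{BBCFK:MDP-two-views,EKVY:multi-objectives} into a genuine $2$-memory strategy, are routine but need care.
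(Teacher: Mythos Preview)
Your argument is correct. For Item~2 you are doing essentially what the paper does: invoke the two-memory reach-then-commit construction of \cite{BBCFK:MDP-two-views,EKVY:multi-objectives} to realise the prescribed action frequencies $(x_a)$, and then use that for the resulting finite-memory strategy $\Ex{\sigma}{s_0}{\lrvhname}=\sum_a x_a r^2(a)-(\sum_a x_a r(a))^2$. (One caveat: a frequency function $f_C$ need not induce a \emph{single} BSCC in $C$, so the phrase ``a single BSCC realising these frequencies'' is not literally true; but this does not affect the computation, since what matters is only that the \emph{global} action frequencies under $\sigma$ equal $x_a$, which the cited construction guarantees.)

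For Item~1 your route genuinely differs from the paper's. The paper first reduces an arbitrary $\zeta$ to a finite-memory strategy via a separate result (their Proposition~\ref{prop:hybrid-finite-strategy}, proved through a two-frequency decomposition analogous to the local-variance argument), and only then reads off well-defined Ces\`aro frequencies $f(a)$ to populate $x_a$. You bypass this entirely: you extract the $x_a$ directly from an arbitrary $\zeta$ as subsequential limits of expected empirical frequencies, push the mean-payoff and hybrid-variance inequalities through with reverse Fatou, and then use the elementary fact that $t\mapsto\sum_a x_a(r(a)-t)^2$ is minimised at $t=\mu=\sum_a x_a r(a)$ to pass from the bound with centre $E=\Ex{\zeta}{s_0}{\lraname}$ to the bound with centre $\mu$ required by~\eqref{eq:varh}. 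This is more elementary and shorter; the paper's detour has the side benefit of establishing independently that $3$-memory strategies already achieve every achievable $(\lraname,\lrvhname)$ point, but that is in any case subsumed by the $2$-memory conclusion of Item~2.
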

Notice that we get the existence of Pareto optimal strategies as a side product of the above
proposition, similarly to the case of global variance.

We briefly sketch the main ingredients for the proof of 
Proposition~\ref{prop:hybrid-details}. 
We first establish the sufficiency of finite-memory strategies by showing that 
for an arbitrary strategy $\zeta$, there is a 3-memory stochastic update
strategy $\sigma$ such that 
$(\Ex{\sigma}{s_0}{\lraname}, \Ex{\sigma}{s_0}{\lrvhname}) \leq 
(\Ex{\zeta}{s_0}{\lraname}, \Ex{\zeta}{s_0}{\lrvhname})$.
The key idea of the proof of the construction of a 3-memory stochastic update
strategy $\sigma$ from an arbitrary strategy $\zeta$ is similar to the proof of 
Proposition~\ref{prop:local-main}. The details are in
\ifthenelse{\isundefined{\techreport}}{\cite{techrepLics13}}{Appendix~\ref{app-hybrid-finite}}.
We then focus on finite-memory strategies.
For a finite-memory strategy $\zeta$, the frequencies are well-defined,
and for an action $a\in A$, let
$
f(a)
\coloneqq
\lim_{\ell\to\infty}
\frac{1}{\ell} \sum_{t=0}^{\ell-1} \Pr{\zeta}{s_0}{A_t=a} 
$
denote the frequency of action $a$.
We show that setting
\(
x_a \coloneqq f(a)
\)
for all $a\in A$ satisfies  Eqns.~(\ref{eq:xah}), Eqns.~(\ref{eq:rewh}) and Eqns.~(\ref{eq:varh}) of~$L_H$.
To obtain $y_a$ and $y_s$, we define them in the same way as done 
in~\cite[Proposition~2]{BBCFK:MDP-two-views} using the results 
of~\cite{EKVY:multi-objectives}. The details are
\ifthenelse{\isundefined{\techreport}}{in \cite{techrepLics13}}{postponed to Appendix~\ref{app-hybrid-details-a}}.
This completes the proof of the first item.
The proof of the second item is as follows:
the construction of a 2-memory stochastic update strategy $\sigma$ from the 
constraints of the system $L_H$ (other than constraint of Eqns~\ref{eq:varh}) 
was presented in~\cite[Proposition~1]{BBCFK:MDP-two-views}.
The key argument to show that strategy $\sigma$ also satisfies 
Eqns~\ref{eq:varh} is obtained by establishing that for the 
strategy $\sigma$ we have:
$\Ex{\sigma}{s}{\lrvhname} = \Ex{\sigma}{s}{\lraname_{r^2}} - \Ex{\sigma}{s}{\lraname}^2$
(here $\lraname_{r^2}$ is the value of $\lraname$ w.r.t. reward function defined by $r^2(a)=r(a)^2$; the equality is shown in
\ifthenelse{\isundefined{\techreport}}{\cite{techrepLics13}}{Appendix~\ref{app-hybrid-details-b}}).
It follows immediately that Eqns~\ref{eq:varh} is satisfied.
This completes the proof of Proposition~\ref{prop:hybrid-details}.
Finally we show that for the quadratic program defined by the system 
$L_H$, the quadratic constraint satisfies the conditions of negative 
semi-definite programming with matrix of rank~1 (see
\ifthenelse{\isundefined{\techreport}}{\cite{techrepLics13}}{Appendix~\ref{app-hybrid-semidefinite}}).
Since negative semi-definite programs  can be decided in NP~\cite{Vavasis90} 
and with the additional restriction of rank~1 can be approximated in 
polynomial time~\cite{Vavasis92}, we get the complexity bounds of Theorem~\ref{thm:hybrid}.
Finally, Theorem~\ref{thm:hybrid} and Remark~\ref{remark1} give the following
result.

\begin{corollary}
The approximate Pareto curve for hybrid variance can be computed in 
pseudo-polynomial time.
\end{corollary}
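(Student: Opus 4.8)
The plan is to obtain this corollary purely by combining two already-available facts: the polynomial-time approximation algorithm for the strategy-existence problem furnished by Theorem~\ref{thm:hybrid}, and the generic Pareto-curve approximation scheme of Remark~\ref{remark1}. Thus the argument amounts to checking that the hypothesis of Remark~\ref{remark1} is satisfied and then tracking the running time of the resulting composite procedure.

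Concretely, I would first recall the scheme of Remark~\ref{remark1}. Put $R=\max_{a\in A}|r(a)|$. Every point of the hybrid-variance Pareto curve has the form $(\Ex{\sigma}{s}{\lraname},\Ex{\sigma}{s}{\lrvhname})$, with $|\Ex{\sigma}{s}{\lraname}|\le R$ and $\Ex{\sigma}{s}{\lrvhname}\le\mathcal{O}(R^2)$; the latter holds because each summand $(r(A_i(\pat))-\Ex{\sigma}{s}{\lraname})^2$ in the definition of $\lrvhname$ is at most $(2R)^2$. Hence the whole curve lies in a bounded rectangle, over which an $\varepsilon$-grid has $\mathcal{O}(R^3/\varepsilon^2)$ cells; for each grid point one runs the approximate strategy-existence test, and the grid points that are accepted (after the standard $\varepsilon$-shift) form a set $P$ with the two defining properties of an $\varepsilon$-approximate Pareto curve, exactly as shown in Remark~\ref{remark1}. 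By Theorem~\ref{thm:hybrid}, each such test runs in time polynomial in the sizes of $G$ and $r$ and in $1/\varepsilon$ (it reduces to approximating the rank-one negative semi-definite quadratic program $L_H$). Multiplying the two costs, the total running time is $\mathcal{O}(R^3/\varepsilon^2)$ times a polynomial in the input size and $1/\varepsilon$: polynomial in the input and in $1/\varepsilon$, but carrying a factor polynomial in the numerical value $R$ of the largest reward, which is written in binary. That is exactly a pseudo-polynomial bound, and it would become genuinely polynomial under a unary reward encoding.

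I do not expect any real obstacle. The only point that warrants a little care is that the $\varepsilon$-shift built into the approximate strategy-existence test still allows the accepted grid points to certify both the ``covering'' and the ``covered'' requirements of an $\varepsilon$-approximate Pareto curve; but this verification is identical to the one already performed in Remark~\ref{remark1}, and it is insensitive to which of the three notions of variance is in play, so nothing genuinely new is needed.
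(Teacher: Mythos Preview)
Your proposal is correct and follows exactly the paper's own approach: the corollary is stated immediately after the sentence ``Finally, Theorem~\ref{thm:hybrid} and Remark~\ref{remark1} give the following result,'' with no further argument. Your expansion of how the two ingredients combine and why the resulting bound is pseudo-polynomial (the grid size depends polynomially on the numerical value $R$) is accurate and matches what Remark~\ref{remark1} already spells out generically.
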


\newcommand{\wh}{\widehat}
\newcommand{\ovk}{\overline}

\section{Zero variance with optimal performance}\label{sect-zero}
Now we present polynomial-time algorithms to compute the optimal
expectation that can be ensured along with zero variance. The results are
captured in the following theorem.

\begin{theorem}\label{thm:zero}
The minimal expectation that can be ensured
\begin{enumerate}
 \item with zero hybrid variance can be computed in $O((|S|\cdot|A|)^2)$ time 
using discrete graph theoretic algorithms;
 \item with zero local variance can be computed in PTIME;
 \item with zero global variance can be computed in PTIME.
\end{enumerate}
\end{theorem}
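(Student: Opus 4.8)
The plan is to treat the three cases separately, but to exploit a common structural observation: zero variance (in each sense) forces all actions that can be executed with positive frequency in the long run to have equal reward, and the desired value is then obtained by an optimization over end components where we restrict the action set to a ``flat'' subset. First I would handle the hybrid case, which is cleanest. A strategy $\sigma$ has $\Ex{\sigma}{s}{\lrvhname}=0$ iff almost every run eventually uses only actions $a$ with $r(a)=\Ex{\sigma}{s}{\lraname}$; equivalently, there is a single constant $c$ such that almost surely the run stays in a MEC and from some point on visits only actions with reward $c$, and moreover $c$ is forced to be the mean payoff. So I would, for each rational $c$ that occurs as a reward value, form the sub-MDP $G_c$ obtained by deleting all actions $a$ with $r(a)\ne c$, and ask whether from $s$ one can almost surely reach and stay in an end component of $G_c$ (a reachability question on the MEC decomposition of $G_c$, solvable by the standard discrete algorithm). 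The minimal such $c$ for which this succeeds is the answer; this gives the $O((|S|\cdot|A|)^2)$ bound since the MEC decomposition and the almost-sure reachability computation are done once per candidate reward value, and there are at most $|A|$ of them (and one can in fact be smarter and do it in a single sweep of the MEC decomposition, which I would spell out to get the stated quadratic bound).

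For the global variance case, the key point is that $\Va{\sigma}{s}{\lraname}=0$ means the mean payoff is almost surely a single constant $c$, but now \emph{individual} runs need not be locally flat — inside a MEC the strategy may use arbitrary actions as long as the long-run average reward equals $c$ almost surely. By Lemma~\ref{lem-x_c-almost-surely}, a given MEC $C$ admits a memoryless randomized strategy achieving mean payoff exactly $c$ almost surely precisely when $\alpha_C \le c \le \beta_C$. Hence zero global variance with value $c$ is achievable iff $s$ can almost surely reach $\bigcup\{C : \alpha_C \le c \le \beta_C\}$, and the answer is $\min\{c : \text{this holds}\}$. Since $\alpha_C,\beta_C$ are computable in polynomial time by linear programming and almost-sure reachability to a union of MECs is polynomial, and since the optimal $c$ is either some $\alpha_C$ or is the pointwise infimum giving $c=\alpha_C$ for a reachable $C$, the whole thing is in PTIME; I would argue the minimum is attained at one of the finitely many values $\alpha_C$.

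For the local variance case, $\Ex{\sigma}{s}{\lrvlname}=0$ forces, for almost every run $\omega$, that $r(A_i(\omega))=\lraname(\omega)$ for the long-run-typical indices $i$, but $\lraname(\omega)$ may \emph{differ across runs}; still, each such run is eventually flat at its own value, so again almost surely the run ends up in some $G_c$ as in the hybrid case, with the extra freedom that different MECs (or different BSCCs within the play) may use different constants $c$. So the condition is: $s$ can almost surely reach $\bigcup_c \{\text{end components of } G_c\}$, i.e.\ the union over all reward values $c$ of the MECs of $G_c$; and among strategies witnessing this we minimize the expected mean payoff, which (since on each such terminal component the mean payoff is the forced constant $c$) reduces to a weighted reachability / linear-programming problem over the MEC decomposition with the reachable flat components as absorbing targets having value $c$. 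This is solvable in polynomial time by standard MDP linear programming, giving the PTIME bound.

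The main obstacle I expect is \emph{not} the algorithmics but justifying the characterizations in the presence of arbitrary (infinite-memory, non-limit) strategies: I must show that $\Ex{\sigma}{s}{\lrvlname}=0$ or $\Ex{\sigma}{s}{\lrvhname}=0$ really forces almost every run to be eventually flat, even though the $\limsup$ in the definitions need not be a $\lim$. The argument is that a nonnegative quantity whose $\limsup$-average is $0$ must have its terms going to $0$ along a density-$1$ set of indices, and combined with Lemma~\ref{lemma:stay-mec} (runs end in a MEC) this pins down that almost every run eventually only executes actions of one reward value (hybrid: the same value for all runs, equal to $\Ex{\sigma}{s}{\lraname}$; local: possibly run-dependent). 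Getting this measure-theoretic step right — and, for the quadratic $O((|S|\cdot|A|)^2)$ bound, organizing the per-reward-value reachability computations so they share the MEC decomposition work — is where the real care is needed; the reductions to reachability and linear programming are then routine.
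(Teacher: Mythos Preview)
Your algorithmic reductions match the paper's for all three cases. The global-variance characterization goes through essentially as you describe (the missing ingredient, that $\Ex{\sigma}{s}{\lraname\mid R_C}\in[\alpha_C,\beta_C]$ for every MEC $C$ reached with positive probability, is exactly what the paper already uses in Proposition~\ref{prop:global-lp}). But there is a genuine gap in your characterization step for the hybrid and local cases. Your argument is: the nonnegative terms $(r(A_i(\omega))-c)^2$ have zero Ces\`aro limsup, hence vanish on a density-$1$ set of indices, hence (with Lemma~\ref{lemma:stay-mec}) the run eventually executes only reward-$c$ actions. The last implication is false. Take a single state with self-loops $a$ (reward~$0$) and $b$ (reward~$1$) and let $\sigma$ play $b$ exactly at steps $1,2,4,8,\dots$; then $\Ex{\sigma}{s}{\lraname}=0$ and $\Ex{\sigma}{s}{\lrvhname}=0$, yet $b$ occurs infinitely often. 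Density~$1$ does not produce a coB\"uchi witness for the \emph{given} $\sigma$, whereas it is the existence of \emph{some} almost-sure coB\"uchi strategy that your algorithm tests.

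The paper closes this gap not by a direct measure-theoretic argument but by invoking the finite-memory sufficiency already established (Proposition~\ref{prop:local-main} for local variance, Proposition~\ref{prop:hybrid-finite-strategy} for hybrid): if some strategy achieves the point $(u,0)$ then a finite-memory one does, and for a finite-memory strategy zero variance forces every reachable BSCC of the induced Markov chain to carry a single reward value, which is exactly the coB\"uchi/flat-end-component condition your algorithm checks. An alternative direct route that would rescue your plan is to feed the density-$1$ observation into the frequency-function machinery of Claim~\ref{claim:subsequence-local}: for almost every run one extracts a frequency function $f_\omega$ on the ambient MEC supported on the reward-$c$ actions, and a frequency function with such support forces that MEC to contain a flat end component of value~$c$; but that is a substantially different argument from the one you sketched.
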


\smallskip\noindent{\bf Hybrid variance.}
The algorithm for zero hybrid variance is as follows: 
(1)~Order the rewards in an increasing sequence 
$\beta_1 < \beta_2 < \ldots < \beta_n$; 
(2)~find the least $i$ such that $A_i$ is the set of actions with reward
$\beta_i$ and it can be ensured with probability~1 (almost-surely) 
that eventually only actions in $A_i$ are visited, and output $\beta_i$; and 
(3)~if no such $i$ exists output ``NO" (i.e., zero hybrid variance cannot 
be ensured).
Since almost-sure winning for MDPs with eventually always property (i.e., 
eventualy only actions in $A_i$ are visited) can be decided in quadratic
time with discrete graph theoretic algorithm~\cite{CH12,CH11}, we obtain the first item 
of Theorem~\ref{thm:zero}. The correctness is proved in
\ifthenelse{\isundefined{\techreport}}{\cite{techrepLics13}}{Appendix~\ref{app-zero-hybrid}}.

\smallskip\noindent{\bf Local variance.}
For zero local variance, we make use of the previous algorithm.
The intuition is that to minimize the expectation with zero local variance,
a strategy $\sigma$ needs to reach states $s$ in which zero hybrid variance can be
ensured by strategies~$\sigma_s$, 
and then mimic them.
Moreover, $\sigma$ minimizes the expected value of $\lraname$
among all possible behaviours satisfying the above.
The algorithm is as follows:
(1)~Use the algorithm for zero hybrid 
variance to compute a function $\beta$ that assigns to every state $s$ 
the minimal expectation value $\beta(s)$ that can be ensured along with zero 
hybrid variance when starting in $s$, and if zero hybrid variance cannot be ensured, then 
$\beta(s)$ is assigned $+\infty$. Let $M=1 +\max_{s\in S} \beta(s)$.
(2)~Construct an MDP $\ovk{G}$ as follows: 
For each state $s$ such that $\beta(s) < \infty$ we add a state 
$\ovk{s}$ with a self-loop on it, and we add a new action $a_s$ that leads from 
$s$ to $\ovk{s}$.
(3)~Assign a reward $\beta(s) - M$ to $a_s$, and $0$ to all other actions.
Let $T=\{ a_s \mid \beta(s) < \infty \}$ be the target set of actions.
(4)~Compute a strategy that minimizes the cumulative reward and ensures 
almost-sure (probability~1) reachability to $T$ in $\ovk{G}$. 
Let $\wh{\beta}(s)$ denote the minimal expected payoff for the cumulative 
reward; and $\ovk{\beta}(s)=\wh{\beta}(s) +M$.
In
\ifthenelse{\isundefined{\techreport}}{\cite{techrepLics13}}{Appendix~\ref{app-zero-local}}
we show that $\ovk{\beta}(s)$ is the minimal expectation that can be ensured
with zero local variance, and every step of the above computation can 
be achieved in polynomial time.
This gives us the second item of Theorem~\ref{thm:zero}.

\smallskip\noindent{\bf Global variance.}
The basic intuition for zero global variance is that
we need to find the minimal number $y$ such that there is an almost-sure winning strategy 
to reach the MECs where expectation {\em exactly} $y$ can be ensured with zero variance.

The algorithm works as follows:
(1)~Compute the MEC decomposition of the MDP and let the MECs be 
$C_1,C_2,\ldots,C_n$.
(2)~For every MEC $C_i$ compute the minimal expectation $\alpha_{C_i}=\inf_{\sigma} \min_{s\in C_i} \Ex{\sigma}{s}{\lraname}$ and 
the maximal expectation $\beta_{C_i}=\sup_{\sigma} \max_{s\in C_i} \Ex{\sigma}{s}{\lraname}$ that can be ensured in the MDP induced by the 
MEC $C_i$.
(3)~Sort the values $\alpha_{C_i}$ in a non-decreasing order as
$\ell_1 \leq \ell_2 \leq \ldots \leq \ell_n$.
(4)~Find the least $i$ such that 
(a)~$\mathcal{C}_i=\{C_j \mid \alpha_{C_j} \leq \ell_i \leq \beta_{C_j} \}$ is 
the MEC's whose interval contains $\ell_i$; 
(b)~almost-sure (probability~1) reachability to the set 
$\bigcup_{C_j \in \mathcal{C}_i} C_j$ (the union of the MECs in 
$\mathcal{C}_i$) can be ensured;
and output $\ell_i$.
(5)~If no such $i$ exists, then the answer to zero global variance is ``NO"
(i.e., zero global variance cannot be ensured).
All the above steps can be computed in polynomial time. The correctness is proved in
\ifthenelse{\isundefined{\techreport}}{\cite{techrepLics13}}{Appendix~\ref{app-zero-global}},
and we obtain the last item of Theorem~\ref{thm:zero}.

\section{Conclusion}
We studied three notions of variance for MDPs 
with mean-payoff objectives: global (the standard one), local and hybrid variance.
We established a strategy complexity (i.e., the memory and randomization
required) for Pareto optimal strategies. 
For the zero variance problem, all the three cases are in PTIME.
There are several interesting open questions. 
The most interesting open questions are whether the approximation  problem 
for local variance can be solved in polynomial time, and what are the exact
complexities of the strategy existence problem.

\medskip
\noindent
{\bf Acknowledgements.}
T.~Br\'{a}zdil is supported by the Czech Science Foundation, grant
No~P202/12/P612.
K.~Chatterjee is supported by the 
Austrian Science Fund (FWF) Grant No P 23499-N23;
FWF NFN Grant No S11407-N23 (RiSE);
ERC Start grant (279307: Graph Games);
Microsoft faculty fellows award. V. Forejt is supported by a Royal Society Newton Fellowship and EPSRC project~EP/J012564/1,
and is also affiliated with FI MU Brno, Czech Republic.

{\small
\bibliographystyle{plain}
\bibliography{str-short,concur}

\begin{thebibliography}{10}

\bibitem{Altman}
E.~Altman.
\newblock {\em Constrained Markov Decision Processes (Stochastic Modeling)}.
\newblock Chapman \& Hall/CRC, 1999.

\bibitem{Billingsley:book}
P.~Billingsley.
\newblock {\em Probability and Measure}.
\newblock Wiley, 1995.

\bibitem{Boyd04}
S.~Boyd and L.~Vandenberghe.
\newblock {\em Convex Optimization}.
\newblock Cambridge Univ. Press, 2004.

\bibitem{BBCFK:MDP-two-views}
T.~Br\'{a}zdil, V.~Bro{\v{z}}ek, K.~Chatterjee, V.~Forejt, and A.~Ku{\v{c}}era.
\newblock Two views on multiple mean-payoff objectives in {Markov} decision
  processes.
\newblock In {\em Proceedings of LICS 2011}. IEEE, 2011.

\bibitem{Canny:Tarski-exist-PSPACE}
J.~Canny.
\newblock Some algebraic and geometric computations in {PSPACE}.
\newblock In {\em Proceedings of STOC'88}, pages 460--467. ACM Press, 1988.

\bibitem{CH11}
K.~Chatterjee and M.~Henzinger.
\newblock Faster and dynamic algorithms for maximal end-component decomposition
  and related graph problems in probabilistic verification.
\newblock In {\em SODA}, pages 1318--1336. SIAM, 2011.

\bibitem{CH12}
K.~Chatterjee and M.~Henzinger.
\newblock An {O}({\it n}$^{\mbox{2}}$) time algorithm for alternating
  {B}{\"u}chi games.
\newblock In {\em SODA}, pages 1386--1399. SIAM, 2012.

\bibitem{CJH04}
K.~Chatterjee, M.~Jurdzinski, and T.~Henzinger.
\newblock Quantitative stochastic parity games.
\newblock In {\em SODA}, pages 121--130. SIAM, 2004.

\bibitem{CHMH:multi-objectives}
K.~Chatterjee, R.~Majumdar, and T.~Henzinger.
\newblock Markov decision processes with multiple objectives.
\newblock In {\em Proceedings of STACS 2006}, volume 3884 of {\em LNCS}, pages
  325--336. Springer, 2006.

\bibitem{Chung94}
K-J. Chung.
\newblock Mean-variance tradeoffs in an undiscounted {MDP}: The unichain case.
\newblock {\em Operations Research}, 42:184--188, 1994.

\bibitem{CY:MDP-regular-TAC}
C.~Courcoubetis and M.~Yannakakis.
\newblock Markov decision processes and regular events.
\newblock {\em {IEEE} Transactions on Automatic Control}, 43(10):1399--1418,
  1998.

\bibitem{derman1970finite}
C.~Derman.
\newblock {\em Finite state Markovian decision processes}.
\newblock Mathematics in science and engineering. Academic Press, 1970.

\bibitem{EKVY:multi-objectives}
K.~Etessami, M.~Kwiatkowska, M.~Vardi, and M.~Yannakakis.
\newblock Multi-objective model checking of {Markov} decision processes.
\newblock {\em Logical Methods in Computer Science}, 4(4):1--21, 2008.

\bibitem{FKL89}
J.~A. Filar, L.C.M. Kallenberg, and H-M. Lee.
\newblock Variance-penalize {Markov} decision processes.
\newblock {\em Math. of Oper. Research}, 14:147--161, 1989.

\bibitem{FKP12}
V.~Forejt, M.~Kwiatkowska, and D.~Parker.
\newblock Pareto curves for probabilistic model checking.
\newblock In {\em Proc. of ATVA'12}, volume 7561 of {\em LNCS}, pages 317--332.
  Springer, 2012.

\bibitem{ICML2011Mannor_156}
S.~Mannor and J.~Tsitsiklis.
\newblock Mean-variance optimization in {M}arkov decision processes.
\newblock In {\em Proceedings of ICML-11}, pages 177--184, New York, NY, USA,
  June 2011. ACM.

\bibitem{Norris:book}
J.R.{} Norris.
\newblock {\em {Markov} Chains}.
\newblock Cambridge University Press, 1998.

\bibitem{Puterman:book}
M.L.{} Puterman.
\newblock {\em {Markov} Decision Processes}.
\newblock Wiley, 1994.

\bibitem{Royden88}
H.~L. Royden.
\newblock {\em {Real analysis}}.
\newblock Macmillan, New York, 3rd edition, 1988.

\bibitem{Sobel82}
M.~J. Sobel.
\newblock The variance of discounted {MDP's}.
\newblock {\em Journal of Applied Probability}, 19:794--802, 1982.

\bibitem{Sobel94}
M.~J. Sobel.
\newblock Mean-variance tradeoffs in an undiscounted {MDP}.
\newblock {\em Operations Research}, 42:175--183, 1994.

\bibitem{Vavasis90}
S.~A. Vavasis.
\newblock Quadratic programming is in {NP}.
\newblock {\em Information Processing Letters}, 36(2):73 -- 77, 1990.

\bibitem{Vavasis92}
S.~A. Vavasis.
\newblock Approximation algorithms for indefinite quadratic programming.
\newblock {\em Math. Program.}, 57(2):279--311, November 1992.

\end{thebibliography}
}

\ifthenelse{\isundefined{\techreport}}{}{
\newpage
\onecolumn
\appendix
\subsection{Proofs for Global Variance}
\label{app-global}
\subsubsection{Obtaining values $y_\kappa$ for $\kappa \in S\cup A$ in Item 1 of Proposition~\ref{prop:global-lp}}
\label{app-global-stay-mecs}
Let $G$ be an MDP, and let $G'$ be obtained from $G$ by adding a state $d_s$ for every state $s\in S$, and an action $a_s$ that leads to $d_s$ from $s$.
\begin{lemma}
Let $\sigma$ be a strategy for $G$. Then there is a strategy $\bar\sigma$ in $G'$ such that
$\Pr{\sigma}{s_{in}}{\staymec{C}} = \Pr{\bar\sigma}{s_{in}}{\bigcup_{s\in C}\reach(d_s)}$.
\end{lemma}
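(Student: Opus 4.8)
The stated equality (for a fixed MEC $C$) already follows from a soft argument about which reachability probabilities are attainable in $G'$, so I would prove it that way first, and then sketch the finer simulation-based construction that the surrounding development of \cite{EKVY:multi-objectives,BBCFK:MDP-two-views} actually relies on. Put $D_C:=\{d_s\mid s\in\MECstate C\}$, an absorbing set of $G'$. The set of values $\{\Pr{\bar\sigma}{s_{in}}{\reach(D_C)}\mid \bar\sigma\text{ a strategy in }G'\}$ is a closed interval $[0,P^\ast]$: the value $0$ is attained by never using any action $a_s$; the supremum $P^\ast$ is attained by a (memoryless deterministic) optimal reachability strategy; and each intermediate value is obtained by mixing these two strategies through the initial memory distribution. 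It remains to check $\Pr{\sigma}{s_{in}}{\staymec C}\le P^\ast$. View $\sigma$ as a strategy $\bar\sigma_1$ of $G'$ that plays $a_s$ the first time it visits a state $s\in\MECstate C$ and plays as $\sigma$ otherwise; then $\bar\sigma_1$ reaches $D_C$ exactly when the simulated run visits $\MECstate C$, so $\Pr{\bar\sigma_1}{s_{in}}{\reach(D_C)}=\Pr{\sigma}{s_{in}}{\text{run visits }\MECstate C}\ge\Pr{\sigma}{s_{in}}{\staymec C}$, hence $P^\ast\ge\Pr{\sigma}{s_{in}}{\staymec C}$. Therefore $\Pr{\sigma}{s_{in}}{\staymec C}\in[0,P^\ast]$ and is realised by some $\bar\sigma$, which proves the lemma.

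\textbf{The refinement used downstream.} For the stronger form (one $\bar\sigma$ for all MECs at once, with $\Pr{\bar\sigma}{s_{in}}{\reach(d_t)}$ equal to the probability that under $\sigma$ the run stays in $C_t$ and perpetually enters it through $t$ --- this is what yields the $y_\kappa$ of Fig.~\ref{fig-L}) I would instead build $\bar\sigma$ by \emph{simulating} $\sigma$ and then committing to a sink. For $N\in\Nset$ let $A_C^N$ be the event that $s_n,a_n\in C$ for all $n\ge N$; then $A_C^N\subseteq A_C^{N+1}$ and $\bigcup_N A_C^N=\staymec C$, so $\Pr{\sigma}{s_{in}}{A_C^N}\nearrow\Pr{\sigma}{s_{in}}{\staymec C}$, and by Lemma~\ref{lemma:stay-mec} $\Pr{\sigma}{s_{in}}{\bigcup_{C\in\Mec(G)}A_C^N}\nearrow1$. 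Given $\varepsilon>0$, using that there are finitely many MECs, pick horizons $N<N'$ so that $\Pr{\sigma}{s_{in}}{A_C^N}>\Pr{\sigma}{s_{in}}{\staymec C}-\varepsilon$ for every $C$, and $N'$ large enough that with probability exceeding $1-\varepsilon$ the run sits continuously inside a single MEC throughout $[N,N']$, that MEC being the one it eventually stays in. Let $\bar\sigma_\varepsilon$ mimic $\sigma$ for $N'$ steps and then, if the history shows the run has stayed in one MEC since step $N$, play $a_{s_{N'}}$ (entering $d_{s_{N'}}$); on the remaining (probability $<\varepsilon$) histories it does something arbitrary. By the choice of $N,N'$, the vector $(\Pr{\bar\sigma_\varepsilon}{s_{in}}{\reach(D_C)})_{C\in\Mec(G)}$ is within $\varepsilon$, coordinatewise, of $(\Pr{\sigma}{s_{in}}{\staymec C})_{C\in\Mec(G)}$. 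Since the set of achievable reachability-probability vectors for the pairwise-disjoint absorbing sets $\{D_C\}$ is a closed polytope (a standard fact about finite MDPs, cf.\ \cite{EKVY:multi-objectives}), the target vector --- a limit of achievable ones --- is itself achievable; an analogous but more bookkeeping-heavy variant that commits to $d_{s_{N'}}$ after pinning the perpetual entry state gives the per-state version.

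\textbf{Main obstacle.} The difficulty is everything hidden in ``has stayed in one MEC since step $N$'': $\staymec C$ is a tail event that $\bar\sigma$ cannot observe, and committing too early is unsound, since a run currently inside $C$ may still leave and settle in another MEC, so a premature commitment over-fills $D_C$ (and a naive running-maximum of $\Pr{\sigma}{s_{in}}{\staymec C\mid\calF_n}$ does \emph{not} repair this). The finite-horizon-plus-closedness device sidesteps having to commit at exactly the right instant, but one still has to verify carefully that a single pair of horizons works simultaneously for all (finitely many) MECs, that the exceptional mass of probability $<\varepsilon$ is negligible in every coordinate at once, and --- for the per-state refinement --- that the run's perpetual entry state is determined, up to probability $\varepsilon$, by its length-$N'$ prefix.
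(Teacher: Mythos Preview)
Your proposal is correct, but takes a genuinely different route from the paper. First, a caveat on scope: the lemma, as the paper uses it immediately afterwards, needs a \emph{single} $\bar\sigma$ that works for every MEC $C$ simultaneously (this is how one extracts the $y_\kappa$ for Fig.~\ref{fig-L}). Your Part~1 interval argument, while clean, only delivers a $\bar\sigma$ depending on $C$; you clearly recognise this and supply Part~2, which is the real content. The finite-horizon scheme there is sound: with $N$ large enough that $\sum_C\Pr{\sigma}{s_{in}}{A_C^N}>1-\varepsilon$, the observable events $B_C=\{\text{run in }C\text{ throughout }[N,N']\}$ satisfy $A_C^N\subseteq B_C$, the $B_C$ are pairwise disjoint (MECs are disjoint), and the sum bound forces $|\Pr{\sigma}{s_{in}}{B_C}-\Pr{\sigma}{s_{in}}{\staymec C}|\le\varepsilon$ for each $C$ --- your separate condition on $N'$ is in fact superfluous. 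Closedness of the achievable reachability-vector set (from the LP characterisation in \cite{EKVY:multi-objectives}) then yields the exact target; note also that since $\sum_C\Pr{\sigma}{s_{in}}{\staymec C}=1$ and the $D_C$ are disjoint absorbing sets, membership of the target in the \emph{downward} closure already forces equality coordinatewise.

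The paper's own proof is quite different: it argues by contradiction via the separating-hyperplane theorem. Assuming the target vector $(y_C)_C$ lies outside the (convex, downward-closed) achievable set $X$, it obtains non-negative weights $w_C$ with $\sum_C w_C y_C>\sup_{x\in X}\sum_C w_C x_C$, interprets the $w_C$ as action rewards (reward $w_C$ inside $C$, zero elsewhere) so that $\Ex{\sigma}{s_{in}}{\lraname}=\sum_C w_C y_C$, invokes memoryless-deterministic optimality for mean payoff to get $\hat\sigma$ with at least this value, and then --- because $\hat\sigma$ is finite-memory --- commits to the sink upon entering a BSCC, realising $(x_C)_C=(\Pr{\hat\sigma}{s_{in}}{\staymec C})_C\in X$ with $\sum_C w_C x_C\ge\sum_C w_C y_C$, a contradiction. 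Your approach is more elementary and directly constructive (and generalises cleanly to the per-state refinement you mention); the paper's approach is slicker in that it avoids any explicit horizon bookkeeping by leveraging the mean-payoff machinery already at hand, at the cost of an indirect contradiction and an implicit appeal to downward-closedness to justify non-negativity of the separating weights.
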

\begin{proof}
We give a proof by contradiction.
Let $C_1,\ldots C_n$ be all MECs of $G$, and let $X\subseteq \Rset^n$ be the set of all points $(x_1,\ldots,x_n)$ for which
there is a strategy $\sigma'$ in $G'$ such that $\Pr{\sigma'}{s_{in}}{\bigcup_{s\in C_i}\reach(d_s)} \ge x_i$ for all $1\le i \le n$.
Let $(y_1,\ldots,y_n)$ be the numbers such that $\Pr{\sigma}{s_{in}}{R_{C_i}} = y_i$ for all $1\le i \le n$. For contradiction, suppose
$(y_1,\ldots,y_n) \not\in X$. By \cite[Theorem 3.2]{EKVY:multi-objectives} the set $X$ can be described as a set of solutions of a linear program, and hence it is convex.
By separating hyperplane theorem (see e.g. ~\cite{Boyd04}) there are non-negative weights $w_1,\ldots,w_n$ such that 
$\sum_{i=0}^n y_i\cdot w_i > \sum_{i=0}^n x_i\cdot w_i$ for every $(x_1,\ldots, x_n)\in X$.

We define a reward function $r$ by
$r(a)=w_i$ for an action $a$ from $C_i$, where $1\le i\le n$, and $r(a)=0$ for actions not in any MEC. Observe that the mean payoff of 
any run that eventually stays in a MEC $C_i$ is $w_i$, and so the expected mean payoff w.r.t. $r$ under $\sigma$ is 
$\sum_{i=0}^n y_i\cdot w_i$. Because memoryless deterministic strategies suffice for maximizing the expected mean payoff, there is also a memoryless deterministic strategy
$\hat\sigma$ for $G$ that yields expected mean payoff w.r.t. $r$ equal to $z\ge\sum_{i=0}^n y_i\cdot w_i$. We now define a strategy $\bar\sigma$ for $G'$
to mimic $\hat\sigma$ until a BSCC is reached, and when a BSCC is reached, say along a path $w$, the strategy $\bar\sigma$
takes the action $a_{\last(w)}$.
Let $x_i=\Pr{\bar\sigma}{s_{in}}{\bigcup_{s\in C_i}\reach(d_s)}$. Due to the construction of $\bar\sigma$ we have 
$x_i = \Pr{\hat\sigma}{s_{in}}{R_{C_i}}$: this follows because once a BSCC is reached on a path $w$, every run $\omega$ extending $w$
has an infinite suffix containing only the states of the MEC containing the state $\last(w)$.
Hence $\sum_{i=0}^n x_i\cdot w_i = z$. However, by the choice of the weights $w_i$ we get that $(x_1,\ldots,x_n)\not\in X$,
and hence a contradiction, because $\bar\sigma$ witnesses that $(x_1,\ldots,x_n)\in X$.
\end{proof}

Let $\zeta$ be the strategy from Item 1. of Proposition~\ref{prop:global-lp}. By the above lemma there is a strategy $\zeta'$ for $G'$ such that 
$\Pr{\zeta}{s_{in}}{\staymec{C}} = \Pr{\zeta'}{s_{in}}{\bigcup_{s\in C}\reach(d_s)}$. Since $G'$ satisfies the conditions
of~\cite[Theorem 3.2]{EKVY:multi-objectives}, we get a solution $\bar{y}$ to the linear program of \cite[Figure 3]{EKVY:multi-objectives} where for all
$C$ we have $\sum_{s\in C\cap S} \bar y_{d_s} = \Pr{\zeta}{s_{in}}{\staymec{C}}$. This solution gives us a solution to the
Inequalities \ref{eq:gl:ya} -- \ref{eq:gl:pos}
of the linear system $L$
of Figure~\ref{fig-L} by $y_t := \bar y_{d_t}$ for all $t\in S$, and $y_a = \bar y_{(s,a)}$ for all $a$
(note that the state $s$ is given uniquely as the state in which $a$ is enabled). Because $\bar{y}_{d_s} = y_t$, we get 
the required property that
$\sum_{t\in C \cap S} y_{t} =
\sum_{t\in C \cap S} y_{d_t} =
\Pr{\zeta}{s_{in}}{\staymec{C}}$.

\subsubsection{Proof of Lemma~\ref{lem-x_c-almost-surely}}\label{app-global-x_c-almost-surely}
Given a memoryless strategy $\sigma$ and an action $a$, we use $f_\sigma(a) = \Ex{\sigma}{s}{\lim_{i\rightarrow\infty} \frac{1}{i} I_a(A_i)}$ (where $I_a(a)=1$ and $I_a(b)=0$ for $a\neq b$) the frequency of action $a$.

Let $\sigma_1$ and $\sigma_2$ be memoryless deterministic strategies that
minimize and maximize the expectation, respectively, and only yield one BSCC for any initial state. Let $\sigma'$ be arbitrary memoryless randomized strategy that
visits every action in $C$ with nonzero frequency (such strategy clearly exists). We define the strategy $\sigma_{z_C}$
as follows. If $z_C = \sum_{a\in C\cap A} f_{\sigma'}(a) \cdot r(a)$, then $\sigma_{z_C}=\sigma'$. If $z_C > \sum_{a\in C\cap A} f_{\sigma'}(a) \cdot r(a)$,
then, because also $z_C \le \sum_{a\in C\cap A} f_{\sigma_2}(a) \cdot r(a)$, there must be a number $p\in (0,1]$
such that
\[
z_C= p\cdot \Big(\sum_{a\in C\cap A} f_{\sigma'}(a) \cdot r(a)\Big) + (1-p)\cdot\Big(\sum_{a\in C\cap A} f_{\sigma_2}(a) \cdot r(a)\Big)
\]
We define numbers $z_a=p\cdot f_{\sigma'}(a) + (1-p) \cdot f_{\sigma_2}(a)$ for all $a\in C\cap A$. Observe that we have, for any $s\in C$
\begin{eqnarray*}
\sum_{a\in C\cap A} z_{a}\cdot \delta(a)(s) &=& \sum_{a\in C\cap A}\Big(p\cdot f_{\sigma'}(a)\cdot\delta(a)(s) + (1-p) \cdot f_{\sigma_2}(a) \cdot\delta(a)(s)\Big)\\
&=& p\cdot\Big(\sum_{a\in C\cap A} f_{\sigma'}(a)\cdot \delta(a)(s)\Big) + (1-p) \cdot \Big(\sum_{a\in C\cap A} f_{\sigma_2}(a) \cdot \delta(a)(s)\Big)\\
&=& p\cdot\Big(\sum_{a\in \mathit{Act}(s)} f_{\sigma'}(a)\Big) + (1-p) \cdot \Big(\sum_{a\in \mathit{Act}(s)} f_{\sigma_2}(a)\Big)\\ 
&=& \sum_{a\in \mathit{Act}(s)}\Big(p\cdot f_{\sigma'}(a) + (1-p) \cdot f_{\sigma_2}(a)\Big)
\end{eqnarray*}
Hence, there is a memoryless randomized strategy $\sigma_{z_C}$ which visits $a$ with frequency $z_a$, hence giving the expectation
\[
 \Big(\sum_{a\in C\cap A} p\cdot f_{\sigma'}(a) \cdot r(a)\Big) + \Big(\sum_{a\in C\cap A} (1-p)\cdot f_{\sigma_2}(a) \cdot r(a)\Big)=\\
 p\cdot \Big(\sum_{a\in C\cap A} f_{\sigma'}(a) \cdot r(a)\Big) + (1-p)\cdot\Big(\sum_{a\in C\cap A} f_{\sigma_2}(a) \cdot r(a)\Big)
 = z_C
\]
For $z_C < \sum_{a\in C\cap A} f_{\sigma'}(a) \cdot r(a)$
we proceed similarly, this time combining $\sigma_C$ with $\sigma_1$ instead of $\sigma_2$.
\subsubsection{Showing that $\Va{\zeta}{s}{\lraname} \geq \Va{\zeta'}{s}{\lraname}$}\label{app-global-variance-improves}
Since by law of total variance $\Var(Z) = \Exp(\Var(Z|Y)) + \Var(\Exp(Z|Y))$ for all random variables $Y$, $Z$ we have for $\sigma\in \{\zeta,\zeta'\}$:
\[
 \Va{\sigma}{s}{\lraname} = \Big(\sum_{C\in \Mec{G}} \Pr{\sigma}{s}{\staymec{C}} \cdot \Va{\sigma}{s}{\lraname | \staymec{C}}\Big) + \Var(X)
\]
where $X$ is the random variable which to every MEC C assigns $\Ex{\sigma}{s}{\lraname |\staymec{C}}$.
Note that these random variables are equal for both $\zeta$ and $\zeta'$, and so also the second summands in the equation above are equal for $\zeta$ and $\zeta'$.
In the first summand, all the values 
$\Va{\zeta}{s}{\lraname | \staymec{C}}$ are nonnegative, while 
$\Va{\zeta'}{s}{\lraname | \staymec{C}}$ are zero. Hence the variance can only decrease when we go from $\zeta$ to $\zeta'$.
  
\subsubsection{From $\hat\sigma$ to $\sigma$}
In the construction of $\sigma$ we employ the following technical lemma.
\begin{lemma}\label{lemma:global-closer-preserves-variance}
Let $A$ be a finite set, $X,Y:A \rightarrow \Rset$ be random variables, $a_1, a_2\in A$ and $d>0$ a number satisfying the following:
\begin{itemize}
 \item For all $a\not\in \{a_1,a_2\}$: $X(a)=Y(a)$.
 \item $Y(a_1) \le Y(a_2)$
 \item $X(a_1) + d = Y(a_1)$
 \item $X(a_2) - \frac{\Prb(a_1)}{\Prb(a_2)}\cdot d = Y(a_2)$
\end{itemize}
Then $\Exp(X) = \Exp(Y)$ and $\Var(X) \ge \Var(Y)$.
\end{lemma}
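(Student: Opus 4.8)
This is a finite, three-variable calculation, so the plan is simply to verify the two assertions by direct computation. Abbreviate $p_i = \Prb(a_i)$ (which we may assume positive, as otherwise the second displacement hypothesis makes no sense), and write $x_i = X(a_i)$, $y_i = Y(a_i)$, so that the hypotheses become $y_1 = x_1 + d$, $\ y_2 = x_2 - \tfrac{p_1}{p_2}\,d$, and $y_1 \le y_2$.

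First I would establish $\Exp(X) = \Exp(Y)$. Since $X$ and $Y$ coincide off $\{a_1,a_2\}$, the difference of expectations reduces to the two affected summands: $\Exp(Y) - \Exp(X) = p_1(y_1 - x_1) + p_2(y_2 - x_2) = p_1 d - p_2 \cdot \tfrac{p_1}{p_2}\,d = 0$.

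For the variance inequality I would use that, the means being equal, $\Var(X) - \Var(Y) = \Exp(X^2) - \Exp(Y^2)$; hence it suffices to show $\Exp(X^2) \ge \Exp(Y^2)$. Once more only $a_1$ and $a_2$ contribute, and I would expand each contribution as a difference of squares, $x_i^2 - y_i^2 = (x_i - y_i)(x_i + y_i)$. Substituting $x_1 - y_1 = -d$, $\ x_2 - y_2 = \tfrac{p_1}{p_2}\,d$ and simplifying, $\Exp(X^2) - \Exp(Y^2)$ collapses to $p_1 d\big(2(x_2 - x_1) - d(1 + \tfrac{p_1}{p_2})\big)$. The ordering hypothesis $y_1 \le y_2$ is exactly what is needed to finish: it rewrites as $x_2 - x_1 \ge d(1 + \tfrac{p_1}{p_2})$, so the bracketed factor is at least $d(1 + \tfrac{p_1}{p_2}) \ge 0$, and since $p_1 d > 0$ the whole quantity is nonnegative. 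Therefore $\Var(X) \ge \Var(Y)$.

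I do not anticipate any genuine obstacle; the only care needed is in the bookkeeping of the ratio $p_1/p_2$ coming from the second displacement equation, and in arranging the difference-of-squares expansion so that the inequality $y_1 \le y_2$ slots in cleanly at the last step.
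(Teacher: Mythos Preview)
Your proposal is correct and follows essentially the same direct-computation approach as the paper. The only difference is organizational: the paper expands $(Y-\mu)^2$ at $a_1,a_2$ and compares to $(X-\mu)^2$, whereas you first observe that equal means reduce the variance comparison to $\Exp(X^2)\ge\Exp(Y^2)$ and then use the difference-of-squares factorization; both routes land on the same final inequality $2(x_2-x_1)\ge d\bigl(1+\tfrac{p_1}{p_2}\bigr)$, which is exactly $y_1\le y_2$.
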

\begin{proof}
Let us fix the following notation:
\begin{align*}
 \mu&= \Exp(X) &
 e_1&= X(a_1) &
 e_2&= X(a_2) &
 e_c&= \Exp(X\mid A\setminus\{a_1,a_2\})\\
 p_1&= \Prb(a_1) &
 p_2&= \Prb(a_2) &
 p_c&= \Prb(A\setminus\{a_1,a_2\})\\
\end{align*}

For expectation, we have 

{\small \begin{eqnarray*}
\Exp(X)
 &=&\Exp(X\mid A\setminus\{a_1,a_2\})\cdot p_c + \Exp(X\mid a_1)\cdot p_1 + \Exp(X\mid a_2)\cdot p_2\\
 &=&\Exp(Y\mid A\setminus\{a_1,a_2\})\cdot p_c + (\Exp(Y\mid a_1)-d)\cdot p_1 + (\Exp(Y\mid a_2) + \frac{p_1}{p_2}\cdot d)\cdot p_2\\
 &=&\Exp(Y\mid A\setminus\{a_1,a_2\})\cdot p_c + \Exp(Y\mid a_1)\cdot p_1 + \Exp(Y\mid a_2)\cdot p_2\\
 &=&\Exp(Y).
\end{eqnarray*}}

For variance, we need to show that
{\small \[
 \Exp((X - \mu)^2\mid A\setminus\{a_1,a_2\})\cdot p_c +
 \Exp((X - \mu)^2\mid a_1)\cdot p_1 + 
 \Exp((X - \mu)^2\mid a_2)\cdot p_2
 \ge
 \Exp((Y - \mu)^2\mid A\setminus\{a_1,a_2\})\cdot p_c+
 \Exp((Y - \mu)^2\mid a_1)\cdot p_1 + 
 \Exp((Y - \mu)^2\mid a_2)\cdot p_2
\]
}
which boils down to showing that
\[
 \Exp((X - \mu)^2\mid a_1)\cdot p_1 + 
 \Exp((X - \mu)^2\mid a_2)\cdot p_2
\ge
 \Exp((Y - \mu)^2\mid a_1)\cdot p_1 + 
 \Exp((Y - \mu)^2\mid a_2)\cdot p_2
\]
We have 
{\small
\begin{eqnarray*}
\Exp((Y - \mu)^2\mid a_1)\cdot p_1 +  \Exp((Y - \mu)^2\mid a_2)\cdot p_2
&=&p_1\cdot(e_1+d - \mu)^2 + p_2\cdot(e_2 - \frac{p_1}{p_2}\cdot d -\mu)^2\\
&=&p_1\cdot((e_1+d)^2 - 2\cdot(e_1+d)\cdot \mu + \mu^2)\\
&&+p_2\cdot((e_2 - \frac{p_1}{p_2}\cdot d)^2 - 2\cdot (e_2 - \frac{p_1}{p_2}\cdot d)\cdot\mu + \mu^2)\\
&=&p_1\cdot(e_1^2+ 2\cdot e_1\cdot d + d^2 - 2\cdot(e_1+d)\cdot \mu + \mu^2)\\
&& + p_2\cdot(e_2^2 - 2\cdot e_2\cdot \frac{p_1}{p_2}\cdot d + \frac{p_1^2}{p_2^2}\cdot d^2 - 2\cdot (e_2 - \frac{p_1}{p_2}\cdot d)\cdot\mu + \mu^2)\\
&=&p_1\cdot((e_1 - \mu)^2 + d^2 + 2\cdot e_1\cdot d - 2\cdot d\cdot \mu)\\
&&+ p_2\cdot((e_2 - \mu)^2 - 2\cdot e_2\cdot \frac{p_1}{p_2}\cdot d + \frac{p_1^2}{p_2^2}\cdot d^2 + 2\cdot \frac{p_1}{p_2}\cdot d\cdot\mu)\\
&=& p_1\cdot\Exp((X-\mu)^2\mid a_1) + p_2\cdot\Exp((X-\mu)^2\mid a_2)\\
&&+p_1\cdot(d^2 + 2\cdot e_1\cdot d - 2\cdot d\cdot \mu)+p_2\cdot(-2\cdot e_2\cdot \frac{p_1}{p_2}\cdot d + \frac{p_1^2}{p_2^2}\cdot d^2 + 2\cdot \frac{p_1}{p_2}\cdot d\cdot\mu)\\
\end{eqnarray*}
}
and so we need to show that the term on the last line is not positive. It is equal to

\[
p_1\cdot d^2 + p_1\cdot2\cdot e_1\cdot d - p_1\cdot2\cdot d\cdot \mu
- 2\cdot e_2\cdot p_1\cdot d + \frac{p_1^2}{p_2}\cdot d^2 + 2\cdot p_1\cdot d\cdot\mu
=p_1\cdot d^2 + p_1\cdot2\cdot (e_1-e_2)\cdot d + \frac{p_1^2}{p_2}\cdot d^2
\]
and hence we need to show that $d + 2(e_1-e_2) + \frac{p_1}{p_2}\cdot d$ is not positive, which is the case, because
by the assumption we have $(e_2-e_1) = Y(a_2) +\frac{p_1}{p_2}\cdot d - (Y(a_1) - d) \ge d+\frac{p_1}{p_2}\cdot d$.
\end{proof}

Let $\hat\sigma$ be the strategy from page~\pageref{page:hatsigma},
i.e. for every MEC $C$ there is a number $x_C$ such that $\lra{\omega} = x_C$ for almost every run from $R_C$.
Let us fix arbitrary $z$, and let $\violating{z}{\sigma}$ be the set of all the MECs which
satisfy:
\begin{itemize}
 \item If $\alpha_C > z$, then $x_C\neq\alpha_C$.
 \item If $\beta_C < z$, then $x_C\neq\beta_C$.
 \item Otherwise (if $\alpha_C \le z \le \beta_C$) we have $x_C\neq z$.
\end{itemize}

We create a sequence of strategies $\sigma_0,\sigma_1\ldots$ and numbers $z_0,z_1,\ldots$ by starting with $\sigma_0=\hat\sigma$, $z_0=z$ and creating
$\sigma_{k+1}$ and $z_{k+1}$ from $\sigma_k$ and $z_k$ as follows, finishing the sequence with a desired strategy $\sigma$. First, until possible, we repeat the following step.

  If there are MECs $C_i$ and $C_j$ in $\violating{z_k}{\sigma_k}$ such that $x_{C_i} < z$ and $x_{C_j} > z$, denote
  $p=\frac{\Pr{\sigma_k}{s}{R_{C_i}}}{\Pr{\sigma_k}{s}{R_{C_j}}}$
  and pick the maximal $d$ such that
  $d \le x_{C_i} - \max\{z,\alpha_{C_i}\}$ and $p\cdot d \le \min\{z,\beta_{C_j}\} - x_{C_j}$.
  We construct a $2$-memory
  strategy $\sigma_{k+1}$ that preserves the probabilities of $\sigma_k$ to reach each of the MECs, satisfies
  $\Ex{\sigma_{k+1}}{s}{\lraname\mid R_C} = \Ex{\sigma_k}{s}{\lraname\mid R_C}$ and $\Va{\sigma_{k+1}}{s}{\lraname\mid R_C}=0$ for every MEC $C$ different from
  $C_i$ and $C_j$, and also satisfies $\Ex{\sigma_{k+1}}{s}{\lraname\mid R_{C_i}}=v_{C_i} + d$ and $\Ex{\sigma_{k+1}}{s}{\lraname\mid R_{C_i}}=v_{C_j} - p\cdot d$.
  We also define $z_{k+1}=z_k$.
  By Lemma~\ref{lemma:global-closer-preserves-variance} the resulting strategy $\sigma_{k+1}$
  satisfies $\Ex{\sigma_{k+1}}{s}{\lraname} = \Ex{\sigma_k}{s}{\lraname}$ and $\Va{\sigma_{k+1}}{s}{\lraname} \le \Va{\sigma_k}{s}{\lraname}$.
  Also, $\violating{z_{k+1}}{\sigma_{k+1}}\subsetneq \violating{z_k}{\sigma_k}$, because
  one of the MECs $C_i$ and $C_j$ does not satisfy the defining condition of ${\cal C}$ and no new MEC satisfies it.

  Once it is not possible to perform the above, we either got $\violating{z_{k+1}}{\sigma_{k+1}}=\emptyset$ (in which case we put $\sigma=\sigma_{k+1}$
  and we are done) or exactly one of the following takes place:
  there is a MEC $C$ in $\violating{z_{k+1}}{\sigma_{k+1}}$ such that $x_C > z$ or there is a MEC $C$ in $\violating{z_{k+1}}{\sigma_{k+1}}$ such that $x_C < z$.
  Depending on which of these two happen, we continue building the sequence of strategies and numbers using one of the following items, until possible.
\begin{itemize}
 \item Suppose there is a MEC $C$ in $\violating{z_k}{\sigma_k}$ such that $x_C > z$. Let $\exact{z_k}{\sigma_k}$
  be the set of all MECs $C'$ such that $\Ex{\sigma_k}{s}{\lraname\mid R_{C'}}=z$ and $z\neq\beta_{C'}$,
  and let $p=\frac{\sum_{C'\in \exact{z_k}{\sigma_k}} \Pr{\sigma}{s}{R_{C'}}}{\Pr{\sigma}{s}{R_C}}$.
  Let us pick a maximal $d$ such that
  $p\cdot d \le x_C - \max\{z + p\cdot d,\alpha_C\}$ and
  $d \le \min\{\alpha_{C'} \mid C'\in D\} - z$. We construct a strategy $\sigma_{k+1}$
  so that it satisfies $\Va{\sigma_{k+1}}{s}{\lraname\mid R_{C'}}=0$ for every MEC $C'$,
  $\Ex{\sigma_{k+1}}{s}{\lraname\mid R_{C'}} = \Ex{\sigma_{k}}{s}{\lraname\mid R_{C'}}$ for every MEC $C'\not\in \exact{z_k}{\sigma_k} \cup \{C\}$
  and also satisfies $\Ex{\sigma_{k+1}}{s}{\lraname\mid R_{C}}=v_{C} - p\cdot d$ and
  $\Ex{\sigma'}{s}{\lraname\mid R_{C'}}=v_{C''} + d$
  for all $C'\in \exact{z_k}{\sigma_k}$. By Lemma~\ref{lemma:global-closer-preserves-variance} the resulting strategy
  satisfies $\Ex{\sigma_{k+1}}{s}{\lraname} = \Ex{\sigma_k}{s}{\lraname}$ and $\Va{\sigma'}{s}{\lraname} \le \Va{\sigma_k}{s}{\lraname}$.

  One of the following also takes place:
  \begin{itemize}
   \item $\violating{z_{k+1}}{\sigma_{k+1}}\subsetneq\violating{z_{k+1}}{\sigma_{k+1}}$, because $C\not\in\violating{z_{k+1}}{\sigma_{k+1}}$.
   \item $\violating{z_{k+1}}{\sigma_{k+1}}=\violating{z_{k+1}}{\sigma_{k+1}}$ and $\exact{z_{k+1}}{\sigma_{k+1}}\subsetneq\exact{z_{k+1}}{\sigma_{k+1}}$
  \end{itemize}
  We set $z_{k+1} = z_k$ and continue, if possible.
 \item If there is a MEC $C$ such that $x_C < z$ we proceed similarly as in the above item.
\end{itemize}
Note that the above procedure eventually terminates, because in every step either $\violating{z_{i+1}}{\sigma_{i+1}}\subseteq \violating{z_{i}}{\sigma_{i}}$,
and for $m =|\Mec(G)|$ we have $\violating{z_{i+m}}{\sigma_{i+m}} \subsetneq \violating{z_{i+1}}{\sigma_{i+1}}$, because if $\violating{z_{i+1}}{\sigma_{i+1}}= \violating{z_{i}}{\sigma_{i}}$, then $\exact{z_{i+1}}{\sigma_{i+1}}\subsetneq \exact{z_{i}}{\sigma_{i}}$ and $|\exact{\cdot}{\cdot}|\le m$.

\subsubsection{Solving $L_{\hat z}$ in polynomial time.}
\label{app:global-approx-matrix}

\begin{lemma}
Let $n \in \Nset$ and $m_i \in \Nset$ for every $1 \leq i \leq n$.
For all $1 \leq i \leq n$ and $1 \leq j \leq m_i$, we use
$\tp{i,j}$ to denote the index $j + \sum_{\ell=1}^{i-1} m_\ell$.
Consider a function $f : \Rset^k \rightarrow \Rset$, where
$k = \sum_{i=1}^n m_i$, of the form 
\[
  f(\vec{v}) \quad = \quad \left( \sum_{i=1}^n  
     \bigg( \vec{c}_i^2 \cdot \sum_{j=1}^{m_i} \vec{v}_{\tp{i,j}} \bigg) \right )
  \quad - \quad  
  \left( \sum_{i=1}^n \bigg( \vec{c}_i \cdot  
         \sum_{j=1}^{m_i} \vec{v}_{\tp{i,j}}\bigg) \right)^2
\]
where $\vec{c} \in \Rset^n$. Then
$f(\vec{v})$ can be written as
$
   f(\vec{v}) \ = \ \vec{v}^T Q\, \vec{v} \ +\  \vec{d}^T \vec{v}
$
where $Q$ is a negative semi-definite matrix of rank~$1$ and
$\vec{d} \in \Rset^k$.
Consequently, $f(\vec{v})$ is concave and $Q$ has exactly one eigenvalue.
\end{lemma}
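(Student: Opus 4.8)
The plan is to split $f$ into a linear part, which will become $\vec{d}^T\vec{v}$, and a purely quadratic part, which will turn out to be the negative of a rank-one outer product.

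First I would fix the index bookkeeping: for $1 \le i \le n$ and $1 \le j \le m_i$, the index $\tp{i,j} = j + \sum_{\ell=1}^{i-1} m_\ell$ ranges \emph{bijectively} over $\{1,\dots,k\}$, so any double sum $\sum_{i=1}^{n}\sum_{j=1}^{m_i}$ is merely a reindexing of a sum over the $k$ coordinates. Using this, define $\vec{d} \in \Rset^k$ by $\vec{d}_{\tp{i,j}} := \vec{c}_i^2$ and $\vec{a} \in \Rset^k$ by $\vec{a}_{\tp{i,j}} := \vec{c}_i$; both are well defined precisely because $\tp{\cdot,\cdot}$ is injective. Then the first summand of $f(\vec v)$ equals $\vec{d}^T\vec{v}$, while the inner sum in the second summand equals $\vec{a}^T\vec{v}$, so that
\[
  f(\vec v) \;=\; \vec d^T\vec v \;-\; \big(\vec a^T\vec v\big)^2 \;=\; \vec v^T\big(-\vec a\,\vec a^T\big)\vec v \;+\; \vec d^T\vec v .
\]
We then set $Q := -\vec a\,\vec a^T$, which is symmetric.

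Next I would verify the three asserted properties of $Q$. Negative semi-definiteness is immediate, since $\vec x^T Q\,\vec x = -(\vec a^T\vec x)^2 \le 0$ for every $\vec x \in \Rset^k$. For the rank: assuming $\vec c \neq \vec 0$ (hence $\vec a \neq \vec 0$), the image of $\vec a\,\vec a^T$ is the line spanned by $\vec a$, so $\mathrm{rank}(Q) = 1$; moreover $Q\vec a = -(\vec a^T\vec a)\,\vec a$ while $Q$ annihilates the hyperplane $\vec a^{\perp}$, so the unique nonzero eigenvalue of $Q$ is $-\vec a^T\vec a$ and $0$ has multiplicity $k-1$ --- this is the intended reading of ``$Q$ has exactly one eigenvalue''. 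Finally, $f$ is concave because its Hessian is the constant matrix $2Q$, which is negative semi-definite.

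I do not expect a genuinely hard step here. The only points requiring care are the index bookkeeping (checking that $\tp{\cdot,\cdot}$ is a bijection, so that $\vec d$ and $\vec a$ are unambiguously defined), the trivial degenerate case $\vec c = \vec 0$ (where $f \equiv 0$, which can be treated separately or simply excluded), and pinning down the convention that ``exactly one eigenvalue'' refers to exactly one nonzero eigenvalue --- which is precisely what $\mathrm{rank}(Q)=1$ delivers.
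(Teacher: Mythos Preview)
Your proof is correct and follows essentially the same approach as the paper: both identify the quadratic part as $-\vec a\,\vec a^T$ with $\vec a_{\tp{i,j}}=c_i$ (the paper writes the entries of $Q$ out explicitly and then observes the columns are scalar multiples of a single vector, whereas you recognise the outer-product form directly), and both take $\vec d_{\tp{i,j}}=c_i^2$. Your handling of the degenerate case $\vec c=\vec 0$ and your clarification of the phrase ``exactly one eigenvalue'' are welcome additions that the paper leaves implicit.
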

\begin{proof}
  Observe that every vector $\vec{u} \in \Rset^k$ can be written as
  $\vec{u}^T = (\vec{u}_{\tp{1,1}},\ldots,\vec{u}_{\tp{1,m_1}},\ \cdots, 
  \vec{u}_{\tp{n,1}},\ldots,\vec{u}_{\tp{1,m_n}})$. 
  Let $Q$ be $k \times k$ matrix where 
  $
     Q_{\tp{i,j},\tp{i',j'}} = - (c_{i'} \cdot c_{i})
  $. 
  Then
  \[ 
     (Q\, \vec{v})_{\tp{i,j}} \quad = \quad 
         \sum_{i'=1}^{n}\sum_{j'=1}^{m_{i'}} Q_{\tp{i,j},\tp{i',j'}} 
               \cdot \vec{v}_{\tp{i',j'}} \quad = \quad 
         - \sum_{i'=1}^n\sum_{j'=1}^{m_{i'}} (c_{i'} \cdot c_i) \vec{v}_{\tp{i',j'}}
  \]
  and consequently
  \[
     \vec{v}^T Q\, \vec{v} \quad = \quad
         - \sum_{i=1}^n \sum_{j=1}^{m_{i}} \vec{v}_{\tp{i,j}} \cdot
         \left( 
           \sum_{i'=1}^n \sum_{j'=1}^{m_{i'}} (c_{i'} \cdot c_i) \vec{v}_{\tp{i',j'}} 
         \right)
      \quad = \quad
         - \sum_{i=1}^n \sum_{i'=1}^n (c_i \cdot c_{i'}) \cdot
           \sum_{j=1}^{m_{i}} \vec{v}_{\tp{i,j}} \cdot 
           \sum_{j'=1}^{m_{i'}} \vec{v}_{\tp{i',j'}}
       \quad= \quad  - \left( \sum_{i=1}^n \bigg( \vec{c}_i \cdot  
         \sum_{j=1}^{m_i} \vec{v}_{i,j}\bigg) \right)^2
  \]
  Hence, $f(\vec{v}) \ = \ \vec{v}^T Q\, \vec{v} \ +\  \vec{d}^T \vec{v}$,
  where $\vec{d}_{\tp{i,j}} = c_i^2$. Let $\vec{u}\in \Rset^k$ be a 
  (fixed) vector such that $\vec{u}_{\tp{i,j}} = - c_{i}$. Then the
  $\tp{i',j'}$-th column of $Q$ is equal to 
  $c_{i'} \cdot \vec{u}$, which means that the rank of $Q$ is $1$. 
  The matrix $Q$ is negative semi-definite because 
  $\vec{v}^T Q\, \vec{v} \leq 0$ for every $\vec{v}\in \Rset^k$. 
\end{proof}

\subsubsection{Correctness of the approximation algorithm.}
\label{app:global-approx-cor}
Assume there is a strategy $\sigma$ such that
$(\Ex{\sigma}{s}{\lraname}, \Va{\sigma}{s}{\lraname}) \leq (u-\varepsilon,v-\varepsilon)$, and let $z$
be the number from Item~2, and let us fix a valuation $\bar y_\kappa$ for the variables $y_\kappa$ where $\kappa \in S\cup A$
from equations of the system $L$ (see Figure~\ref{fig-L}). Let $\bar z$ be a number between the minimal and the maximal assigned reward
that is a multiple of $\tau$, and
which satisfies $|z-\bar z| < \tau$. Such a number must exist. We show that the system $L_{\bar z}$ has a solution.
The valuation $\bar y_\kappa$ can be applied to the system $L_{\bar z}$, and
we get
\begin{eqnarray*}
 \sum_{C \in \Mec(G)} x_{C,\bar z} \cdot \sum_{t\in S\cap C} y_t
  & = & \Big(\sum_{C \in \Mec(G)} x_{C,z} \cdot \sum_{t\in S\cap C} y_t\Big) + \Big(\sum_{C \in \Mec(G)} (x_{C, \bar z} - x_{C, z}) \cdot \sum_{t\in S\cap C} y_t\Big)\\
  & \le & (u-\varepsilon) + \Big(\sum_{C \in \Mec(G)} \tau \cdot \sum_{t\in S\cap C} y_t\Big)\\
  & \le & (u-\varepsilon) + \Big(\sum_{C \in \Mec(G)} \tau \cdot \sum_{t\in S\cap C} y_t\Big)\\
  & \le & (u-\varepsilon) + \tau \le u
\end{eqnarray*}
For variance, we have that 
\begin{eqnarray*}
 \left(\sum_{C \in \Mec(G)} x_{C,\bar z}^2 \cdot \sum_{t\in S\cap C}y_t\right)
 & = & 
 \left(\sum_{C \in \Mec(G)}  \big(x_{C,z} + (x_{C, \bar z} - x_{C,z})\big)^2 \cdot \sum_{t\in S\cap C}y_t\right)\\
 & = & 
 \left(\sum_{C \in \Mec(G)} x_{C,z}^2 \cdot \sum_{t\in S\cap C}y_t\right) + 
  \left(\sum_{C \in \Mec(G)} (2\cdot x_{C,z} \cdot (x_{C,\bar z} - x_{C,z}) + (x_{C,\bar z} - x_{C,z})^2) \cdot \sum_{t\in S\cap C}y_t\right)\\
 & \le & 
 \left(\sum_{C \in \Mec(G)} x_{C,z}^2 \cdot \sum_{t\in S\cap C}y_t\right) + 
  \left(\sum_{C \in \Mec(G)} (2\cdot x_{C,z} \cdot \tau + \tau^2) \cdot \sum_{t\in S\cap C}y_t\right)\\
 & \le & 
 \left(\sum_{C \in \Mec(G)} x_{C,z}^2 \cdot \sum_{t\in S\cap C}y_t\right) + 
  \left(\sum_{C \in \Mec(G)} (2\cdot N \cdot \tau + \tau^2) \cdot \sum_{t\in S\cap C}y_t\right)\\
 & \le & 
 \left(\sum_{C \in \Mec(G)} x_{C,z}^2 \cdot \sum_{t\in S\cap C}y_t\right) + 
  2\cdot N \cdot \tau + \tau^2
\end{eqnarray*}
and
\begin{eqnarray*}
    \left(\sum_{C \in \Mec(G)} x_{C,\bar z} \cdot \sum_{t \in S \cap C}y_t\right)^2
 & = & 
    \left(\sum_{C \in \Mec(G)} \Big(x_{C,z} + (x_{C,\bar z} - x_{C,z})\Big) \cdot \sum_{t \in S \cap C}y_t\right)^2\\
 & = & 
    \left(\Big(\sum_{C \in \Mec(G)} x_{C,z} \cdot \sum_{t \in S \cap C}y_t\Big) +\Big(\sum_{C \in \Mec(G)}(x_{C,\bar z} - x_{C,z}) \cdot \sum_{t \in S \cap C}y_t\Big)\right)^2\\
 & \ge & 
    \left(\Big(\sum_{C \in \Mec(G)} x_{C,z} \cdot \sum_{t \in S \cap C}y_t\Big) - \Big(\sum_{C \in \Mec(G)}\tau \cdot \sum_{t \in S \cap C}y_t\Big)\right)^2\\
 & = & 
    \left(\Big(\sum_{C \in \Mec(G)} x_{C,z} \cdot \sum_{t \in S \cap C}y_t\Big) - \tau \right)^2\\
 & = & 
    \left(\sum_{C \in \Mec(G)} x_{C,z} \cdot \sum_{t \in S \cap C}y_t\right)^2
     - 2\cdot\Big(\sum_{C \in \Mec(G)} x_{C,z} \cdot \sum_{t \in S \cap C}y_t\Big)\cdot \tau + \tau^2\\
 & \ge & 
    \left(\sum_{C \in \Mec(G)} x_{C,z} \cdot \sum_{t \in S \cap C}y_t\right)^2
     - 2\cdot N\cdot \tau + \tau^2
\end{eqnarray*}
and so we get
\begin{eqnarray*}
 \left(\sum_{C \in \Mec(G)} \hat x_{C,\bar z}^2 \cdot \sum_{t\in S\cap C}y_t\right) - 
    \left(\sum_{C \in \Mec(G)} x_{C,\bar z} \cdot \sum_{t \in S \cap C}y_t\right)^2
 &\le&
 \left(\sum_{C \in \Mec(G)} \hat x_{C,z}^2 \cdot \sum_{t\in S\cap C}y_t\right)
  - \left(\sum_{C \in \Mec(G)} x_{C,z} \cdot \sum_{t \in S \cap C}y_t\right)^2\\
 &&\quad
  + 2\cdot N \cdot \tau + \tau^2 + 2\cdot N\cdot \tau + \tau^2\\
 &\le& v - \varepsilon + \varepsilon \le v
\end{eqnarray*}
Hence we have shown that there is a solution for $L_{\bar z}$, and so the algorithm returns ``yes''.

On the other hand, if there is no strategy
such that $(\Ex{\sigma}{s}{\lraname},\Va{\sigma}{s}{\lraname}) \leq (u,v)$, then the algorithm clearly returns ``no''.

\subsection{Proofs for Local Variance}
\label{app-local}
\subsubsection{Computation for Example~\ref{ex:local-mem}}\label{app:local-example}
We have
\begin{eqnarray*}
 \Ex{\sigma'}{s_1}{\lrvname} &=&  f(a)(0-\Ex{\sigma'}{s_1}{\lraname})^2 +  (f(b)+f(c))(2-\Ex{\sigma'}{s_1}{\lraname})^2\\
 &=& f(a)(-2+2f(a)))^2 + (1 - f(a))(2f(a))^2\\
 &=& 4f(a) - 8f(a)^2 + 4f(a)^3 + 4f(a)^2 - 4f(a)^3\\
 &=& 4f(a) - 4f(a)^2 \ge 0.64
\end{eqnarray*}

Throughout this section we use the following three simple lemmas. The first one allows us to reduce convex combinations of two-dimensional vectors (typically vectors consisting of the mean-payoff and variance) to combinations of just two vectors.
\begin{lemma}\label{lem:approx-two}
Let $(a_1,b_1), (a_2,b_2),\ldots, (a_m,b_m)$ be a sequence of points in $\Rset^2$ and $c_1,c_2,\ldots , c_m \in (0,1]$ satisfy $\sum_{i=1}^{m} c_i=1$. Then there are two vectors $(a_k,b_k)$ and $(a_{\ell},b_{\ell})$ and a number $p\in [0,1]$ such that
\[
\sum_{i=1}^{m} c_i (a_i,b_i)\quad  \geq \quad p (a_k,b_k) + (1-p) (a_{\ell},b_{\ell})
\]
\end{lemma}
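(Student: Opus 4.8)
The plan is to argue geometrically. Write $Q := \sum_{i=1}^m c_i(a_i,b_i)$ and let $H := \mathrm{conv}\{(a_i,b_i) : 1\le i\le m\} \subseteq \Rset^2$ be the convex hull of the given points; since the $c_i$ are nonnegative and sum to $1$, we have $Q \in H$. The statement asks for a point of $H$ that lies on a segment between two of the original points and is coordinatewise below $Q$, and such a point is to be found on the ``lower-left'' (Pareto-minimal) part of the boundary of~$H$. A simple example such as the three points $(0,100)$, $(10,9)$, $(11,0)$ shows that naive guesses (e.g.\ ``the point with least first coordinate together with the point with least second coordinate'') do not suffice, so the hull structure genuinely has to be used.

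First I would consider the set $F := \{x \in H : x \le Q\}$, where $\le$ is the componentwise order. It contains $Q$, and it is closed and bounded, hence compact, so the linear function $x \mapsto x_1 + x_2$ attains a minimum on $F$ at some point $m^*$. By the definition of $F$ we have $m^* \le Q$. Next I would verify that $m^*$ is in fact Pareto-minimal in all of $H$: if some $y \in H$ satisfied $y \le m^*$ and $y \ne m^*$, then $y \le m^* \le Q$ would give $y \in F$ with $y_1 + y_2 < m^*_1 + m^*_2$, contradicting the choice of $m^*$. In particular $m^*$ cannot lie in the topological interior of $H$, because an interior point $x$ admits $x - (\varepsilon,\varepsilon) \in H$ for small $\varepsilon > 0$; hence $m^*$ lies on the boundary of the polygon~$H$.

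The final step is the standard observation that a boundary point of the planar polytope $H = \mathrm{conv}\{(a_i,b_i)\}$ lies on a face of dimension at most one, i.e.\ it is a vertex of $H$ or lies on an edge joining two vertices, and that every vertex (extreme point) of $H$ is one of the points $(a_i,b_i)$. Consequently there are indices $k,\ell$ and a number $p \in [0,1]$ with $m^* = p(a_k,b_k) + (1-p)(a_\ell,b_\ell)$ (taking $k = \ell$ if $m^*$ is a vertex; the degenerate cases where $H$ is a single point or a segment are covered the same way). Combining this with $m^* \le Q = \sum_{i=1}^m c_i(a_i,b_i)$ gives exactly the claimed inequality $\sum_{i=1}^m c_i(a_i,b_i) \ge p(a_k,b_k) + (1-p)(a_\ell,b_\ell)$.

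I expect the only delicate point to be the last step: ruling out that the witnessing point requires three of the original points, which is what a direct appeal to Carathéodory in the plane would give. This is precisely why Pareto-minimality is introduced — it forces $m^*$ onto the boundary of $H$, where it is a convex combination of at most two extreme points, which are among the original points. Everything else is elementary compactness and convexity, so this is where the small amount of care is needed.
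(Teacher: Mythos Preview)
Your proof is correct and follows essentially the same route as the paper: locate a point on the boundary of the convex hull that is componentwise $\le Q$, then use that any boundary point of a planar polytope lies on a segment between two vertices, which are among the original $(a_i,b_i)$. The only difference is how the boundary point is selected---the paper fixes the second coordinate at $y$ and slides the first coordinate leftward until it hits the boundary, whereas you minimize $x_1+x_2$ over $\{x\in H:x\le Q\}$ to obtain a Pareto-minimal point; your variant has the mild advantage of treating the degenerate (collinear) case uniformly rather than as a separate case.
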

\begin{proof}
Denote by $(x,y)$ the point $\sum_{i=1}^{m} c_i (a_i,b_i)$ and by $H$ the set $\{(a_i,b_i)\mid 1\leq i\leq m\}$.
If all the points of $H$ lie in the same line, then clearly there must be some $(a_k,b_k)\leq (x,y)$.
Assume that this is not true. Then the convex hull $\mathcal{C}(H)$ of $H$ is a convex polygon whose vertices are some of the points of $H$. Consider a point $(x',y)$ where $x'=\min\{z\mid z\leq x, (z,y)\in \mathcal{C}(H)\}$. The point $(x',y)$ lies on the boundary of $\mathcal{C}(H)$ and thus, as $\mathcal{C}(H)$ is a convex polygon, $(x',y)$ lies on the line segment between two vertices, say $(a_k,b_k),(a_{\ell},b_{\ell})$, of $\mathcal{C}(H)$. Thus there is $p\in [0,1]$ such that
\[(x',y)=p(a_k,b_k)+(1-p)(a_{\ell},b_{\ell})\leq (x,y)=\sum_{i=1}^{m} c_i (a_i,b_i)\,.\]
This finishes the proof.
\end{proof}
\noindent
The following lemma shows how to minimize the mean square deviation (to which our notion of variance is a special case).
\begin{lemma}\label{lem:min-var}
Let $a_1,\ldots,a_m\in \Rset$ such that $\sum_{i=0}^{m} a_i = 1$, let $r_1,\ldots,r_m\in \Rset$ and let us consider the following function of one real variable:
\[
V(x)=\sum_{i=1}^m a_i \left(r_i - x\right)^2
\]
Then the function $V$ has a unique minimum in $\sum_{i=1}^m a_i r_i$.
\end{lemma}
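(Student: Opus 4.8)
The plan is to reduce $V$ to a monic quadratic in $x$ by expanding the square and exploiting the normalization $\sum_{i=1}^m a_i = 1$, and then to complete the square. (I read the constraint as $\sum_{i=1}^m a_i = 1$, matching the indexing of the $a_i$.)

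First I would expand, using linearity and then the normalization,
\[
V(x) \;=\; \sum_{i=1}^m a_i(r_i - x)^2 \;=\; \Big(\sum_{i=1}^m a_i r_i^2\Big) \;-\; 2x\Big(\sum_{i=1}^m a_i r_i\Big) \;+\; x^2 \sum_{i=1}^m a_i .
\]
Since $\sum_{i=1}^m a_i = 1$, the coefficient of $x^2$ is exactly $1$, so writing $\bar r := \sum_{i=1}^m a_i r_i$ we obtain $V(x) = x^2 - 2\bar r x + \sum_{i=1}^m a_i r_i^2$. Completing the square gives
\[
V(x) \;=\; (x - \bar r)^2 \;+\; \Big(\sum_{i=1}^m a_i r_i^2 - \bar r^2\Big) ,
\]
where the parenthesized term is independent of $x$. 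As $(x-\bar r)^2 \ge 0$ with equality iff $x = \bar r$, the function $V$ attains its minimum at the unique point $x = \bar r = \sum_{i=1}^m a_i r_i$, which is the assertion.

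There is essentially no obstacle here; the one point worth flagging is that the argument uses only $\sum_i a_i = 1$ and requires no sign assumption on the individual $a_i$ — it is precisely the normalization that forces the leading coefficient to be $1$, yielding strict convexity and hence uniqueness of the minimizer for free. (Equivalently, a one-line calculus check works: $V'(x) = 2(x - \bar r)$ vanishes only at $x = \bar r$, and $V''(x) = 2 > 0$.)
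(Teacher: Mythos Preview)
Your proof is correct. Your primary route---expanding and completing the square---is a minor variant of the paper's argument, which instead computes $V'(x) = -2\sum_i a_i(r_i - x) = 2x - 2\sum_i a_i r_i$ (using $\sum_i a_i = 1$) and checks $V''(x) = 2 > 0$; since you include exactly this calculus check as your parenthetical alternative, the two arguments coincide.
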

\begin{proof}
By taking the first derivative of $V$ we obtain
\[
\frac{\delta V}{\delta x} = -2\cdot  \sum_{i=1}^m a_i \left(r_i - x\right)
  = -2\cdot \left(\sum_{i=1}^m a_i r_i\right)+2x
\]
Thus $\frac{\delta{V}}{\delta x}(x)=0$ iff $x=\sum_{i=1}^m a_i r_i$.
Moreover, by taking the second derivative we obtain $\frac{\delta^2 V}{\delta x^2}=2>0$, and thus $\sum_{i=1}^m a_i r_i$ is a minimum.
\end{proof}
The following lemma shows that frequencies of actions determine (in some cases) the mean-payoff as well as the variance.
\begin{lemma}\label{lem:freq-var}
Let $\mu$ be a memoryless strategy and let $D$ be a BSCC of $G^{\mu}$. Consider frequencies of individual actions $a\in \BSCCact{D}$ when starting in a state $s\in \BSCCstate{D}$: $\Ex{\mu}{s}{\lraname^{I_a}}$ where $I_a$ assigns $1$ to $a$ and $0$ to all other actions (note that the values do not depend on which $s$ we choose).
Then $\Ex{\mu}{s}{\lraname^{I_a}}$ determine uniquely all of $\Ex{\mu}{s}{\lraname}$, $\Ex{\mu}{s}{\lrvhname}$, and $\Ex{\mu}{s}{\lrvlname}$ as follows: 
\[
\Ex{\mu}{s}{\lraname}=\sum_{a\in A} r(a)\cdot \Ex{\mu}{s}{\lraname^{I_a}}
\qquad
\text{and}
\qquad
\Ex{\mu}{s}{\lrvhname}=\Ex{\mu}{s}{\lrvlname}=\sum_{a\in A} (r(a)-\Ex{\mu}{s}{\lraname})^2\cdot \Ex{\mu}{s}{\lraname^{I_a}}
\]
\end{lemma}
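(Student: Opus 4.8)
The plan is to exploit the fact that, once restricted to the BSCC $D$, the Markov chain $G^\mu$ is finite and irreducible, so the standard ergodic theorem for finite irreducible Markov chains applies to every bounded function of the action taken in a single step. Concretely, for each $a\in\BSCCact{D}$ the partial averages $\frac1n\sum_{i=0}^{n-1} I_a(A_i(\pat))$ (where $I_a(A_i(\pat))$ is the indicator of the event ``the $i$-th action is $a$'') converge, for almost every run $\pat$ of $G^\mu$ started in any $s\in\BSCCstate{D}$, to a constant $f(a)\geq 0$ that does not depend on $s$; equivalently $\lraname^{I_a}(\pat)=f(a)$ for almost all such $\pat$, and hence $\Ex{\mu}{s}{\lraname^{I_a}}=f(a)$. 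Since $\sum_{a\in\BSCCact{D}} I_a(A_i(\pat))=1$ for every $i$, summing over $a$ and passing to the limit gives $\sum_{a\in\BSCCact{D}} f(a)=1$.

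Next I would observe that for any real function $h$ on $\BSCCact{D}$ one can write $\frac1n\sum_{i=0}^{n-1} h(A_i(\pat)) = \sum_{a\in\BSCCact{D}} h(a)\cdot\frac1n\sum_{i=0}^{n-1} I_a(A_i(\pat))$, which by the previous step converges, almost surely, to $\sum_{a} h(a)f(a)$; in particular the $\limsup$ appearing in the definitions of $\lraname$, $\lrvhname$ and $\lrvlname$ is in fact an ordinary limit and equals this constant for almost all $\pat$. Taking $h=r$ yields $\lra{\pat}=\sum_{a} r(a)f(a)$ almost surely, and since this is a constant we get $\Ex{\mu}{s}{\lraname}=\sum_{a} r(a)f(a)=\sum_{a} r(a)\,\Ex{\mu}{s}{\lraname^{I_a}}$, the first claimed identity; note that no interchange of limit and expectation is needed, precisely because $\lraname$ is almost surely equal to a constant.

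For the variances, write $E=\Ex{\mu}{s}{\lraname}$, which by the above is the constant $\sum_a r(a)f(a)$, so $\lra{\pat}=E$ for almost all $\pat$. Applying the identity of the previous paragraph with $h(a)=(r(a)-E)^2$ gives $\lrvhname(\pat)=\sum_{a}(r(a)-E)^2 f(a)$ almost surely, hence $\Ex{\mu}{s}{\lrvhname}=\sum_{a}(r(a)-\Ex{\mu}{s}{\lraname})^2\,\Ex{\mu}{s}{\lraname^{I_a}}$. Finally, because $\lra{\pat}=E$ for almost all $\pat$, the sum defining $\lrvl{\pat}$ agrees term by term with the one defining $\lrvhname(\pat)$ for almost all $\pat$; therefore $\lrvl{\pat}=\lrvhname(\pat)$ almost surely and $\Ex{\mu}{s}{\lrvlname}=\Ex{\mu}{s}{\lrvhname}$. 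Uniqueness of the determination is then immediate, since all three quantities are given by explicit formulas in the $\Ex{\mu}{s}{\lraname^{I_a}}$.

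The one point that genuinely needs care is the first step: invoking the ergodic theorem for the finite irreducible chain $G^\mu$ restricted to $D$, which both pins down the frequencies $f(a)$ as state-independent constants and, crucially for everything that follows, upgrades the $\limsup$'s in the definitions of $\lraname$, $\lrvhname$ and $\lrvlname$ to genuine limits holding almost surely. Everything after that is elementary algebra with the finitely many frequencies $f(a)$.
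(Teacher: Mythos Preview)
Your proof is correct and follows essentially the same route as the paper: both arguments rewrite $h(A_i)=\sum_a h(a)I_a(A_i)$ for $h=r$ and $h=(r(\cdot)-E)^2$, pull the finite sum through the limit, and then use that in a BSCC almost all runs have identical action frequencies (hence $\lra{\pat}=E$ a.s.) to conclude that local and hybrid variance coincide. You are somewhat more explicit than the paper in invoking the ergodic theorem to upgrade the $\limsup$'s to genuine a.s.\ limits, but the substance is the same.
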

\begin{proof}
We have
\[
\Ex{\mu}{s}{\lraname} = \Ex{\mu}{s}{\lim_{i\rightarrow \infty} \frac{1}{i}\cdot \sum_{j=1}^i r(A_j)} 
   =  \Ex{\mu}{s}{\lim_{i\rightarrow \infty} \frac{1}{i}\cdot \sum_{j=1}^i \sum_{a\in A} r(a) I_a(A_j)} 
   =  \sum_{a\in A} r(a)\cdot \Ex{\mu}{s}{\lim_{i\rightarrow \infty} \frac{1}{i}\cdot \sum_{j=1}^i I_a(A_j)} 
   =  \sum_{a\in A} r(a)\cdot \Ex{\mu}{s}{\lraname^{I_a}}
\]
and
\begin{multline*}
\Ex{\mu}{s}{\lrvhname} = 
\Ex{\mu}{s}{\lim_{i\rightarrow \infty} \frac{1}{i}\cdot \sum_{j=1}^i (r(A_j)-\Ex{\mu}{s}{\lraname})^2} 
  = 
  \Ex{\mu}{s}{\lim_{i\rightarrow \infty} \frac{1}{i}\cdot \sum_{j=1}^i \sum_{a\in A} (r(a)-\Ex{\mu}{s}{\lraname})^2\cdot I_a(A_j)} \\
  = \sum_{a\in A} (r(a)-\Ex{\mu}{s}{\lraname})^2 \cdot 
  \Ex{\mu}{s}{\lim_{i\rightarrow \infty} \frac{1}{i}\cdot \sum_{j=1}^i I_a(A_j)} 
  = \sum_{a\in A} (r(a)-\Ex{\mu}{s}{\lraname})^2\cdot \Ex{\mu}{s}{\lraname^{I_a}}
\end{multline*}
Finally, it is easy to see that the local and hybrid variance coincide in BSCCs since almost all runs have the same frequencies of actions. This gives us the result for the local variance.
\end{proof}

\medskip
\noindent
\subsubsection{Proof of Proposition~\ref{prop:strong-opt-local}.}\label{app-strong-opt}
We obtain the proof from the following slightly weaker version.
\begin{proposition}\label{prop:eps-opt-local}
Let us fix a MEC $C$ and let $\varepsilon>0$. There are two frequency functions $f_{\varepsilon}:\MECact{C}\rightarrow [0,1]$ and $f'_{\varepsilon}:\MECact{C}\rightarrow [0,1]$, and a number $p_{\varepsilon}\in [0,1]$ such that:
\[
p_{\varepsilon}\cdot (\lraname[f_{\varepsilon}],\lrvname[f_\varepsilon])+(1-p_{\varepsilon})\cdot
 (\lraname[f'_{\varepsilon}],\lrvname[f'_\varepsilon])\quad \leq \quad
(\Ex{\zeta}{s_0}{\lraname}, \Ex{\zeta}{s_0}{\lrvlname})+(\varepsilon,\varepsilon)
\]
\end{proposition}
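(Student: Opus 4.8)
The plan is to treat the MEC $C$ as a standalone strongly connected MDP with $s_0$ the state through which the run enters $C$, so that $(\Ex{\zeta}{s_0}{\lraname},\Ex{\zeta}{s_0}{\lrvlname})$ are exactly the conditional quantities $\Ex{\zeta}{s_0}{\lraname\mid R_C}$, $\Ex{\zeta}{s_0}{\lrvlname\mid R_C}$ of Proposition~\ref{prop:strong-opt-local} after the usual reduction; every run of $C^\zeta_{s_0}$ then stays in $C$ forever. Let $F=\{(\lraname[f],\lrvname[f])\mid f\text{ a frequency function on }C\}$; since the set of frequency functions on $C$ is a compact convex polytope and $f\mapsto(\lraname[f],\lrvname[f])$ is continuous, $F$ is compact. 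The point $(\Ex{\zeta}{s_0}{\lraname},\Ex{\zeta}{s_0}{\lrvlname})$ is a barycenter over the run space, and I want to dominate it, up to $(\varepsilon,\varepsilon)$, by a convex combination of just two points of $F$. The proof proceeds in three steps: (1) from almost every run, extract with $\varepsilon$-slack a frequency function on $C$; (2) integrate over runs and use Carathéodory to cut down to three points of $F$; (3) cut three down to two via Lemma~\ref{lem:approx-two}.

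\textbf{Step 1: extracting a frequency function from a run.} First I would show that for almost every run $\omega$ of $C^\zeta_{s_0}$ there is a frequency function $g_\omega$ on $C$ with $(\lraname[g_\omega],\lrvname[g_\omega])\le(\lra{\omega},\lrvl{\omega})$. Write $N_n(a)$ for the number of occurrences of action $a$ in the first $n$ steps of $\omega$, $N_n(a,s)$ for the number of occurrences of $a$ followed by state $s$, and $\hat f^{(n)}=N_n(\cdot)/n\in\dist(\MECact{C})$ for the empirical action distribution. Pick a subsequence $n_k\uparrow\infty$ realising the $\limsup$ defining $\lrvl{\omega}$, then pass to a common sub-subsequence along which $\hat f^{(n_k)}\to g_\omega$ and the empirical mean $\bar r_{n_k}:=\frac1{n_k}\sum_{i<n_k}r(A_i(\omega))\to\ell_\omega$. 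Two facts do the work. (i) The balance equations hold in the limit: the partial sums of $\mathbf 1[A_i=a]\bigl(\mathbf 1[s_{i+1}=s]-\delta(a)(s)\bigr)$ form a bounded-increment martingale, hence are $o(n)$ almost surely, so $N_n(a,s)/n-\delta(a)(s)N_n(a)/n\to0$; combining this with the elementary identities $|\{i<n:s_i=s\}|=\sum_{a\in\act s}N_n(a)$ and $|\{i<n:s_i=s\}|=\sum_a N_n(a,s)+O(1)$ forces $\sum_{a\in\act s}g_\omega(a)=\sum_a g_\omega(a)\delta(a)(s)$, and $\sum_a g_\omega(a)=1$, so $g_\omega$ is a frequency function on $C$. (ii) $\lraname[g_\omega]=\ell_\omega\le\lra{\omega}$, and using Lemma~\ref{lem:min-var} (the centred sum of squares $x\mapsto\frac1{n_k}\sum_{i<n_k}(r(A_i)-x)^2$ is minimised at $\bar r_{n_k}\to\ell_\omega$), $\lrvname[g_\omega]=\lim_k\frac1{n_k}\sum_{i<n_k}(r(A_i)-\ell_\omega)^2\le\limsup_k\frac1{n_k}\sum_{i<n_k}(r(A_i)-\lra{\omega})^2=\lrvl{\omega}$. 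The subsequences above depend on $\omega$ in an uncontrolled way; to make the assignment $\omega\mapsto g_\omega$ measurable I would replace this limit construction by a finite-horizon surrogate: for each $\omega$ take a long enough prefix whose empirical distribution is $\varepsilon$-close to the (compact) frequency polytope and whose empirical mean and centred second moment are $\varepsilon$-close to $\lra{\omega}$ and $\lrvl{\omega}$ respectively; the set of pairs $(\omega,n)$ for which a prefix of length $n$ works is measurable, so a measurable choice exists, at the price of weakening the pointwise bound to $(\lraname[g_\omega],\lrvname[g_\omega])\le(\lra{\omega},\lrvl{\omega})+(\varepsilon,\varepsilon)$. This is where the $(\varepsilon,\varepsilon)$ slack enters.

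\textbf{Steps 2 and 3: integration, Carathéodory, and reduction to two points.} Integrating the pointwise bound of Step 1 over the runs gives $\int(\lraname[g_\omega],\lrvname[g_\omega])\,d\Prb(\omega)\le(\Ex{\zeta}{s_0}{\lraname},\Ex{\zeta}{s_0}{\lrvlname})+(\varepsilon,\varepsilon)$, and the left-hand side, being the barycenter of the push-forward measure, which is supported on the compact set $F$, lies in $\mathrm{conv}(F)$. By Carathéodory's theorem in $\Rset^2$ it is a convex combination $\sum_{j=1}^3 c_j(\lraname[f_j],\lrvname[f_j])$ of at most three points of $F$, for frequency functions $f_1,f_2,f_3$ on $C$ and weights $c_j\ge0$ summing to $1$. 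Finally, Lemma~\ref{lem:approx-two} applied to this finite sequence of three points yields two of them, say $(\lraname[f_k],\lrvname[f_k])$ and $(\lraname[f_\ell],\lrvname[f_\ell])$, and $p\in[0,1]$ with $\sum_{j=1}^3 c_j(\lraname[f_j],\lrvname[f_j])\ge p(\lraname[f_k],\lrvname[f_k])+(1-p)(\lraname[f_\ell],\lrvname[f_\ell])$. Setting $f_\varepsilon:=f_k$, $f'_\varepsilon:=f_\ell$, $p_\varepsilon:=p$ and chaining the three inequalities proves the proposition.

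\textbf{Main obstacle.} The hard part will be Step 1. Because $\lraname$ and $\lrvlname$ are defined with $\limsup$ rather than $\lim$, a single run need not possess well-defined action frequencies at all, and even along a subsequence on which the empirical distributions converge one must still check that the resulting $\lrvname[g_\omega]$ does not exceed $\lrvl{\omega}$, whose defining $\limsup$ may be approached along a \emph{different} subsequence; the argument above finesses this by realising the $\lrvl{\omega}$-$\limsup$ first and then invoking the ``minimum at the empirical mean'' identity of Lemma~\ref{lem:min-var}. On top of this one must carry out the per-run selection measurably so that the pointwise bound can legitimately be integrated, which is precisely why the clean statement with genuine limits is relaxed to the $\varepsilon$-version here; the sharp Proposition~\ref{prop:strong-opt-local} is then recovered by a compactness argument letting $\varepsilon\to0$ over the compact set of frequency functions. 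Everything else—the martingale law of large numbers for the balance constraints, compactness of $F$, Carathéodory, and the two-point reduction—is routine.
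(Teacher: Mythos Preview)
Your per-run extraction in Step~1 is essentially the paper's Claim~\ref{claim:subsequence-local}: along a suitable subsequence one gets action frequencies $g_\omega$ satisfying the balance equations (via SLLN on the transitions) with $\lraname[g_\omega]\le\lra\omega$ and $\lrvname[g_\omega]\le\lrvl\omega$. The order in which you take sub-subsequences differs from the paper's (you realise the $\lrvl$-$\limsup$ first, the paper realises the $\lraname$-$\limsup$ first), but both orders work.

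Where you genuinely diverge is in how you pass from the per-run statement to two frequency functions. The paper never integrates over runs and never needs a measurable selection. Instead it discretises the \emph{value} plane into the $\varepsilon$-grid cells $A^{\ell,k}=\{\omega:(\ell\varepsilon,k\varepsilon)\le(\lra\omega,\lrvl\omega)<(\ell\varepsilon,k\varepsilon)+(\varepsilon,\varepsilon)\}$, observes that $\sum_{\ell,k}\Prb(A^{\ell,k})\,(\ell\varepsilon,k\varepsilon)\le(\Ex{\zeta}{s_0}{\lraname},\Ex{\zeta}{s_0}{\lrvlname})$, applies Lemma~\ref{lem:approx-two} directly to this (countable but effectively finite) convex combination to single out \emph{two} cells, and only then picks \emph{one} run from each of those two cells and runs the subsequence extraction on it. The $(\varepsilon,\varepsilon)$ slack comes purely from the grid, and measurability is a non-issue because only two specific runs are ever touched.

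Your route---extract $g_\omega$ for every $\omega$, integrate, then Carath\'eodory, then Lemma~\ref{lem:approx-two}---can be made to work, but the measurability patch you sketch is underspecified. The finite-horizon empirical vector $\hat f^{(n)}$ is not a frequency function (it fails the balance equations by $O(1/n)$), so after your measurable choice of $n$ you still have to \emph{project} onto the frequency polytope and argue that $\lraname$ and $\lrvname$ move by at most $O(\varepsilon)$ under that projection; this is true by continuity but needs to be said. Also, ``a prefix whose empirical mean and centred second moment are $\varepsilon$-close to $\lra\omega$ and $\lrvl\omega$'' is not quite what you want: you only need (and only get) $\le\lra\omega+\varepsilon$ and $\le\lrvl\omega+\varepsilon$, since along your subsequence the empirical mean tends to $\ell_\omega$ which may be strictly below $\lra\omega$. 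None of this is fatal, but the paper's grid-then-pick-two-runs trick buys you exactly the simplification that eliminates these headaches.
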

\noindent
Before we prove Proposition~\ref{prop:eps-opt-local}, let us show that it indeed implies Proposition~\ref{prop:strong-opt-local}.
There is a sequence $\varepsilon_1,\varepsilon_2,\ldots$, two functions $f_C$ and $f'_C$, and $p_C\in [0,1]$ such that as $n\rightarrow \infty$ 
\begin{itemize}
\item $\varepsilon_n\rightarrow 0$ 
\item $f_{\varepsilon_n}$ converges pointwise to $f_C$
\item $f'_{\varepsilon_n}$ converges pointwise to $f'_C$
\item $p_{\varepsilon_n}$ converges to $p_C$
\end{itemize}
It is easy to show that $f_C$ as well as $f'_C$ are frequency functions. Moreover, 
as 
\[
\lim_{n\rightarrow \infty} (\Ex{\zeta}{s_0}{\lraname}, \Ex{\zeta}{s_0}{\lrvlname})+(\varepsilon_n,\varepsilon_n)=
(\Ex{\zeta}{s_0}{\lraname}, \Ex{\zeta}{s_0}{\lrvlname})
\]
and
\[
\lim_{n\rightarrow \infty}  p_{\varepsilon_n}\cdot (\lraname[f_{\varepsilon_n}],\lrvname[f_{\varepsilon_n}])+(1-p_{\varepsilon_n})\cdot
 (\lraname[f'_{\varepsilon_n}],\lrvname[f'_{\varepsilon_n}])
  =  p_C\cdot (\lraname[f_C],\lrvname[f_C])+(1-p_C)\cdot
 (\lraname[f'_C],\lrvname[f'_C])
\]
we obtain
\[
p_C\cdot (\lraname[f_C],\lrvname[f_C])+(1-p_C)\cdot
 (\lraname[f'_C],\lrvname[f'_C])\le (\Ex{\zeta}{s_0}{\lraname}, \Ex{\zeta}{s_0}{\lrvlname})
\]
This finishes a proof of Proposition~\ref{prop:strong-opt-local}. It remains to prove Proposition~\ref{prop:eps-opt-local}.

\medskip\noindent
\begin{proof}[Proof of Proposition~\ref{prop:eps-opt-local}.]
Given $\ell,k\in \Zset$ we denote by $A^{\ell,k}$ the set of all runs $\omega\in R_C$ such that
\[
(\ell\cdot \varepsilon,k\cdot \varepsilon) \quad \leq \quad (\lraname(\omega),\lrvlname(\omega))\quad < \quad (\ell\cdot \varepsilon,k\cdot \varepsilon)+(\varepsilon,\varepsilon)
\]
Note that 
\[
\sum_{\ell,k\in \Zset} \Prb_{s_0}^{\zeta}(A^{\ell,k}|R_C)\cdot (\ell\cdot \varepsilon,k\cdot \varepsilon) \quad \leq \quad
(\Ex{\zeta}{s_0}{\lraname|R_C}, \Ex{\zeta}{s_0}{\lrvlname|R_C})
\]
By Lemma~\ref{lem:approx-two}, there are $\ell,k,\ell',k'\in \Zset$ and $p\in [0,1]$ such that $\Prb_{s_0}^{\zeta}(A^{\ell,k}|R_C)>0$ and $\Prb_{s_0}^{\zeta}(A^{\ell',k'}|R_C)>0$ and 
\begin{equation}\label{eq:approx-two}
p\cdot (\ell\cdot \varepsilon,k\cdot \varepsilon)+(1-p)\cdot (\ell'\cdot \varepsilon,k'\cdot \varepsilon)\leq \sum_{\ell,k\in \Zset} \Prb_{s_0}^{\zeta}(A^{\ell,k}|R_C)\cdot (\ell\cdot \varepsilon,k\cdot \varepsilon)\leq
(\Ex{\zeta}{s_0}{\lraname|R_C}, \Ex{\zeta}{s_0}{\lrvlname|R_C})
\end{equation}
Let us concentrate on $(\ell\cdot \varepsilon,k\cdot \varepsilon)$ and construct a frequency function $f$ on $C$ such that
\[
 (\lraname[f],\lrvname[f])
\quad \leq \quad (\ell\cdot \varepsilon,k\cdot \varepsilon)+(\varepsilon,\varepsilon)
\]
Intuitively, we obtain $f$ as a vector of frequencies of individual actions on an appropriately chosen run of $R_C$. Such frequencies determine the average and variance close to $\ell\cdot \varepsilon$ and $k\cdot \varepsilon$, respectively. We have to deal with some technical issues, mainly with the fact that the frequencies might not be well defined for almost all runs (i.e. the corresponding limits might not exist). This is solved by a careful choice of subsequences as follows.
\begin{claim}\label{claim:subsequence-local}
For every run $\omega\in R_C$ %
there is a sequence of numbers $T_1[\omega],T_2[\omega],\ldots$ such that
all the following limits are defined:
\[
\lim_{i\rightarrow \infty} \frac{1}{T_i[\omega]} \sum_{j=1}^{T_i[\omega]} r(A_j(\omega))\quad = \quad \lraname(\omega)
\qquad
\text{and}
\qquad
\lim_{i\rightarrow \infty} \frac{1}{T_i[\omega]} \sum_{j=1}^{T_i[\omega]} (r(A_j(\omega))-\lraname(\omega))^2\quad \le \quad \lrvlname(\omega)
\]
and for every action $a\in A$ there is a number $f_{\omega}(a)$ such that
\[
\lim_{i\rightarrow \infty} \frac{1}{T_i[\omega]} \sum_{j=1}^{T_i[\omega]} I_a(A_j(\omega))\quad = \quad f_{\omega}(a)
\]
(Here $I_a(A_j(\omega))=1$ if $A_j(\omega)=a$, and $I_a(A_j(\omega))=0$ otherwise.)

Moreover, for almost all runs $\omega$ of $R_C$ we have that $f_{\omega}$ is a frequency function on $C$ and that 
$f_{\omega}$ determines $(\lraname(\omega),\lrvlname(\omega))$, i.e.,
$\lraname(\omega)=\lraname(f_{\omega})$ and $\lrvlname(\omega)\ge\lrvname(f_{\omega})$.
\end{claim}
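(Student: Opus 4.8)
The plan is to establish the claim in two stages: a purely real-analytic stage that holds for \emph{every} run $\omega \in \staymec{C}$ (producing the subsequence $T_i[\omega]$ and the (in)equalities relating $(\lra{\omega},\lrvl{\omega})$ to $(\lraname[f_\omega],\lrvname[f_\omega])$), and a probabilistic stage showing that the extracted $f_\omega$ satisfies the balance equations of a frequency function for almost all $\omega$. Fix $\omega \in \staymec{C}$. Since $\lra{\omega}$ is \emph{defined} as $\limsup_n \frac1n\sum_{j=1}^n r(A_j(\omega))$, there is an increasing sequence $n_1 < n_2 < \cdots$ along which $\frac1n\sum_{j=1}^n r(A_j(\omega)) \to \lra{\omega}$. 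Restricting to this sequence, the quantities $\frac1n\sum_{j=1}^n (r(A_j(\omega)) - \lra{\omega})^2$ stay in the bounded interval $[0,4R^2]$ with $R = \max_a |r(a)|$, so by Bolzano--Weierstrass we pass to a further subsequence along which they converge; as this is a subsequence of the full index sequence, its limit is $\le \limsup_n \frac1n\sum_{j=1}^n (r(A_j(\omega)) - \lra{\omega})^2 = \lrvl{\omega}$. Finally, for each of the finitely many $a \in A$ the partial frequencies $\frac1n\sum_{j=1}^n I_a(A_j(\omega))$ lie in $[0,1]$, so a finite iteration of Bolzano--Weierstrass yields a further subsequence $T_1[\omega] < T_2[\omega] < \cdots$ along which all of them converge simultaneously, say to $f_\omega(a) \in [0,1]$. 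This is the required $T_i[\omega]$, with the three displayed limits of the claim satisfied by construction.

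Next I would read off the two identities. For every $n$, $\sum_{a\in A}\frac1n\sum_{j=1}^n I_a(A_j(\omega)) = 1$, so $\sum_{a\in A} f_\omega(a) = 1$; and since $\omega$ eventually stays in $C$, $I_a(A_j(\omega)) = 0$ for all large $j$ when $a \notin \MECact{C}$, hence $f_\omega(a) = 0$ for such $a$ and $\sum_{a \in \MECact{C}} f_\omega(a) = 1$. Writing $r(A_j(\omega)) = \sum_{a} r(a) I_a(A_j(\omega))$ and exchanging the finite sum with the limit along $T_i[\omega]$ gives $\lra{\omega} = \sum_{a\in\MECact C} f_\omega(a) r(a) = \lraname[f_\omega]$; applying the same manipulation to $(r(A_j(\omega)) - \lra{\omega})^2 = \sum_a (r(a) - \lra{\omega})^2 I_a(A_j(\omega))$ gives $\sum_{a\in\MECact C} f_\omega(a)(r(a) - \lra{\omega})^2 \le \lrvl{\omega}$, and since $\lraname[f_\omega] = \lra{\omega}$ the left-hand side is exactly $\lrvname[f_\omega]$. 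Thus $\lraname[f_\omega] = \lra{\omega}$ and $\lrvname[f_\omega] \le \lrvl{\omega}$ hold for every $\omega \in \staymec{C}$.

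It remains to show that almost surely $f_\omega$ satisfies $\sum_{a\in\MECact C} f_\omega(a)\delta(a)(s) = \sum_{a\in\act s} f_\omega(a)$ for every $s \in \MECstate{C}$, and this is the step I expect to be the main obstacle, since it is the only place that needs a genuinely probabilistic argument. Write $\omega = s_1 a_1 s_2 a_2 \cdots$. With respect to the natural filtration of $G^\zeta_{s_0}$ the process $M_n = \sum_{j=1}^n \big(\indi[s_{j+1} = s] - \delta(a_j)(s)\big)$ is a martingale with increments bounded by $1$, so $M_n/n \to 0$ almost surely (e.g.\ by Azuma--Hoeffding and Borel--Cantelli), hence also along $T_i[\omega]$. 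Moreover $\frac1n\sum_{j=1}^n\indi[s_{j+1} = s]$ and $\frac1n\sum_{j=1}^n\indi[s_j = s] = \sum_{a\in\act s}\frac1n\sum_{j=1}^n I_a(A_j(\omega))$ differ by at most $1/n$, and the latter tends along $T_i[\omega]$ to $\sum_{a\in\act s} f_\omega(a)$; likewise $\frac1n\sum_{j=1}^n\delta(a_j)(s) = \sum_a \delta(a)(s)\,\frac1n\sum_{j=1}^n I_a(A_j(\omega))$ tends along $T_i[\omega]$ to $\sum_{a\in\MECact C} f_\omega(a)\delta(a)(s)$. Equating the two limits yields the balance equation, so almost surely $f_\omega$ is a frequency function on $C$. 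The delicate points here are that almost-sure convergence of $M_n/n$ along the \emph{full} index sequence must be transferred to the run-dependent subsequence $T_i[\omega]$ (which is fine, since a.s.\ convergence of a sequence passes to any subsequence), and that the $O(1/n)$ discrepancy between ``visits to $s$'' and ``entries into $s$'' is harmless in the limit; the rest is routine, if slightly fiddly, nested extraction of convergent subsequences.
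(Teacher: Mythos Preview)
Your proof is correct and, for the subsequence-extraction and the identities $\lraname(\omega)=\lraname[f_\omega]$, $\lrvl{\omega}\ge\lrvname[f_\omega]$, it follows the paper almost verbatim: pick a subsequence realising the $\limsup$ defining $\lra{\omega}$, thin it so the squared-deviation averages converge (necessarily to something $\le\lrvl{\omega}$), then thin finitely many more times to force each action frequency to converge.

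The one genuine difference is your derivation of the balance equations. The paper argues via the classical SLLN applied to the i.i.d.\ variables $N^{a,s}_k$ (the indicator that the $k$-th execution of $a$ is followed by $s$), obtaining $\frac{1}{j}\sum_{k\le j} N^{a,s}_k\to\delta(a)(s)$ a.s., and then threads this through a chain of equalities to reach $\sum_a f_\omega(a)\delta(a)(s)=\sum_{a\in\act{s}} f_\omega(a)$. Your martingale route---$M_n=\sum_{j\le n}(\indi[s_{j+1}=s]-\delta(a_j)(s))$ has bounded increments, hence $M_n/n\to 0$ a.s.\ (Azuma plus Borel--Cantelli, or the martingale SLLN), so along $T_i[\omega]$ the two averages $\frac{1}{n}\sum_j\indi[s_{j+1}=s]$ and $\frac{1}{n}\sum_j\delta(a_j)(s)$ have the same limit---is a legitimate alternative and arguably more direct: it avoids the per-action renewal bookkeeping and the somewhat delicate step of changing the summation index from ``time'' to ``$k$-th occurrence of $a$''. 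Conversely, the paper's argument makes the independence structure fully explicit and uses only the most elementary form of SLLN. Both yield exactly the same almost-sure conclusion, and your remarks about transferring a.s.\ convergence to the run-dependent subsequence and about the $O(1/n)$ shift between $\indi[s_j=s]$ and $\indi[s_{j+1}=s]$ are the right sanity checks.
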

\begin{proof}
We start by taking a sequence $T'_1[\omega],T'_2[\omega],\ldots$ such that
\[
\lim_{i\rightarrow \infty} \frac{1}{T'_i[\omega]} \sum_{j=1}^{T'_i[\omega]} r(A_j(\omega))\quad = \quad \lraname(\omega)
\]
Existence of such a sequence follows from the fact that every sequence of real numbers has a subsequence which converges to the lim sup of the original sequence.

Now we extract a subsequence $T''_1[\omega],T''_2[\omega],\ldots$ of $T'_1[\omega],T'_2[\omega],\ldots$ such that 
\begin{equation}\label{eq:subsequence-local}
\lim_{i\rightarrow \infty} \frac{1}{T''_i[\omega]} \sum_{j=1}^{T''_i[\omega]} (r(A_j(\omega))-\lraname(\omega))^2\quad \le \quad \lrvlname(\omega)
\end{equation}
using the same argument.

Now assuming an order on actions, $a_1,\ldots,a_m$, we define $T^{k}_1[\omega],T^{k}_2[\omega],\ldots$ for $0\leq k\leq m$ so that $T^{0}_1[\omega],T^{0}_2[\omega],\ldots$ is the sequence $T''_1[\omega],T''_2[\omega],\ldots$, and
every $T^{k+1}_1[\omega],T^{k+1}_2[\omega],\ldots$ is a subsequence of $T^{k}_1[\omega],T^{k}_2[\omega],\ldots$ 
such that the following limit exists (and is equal to a number $f_{\omega}(a_{k+1})$)
\[
\lim_{i\rightarrow \infty} \frac{1}{T^{k+1}_i[\omega]} \sum_{j=1}^{T^{k+1}_i[\omega]} I_{a_{k+1}}(A_j(\omega))
\]
We take $T^{m}_1[\omega],T^{m}_2[\omega],\ldots$ to be the desired sequence $T_1[\omega],T_2[\omega],\ldots$.

Now we have to prove that $f_{\omega}$ is a frequency function on $C$ for almost all runs of $R_C$. Clearly, $0\leq f_{\omega}(a)\leq 1$ for all $a\in \MECact{C}$. Also,
\[
\sum_{a\in \MECact{C}} f_{\omega}(a)=\sum_{a\in \MECact{C}} \lim_{i\rightarrow \infty} \frac{1}{T_i[\omega]} \sum_{j=1}^{T_i[\omega]} I_{a}(A_j(\omega))=\lim_{i\rightarrow \infty} \frac{1}{T_i[\omega]} \sum_{j=1}^{T_i[\omega]} \sum_{a\in \MECact{C}}I_{a}(A_j(\omega))
=\lim_{i\rightarrow \infty} \frac{1}{T_i[\omega]} \sum_{j=1}^{T_i[\omega]} 1=1
\]
To prove the third condition from the definition of frequency functions, we invoke the law of large numbers (SLLN) \cite{Billingsley:book}. Given a run $\omega$, an action $a$, a state $s$ and $k\geq 1$, define
\[
N^{a,s}_k(\omega)=\begin{cases}
  1 & \text{ $a$ is executed at least $i$ times, and $s$ is visited just after the $i$-th execution of $a$; }\\
  0 & \text{ otherwise.}
\end{cases}
\]
By SLLN and by the fact that in every step the distribution on the next states depends just on the chosen action, for almost all runs $\omega$ the following limit is defined and the equality holds whenever $f_\omega(a) > 0$:
\[
\lim_{j\rightarrow \infty} \frac{\sum_{k=1}^j N^{a,s}_k(\omega)}{j} = \delta(a)(s)
\]
We obtain
\begin{eqnarray*}
\sum_{a\in \MECact{C}} f_{\omega}(a)\cdot \delta(a)(s) & = &  \sum_{a\in \MECact{C}} \lim_{i\rightarrow \infty} \frac{1}{T_i[\omega]} \sum_{j=1}^{T_i[\omega]} I_a(A_j(\omega))\cdot   
  \lim_{i\rightarrow \infty} \frac{1}{i}\sum_{k=1}^{i} N^{a,s}_k(\omega) \\
& = & \sum_{a\in \MECact{C}} \lim_{i\rightarrow \infty} \frac{1}{T_i[\omega]} \sum_{j=1}^{T_i[\omega]} I_a(A_j(\omega))\cdot \lim_{i\rightarrow \infty} \frac{1}{\sum_{j=1}^{T_i[\omega]} I_a(A_j(\omega))}\sum_{k=1}^{\sum_{j=1}^{T_i[\omega]} I_a(A_j(\omega))} N^{a,s}_k(\omega) \\
& = & \sum_{a\in \MECact{C}} \lim_{i\rightarrow \infty} \frac{1}{T_i[\omega]}\sum_{k=1}^{\sum_{j=1}^{T_i[\omega]} I_a(A_j(\omega))} N^{a,s}_k(\omega) \\
& = & \lim_{i\rightarrow \infty} \frac{1}{T_i[\omega]}\sum_{a\in \MECact{C}} \sum_{k=1}^{\sum_{j=1}^{T_i[\omega]} I_a(A_j(\omega))} N^{a,s}_k(\omega) \\
& = & \lim_{i\rightarrow \infty} \frac{1}{T_i[\omega]}\sum_{j=1}^{T_i[\omega]} I_s(S_j(\omega)) \\
& = & \lim_{i\rightarrow \infty} \frac{1}{T_i[\omega]}\sum_{j=1}^{T_i[\omega]} \sum_{a\in \mathit{Act}(s)} I_a(A_j(\omega)) \\
& = & \sum_{a\in \mathit{Act}(s)}\lim_{i\rightarrow \infty} \frac{1}{T_i[\omega]}\sum_{j=1}^{T_i[\omega]}  I_a(A_j(\omega)) \\
& = & \sum_{a\in \mathit{Act}(s)} f_{\omega}(a)
\end{eqnarray*}
Here $S_j(\omega)$ is the $j$-th state of $\omega$, and $I_s(t)=1$ for $s=t$ and $I_s(t)=0$ otherwise.
\begin{eqnarray*}
\lraname(\omega) & = & \lim_{i\rightarrow \infty} \frac{1}{T_i[\omega]} \sum_{j=1}^{T_i[\omega]} r(A_j(\omega)) \\
& = & \lim_{i\rightarrow \infty} \frac{1}{T_i[\omega]} \sum_{j=1}^{T_i[\omega]} \sum_{a\in \MECact{C}} I_a(A_j(\omega))\cdot r(a) \\
& = & \sum_{a\in \MECact{C}} r(a)\cdot \lim_{i\rightarrow \infty} \frac{1}{T_i[\omega]} \sum_{j=1}^{T_i[\omega]} I_a(A_j(\omega)) \\
& = & \sum_{a\in \MECact{C}} r(a)\cdot f_{\omega}(a)\\
& = & \lraname[f_{\omega}]
\end{eqnarray*}
\begin{eqnarray*}
\lrvlname(\omega) & \ge & \lim_{i\rightarrow \infty} \frac{1}{T_i[\omega]} \sum_{j=1}^{T_i[\omega]} (r(A_j(\omega))-\lraname(\omega))^2 \\
& = & \lim_{i\rightarrow \infty} \frac{1}{T_i[\omega]} \sum_{j=1}^{T_i[\omega]} \sum_{a\in \MECact{C}} I_a(A_j(\omega))\cdot (r(a)-\lraname(\omega))^2 \\
& = & \sum_{a\in \MECact{C}} (r(a)-\lraname(\omega))^2 \cdot \lim_{i\rightarrow \infty} \frac{1}{T_i[\omega]} \sum_{j=1}^{T_i[\omega]} I_a(A_j(\omega)) \\
& = & \sum_{a\in \MECact{C}} (r(a)-\lraname(\omega))^2 \cdot f_{\omega}(a)\\
& = & \lrvname[f_{\omega}]
\end{eqnarray*}
\end{proof}
\noindent
Now pick an arbitrary run $\omega$ of $A^{k,\ell}$ such that $f_{\omega}$ is a frequency function. Then
\[
(\lraname(f_{\omega}),\lrvname(f_{\omega}))\le(\lraname(\omega),\lrvlname(\omega))\leq (\ell\cdot \varepsilon,k\cdot\varepsilon)+(\varepsilon,\varepsilon)
\]
Similarly, for $\ell',k'$ we obtain $f'_{\omega}$ such that 
\[
(\lraname(f'_{\omega}),\lrvname(f'_{\omega}))\le(\lraname(\omega),\lrvlname(\omega))\leq (\ell'\cdot \varepsilon,k'\cdot\varepsilon)+(\varepsilon,\varepsilon)
\]
This together with the equation~(\ref{eq:approx-two}) from page~\pageref{eq:approx-two} proves Proposition~\ref{prop:eps-opt-local}:
\begin{align*}
p\cdot (\lraname(f_{\omega}),\lrvname(f_{\omega})) +(1-p)\cdot  (\lraname(f'_{\omega}),\lrvname(f'_{\omega})) &
   \leq p\cdot \left((\ell\cdot \varepsilon,k\cdot \varepsilon)+(\varepsilon,\varepsilon)\right)+(1-p)\cdot \left((\ell'\cdot \varepsilon,k'\cdot \varepsilon)+(\varepsilon,\varepsilon)\right) \\
  & \leq  (\Ex{\zeta}{s_0}{\lraname|R_C}, \Ex{\zeta}{s_0}{\lrvlname|R_C})+(\varepsilon,\varepsilon)
\end{align*}
This finishes the proof of Proposition~\ref{prop:eps-opt-local}.
\end{proof}

\subsubsection{Details for proof of Proposition~\ref{prop:local-main}}\label{app-local-main}
We have
\[
\Ex{\zeta}{s_0}{\lraname}=\sum_{C\in \Mec(G)} \Prb(R_C)\cdot \Ex{\zeta}{s_0}{\lraname\mid R_C}
\qquad
\text{and}
\qquad
\Ex{\zeta}{s_0}{\lrvlname}=\sum_{C\in \Mec(G)} \Prb(R_C)\cdot \Ex{\zeta}{s_0}{\lrvlname\mid R_C}
\]
Here $\Ex{\zeta}{s_0}{\lraname\mid R_C}$ and $\Ex{\zeta}{s_0}{\lraname\mid R_C}$ are conditional expectations of $\lraname$ and $\lrvlname$, respectively, on runs of $R_C$.
Thus
\begin{equation}\label{eq:ev-complete}
(\Ex{\zeta}{s_0}{\lraname}, \Ex{\zeta}{s_0}{\lrvlname})\quad = \quad \sum_{C\in \Mec(G)} \Prb(R_C)\cdot \left(\Ex{\zeta}{s_0}{\lraname\mid R_C},\Ex{\zeta}{s_0}{\lrvlname\mid R_C}\right)
\end{equation}

We define memoryless strategies $\kappa$ and $\kappa'$ in $C$ as follows: Given $s\in \MECstate{C}$ such that $\sum_{b\in A(s)} f_C(b)>0$ and $a\in A(s)$, we put
\[
\kappa(s)(a)=f_C(a)\ /\ \sum_{b\in A(s)} f_C(b)
\qquad
\text{and}
\qquad
\kappa'(s)(a)=f_C(a)\ /\ \sum_{b\in A(s)} f_C(b)
\]
In the remaining states $s$ the strategy $\kappa$ (or $\kappa'$) behaves as a memoryless deterministic strategy reaching $\{s\in \MECstate{C}\mid \sum_{b\in \act{s}} f_C(b)>0\}$ (or $\{s\in \MECstate{C}\mid \sum_{b\in \act{s}} f'_C(b)>0\}$, resp.) with probability one.

Given a BSCC $D$ of $C^{\kappa}$ (or $D'$ of $C^{\kappa'}$), we write $f_C(D)=\sum_{a\in \BSCCact{D}} f_C(a)$ (or $f'_C(D')=\sum_{a\in \BSCCact{D'}} f'_C(a)$, resp.)

Denoting by $L$ the tuple $(\Ex{\zeta}{s_0}{\lraname|R_C}, \Ex{\zeta}{s_0}{\lrvlname|R_C})$ we obtain
\begin{eqnarray*}
L & = & p_C\cdot (\lraname[f_C],\lrvname[f_C]) + (1-p_C)\cdot
 (\lraname[f'_C],\lrvname[f'_C]) \\
  & = & \sum_{D\in \BSCCuni{C^{\kappa}}} p_C\cdot f_C(D)\cdot \left(\sum_{a\in \BSCCact{D}} \frac{f_C(a)}{f_C(D)}\cdot r(a),\sum_{a\in \BSCCact{D}} \frac{f_C(a)}{f_C(D)}\cdot (r(a)-\lraname[f_C])^2\right) \\
  & & +\ \sum_{D\in \BSCCuni{C^{\kappa'}}} (1-p_C)\cdot f'_C(D)\cdot \left(\sum_{a\in \BSCCact{D}} \frac{f'_C(a)}{f'_C(D)}\cdot r(a),\sum_{a\in \BSCCact{D}} \frac{f'_C(a)}{f'_C(D)}\cdot (r(a)-\lraname[f'_C])^2\right) \\
  & \geq & \sum_{D\in \BSCCuni{C^{\kappa}}} p_C\cdot f_C(D)\cdot \left(\sum_{a\in \BSCCact{D}} \frac{f_C(a)}{f_C(D)}\cdot r(a),\sum_{a\in \BSCCact{D}} \frac{f_C(a)}{f_C(D)}\cdot (r(a)-\sum_{b\in \BSCCact{D}} \frac{f_C(b)}{f_C(D)}\cdot r(b))^2\right) \\
  & & +\ \sum_{D\in \BSCCuni{C^{\kappa'}}} (1-p_C)\cdot f'_C(D)\cdot \left(\sum_{a\in \BSCCact{D}} \frac{f'_C(a)}{f'_C(D)}\cdot r(a),\sum_{a\in \BSCCact{D}} \frac{f'_C(a)}{f'_C(D)}\cdot (r(a)-\sum_{b\in \BSCCact{D}} \frac{f'_C(b)}{f'_C(D)}\cdot r(b))^2\right) \\
  & = & \sum_{D\in \BSCCuni{C^{\kappa}}} p_C\cdot f_C(D)\cdot \left(\Exp_D(\lraname),\Exp_D(\lrvlname)\right)
    + \sum_{D\in \BSCCuni{C^{\kappa'}}} (1-p_C)\cdot f'_C(D)\cdot \left(\Exp_{D}(\lraname),\Exp_{D}(\lrvlname)\right) \\
\end{eqnarray*}
Here $\Exp_D(\lraname)$ and $\Exp_D(\lrvlname)$ denote the expected mean-payoff and the expected local variance, resp., on almost all runs of either $C^{\kappa}$ or $C^{\kappa'}$ initiated in any state of $D$ (note that almost all such runs have the same mean-payoff and the local variance due to ergodic theorem).
Note that the second equality follows from the fact that $f_C(a)>0$ (or $f'_C(a)>0$) iff $a\in \BSCCact{D}$ for a BSCC $D$ of $C^{\kappa}$ (or of $C^{\kappa'}$). The third inequality follows from Lemma~\ref{lem:min-var}. The last equality follows from Lemma~\ref{lem:freq-var} and the fact that $f_C(a)/f_C(D)$ is the frequency of firing $a$ on almost all runs initiated in $D$.

By~Lemma~\ref{lem:approx-two}, there are two components $D,D'\in \BSCCuni{C^{\kappa}}\cup \BSCCuni{C^{\kappa'}}$ and $0\leq d_C\leq 1 $ such that
\[
L  \quad \geq \quad d_C \cdot \left(\Exp_D(\lraname),\Exp_D(\lrvlname)\right) 
 + (1-d_C)\cdot \left(\Exp_{D'}(\lraname),\Exp_{D'}(\lrvlname)\right) 
\]
In what follows we use the following definition: Let $\nu$ be a memoryless randomized strategy on a MEC $C$ and let $K$ be a BSCC of $C^{\nu}$. We say that a strategy $\mu_K$ is {\em induced} by $K$ if 
\begin{enumerate}
\item $\mu_K(s)(a)=\nu(s)(a)$ for all $s\in \BSCCstate{K}$ and $a\in \BSCCact{K}$
\item in all $s\in S\smallsetminus (\BSCCstate{K})$ the strategy $\mu_K$ corresponds to a memoryless deterministic strategy which reaches a state of $K$ with probability one
\end{enumerate}
(Note that the above definition is independent of the strategy $\nu$ once it generates the same BSCC $K$.)

The strategies $\mu_D$ and $\mu_{D'}$ induced by $D$ and $D'$, resp., generate single-BSCC Markov chains $C^{\mu_D}$ and $C^{\mu_{D'}}$ satisfying for every state $s\in C\cap S$ the following
\begin{eqnarray*}
L &= & (\Ex{\zeta}{s_0}{\lraname|R_C}, \Ex{\zeta}{s_0}{\lrvlname|R_C}) \\
  &\geq & d_C \cdot \left(\Exp_D(\lraname),\Exp_D(\lrvlname)\right) 
 + (1-d_C)\cdot \left(\Exp_{D'}(\lraname),\Exp_{D'}(\lrvlname)\right) \\
  & = & d_C\cdot (\Ex{\mu_D}{s}{\lraname}, \Ex{\mu_D}{s}{\lrvlname}) + (1-d_C)\cdot (\Ex{\mu_{D'}}{s}{\lraname}, \Ex{\mu_{D'}}{s}{\lrvlname}) \\
  & = &  d_C\cdot (\Ex{\mu_D}{s}{\lraname}, \Ex{\mu_D}{s}{\lrvhname}) + (1-d_C)\cdot (\Ex{\mu_{D'}}{s}{\lraname}, \Ex{\mu_{D'}}{s}{\lrvhname}) 
\end{eqnarray*}
Here the last equality follows from the fact that almost all runs in $C^{\mu_{D}}$ (and also in $C^{\mu_{D'}}$) have the same mean-payoff. Thus for almost all runs the local variance is equal to the hybrid one. This shows that in $C$, a convex combination of two memoryless (possibly randomized) strategies is sufficient to optimize the mean-payoff and the local variance. 

Now we show that these strategies may be even deterministic.
\begin{claim}\label{prop:MR-MD}
Let $s\in S$.
There are {\em memoryless deterministic} strategies $\chi_1,\chi_2,\chi'_1,\chi'_2$ in $C$, each generating a single BSCC, and numbers $0\leq \nu,\nu'\leq 1$ such that
\[
(\Ex{\mu_D}{s}{\lraname}, \Ex{\mu_D}{s}{\lrvhname}) \ge  \nu\cdot (\Ex{\chi_1}{s}{\lraname}, \Ex{\chi_1}{s}{\lrvhname})+
(1-\nu)\cdot (\Ex{\chi_2}{s}{\lraname}, \Ex{\chi_2}{s}{\lrvhname})
 \geq  \nu\cdot (\Ex{\chi_1}{s}{\lraname}, \Ex{\chi_1}{s}{\lrvlname})+
(1-\nu)\cdot (\Ex{\chi_2}{s}{\lraname}, \Ex{\chi_2}{s}{\lrvlname})
\]
and
\[
(\Ex{\mu_{D'}}{s}{\lraname}, \Ex{\mu_{D'}}{s}{\lrvhname}) \ge \nu'\cdot (\Ex{\chi'_1}{s}{\lraname}, \Ex{\chi'_1}{s}{\lrvhname})+
(1-\nu')\cdot (\Ex{\chi'_2}{s}{\lraname}, \Ex{\chi'_2}{s}{\lrvhname})
 \geq \nu'\cdot (\Ex{\chi'_1}{s}{\lraname}, \Ex{\chi'_1}{s}{\lrvlname})+
(1-\nu')\cdot (\Ex{\chi'_2}{s}{\lraname}, \Ex{\chi'_2}{s}{\lrvlname})
\]
\end{claim}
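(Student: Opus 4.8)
\smallskip\noindent\textbf{Proof plan for Claim~\ref{prop:MR-MD}.}
The plan is to pass from $\mu_D$ to its vector of action frequencies, decompose that vector inside the polytope of frequency functions into extreme points, realize each extreme point by a memoryless deterministic strategy with a single BSCC, and finally collapse the resulting convex combination to two strategies via Lemma~\ref{lem:approx-two}. I describe only the construction of $\chi_1,\chi_2,\nu$ from $\mu_D$; the strategy $\mu_{D'}$ is handled identically to produce $\chi'_1,\chi'_2,\nu'$. First I would fix $s\in \MECstate{C}$ (the choice is immaterial, since $\mu_D$ first reaches its unique BSCC $D$ almost surely and the quantities below depend only on the behaviour inside $D$). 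As $\mu_D$ is memoryless with single BSCC $D$, the action frequencies $f(a):=\Ex{\mu_D}{s}{\lraname^{I_a}}$ are well defined and $f$ is a frequency function on $C$; by Lemma~\ref{lem:freq-var}, $\Ex{\mu_D}{s}{\lraname}=\lraname[f]=\sum_{a} r(a)f(a)$ and $\Ex{\mu_D}{s}{\lrvhname}=\Ex{\mu_D}{s}{\lrvlname}=\sum_{a}(r(a)-\lraname[f])^2 f(a)$.

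The core step is to decompose $f$ into ``deterministic'' frequency functions. The set of all frequency functions on the strongly connected MDP $C$ is a polytope, and by the standard theory of average-reward MDPs (see e.g.\ \cite{Puterman:book,Altman}) its extreme points are precisely the action frequencies of memoryless deterministic strategies inducing a single BSCC. Hence $f=\sum_{i=1}^{k}\lambda_i g_i$ with $\lambda_i>0$, $\sum_i\lambda_i=1$, and each $g_i$ an extreme point; I would let $\chi_i$ be a memoryless deterministic strategy whose (unique) BSCC carries $g_i$, chosen on the remaining states so as to reach that BSCC with probability one (possible because $C$ is strongly connected). Putting $m_i:=\sum_a r(a)g_i(a)=\Ex{\chi_i}{s}{\lraname}$ and $w_i:=\sum_a (r(a)-m_i)^2 g_i(a)=\Ex{\chi_i}{s}{\lrvhname}=\Ex{\chi_i}{s}{\lrvlname}$ (Lemma~\ref{lem:freq-var} again), and $\bar m:=\Ex{\mu_D}{s}{\lraname}=\sum_i\lambda_i m_i$, the elementary identity $\sum_a(r(a)-\bar m)^2 g_i(a)=w_i+(m_i-\bar m)^2$ gives
\[
\Ex{\mu_D}{s}{\lrvhname}=\sum_i\lambda_i\sum_a(r(a)-\bar m)^2 g_i(a)=\sum_i\lambda_i\big(w_i+(m_i-\bar m)^2\big)\ \geq\ \sum_i\lambda_i w_i,
\]
so that $(\Ex{\mu_D}{s}{\lraname},\Ex{\mu_D}{s}{\lrvhname})\geq\sum_i\lambda_i(\Ex{\chi_i}{s}{\lraname},\Ex{\chi_i}{s}{\lrvhname})$ componentwise (the first coordinate being an equality).

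Finally I would apply Lemma~\ref{lem:approx-two} to the points $(\Ex{\chi_i}{s}{\lraname},\Ex{\chi_i}{s}{\lrvhname})$ with weights $\lambda_i$: it yields two of them, which I relabel so that they belong to $\chi_1$ and $\chi_2$, together with a number $\nu\in[0,1]$ satisfying $\sum_i\lambda_i(\Ex{\chi_i}{s}{\lraname},\Ex{\chi_i}{s}{\lrvhname})\geq\nu(\Ex{\chi_1}{s}{\lraname},\Ex{\chi_1}{s}{\lrvhname})+(1-\nu)(\Ex{\chi_2}{s}{\lraname},\Ex{\chi_2}{s}{\lrvhname})$; chaining this with the previous inequality gives the first displayed inequality of the claim, and the second one follows since $\Ex{\chi_j}{s}{\lrvhname}=\Ex{\chi_j}{s}{\lrvlname}$ for $j\in\{1,2\}$ (single BSCC, Lemma~\ref{lem:freq-var}). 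The only step that is not routine is the extreme-point description of the frequency polytope; if one wishes to avoid quoting it, one can instead argue directly that any frequency function whose support is not a single BSCC of a memoryless deterministic strategy lies on a proper face and can be written as a convex combination of frequency functions with strictly smaller support, and iterate.
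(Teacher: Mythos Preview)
Your proof is correct and follows essentially the same route as the paper: decompose the action-frequency vector of $\mu_D$ into frequencies of memoryless deterministic single-BSCC strategies (you phrase this as the extreme-point description of the frequency polytope; the paper cites \cite{derman1970finite} and then splits each deterministic strategy further along its BSCCs), derive the variance inequality via the identity $\sum_a(r(a)-\bar m)^2 g_i(a)=w_i+(m_i-\bar m)^2$ (the paper states this as Lemma~\ref{lem:min-var}), and collapse to two strategies using Lemma~\ref{lem:approx-two}. The one difference is the order of the last two steps: you first pass to the two-dimensional points $(m_i,w_i)$ and then apply Lemma~\ref{lem:approx-two}, whereas the paper invokes Lemma~\ref{lem:approx-two} at the level of frequency vectors and only afterwards derives the variance bound---your ordering is the cleaner one, since Lemma~\ref{lem:approx-two} is stated only for points in~$\Rset^2$.
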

\begin{proof}
It suffices to concentrate on $\mu_D$. 
By~\cite{derman1970finite}, $\Ex{\mu_D}{s_0}{\lraname^{I_a}}$ is equal to a convex combination of the values $\Ex{\iota_i}{s_0}{\lraname^{I_a}}$ for some memoryless deterministic strategies $\iota_1,\ldots,\iota_m$, i.e. there are $\gamma_1,\ldots,\gamma_m > 0$ such that $\sum_{i=1}^m \gamma_i=1$ and $\sum_{i=1}^m \gamma_i \cdot \Ex{\iota_i}{s_0}{\lraname^{I_a}} = \Ex{\mu_D}{s_0}{\lraname^{I_a}}$.
For all $1\le i \le m$ and $D\in\BSCCuni{C^{\iota_i}}$ denote
$\iota_{i,D}$ a memoryless deterministic strategy such that $\iota_{i,D}(s)=\iota_{i}(s)$ on all $s\in D\cap S$, and on other states $\iota_{i,D}$ is defined so that
$D\cap S$ is reached with probability 1, independent of the starting state.
For all $a\in D\cap A$ we have $\Ex{\iota_{i,D}}{s_0}{\lraname^{I_a}} = \Pr{\iota_i}{s_0}{\reach(D)}\cdot \Ex{\mu_D}{s_0}{\lraname^{I_a}}$,
while for $a\not\in D\cap A$ we have $\Ex{\iota_{i,D}}{s_0}{\lraname^{I_a}} = 0$. Hence
$\sum_{i=1}^m\sum_{D\in\BSCCuni{C^{\iota_i}}} \gamma_i \cdot\Pr{\iota_i}{s_0}{\reach(D)} \cdot \Ex{\iota_{i,D}}{s_0}{\lraname^{I_a}} = \Ex{\iota_i}{s_0}{\lraname^{I_a}}$.
Since $\sum_{i=1}^m\sum_{D\in\BSCCuni{C^{\iota_i}}} \gamma_i \cdot\Pr{\iota_i}{s_0}{\reach(D)}=1$, we apply Lemma~\ref{lem:approx-two} and get
there are two memoryless deterministic single-BSCC strategies $\chi_1,\chi_2$ and $0\leq \nu\leq 1$ such that
\[
\Ex{\mu_D}{s_0}{\lraname^{I_a}}=\nu \Ex{\chi_1}{s_0}{\lraname^{I_a}}+ (1-\nu) \Ex{\chi_2}{s_0}{\lraname^{I_a}}
\]
which together with Lemma~\ref{lem:freq-var} implies that
\begin{eqnarray*}
\Ex{\mu_D}{s}{\lraname} & = & \sum_{a\in A} r(a)\cdot \Ex{\mu_D}{s}{\lraname^{I_a}} \\
  & = & \sum_{a\in A} r(a)\cdot \left(\nu \Ex{\chi_1}{s}{\lraname^{I_a}}+(1-\nu) \Ex{\chi_2}{s}{\lraname^{I_a}}\right) \\
  & = & \nu \sum_{a\in A} r(a)\cdot \Ex{\chi_1}{s}{\lraname^{I_a}} + (1-\nu)\sum_{a\in A} r(a)\cdot \Ex{\chi_2}{s}{\lraname^{I_a}} \\
  & = & \nu \Ex{\chi_1}{s}{\lraname}+(1-\nu) \Ex{\chi_2}{s}{\lraname}
\end{eqnarray*}
and
\begin{eqnarray*}
\Ex{\mu_D}{s}{\lrvhname} & = & \sum_{a\in A} (r(a)-\Ex{\mu_D}{s}{\lraname})^2\cdot \Ex{\mu_D}{s}{\lraname^{I_a}} \\
  & = & \sum_{a\in A} (r(a)-\Ex{\mu_D}{s}{\lraname})^2\cdot (\nu\Ex{\chi_1}{s}{\lraname^{I_a}}+(1-\nu)\Ex{\chi_2}{s}{\lraname^{I_a}}) \\
  & = & \nu \sum_{a\in A} (r(a)-\Ex{\mu_D}{s}{\lraname})^2\cdot \Ex{\chi_1}{s}{\lraname^{I_a}} +
  (1-\nu) \sum_{a\in A} (r(a)-\Ex{\mu_D}{s}{\lraname})^2\cdot \Ex{\chi_2}{s}{\lraname^{I_a}} \\
  & \geq & \nu \sum_{a\in A} (r(a)-\Ex{\chi_1}{s}{\lraname})^2\cdot \Ex{\chi_1}{s}{\lraname^{I_a}} +
  (1-\nu) \sum_{a\in A} (r(a)-\Ex{\chi_2}{s}{\lraname})^2\cdot \Ex{\chi_2}{s}{\lraname^{I_a}} \\
  & = & \nu \Ex{\chi_1}{s}{\lrvhname} +
  (1-\nu) \Ex{\chi_2}{s}{\lrvhname}
\end{eqnarray*}
Here the inequality follows from Lemma~\ref{lem:min-var}. So
\[
(\Ex{\mu_D}{s}{\lraname},\Ex{\mu_D}{s}{\lrvhname})\geq \nu (\Ex{\chi_1}{s}{\lraname},\Ex{\chi_1}{s}{\lrvhname}) + (1-\nu) (\Ex{\chi_2}{s}{\lraname},\Ex{\chi_2}{s}{\lrvhname})
\]
Finally, we show that $\Ex{\chi_1}{s}{\lrvhname}\geq \Ex{\chi_1}{s}{\lrvlname}$. Since $\chi_1$ has a single BSCC, almost all runs have the same mean payoff. Hence, $\Ex{\chi_1}{s}{\lrvhname}= \Ex{\chi_1}{s}{\lrvlname}$.
\end{proof}
\noindent
By Claim~\ref{prop:MR-MD},
 \begin{eqnarray*}
 L & \geq & d_C\cdot (\Ex{\mu_D}{s}{\lraname}, \Ex{\mu_D}{s}{\lrvhname}) +(1-d_C)\cdot
  (\Ex{\mu_{D'}}{s}{\lraname}, \Ex{\mu_{D'}}{s}{\lrvhname}) \\
   & \geq & d_C\cdot \nu\cdot (\Ex{\chi_1}{s}{\lraname}, \Ex{\chi_1}{s}{\lrvlname})
    +  d_C\cdot (1-\nu) \cdot (\Ex{\chi_2}{s}{\lraname}, \Ex{\chi_2}{s}{\lrvlname}) \\
   & & +\  (1-d_C) \cdot \nu' \cdot (\Ex{\chi'_1}{s}{\lraname}, \Ex{\chi'_1}{s}{\lrvlname})
    + (1-d_C) \cdot (1-\nu') \cdot (\Ex{\chi'_2}{s}{\lraname}, \Ex{\chi'_2}{s}{\lrvlname})
 \end{eqnarray*}
and so by Lemma~\ref{lem:approx-two}, there are $\pi_C, \pi'_C \in \{\chi_1,\chi_2,\chi'_1,\chi'_2\}$ and a number $h_C$ such that
\begin{eqnarray*}
L &= & (\Ex{\zeta}{s_0}{\lraname|R_C}, \Ex{\zeta}{s_0}{\lrvlname|R_C}) \\
  & \ge & h_C\cdot (\Ex{\pi_C}{s}{\lraname}, \Ex{\pi_C}{s}{\lrvlname}) + (1-h_C)\cdot (\Ex{\pi'_C}{s}{\lraname}, \Ex{\pi'_C}{s}{\lrvlname})
\end{eqnarray*}
Define memoryless deterministic strategies $\pi$ and $\pi'$ in $G$ so that for every $s\in S$ and $a\in A$ we have
$\pi(s)(a):=\pi_C(s)(a)$ and $\pi'(s)(a):=\pi'_C(s)(a)$ for $s\in \MECstate{C}$.

\subsubsection{Proof of Equation~(\ref{eqn-t2})}\label{app-eqn-t2}
We have
\begin{eqnarray*}
\lefteqn{(\Ex{\zeta}{s_0}{\lraname},\Ex{\zeta}{s_0}{\lrvlname}) }\\& = & \!\!\Big(\!\!\!\!\sum_{C\in \Mec(G)}\!\!\!\! \Pr{\zeta}{s_0}{R_C}\cdot \Ex{\zeta}{s_0}{\lraname\mid R_C},
   \!\!\!\!\sum_{C\in \Mec(G)}\!\!\!\! \Pr{\zeta}{s_0}{R_C}\cdot \Ex{\zeta}{s_0}{\lrvlname\mid R_C}\Big) \\
  & \geq  & \!\!\Big(\!\!\!\!\sum_{C\in \Mec(G)}\!\!\!\! \Pr{\sigma}{s_0}{R_C}{\cdot} h_C {\cdot} \Exp^{\pi}_{s[C]}[\lraname] + 
 \Pr{\sigma}{s_0}{R_C}{\cdot} (1{-}h_C) {\cdot} \Exp^{\pi'}_{s[C]}[\lraname], \\
  & &\; \sum_{C\in \Mec(G)}\!\!\!\! \Pr{\sigma}{s_0}{R_C}{\cdot} h_C {\cdot} \Exp^{\pi}_{s[C]}[\lrvlname] + 
 \Pr{\sigma}{s_0}{R_C}{\cdot} (1{-}h_C) {\cdot} \Exp^{\pi'}_{s[C]}[\lrvlname]\Big)\\
  & = &\!\! (\Ex{\sigma}{s_0}{\lraname},\Ex{\sigma}{s_0}{\lrvlname})
\end{eqnarray*}
Here $s[C]$ is an arbitrary state of $\MECstate{C}$.
\subsubsection{Proof of Theorem~\ref{thm:local-np-alg}}\label{app-local-np-alg}
First, we show that if there is $\zeta$ in $G$ such that
$(\Ex{\zeta}{s_0}{\lraname}, \Ex{\zeta}{s_0}{\lrvlname})\leq (u,v)$, then there is a strategy $\rho$ in $G[\pi,\pi']$ such that $(\Ex{\rho}{s_{in})}{\lraname^{r_1}},\Ex{\rho}{s_{in}}{\lraname^{r_2}})\leq (u,v)$. 
Consider the 3-memory stochastic update strategy $\sigma$ from Proposition~\ref{prop:local-main} satisfying
$(\Ex{\sigma}{s_0}{\lraname}, \Ex{\sigma}{s_0}{\lrvlname})\leq (u,v)$. Define a memoryless strategy $\rho$ in $G[\pi,\pi']$ that mimics $\sigma$ as follows (we denote the only memory element of $\rho$ by $\bullet$):
\begin{itemize}
\item $\rho(s_{in},\bullet)(\mathit{default}) = \alpha(m_1)$,  $\rho(s_{in},\bullet)([\pi]) = \alpha(m_2)$, $\rho(s_{in},\bullet)([\pi']) = \alpha(m'_2)$,
\item $\rho((s,m_1),\bullet)(a)=\sigma_n(s,m_1)(a)\cdot \sigma_u(a,s,m_1)(m_1)$  for all $a\in A$
\item $\rho((s,m_1),\bullet)(\pi)=\sigma_u(a,s,m_1)(m_2)$
\item $\rho((s,m_1),\bullet)(\pi')=\sigma_u(a,s,m_1)(m'_2)$
\item $\rho((s,m_2),\bullet)(\mathit{default})=\rho((s,m'_2),\bullet)(\mathit{default})=1$
\end{itemize}
It is straightforward to verify that 
\[
(\Ex{\sigma}{s_0}{\lraname}, \Ex{\sigma}{s_0}{\lrvlname})\quad = \quad (\Ex{\rho}{s_{in}}{\lraname^{r_1}}, \Ex{\rho}{s_{in}}{\lraname^{r_2}})\quad \leq \quad (u,v)
\]

\noindent
Second, we show that if there is $\rho'$ in $G[\pi,\pi']$ satisfying 
$(\Ex{\rho'}{s_{in}}{\lraname^{r_1}}, \Ex{\rho'}{s_{in}}{\lraname^{r_2}}) \leq (u,v)$, then
there is the desired 3-memory stochastic update strategy $\sigma$ in $G$. Moreover, we show that existence of such $\sigma$ is decidable in polynomial time and also that the strategy is computable in polynomial time (if it exists).

By~\cite{BBCFK:MDP-two-views}, there is a 2-memory stochastic update strategy $\sigma'$ for $G[\pi,\pi']$ such that 
\[(\Ex{\sigma'}{s_{in}}{\lraname^{r_1}}, \Ex{\sigma'}{s_{in}}{\lraname^{r_2}})\leq (u,v)\]
Moreover, existence of such $\sigma'$ is decidable in polynomial time and also $\sigma'$ is computable in polynomial time (if it exists). We show how to transform, in polynomial time, the strategy $\sigma'$ to the desired $\sigma$.

In~\cite{BBCFK:MDP-two-views}, the strategy $\sigma'$ is constructed using a memoryless deterministic strategy $\xi$ on $G[\pi,\pi']$ as follows: The strategy $\sigma'$ has two memory elements, say $n_1,n_2$. In $n_1$ the strategy $\sigma'$ behaves as a memoryless randomized strategy. After updating (stochastically) its memory element to $n_2$, which may happen {\em only} in a BSCC of $G[\pi,\pi']^{\xi}$, the strategy $\sigma'$ behaves as $\xi$ and no longer updates its memory. 
Note that if $\sigma'$ changes its memory element while still being in states of the form $(s,m_1)$ then from this moment on the second component is always $m_1$. However, such a strategy may be improved by moving to $(s,m_2)$ (or to $(s,m'_2)$) when its memory changes to $n_2$ because the values of $\vec{r}$ in states of the form $(s,m_1)$ are so large that moving to any state with $m_2$ or $m'_2$ in the second component is better than staying in them. Obviously, there are only polynomially many improvements of this kind and all of them can be done in polynomial time.

So we may safely assume that the strategy $\sigma'$ stays in $n_1$ on states of $\{(s,m_1)\mid s\in S\}$, i.e. behaves as a memoryless randomized strategy on these states. We define the 3-memory stochastic update strategy $\sigma$ on $G$ with memory elements $m_1,m_2,m'_2$  which in the memory element $m_1$ mimics the behavior of $\sigma'$ on states of the form $(s,m_1)$. Once $\sigma'$ chooses the action $[\pi]$ (or $[\pi']$) the strategy $\sigma$ changes its memory element to $m_2$ (or to $m'_2$) and starts playing according to $\pi$ (or to $\pi'$, resp.)
 
 Formally, we define
\begin{itemize}
\item $\alpha(m_1)=\sigma'_n(s_{in},n_1)(\mathit{default})$, $\alpha(m_1)=\sigma'_n(s_{in},n_1)([\pi])$ and $\alpha(m_1)=\sigma'_n(s_{in},n_1)([\pi'])$
\item $\sigma_n(s,m_1)(a)=\sigma'_n((s,m_1),n_1)(a)\  / \ \sum_{b\in A} \sigma'_n((s,m_1),n_1)(b)$  for all $a\in A$
\item $\sigma_u(a,s,m_1)(m_1)=\sum_{b\in A} \sigma'_n((s,m_1),n_1)(b)$
\item $\sigma_u(a,s,m_1)(m_2)=\sigma'_n(a,(s,m_1),n_1)([\pi])$
\item $\sigma_u(a,s,m_1)(m'_2)=\sigma'_n(a,(s,m_1),n_1)([\pi'])$
\end{itemize}
It is straightforward to verify that 
\[
(\Ex{\sigma}{s_0}{\lraname}, \Ex{\sigma}{s_0}{\lrvlname})\quad = \quad (\Ex{\sigma'}{s_{in}}{\lraname}, \Ex{\sigma'}{s_{in}}{\lrvlname})\quad \leq \quad (u,v)
\]

\subsection{Proofs for Hybrid Variance}
\label{app-hybrid}
\subsubsection{Proof of Proposition \ref{prop:relation}}
\label{app:relation}
We have
\begin{eqnarray*}
\Ex{\sigma}{s}{\lrvname}
&=&\Ex{\sigma}{s}{\lim_{n\rightarrow \infty} \frac{1}{n}\sum_{i=0}^{n-1} \big( r(A_i) - \lraname \big)^2}\\
&=&\Ex{\sigma}{s}{\lim_{n\rightarrow \infty} \frac{1}{n}\sum_{i=0}^{n-1} \big(r(A_i)^2 - 2\cdot r(A_i) \cdot\lraname^2 + \lraname^2 \big)}\\
&=&\Ex{\sigma}{s}{\lim_{n\rightarrow \infty} \frac{1}{n}\sum_{i=0}^{n-1} r(A_i)^2} -
   \Ex{\sigma}{s}{\lim_{n\rightarrow \infty} \frac{1}{n}\sum_{i=0}^{n-1} 2\cdot r(A_i) \cdot\lraname} +
   \Ex{\sigma}{s}{\lim_{n\rightarrow \infty} \frac{1}{n}\sum_{i=0}^{n-1} \lraname^2}\\
&=& \Ex{\sigma}{s}{\lim_{n\rightarrow \infty} \frac{1}{n}\sum_{i=0}^{n-1} r(A_i)^2} -
   2\cdot \Ex{\sigma}{s}{\lim_{n\rightarrow \infty} \lraname \cdot \frac{1}{n}\sum_{i=0}^{n-1} r(A_i)}\cdot
   \Ex{\sigma}{s}{\lim_{n\rightarrow \infty} \frac{1}{n}\sum_{i=0}^{n-1} \lraname^2}\\
&=& \Ex{\sigma}{s}{\lim_{n\rightarrow \infty} \frac{1}{n}\sum_{i=0}^{n-1} r(A_i)^2} -
   2\cdot \Ex{\sigma}{s}{ \lraname^2}+
   \Ex{\sigma}{s}{ \lraname^2}\\
&=& \Ex{\sigma}{s}{\lim_{n\rightarrow \infty} \frac{1}{n}\sum_{i=0}^{n-1} r(A_i)^2} -
   \Ex{\sigma}{s}{ \lraname^2}
\end{eqnarray*}
and
\begin{eqnarray*}
\Ex{\sigma}{s}{\lrvhname}
&=&\Ex{\sigma}{s}{\lim_{n\rightarrow \infty}  \frac{1}{n}\sum_{i=0}^{n-1} \left( r(A_i(\pat)) - \Ex{\sigma}{s}{\lraname} \right)^2}\\
&=& \Ex{\sigma}{s}{\lim_{n\rightarrow \infty}  \frac{1}{n}\sum_{i=0}^{n-1} r(A_i)^2}
  - \Ex{\sigma}{s}{\lim_{n\rightarrow \infty}  \frac{1}{n}\sum_{i=0}^{n-1} 2\cdot r(A_i) \cdot \Ex{\sigma}{s}{\lraname}}
   + \Ex{\sigma}{s}{\lim_{n\rightarrow \infty}  \frac{1}{n}\sum_{i=0}^{n-1} \Ex{\sigma}{s}{\lraname}^2}\\
&=&\Ex{\sigma}{s}{\lim_{n\rightarrow \infty}  \frac{1}{n}\sum_{i=0}^{n-1} r(A_i)^2}
   - 2\cdot \Ex{\sigma}{s}{\lraname}^2
   + \Ex{\sigma}{s}{\lraname}^2\\
&=&\Ex{\sigma}{s}{\lim_{n\rightarrow \infty}  \frac{1}{n}\sum_{i=0}^{n-1} r(A_i)^2}
   - \Ex{\sigma}{s}{\lraname}^2\\
\end{eqnarray*}
and so
\begin{eqnarray*}
\Va{\sigma}{s}{\lraname} + \Ex{\sigma}{s}{\lrvname}
&=& \Ex{\sigma}{s}{ \lraname^2} - \Ex{\sigma}{s}{\lraname}^2
    + \Ex{\sigma}{s}{\lim_{n\rightarrow \infty} \frac{1}{n}\sum_{i=0}^{n-1} r(A_i)^2} -
   \Ex{\sigma}{s}{ \lraname^2}\\
&=& \Ex{\sigma}{s}{\lim_{n\rightarrow \infty} \frac{1}{n}\sum_{i=0}^{n-1} r(A_i)^2} - \Ex{\sigma}{s}{\lraname}^2 = \Ex{\sigma}{s}{\lrvhname}
\end{eqnarray*}

\subsubsection{Obtaining 3-memory strategy $\sigma$.}
\label{app-hybrid-finite}

Let us fix a MDP $G=(S,A,\mathit{Act},\delta)$. We prove the following proposition.

\begin{proposition}\label{prop:hybrid-finite-strategy}
Let $s_0\in S$ and $u,v\in \Rset$. 
If there is a strategy $\zeta$ satisfying
\[
(\Ex{\zeta}{s_0}{\lraname}, \Ex{\zeta}{s_0}{\lrvhname})\quad \leq\quad (u,v);
\]
then there exists a 3-memory strategy $\sigma$ satisfying
\[
(\Ex{\sigma}{s_0}{\lraname}, \Ex{\sigma}{s_0}{\lrvhname})\quad \leq \quad (u,v).
\]
\end{proposition}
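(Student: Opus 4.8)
The plan is to reduce the statement to an ordinary two-dimensional mean-payoff problem on $G$ and then replay the argument of Proposition~\ref{prop:local-main} almost verbatim. Put $E := \Ex{\zeta}{s_0}{\lraname}$ and define an auxiliary reward function $r_E : A \to \Rset$ by $r_E(a) := (r(a) - E)^2$. Then for every run $\pat$ of $G^\zeta_{s_0}$ we have, directly from the definitions, $\lrvhname(\pat) = \limsup_{n\to\infty}\frac1n\sum_{i=0}^{n-1} r_E(A_i(\pat)) = \lraname_{r_E}(\pat)$, i.e.\ the hybrid variance of $\zeta$ is literally the mean payoff with respect to $r_E$; hence $\Ex{\zeta}{s_0}{\lrvhname} = \Ex{\zeta}{s_0}{\lraname_{r_E}}$, and likewise $\Ex{\zeta}{s_0}{\lrvhname|R_C} = \Ex{\zeta}{s_0}{\lraname_{r_E}|R_C}$ for every $C\in\Mec(G)$. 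So the hypothesis says exactly that $\zeta$ simultaneously achieves $\Ex{\zeta}{s_0}{\lraname}\le u$ and $\Ex{\zeta}{s_0}{\lraname_{r_E}}\le v$ for the two reward functions $r$ and $r_E$.

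First I would decompose over maximal end components via Lemma~\ref{lemma:stay-mec}, writing $(\Ex{\zeta}{s_0}{\lraname},\Ex{\zeta}{s_0}{\lraname_{r_E}}) = \sum_{C\in\Mec(G)} \Prb(R_C)\cdot(\Ex{\zeta}{s_0}{\lraname|R_C}, \Ex{\zeta}{s_0}{\lraname_{r_E}|R_C})$. Next, for each MEC $C$ I would establish the analogue of Proposition~\ref{prop:strong-opt-local}: there are two frequency functions $f_C,f'_C$ on $C$ and $p_C\in[0,1]$ with $p_C\,(\lraname[f_C], \lraname_{r_E}[f_C]) + (1-p_C)\,(\lraname[f'_C], \lraname_{r_E}[f'_C]) \le (\Ex{\zeta}{s_0}{\lraname|R_C}, \Ex{\zeta}{s_0}{\lraname_{r_E}|R_C})$, where $\lraname_{r_E}[f] := \sum_{a\in C} f(a)\,r_E(a)$. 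This is proved exactly as Propositions~\ref{prop:eps-opt-local} and~\ref{prop:strong-opt-local}: partition $R_C$ into $\varepsilon$-boxes according to the pair $(\lraname,\lraname_{r_E})$, reduce to two boxes by Lemma~\ref{lem:approx-two}, extract frequency functions from runs along which all action frequencies are well defined (the subsequence selection of Claim~\ref{claim:subsequence-local}, which carries over since $\lraname_{r_E}$ is itself just a mean payoff), and let $\varepsilon\to 0$. Note that because $\lraname_{r_E}$ is \emph{linear} in the frequencies a single $f_C$ would already suffice here, which is ultimately why $2$-memory is enough; I keep the two-function form only so that the downstream steps of Proposition~\ref{prop:local-main} apply without change, and the tight $2$-memory bound is recovered afterwards through the system $L_H$.

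The rest copies the end of the proof of Proposition~\ref{prop:local-main}. I would realise $f_C$ and $f'_C$ by memoryless randomized strategies, pass to their BSCCs, and de-randomize using \cite{derman1970finite} together with Lemma~\ref{lem:approx-two}, applied to the vectors of action frequencies (on which both the expected $\lraname$ and the expected $\lraname_{r_E}$ depend linearly), to get, for every $C$, two memoryless deterministic single-BSCC strategies $\pi_C,\pi'_C$ and a constant $h_C$ such that $h_C\,(\Ex{\pi_C}{s}{\lraname}, \Ex{\pi_C}{s}{\lraname_{r_E}}) + (1-h_C)\,(\Ex{\pi'_C}{s}{\lraname}, \Ex{\pi'_C}{s}{\lraname_{r_E}}) \le (\Ex{\zeta}{s_0}{\lraname|R_C}, \Ex{\zeta}{s_0}{\lraname_{r_E}|R_C})$ for all $s\in C\cap S$; I glue these into global memoryless deterministic strategies $\pi,\pi'$. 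By the results of \cite{BBCFK:MDP-two-views,EKVY:multi-objectives} there is a $2$-memory stochastic-update strategy $\sigma'$ with memory elements $m_1,m_2$ that randomizes in $m_1$, updates to $m_2$ only inside a MEC, and updates to $m_2$ inside $C$ with probability $\Prb^\zeta_{s_0}(R_C)$; I split $m_2$ into $m_2,m'_2$ (updating to $m_2$ with probability $h_C$ and to $m'_2$ with probability $1-h_C$ when inside $C$), and let $\sigma$ play according to $\pi$ in $m_2$ and according to $\pi'$ in $m'_2$. This gives a $3$-memory stochastic-update strategy $\sigma$ with $\Prb^\sigma_{s_0}(\text{update to }m_2\text{ in }C) = \Prb^\zeta_{s_0}(R_C)\cdot h_C$ and analogously for $m'_2$, so that $\Ex{\sigma}{s_0}{\lraname}\le\Ex{\zeta}{s_0}{\lraname}\le u$ and $\Ex{\sigma}{s_0}{\lraname_{r_E}}\le\Ex{\zeta}{s_0}{\lraname_{r_E}}\le v$.

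It remains to pass from $\lraname_{r_E}$ back to the hybrid variance of $\sigma$ itself. Since $\sigma$ is finite-memory, its action frequencies $f_\sigma$ exist, and by the same computation as in Lemma~\ref{lem:freq-var} (applied inside each BSCC of $G^\sigma_{s_0}$) we get $\Ex{\sigma}{s_0}{\lrvhname} = \sum_{a\in A} f_\sigma(a)\,(r(a) - \Ex{\sigma}{s_0}{\lraname})^2$, where $\Ex{\sigma}{s_0}{\lraname} = \sum_{a\in A} f_\sigma(a)\,r(a)$. By Lemma~\ref{lem:min-var} this quantity is minimised at $\Ex{\sigma}{s_0}{\lraname}$, hence bounded by $\sum_{a\in A} f_\sigma(a)\,(r(a) - E)^2 = \Ex{\sigma}{s_0}{\lraname_{r_E}} \le v$. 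Therefore $(\Ex{\sigma}{s_0}{\lraname}, \Ex{\sigma}{s_0}{\lrvhname}) \le (u,v)$, as required. I expect the main obstacle to be, just as in the local case, the per-MEC statement: an arbitrary $\zeta$ need not induce well-defined action frequencies and the $\limsup$ defining $\lraname$ (and $\lraname_{r_E}$) need not be a limit, so the delicate point is selecting runs of $R_C$ from which compatible functions $f_C,f'_C$ can be extracted and verifying that they are genuine frequency functions; everything downstream is either standard MDP theory or a direct transcription of the corresponding steps for local variance.
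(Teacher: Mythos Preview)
Your proposal is correct and follows essentially the same approach as the paper: fix $E=\Ex{\zeta}{s_0}{\lraname}$, treat the hybrid variance as the mean payoff for the auxiliary reward $a\mapsto(r(a)-E)^2$, and replay the argument of Proposition~\ref{prop:local-main} (the paper packages the intermediate step as the linear system $L_H^\zeta$ of Lemma~\ref{lemm:hybrid-finite} and Proposition~\ref{prop:strong-opt-local-h}, but the content is identical). The only noteworthy difference is your final step: the paper's construction arranges $\Ex{\sigma}{s_0}{\lraname}=E$ exactly so the center of the hybrid variance does not shift, whereas you allow $\Ex{\sigma}{s_0}{\lraname}\le E$ and recover the bound on $\Ex{\sigma}{s_0}{\lrvhname}$ via Lemma~\ref{lem:min-var}, which works equally well.
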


Intuitively the proof will resemble the proof of Proposition~\ref{prop:local-main}, and given an arbitrary 
strategy $\zeta$ with $\Ex{\zeta}{s_0}{\lraname}=u$, we will mimic the proof for the local 
variance replacing the quantity $(r(A_j(\pat))-\lra{\pat})^2$ by $(r(A_j(\pat)-u)^2$ 
appropriately. Formally, Proposition~\ref{prop:hybrid-finite-strategy} is a consequence of Lemma~\ref{lemm:hybrid-finite}.

\begin{figure}\small
\begin{align}
\mathbf{1}_{s_0}(s) + \sum_{a\in A} y_{a}\cdot \delta(a)(s) & = 
 \sum_{a\in \mathit{Act}(s)} y_{a} + y_s \hspace*{1em} \text{for all  $s\in S$}
\label{eq:yahf}\\
\sum_{s\in S}y_{s} & =  1 &
\label{eq:ys1hf}\\
 \sum_{s\in C} y_{s} & =  \sum_{a\in A\cap C} x_{a} + \sum_{a\in A\cap C} x'_{a} \hspace*{1.1em}
   \text{for all $C\in \Mec(G)$}
\label{eq:yChf}\\
\sum_{a\in A} x_{a}\cdot \delta(a)(s) & = 
\sum_{a\in \mathit{Act}(s)} x_{a} \hspace*{1em} \text{for all  $s\in S$}
\label{eq:xahf}\\
\sum_{a\in A} x'_{a}\cdot \delta(a)(s) & = 
\sum_{a\in \mathit{Act}(s)} x'_{a} \hspace*{1em} \text{for all  $s\in S$}
\label{eq:xaprimehf}\\
u & = \sum_{C \in \Mec(G)} \left(\sum_{a\in A\cap C} x_{a}\cdot r(a) + \sum_{a\in A\cap C} x'_{a}\cdot r(a)\right)
\label{eq:av-rewhf}\\
v & = \sum_{C \in \Mec(G)}\left(\sum_{a\in A\cap C} x_{a}\cdot (r(a)-u)^2 + 
\sum_{a\in A\cap C} x'_{a}\cdot (r(a)-u)^2\right) \\
x_a & \geq 0 \hspace*{1em} \text{for all  $a\in A$} \\
x'_a & \geq 0 \hspace*{1em} \text{for all  $a\in A$} 
\label{eq:var-rewhf}
\end{align}
\caption{System $L_H^\zeta$ of linear inequalities. Here $u$ and $v$ are treated as constants (see~Lemma~\ref{lemm:hybrid-finite}). 
We define $\mathbf{1}_{s_0}(s)=1$ if $s = s_0$, and $\mathbf{1}_{s_0}(s)=0$ otherwise.}
\label{system-Lzeta}
\end{figure}

\begin{lemma}\label{lemm:hybrid-finite}
Let us fix $s_0\in S$ and $u,v\in \Rset$.
\begin{enumerate}
\item Consider an arbitrary strategy $\zeta$ such that 
$(\Ex{\zeta}{s_0}{\lraname}, \Ex{\zeta}{s_0}{\lrvhname}) = (u,v)$.
Then the system $L_H^\zeta$ (Figure~\ref{system-Lzeta}) has a non-negative solution.
\item If there is a non-negative solution for the system $L_H^\zeta$ (Figure~\ref{system-Lzeta}), 
then there is a 3-memory stochastic-update strategy $\sigma$ satisfying 
$(\Ex{\sigma}{s_0}{\lraname}, \Ex{\sigma}{s_0}{\lrvhname}) = (u,v)$.
\end{enumerate}
\end{lemma}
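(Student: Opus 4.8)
The plan is to follow the proof of Proposition~\ref{prop:local-main} almost verbatim, the simplification being that, with $u$ held fixed, $a\mapsto (r(a)-u)^2$ is an ordinary reward function, so $\lrvhname$ is just the mean payoff for this reward and is \emph{linear} in the action frequencies. For a frequency function $f$ on a MEC $C$ write $\lrvname^{u}[f]:=\sum_{a\in \MECact{C}}f(a)\,(r(a)-u)^2$.

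\emph{Item~1.} Given $\zeta$, Lemma~\ref{lemma:stay-mec} gives $\sum_{C\in\Mec(G)}\Pr{\zeta}{s_0}{\staymec{C}}=1$ and $(\Ex{\zeta}{s_0}{\lraname},\Ex{\zeta}{s_0}{\lrvhname})=\sum_{C}\Pr{\zeta}{s_0}{\staymec{C}}\cdot(\Ex{\zeta}{s_0}{\lraname\mid\staymec{C}},\Ex{\zeta}{s_0}{\lrvhname\mid\staymec{C}})$. The values $y_s$ and $y_a$ are defined exactly as in \cite[Proposition~2]{BBCFK:MDP-two-views} (building on \cite{EKVY:multi-objectives}, as also done for global variance): they satisfy~\eqref{eq:yahf} and~\eqref{eq:ys1hf}, and $\sum_{s\in S\cap C}y_s=\Pr{\zeta}{s_0}{\staymec{C}}$ for every MEC $C$. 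For the frequency parts we prove the hybrid analogue of Proposition~\ref{prop:strong-opt-local}: for each MEC $C$ there are frequency functions $f_C,f'_C$ on $C$ and $p_C\in[0,1]$ with
\[
p_C(\lraname[f_C],\lrvname^{u}[f_C])+(1-p_C)(\lraname[f'_C],\lrvname^{u}[f'_C])\ \le\ (\Ex{\zeta}{s_0}{\lraname\mid\staymec{C}},\Ex{\zeta}{s_0}{\lrvhname\mid\staymec{C}})\,.
\]
Its proof mirrors that of Propositions~\ref{prop:strong-opt-local} and~\ref{prop:eps-opt-local}: partition $\staymec{C}$ into boxes by $(\lraname,\lrvhname)$-value, reduce to two boxes via Lemma~\ref{lem:approx-two}, and read off $f_C,f'_C$ from representative runs by the nested-subsequence argument of Claim~\ref{claim:subsequence-local}; here it is simpler, since $(r(a)-u)^2$ is a constant, so once the action frequencies along the chosen subsequence converge their values already equal $\lrvname^{u}$ of the limit (no appeal to Lemma~\ref{lem:min-var} is needed). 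Putting $x_a:=p_C\Pr{\zeta}{s_0}{\staymec{C}}f_C(a)$ and $x'_a:=(1-p_C)\Pr{\zeta}{s_0}{\staymec{C}}f'_C(a)$ for $a\in\MECact{C}$ (and $0$ otherwise), the frequency-function equations give~\eqref{eq:xahf} and~\eqref{eq:xaprimehf}, and together with $\sum_{s\in S\cap C}y_s=\Pr{\zeta}{s_0}{\staymec{C}}$ also~\eqref{eq:yChf}; the displayed inequality, summed over $C$, gives~\eqref{eq:av-rewhf} and the $v$-equation of $L_H^\zeta$ with $\le$ in place of $=$. (This $\le$-form is what the downstream use needs: Proposition~\ref{prop:hybrid-finite-strategy} is applied with $u,v$ set to $\Ex{\zeta}{s_0}{\lraname},\Ex{\zeta}{s_0}{\lrvhname}$; for a finite-memory $\zeta$ the conditional expectations are genuine convex combinations of $(\lraname[f],\lrvname^{u}[f])$ over the BSCCs of $G^\zeta_{s_0}$, so equality holds as stated.)

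\emph{Item~2.} From a non-negative solution $(y_\kappa),(x_a),(x'_a)$ we build $\sigma$ as in Proposition~\ref{prop:local-main}. For each MEC $C$ with $\sum_{a\in\MECact{C}}x_a>0$, the normalised vector $a\mapsto x_a/\sum_{b\in\MECact{C}}x_b$ is, by~\eqref{eq:xahf}, a frequency function on $C$; realise it by a memoryless randomised strategy and de-randomise it into memoryless deterministic single-BSCC strategies exactly as in Claim~\ref{prop:MR-MD} (using \cite{derman1970finite}, Lemma~\ref{lem:freq-var} and Lemma~\ref{lem:approx-two}); doing the same for $x'$ yields two global memoryless deterministic strategies $\pi,\pi'$ whose appropriate convex combinations attain, in each MEC, the frequency averages recorded by $x$ and by $x'$. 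Using~\eqref{eq:yahf}--\eqref{eq:yChf} and \cite[Proposition~1]{BBCFK:MDP-two-views} build a finite-memory stochastic-update strategy that reaches each MEC $C$ with probability $\sum_{s\in S\cap C}y_s$ and whose transient part uses a single memory element $m_1$; split the target element into $m_2,m'_2$, so that on reaching a MEC $C$ the strategy flips a coin of bias $h_C:=\sum_{a\in C}x_a/\sum_{s\in S\cap C}y_s$ between $m_2$ (thereafter playing $\pi$) and $m'_2$ (thereafter playing $\pi'$) and never changes its memory again. The resulting $3$-memory stochastic-update $\sigma$ satisfies, by linearity, $\Ex{\sigma}{s_0}{\lraname}=\sum_{C}\sum_{a\in C}(x_a+x'_a)r(a)=:u''\le u$, and since each post-coin branch induces a single-BSCC chain one gets $\Ex{\sigma}{s_0}{\lrvhname}=\Ex{\sigma}{s_0}{\lraname_{r^2}}-(u'')^2=\sum_{C}\sum_{a\in C}(x_a+x'_a)r(a)^2-(u'')^2=\sum_{C}\sum_{a\in C}(x_a+x'_a)(r(a)-u)^2-(u-u'')^2\le v$, where the identity $\Ex{\sigma}{s_0}{\lrvhname}=\Ex{\sigma}{s_0}{\lraname_{r^2}}-(\Ex{\sigma}{s_0}{\lraname})^2$ for such structured $\sigma$ is the one proved in Appendix~\ref{app-hybrid-details-b}.

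\emph{Main obstacle.} As for local variance, the delicate point is the hybrid analogue of Proposition~\ref{prop:strong-opt-local}: for a general, possibly infinite-memory, $\zeta$ the limits defining $\lraname$ and $\lrvhname$ need not exist, so one cannot average action frequencies directly and must use the careful subsequence selection of Claim~\ref{claim:subsequence-local} (first thinning so the mean-payoff average converges, then again so the $(r-u)^2$-average converges, then once more per action), together with the strong law of large numbers to verify that the limiting vector is a frequency function; the remaining ingredients — the reachability encoding via the $y$-variables and the de-randomisation and assembly of $\pi,\pi'$ — are exactly those already used for local variance.
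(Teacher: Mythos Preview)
Your Item~1 is essentially the paper's proof: you use the hybrid analogue of Proposition~\ref{prop:strong-opt-local} (the paper's Proposition~\ref{prop:strong-opt-local-h}), built from the $\varepsilon$-version via the subsequence argument of Claim~\ref{claim:subsequence-local}, and you set $x_a,x'_a$ and the $y$-variables exactly as the paper does. Your observation that the argument delivers $\le$ rather than $=$ in~\eqref{eq:av-rewhf}--\eqref{eq:var-rewhf} is well taken; the paper states equality but its own proof (through Lemma~\ref{lem:approx-two}) yields only the inequality, which is indeed all that the application to Proposition~\ref{prop:hybrid-finite-strategy} needs.

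Your Item~2 diverges from the paper in one respect that causes a small gap. The paper does \emph{not} de-randomise: it defines the memoryless \emph{randomised} strategies $\kappa,\kappa'$ directly from the normalised $x$- and $x'$-vectors, builds the $3$-memory strategy $\xi$ that switches to $\kappa$ or $\kappa'$ with the right BSCC probabilities, and then computes $\Ex{\xi}{s_0}{\lraname}$ and $\Ex{\xi}{s_0}{\lrvhname}$ directly, obtaining exact equality with $u$ and $v$. You instead insert the de-randomisation of Claim~\ref{prop:MR-MD} to obtain deterministic $\pi,\pi'$, and then assert that their ``appropriate convex combinations attain, in each MEC, the frequency averages recorded by $x$ and by $x'$'' and that $\Ex{\sigma}{s_0}{\lraname_{r^2}}=\sum_{C}\sum_{a\in C}(x_a+x'_a)r(a)^2$. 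This is not correct as stated: Claim~\ref{prop:MR-MD} (via Lemma~\ref{lem:approx-two}) only produces two deterministic strategies whose combination is $\le$ the target two-dimensional point, it does not reproduce the original action frequencies. After de-randomisation the realised frequencies need no longer be $x_a+x'_a$, so neither the displayed equality for $\lraname_{r^2}$ nor the one for $\lraname$ is justified. The fix is simple: drop the de-randomisation entirely (it was needed for local variance only to invoke Lemma~\ref{lem:min-var}, which, as you yourself note, is irrelevant here). Keeping $\kappa,\kappa'$ randomised, your computations go through with equality and you recover the paper's proof verbatim.
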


We start with the proof of the first item of Lemma~\ref{lemm:hybrid-finite}.
We have
\[
\Ex{\zeta}{s_0}{\lraname}=\sum_{C\in \Mec(G)} \Prb(R_C)\cdot \Ex{\zeta}{s_0}{\lraname\mid R_C}
\qquad\text{and}
\qquad
\Ex{\zeta}{s_0}{\lrvhname}=\sum_{C\in \Mec(G)} \Prb(R_C)\cdot \Ex{\zeta}{s_0}{\lrvhname\mid R_C}
\]
and thus
\begin{equation}
\left(\sum_{C\in \Mec(G)} \Prb(R_C)\cdot \Ex{\zeta}{s_0}{\lraname\mid R_C},\sum_{C\in \Mec(G)} \Prb(R_C)\cdot \Ex{\zeta}{s_0}{\lrvhname\mid R_C}\right) = (u,v)\,.
\end{equation}
Let $C$ be a MEC and consider a frequency function $f$ on $C$.
Given $u$ and $f$, define $\lraname[f]:=\sum_{a\in C} f(a)\cdot r(a)$ 
and $\lrvhname[f,u] :=\sum_{a\in C} f(a)\cdot (r(a)-u)^2$.
\begin{proposition}\label{prop:strong-opt-local-h}
Let us fix a MEC $C$.
There are two frequency functions $f_C:C\rightarrow \Rset$ and $f'_C:C\rightarrow \Rset$ on $C$, 
and a number $p_C\in [0,1]$ such that the following holds
\[
p_C\cdot (\lraname[f_C],\lrvhname[f_C,u])+(1-p_C)\cdot
 (\lraname[f'_C],\lrvhname[f'_C,u])=(\Ex{\zeta}{s_0}{\lraname|R_C}, \Ex{\zeta}{s_0}{\lrvhname|R_C})
\]
\end{proposition}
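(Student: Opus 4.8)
The plan is to mirror the proof of Proposition~\ref{prop:strong-opt-local} for local variance in Appendix~\ref{app-strong-opt}, exploiting one simplification: since $u$ here is the \emph{fixed} constant $\Ex{\zeta}{s_0}{\lraname}$, the quantity $\lrvhname[f,u]=\sum_{a\in\MECact{C}}f(a)(r(a)-u)^2$ is \emph{linear} in $f$, and for every run $\omega\in R_C$ we have $\lrvhname(\omega)=\limsup_{n}\frac1n\sum_{i=0}^{n-1}(r(A_i(\omega))-u)^2$, i.e.\ $\lrvhname$ is just the mean payoff of $\omega$ with respect to the fixed reward $\tilde r(a):=(r(a)-u)^2$. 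So the proposition concerns only conditional expected long-run averages of the two fixed rewards $r$ and $\tilde r$ on $R_C$. As in Appendix~\ref{app-strong-opt}, I would first reduce it to the relaxed version: for every $\varepsilon>0$ there are frequency functions $f_\varepsilon,f'_\varepsilon$ on $C$ and $p_\varepsilon\in[0,1]$ with
\[
p_\varepsilon(\lraname[f_\varepsilon],\lrvhname[f_\varepsilon,u])+(1-p_\varepsilon)(\lraname[f'_\varepsilon],\lrvhname[f'_\varepsilon,u])\ \le\ (\Ex{\zeta}{s_0}{\lraname\mid R_C},\Ex{\zeta}{s_0}{\lrvhname\mid R_C})+(\varepsilon,\varepsilon),
\]
and then pass to the limit using compactness of the set of frequency functions on $C$ and of $[0,1]$, continuity of $\lraname[\cdot]$ and $\lrvhname[\cdot,u]$, and the fact that a pointwise limit of frequency functions is a frequency function; a convergent subsequence of $(f_{\varepsilon_n},f'_{\varepsilon_n},p_{\varepsilon_n})$ with $\varepsilon_n\to0$ then yields $f_C,f'_C,p_C$.

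For the relaxed version I would follow the proof of Proposition~\ref{prop:eps-opt-local}. For $\ell,k\in\Zset$ let $A^{\ell,k}$ be the set of $\omega\in R_C$ with $(\ell\varepsilon,k\varepsilon)\le(\lraname(\omega),\lrvhname(\omega))<(\ell\varepsilon,k\varepsilon)+(\varepsilon,\varepsilon)$. Since $(\ell\varepsilon,k\varepsilon)\le(\lraname(\omega),\lrvhname(\omega))$ on $A^{\ell,k}$, the grid-corner combination $\sum_{\ell,k}\Prb^{\zeta}_{s_0}(A^{\ell,k}\mid R_C)\cdot(\ell\varepsilon,k\varepsilon)$ is $\le(\Ex{\zeta}{s_0}{\lraname\mid R_C},\Ex{\zeta}{s_0}{\lrvhname\mid R_C})$, and by Lemma~\ref{lem:approx-two} it can be replaced by a combination $p(\ell\varepsilon,k\varepsilon)+(1-p)(\ell'\varepsilon,k'\varepsilon)$ of two corners with positive conditional probability. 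From a run $\omega$ in each of these two cells I would extract a frequency function $f_\omega$ on $C$ by the nested-subsequence construction of Claim~\ref{claim:subsequence-local}: choose $T_i$ with $\frac1{T_i}\sum_{j<T_i}r(A_j(\omega))\to\lraname(\omega)$, refine so that $\frac1{T_i}\sum_{j<T_i}(r(A_j(\omega))-u)^2$ converges (necessarily to a value $\le\lrvhname(\omega)$, which is the $\limsup$), refine once more so that every action frequency converges, and use the strong law of large numbers to check that the limit $f_\omega$ is a genuine frequency function with $\lraname[f_\omega]=\lraname(\omega)$ and $\lrvhname[f_\omega,u]\le\lrvhname(\omega)$. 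Then $(\lraname[f_\omega],\lrvhname[f_\omega,u])\le(\ell\varepsilon,k\varepsilon)+(\varepsilon,\varepsilon)$ for the cell $A^{\ell,k}$, and putting $f_\varepsilon,f'_\varepsilon$ equal to these two frequency functions and $p_\varepsilon:=p$ gives the displayed bound.

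The above delivers only the inequality $\le$; to obtain the stated equality I would additionally argue that the target point $(\Ex{\zeta}{s_0}{\lraname\mid R_C},\Ex{\zeta}{s_0}{\lrvhname\mid R_C})$ actually lies in $P:=\{(\lraname[f],\lrvhname[f,u])\mid f\text{ a frequency function on }C\}$. Because the map $f\mapsto(\lraname[f],\lrvhname[f,u])$ is affine, $P$ is a compact convex polytope in $\Rset^2$, closed under the convex combinations on the left-hand side, so it suffices to find a single $f^*$ with $(\lraname[f^*],\lrvhname[f^*,u])$ equal to the target and then take $f_C=f'_C=f^*$. This I would get by transferring the behaviour of $\zeta$ on the runs of $R_C$ into the strongly connected MDP $C$ (by the same conditioning arguments used in the local-variance case) so as to obtain a strategy on $C$ with the same pair of conditional expected mean payoffs of $r$ and $\tilde r$, and then invoking the characterisation of achievable expected mean-payoff vectors in strongly connected MDPs from~\cite{BBCFK:MDP-two-views}, which says that such a pair is realised by a frequency function.

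I expect the main obstacle to be the subsequence-selection step underlying Claim~\ref{claim:subsequence-local}: because both $\lraname$ and $\lrvhname$ are defined by $\limsup$, the per-action frequencies along a run need not exist, so one has to build the nested subsequences in the right order (mean payoff first, then the squared-deviation average, then each action frequency) and then use the strong law of large numbers to verify that the extracted frequency vector satisfies the balance equations defining a frequency function, reproduces the mean payoff exactly, and does not increase the hybrid variance. The second delicate point is the exact-achievability claim of the previous paragraph; everything else reduces to linear manipulations with the polytope $P$.
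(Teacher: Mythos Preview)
Your proof of the inequality $\le$ is correct and is exactly the paper's argument: reduce to the $\varepsilon$-relaxed version via the grid decomposition and Lemma~\ref{lem:approx-two}, extract frequency functions from runs by the nested-subsequence construction of Claim~\ref{claim:subsequence-local} (with $(r(a)-u)^2$ replacing $(r(a)-\lra{\omega})^2$), and pass to the limit by compactness of the set of frequency functions and of $[0,1]$. You are also right that this yields only $\le$; the paper's own derivation goes from Proposition~\ref{prop:eps-opt-hybrid} (stated with $\le$) to the conclusion by a limit that can only preserve $\le$, and the local-variance analogue Proposition~\ref{prop:strong-opt-local} is stated with $\le$. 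The equality sign in the present statement is a slip; the inequality is what is proved and what suffices for the application to Lemma~\ref{lemm:hybrid-finite} (once the constructed strategy's actual expectation replaces $u$, Lemma~\ref{lem:min-var} handles the variance side).

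Your attempt to upgrade to equality, however, does not go through, and equality is in fact false. Take $C$ to be a single state with two self-loop actions $a,b$ of rewards $r(a)=1$, $r(b)=0$, and let $\zeta$ deterministically play $a^{n_1}b^{n_2}a^{n_3}\cdots$ with $n_i$ growing fast enough that the running average oscillates between values arbitrarily close to $0$ and to $1$. The unique run has $\lraname(\omega)=\limsup=1$, hence $u=\Ex{\zeta}{s_0}{\lraname}=1$, so $\tilde r(a)=0$, $\tilde r(b)=1$, and $\lrvhname(\omega)=\limsup=1$. Thus the target is $(1,1)$. But here $P=\{(\lraname[f],\lrvhname[f,1]):f\text{ a frequency function on }C\}$ is the segment from $(1,0)$ to $(0,1)$, which is convex and does not contain $(1,1)$; no convex combination of two points of $P$ equals $(1,1)$.

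The gap in your argument is the appeal to~\cite{BBCFK:MDP-two-views}: that paper does \emph{not} assert that an achievable vector of expected $\limsup$ mean payoffs in a strongly connected MDP is realised by a single frequency function. With two reward functions $r$ and $\tilde r$, the two $\limsup$'s may be attained along different time subsequences of the same run, so the pair of expectations can lie strictly outside the frequency polytope, exactly as in the example above. Your observation that $P$ is a compact convex polytope closed under the left-hand-side combinations is correct; it is the premise that the target lies in $P$ that fails.
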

\noindent
We first argue that Proposition~\ref{prop:strong-opt-local-h} gives us a solution of $L_H^{\zeta}$. 
Indeed, given $a\in A$ (or $s\in S$) denote by $C(a)$ (or $C(s)$) the MEC containing $a$ (or $s$).
For every $a\in A$ put 
\[
x_a=\Prb(R_{C(a)})\cdot p_{C(a)}\cdot f_{C(a)}(a)
\qquad
\text{and} 
\qquad
x'_a=\Prb(R_{C(a)})\cdot (1-p_{C(a)})\cdot f'_{C(a)}(a)
\] 
For every action $a\in A$ which does not belong to any MEC put $x_a=x'_a=0$. 
(1)~We have the following equality for $u$, i.e.,
\begin{eqnarray*}
u & = & \sum_{C\in \Mec(G)} \Prb(R_C)\cdot \Ex{\zeta}{s_0}{\lraname\mid R_C} \\
  & = & \sum_{C\in \Mec(G)} \Prb(R_C)\cdot (p_C\cdot \lraname[f_C]+(1-p_C)\cdot \lraname[f'_C]) \\
  & = & \sum_{C\in \Mec(G)} \Prb(R_C)\cdot p_C\cdot \lraname[f_C]+\sum_{C\in \Mec(G)} \Prb(R_C)\cdot(1-p_C)\cdot \lraname[f'_C]) \\
  & = & \sum_{C\in \Mec(G)} \Prb(R_C)\cdot p_C\cdot \sum_{a\in C} f_C(a)\cdot r(a)
   +\sum_{C\in \Mec(G)} \Prb(R_C)\cdot(1-p_C)\cdot \sum_{a\in C} f'_C(a)\cdot r(a) \\
  & = & \sum_{C\in \Mec(G)} \sum_{a\in C} \Prb(R_C)\cdot p_C\cdot f_C(a)\cdot r(a)
   +\sum_{C\in \Mec(G)} \sum_{a\in C} \Prb(R_C)\cdot(1-p_C)\cdot f'_C(a)\cdot r(a) \\
  & = & \sum_{C\in \Mec(G)} \left(\sum_{a\in C} x_a\cdot r(a)+
  \sum_{a\in C} x'_a\cdot r(a)\right) \\
\end{eqnarray*}
and
(2)~the following equality for $v$:
\begin{eqnarray*}
v & = & \sum_{C\in \Mec(G)} \Prb(R_C)\cdot \Ex{\zeta}{s_0}{\lrvhname\mid R_C} \\
  & = & \sum_{C\in \Mec(G)} \Prb(R_C)\cdot (p_C\cdot \lrvhname[f_C,u]+(1-p_C)\cdot \lrvhname[f'_C,u]) \\
  & = & \sum_{C\in \Mec(G)} \Prb(R_C)\cdot p_C\cdot \lrvhname[f_C,u]+\sum_{C\in \Mec(G)} \Prb(R_C)\cdot(1-p_C)\cdot \lrvhname[f'_C,u]) \\
  & = & \sum_{C\in \Mec(G)} \Prb(R_C)\cdot p_C\cdot \sum_{a\in C} f_C(a)\cdot (r(a)-u)^2
    +\sum_{C\in \Mec(G)} \Prb(R_C)\cdot(1-p_C)\cdot \sum_{a\in C} f'_C(a)\cdot (r(a)-u)^2) \\
  & = & \sum_{C\in \Mec(G)} \sum_{a\in C} \Prb(R_C)\cdot p_C\cdot f_C(a)\cdot (r(a)-u)^2
    +\sum_{C\in \Mec(G)} \sum_{a\in C} \Prb(R_C)\cdot(1-p_C)\cdot f'_C(a)\cdot (r(a)-u)^2 \\
  & = & \sum_{C\in \Mec(G)} \left(\sum_{a\in C} x_a\cdot (r(a)-u)^2 + \sum_{a\in C} x'_a\cdot (r(a)-u)^2\right)
\end{eqnarray*}
The appropriate values for $y_a,y_s$ can be found in the same way as in the 
proof of~\cite[Proposition~2]{BBCFK:MDP-two-views}.

It remains to prove Proposition~\ref{prop:strong-opt-local-h}.
As for the proof for local variance, we obtain the proposition from the following slightly weaker 
version
\begin{proposition}\label{prop:eps-opt-hybrid}
Let us fix a MEC $C$ and let $\varepsilon>0$. There are two frequency functions 
$f_{\varepsilon}:C\rightarrow [0,1]$ and $f'_{\varepsilon}:C\rightarrow [0,1]$, and a number $p_{\varepsilon}\in [0,1]$ such that:
\[
p_{\varepsilon}\cdot (\lraname[f_{\varepsilon}],\lrvhname[f_\varepsilon,u])+(1-p_{\varepsilon})\cdot
 (\lraname[f'_{\varepsilon}],\lrvhname[f'_\varepsilon,u])\quad \leq \quad
(\Ex{\zeta}{s_0}{\lraname}, \Ex{\zeta}{s_0}{\lrvhname})+(\varepsilon,\varepsilon)
\]
\end{proposition}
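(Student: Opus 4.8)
The plan is to run the proof of Proposition~\ref{prop:eps-opt-local} essentially verbatim, with the run-dependent quantity $(r(A_j(\omega))-\lraname(\omega))^2$ replaced everywhere by $(r(A_j(\omega))-u)^2$, where $u=\Ex{\zeta}{s_0}{\lraname}$ is now a fixed constant. Concretely, for $\ell,k\in\Zset$ let $A^{\ell,k}$ be the set of runs $\omega\in R_C$ with $(\ell\varepsilon,k\varepsilon)\le(\lraname(\omega),\lrvhname(\omega))<(\ell\varepsilon,k\varepsilon)+(\varepsilon,\varepsilon)$. Almost every run of $R_C$ lies in exactly one $A^{\ell,k}$, and for such a run the pair $(\lraname(\omega),\lrvhname(\omega))$ is dominated componentwise by $(\ell\varepsilon,k\varepsilon)+(\varepsilon,\varepsilon)$; averaging over runs of $R_C$ therefore gives $\sum_{\ell,k\in\Zset}\Pr{\zeta}{s_0}{A^{\ell,k}\mid R_C}\cdot(\ell\varepsilon,k\varepsilon)\le(\Ex{\zeta}{s_0}{\lraname\mid R_C},\Ex{\zeta}{s_0}{\lrvhname\mid R_C})$ (this is the conditioned form of the statement, matching Proposition~\ref{prop:strong-opt-local-h}). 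Applying Lemma~\ref{lem:approx-two} to this convex combination of two-dimensional points yields indices $(\ell,k),(\ell',k')$ with $\Pr{\zeta}{s_0}{A^{\ell,k}\mid R_C}>0$, $\Pr{\zeta}{s_0}{A^{\ell',k'}\mid R_C}>0$, and a number $p\in[0,1]$ such that $p\cdot(\ell\varepsilon,k\varepsilon)+(1-p)\cdot(\ell'\varepsilon,k'\varepsilon)\le(\Ex{\zeta}{s_0}{\lraname\mid R_C},\Ex{\zeta}{s_0}{\lrvhname\mid R_C})$.

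The substantive step is to extract, from a single well-chosen run, a frequency function realizing (up to $\varepsilon$) a given grid point. I would prove the exact analogue of Claim~\ref{claim:subsequence-local}: for every $\omega\in R_C$ there is a subsequence $T_1[\omega],T_2[\omega],\dots$ of indices along which $\frac1{T_i[\omega]}\sum_{j=1}^{T_i[\omega]}r(A_j(\omega))\to\lraname(\omega)$, along which $\frac1{T_i[\omega]}\sum_{j=1}^{T_i[\omega]}(r(A_j(\omega))-u)^2$ converges to some value $\le\lrvhname(\omega)$, and along which the empirical frequency of each action $a$ converges to a number $f_\omega(a)$. One extracts first a subsequence witnessing the $\limsup$ defining $\lraname(\omega)$, thins it so that the averages of $(r(A_j)-u)^2$ converge, then thins it finitely many more times (once per action) to pin down all action frequencies. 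Exactly as in the local case, the strong law of large numbers applied to the successive states reached after each firing of a fixed action shows that for almost every $\omega\in R_C$ the limit $f_\omega$ is a genuine frequency function on $C$, with $\lraname(\omega)=\lraname[f_\omega]$ and — because $a\mapsto(r(a)-u)^2$ is a fixed function, so the subsequence averages of $(r(A_j)-u)^2$ converge to $\sum_{a}f_\omega(a)(r(a)-u)^2=\lrvhname[f_\omega,u]$ — $\lrvhname(\omega)\ge\lrvhname[f_\omega,u]$. This selection-of-runs-plus-SLLN bookkeeping is the main obstacle, but it is strictly easier than its local-variance counterpart precisely because $(r(\cdot)-u)^2$ does not depend on $\omega$ (so one does not need the extra coupling between the subsequence for $\lraname(\omega)$ and the subsequence for the square deviations).

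It then remains to assemble the bound. Pick a run $\omega\in A^{\ell,k}$ and a run $\omega'\in A^{\ell',k'}$ for which $f_\omega$ and $f_{\omega'}$ are frequency functions (possible since both sets have positive conditional probability), and set $f_\varepsilon:=f_\omega$, $f'_\varepsilon:=f_{\omega'}$, $p_\varepsilon:=p$. By the previous step $(\lraname[f_\varepsilon],\lrvhname[f_\varepsilon,u])\le(\lraname(\omega),\lrvhname(\omega))\le(\ell\varepsilon,k\varepsilon)+(\varepsilon,\varepsilon)$, and likewise for the primed quantities, so
\begin{align*}
&p_\varepsilon(\lraname[f_\varepsilon],\lrvhname[f_\varepsilon,u])+(1-p_\varepsilon)(\lraname[f'_\varepsilon],\lrvhname[f'_\varepsilon,u])\\
&\qquad\le\ p(\ell\varepsilon,k\varepsilon)+(1-p)(\ell'\varepsilon,k'\varepsilon)+(\varepsilon,\varepsilon)\\
&\qquad\le\ (\Ex{\zeta}{s_0}{\lraname\mid R_C},\Ex{\zeta}{s_0}{\lrvhname\mid R_C})+(\varepsilon,\varepsilon),
\end{align*}
which is the asserted inequality. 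The passage from this $\varepsilon$-version to Proposition~\ref{prop:strong-opt-local-h} is then the same compactness/diagonal argument used in the local case: take $\varepsilon_n\to0$, pass to a subsequence along which $f_{\varepsilon_n}$, $f'_{\varepsilon_n}$ converge pointwise (to frequency functions $f_C$, $f'_C$) and $p_{\varepsilon_n}\to p_C$, and take limits.
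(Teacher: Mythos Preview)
Your proof is correct and follows essentially the same route as the paper's own argument: partition $R_C$ into $\varepsilon$-grid boxes according to $(\lraname,\lrvhname)$, apply Lemma~\ref{lem:approx-two} to reduce to two boxes, then run the analogue of Claim~\ref{claim:subsequence-local} (subsequence extraction for $\lraname$, then for the $(r(A_j)-u)^2$ averages, then action-by-action for frequencies, with SLLN to verify the flow equations) to produce the frequency functions. Your remark that the hybrid case is easier because $(r(\cdot)-u)^2$ is run-independent, and your observation that the proof actually yields the \emph{conditional} bound used in Proposition~\ref{prop:strong-opt-local-h}, are both on point.
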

\noindent
As before Proposition~\ref{prop:eps-opt-hybrid} implies Proposition~\ref{prop:strong-opt-local-h} as follows:
There is a sequence $\varepsilon_1,\varepsilon_2,\ldots$, two functions $f_C$ and $f'_C$, and $p_C\in [0,1]$ such that as $n\rightarrow \infty$ 
\begin{itemize}
\item $\varepsilon_n\rightarrow 0$ 
\item $f_{\varepsilon_n}$ converges pointwise to $f_C$
\item $f'_{\varepsilon_n}$ converges pointwise to $f'_C$
\item $p_{\varepsilon_n}$ converges to $p_C$
\end{itemize}
It is easy to show that $f_C$ as well as $f'_C$ are frequency functions. Moreover, 
as 
\[
\lim_{n\rightarrow \infty} (\Ex{\zeta}{s_0}{\lraname}, \Ex{\zeta}{s_0}{\lrvhname})+(\varepsilon_n,\varepsilon_n)=
(\Ex{\zeta}{s_0}{\lraname}, \Ex{\zeta}{s_0}{\lrvhname})
\]
and
\[
\lim_{n\rightarrow \infty} p_{\varepsilon_n}\cdot (\lraname[f_{\varepsilon_n}],\lrvhname[f_{\varepsilon_n},u])+(1-p_{\varepsilon_n})\cdot
 (\lraname[f'_{\varepsilon_n}],\lrvhname[f'_{\varepsilon_n},u])
  =  p_C\cdot (\lraname[f_C],\lrvhname[f_C,u])+(1-p_C)\cdot
 (\lraname[f'_C],\lrvhname[f'_C,u])
\]
we obtain
\[
p_C\cdot (\lraname[f_C],\lrvhname[f_C,u])+(1-p_C)\cdot
 (\lraname[f'_C],\lrvhname[f'_C,u])=(\Ex{\zeta}{s_0}{\lraname}, \Ex{\zeta}{s_0}{\lrvhname})
\]

\paragraph{Proof of Proposition~\ref{prop:eps-opt-hybrid}.}
The proof is exactly the same as proof of Proposition~\ref{prop:eps-opt-local}.
Given $\ell,k\in \Zset$ we denote by $A_H^{\ell,k}$ the set of all runs $\omega\in R_C$ such that
\[
(\ell\cdot \varepsilon,k\cdot \varepsilon) \quad \leq \quad (\lraname(\omega),\lrvhname(\omega))\quad < \quad (\ell\cdot \varepsilon,k\cdot \varepsilon)+(\varepsilon,\varepsilon)
\]
Note that 
\[
\sum_{\ell,k\in \Zset} \Prb_{s_0}^{\zeta}(A_H^{\ell,k}|R_C)\cdot (\ell\cdot \varepsilon,k\cdot \varepsilon) \quad \leq \quad
(\Ex{\zeta}{s_0}{\lraname|R_C}, \Ex{\zeta}{s_0}{\lrvhname|R_C})
\]
By Lemma~\ref{lem:approx-two}, there are $\ell,k,\ell',k'\in \Zset$ and $p\in [0,1]$ such that $\Prb_{s_0}^{\zeta}(A_H^{\ell,k}|R_C)>0$ and $\Prb_{s_0}^{\zeta}(A_H^{\ell',k'}|R_C)>0$ and 
\begin{equation}\label{eq:approx-two-h}
p\cdot (\ell\cdot \varepsilon,k\cdot \varepsilon)+(1-p)\cdot (\ell'\cdot \varepsilon,k'\cdot \varepsilon)\leq \sum_{\ell,k\in \Zset} \Prb_{s_0}^{\zeta}(A_H^{\ell,k}|R_C)\cdot (\ell\cdot \varepsilon,k\cdot \varepsilon)\leq
(\Ex{\zeta}{s_0}{\lraname|R_C}, \Ex{\zeta}{s_0}{\lrvhname|R_C})
\end{equation}
Let us focus on $(\ell\cdot \varepsilon,k\cdot \varepsilon)$ and construct a frequency function $f$ on $C$ such that
\[
 (\lraname[f],\lrvhname[f,u])
\quad \leq \quad (\ell\cdot \varepsilon,k\cdot \varepsilon)+(\varepsilon,\varepsilon)
\]
The construction is identical to the proof of the corresponding proposition for 
local variance.
\begin{claim}
For every run $\omega\in R_C$ %
there is a sequence of numbers $T_1[\omega],T_2[\omega],\ldots$ such that
all the following limits are defined:
\[
\lim_{i\rightarrow \infty} \frac{1}{T_i[\omega]} \sum_{j=1}^{T_i[\omega]} r(A_j(\omega))\quad = \quad \lraname(\omega)
\qquad
\text{and}
\qquad
\lim_{i\rightarrow \infty} \frac{1}{T_i[\omega]} \sum_{j=1}^{T_i[\omega]} (r(A_j(\omega))-u)^2\quad \le \quad \lrvhname(\omega)
\]
and for every action $a\in A$ there is a number $f_{\omega}(a)$ such that
\[
\lim_{i\rightarrow \infty} \frac{1}{T_i[\omega]} \sum_{j=1}^{T_i[\omega]} I_a(A_j(\omega))\quad = \quad f_{\omega}(a)
\]
(Here $I_a(A_j(\omega))=1$ if $A_j(\omega)=a$, and $I_a(A_j(\omega))=0$ otherwise.)

Moreover, for almost all runs $\omega$ of $R_C$ we have that $f_{\omega}$ is a frequency function on $C$ and that 
$f_{\omega}$ determines $(\lraname(\omega),\lrvhname(\omega))$, i.e.,
$\lraname(\omega)=\lraname(f_{\omega})$ and $\lrvhname(\omega)\ge\lrvhname(f_{\omega},u)$.
\end{claim}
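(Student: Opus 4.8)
The plan is to carry out, \emph{mutatis mutandis}, the same argument used to prove Claim~\ref{claim:subsequence-local}, the sole modification being that the run-dependent deviation term $(r(A_j(\omega))-\lraname(\omega))^2$ is everywhere replaced by $(r(A_j(\omega))-u)^2$, where $u=\Ex{\zeta}{s_0}{\lraname}$ is now a \emph{fixed} constant. Fix $\omega\in R_C$. First I would use the elementary fact that every real sequence has a subsequence converging to its $\limsup$ to pick indices $T'_1[\omega]<T'_2[\omega]<\cdots$ along which $\frac{1}{T'_i[\omega]}\sum_{j=1}^{T'_i[\omega]} r(A_j(\omega))\to\lraname(\omega)$. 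Applying the same fact to the sequence $i\mapsto \frac{1}{T'_i[\omega]}\sum_{j=1}^{T'_i[\omega]}(r(A_j(\omega))-u)^2$, whose $\limsup$ over $i$ is at most $\limsup_{n}\frac1n\sum_{j=1}^n(r(A_j(\omega))-u)^2=\lrvhname(\omega)$, I extract a subsequence $T''_1[\omega]<T''_2[\omega]<\cdots$ of $\{T'_i[\omega]\}$ along which this deviation average converges to a value $\le\lrvhname(\omega)$; passing to this subsequence preserves convergence of the $r$-average to $\lraname(\omega)$. Finally, enumerating the actions as $a_1,\dots,a_m$, I thin the index set iteratively: $T^{k+1}[\omega]$ is a subsequence of $T^{k}[\omega]$ along which $\frac{1}{T^{k+1}_i[\omega]}\sum_{j=1}^{T^{k+1}_i[\omega]} I_{a_{k+1}}(A_j(\omega))$ converges to some $f_\omega(a_{k+1})\in[0,1]$, starting from $T^{0}=T''$. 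The sequence $T_i[\omega]:=T^m_i[\omega]$ then has all the required limits.

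For the ``moreover'' clause I would verify that $f_\omega$ is, almost surely, a frequency function on $C$. The normalization $\sum_{a\in\MECact{C}} f_\omega(a)=1$ is immediate by exchanging the finite sum over actions with the limit, using $\sum_{a}I_a(A_j(\omega))\equiv 1$. The flow condition $\sum_{a\in\MECact{C}} f_\omega(a)\,\delta(a)(s)=\sum_{a\in\act{s}} f_\omega(a)$ for each $s\in\MECstate{C}$ is the substantive step: exactly as in Claim~\ref{claim:subsequence-local}, I define $N^{a,s}_k(\omega)$ to be the indicator that $s$ is visited immediately after the $k$-th occurrence of $a$, note that the strong law of large numbers gives $\frac1j\sum_{k=1}^j N^{a,s}_k(\omega)\to\delta(a)(s)$ for almost every $\omega$ whenever $f_\omega(a)>0$ (this uses that the successor distribution depends only on the chosen action), and reassemble the telescoping chain of equalities, the key combinatorial identity being $\frac{1}{T_i[\omega]}\sum_{j=1}^{T_i[\omega]} I_s(S_j(\omega))=\sum_{a\in\act{s}}\frac{1}{T_i[\omega]}\sum_{j=1}^{T_i[\omega]} I_a(A_j(\omega))$. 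The last two assertions are then formal: swapping the finite sum over $\MECact{C}$ with the limit along $T_i[\omega]$ gives $\lraname(\omega)=\sum_{a\in\MECact{C}} r(a)\,f_\omega(a)=\lraname[f_\omega]$, and the same manipulation applied to the convergent deviation average yields $\lrvhname(\omega)\ge\sum_{a\in\MECact{C}}(r(a)-u)^2 f_\omega(a)=\lrvhname[f_\omega,u]$.

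I expect the main obstacle to be, as in the local-variance case, organizing the (finitely many) nested subsequence extractions so that the mean, the deviation-from-$u$ average, and all $|A|$ action frequencies converge along one common index sequence, and then showing that the limit vector $f_\omega$ is a genuine frequency function rather than an arbitrary distribution on actions --- the latter being exactly what the SLLN step above secures, by tying the limiting frequencies to the transition structure of $C$. Substituting the fixed constant $u$ for the run-dependent $\lraname(\omega)$ introduces no new difficulty; if anything the hybrid case is marginally easier, since $u$ does not depend on $\omega$ and hence the deviation term and the value $\lraname(\omega)$ no longer interact, so no care is needed in the order in which $T'$ and $T''$ are extracted.
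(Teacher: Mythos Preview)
Your proposal is correct and follows essentially the same approach as the paper: the paper's proof simply states that it is identical to the proof of Claim~\ref{claim:subsequence-local} with the substitution of $(r(A_j(\omega))-u)^2$ for $(r(A_j(\omega))-\lraname(\omega))^2$, and then carries out the final chain of equalities yielding $\lrvhname(\omega)\ge\lrvhname[f_\omega,u]$ exactly as you describe. Your outline of the nested subsequence extractions, the SLLN-based verification of the flow condition, and the concluding computations all match the paper's argument.
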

\begin{proof}
The proof is identical to the proof of Claim~\ref{claim:subsequence-local}, we only substitute the equation~(\ref{eq:subsequence-local})
with
\begin{equation}
\tag{\ref{eq:subsequence-local}a}
\lim_{i\rightarrow \infty} \frac{1}{T_i[\omega]} \sum_{j=1}^{T_i[\omega]} (r(A_j(\omega))-u)^2\quad \le \quad \lrvhname(\omega)
\end{equation}
and then instead of proving $\lrvlname(\omega)= \lrvname[f_{\omega}]$ we use the equality
\begin{eqnarray*}
\lrvhname(\omega) & \ge & \lim_{i\rightarrow \infty} \frac{1}{T_i[\omega]} \sum_{j=1}^{T_i[\omega]} (r(A_j(\omega))-u)^2 \\
& = & \lim_{i\rightarrow \infty} \frac{1}{T_i[\omega]} \sum_{j=1}^{T_i[\omega]} \sum_{a\in C} I_a(A_j(\omega))\cdot (r(a)-u)^2 \\
& = & \sum_{a\in C} (r(a)-u)^2 \cdot \lim_{i\rightarrow \infty} \frac{1}{T_i[\omega]} \sum_{j=1}^{T_i[\omega]} I_a(A_j(\omega)) \\
& = & \sum_{a\in C} (r(a)-u)^2 \cdot f_{\omega}(a)\\
& = & \lrvhname[f_{\omega},u]
\end{eqnarray*}
The desired result follows.
\end{proof}

Now pick an arbitrary run $\omega$ of $A_H^{k,\ell}$ such that $f_{\omega}$ is a frequency function. Then
\[
(\lraname(f_{\omega}),\lrvhname(f_{\omega},u))\le(\lraname(\omega),\lrvhname(\omega))\leq (\ell\cdot \varepsilon,k\cdot\varepsilon)+(\varepsilon,\varepsilon)
\]
Similarly, for $\ell',k'$ we obtain $f'_{\omega}$ such that 
\[
(\lraname(f'_{\omega}),\lrvhname(f'_{\omega},u))\le(\lraname(\omega),\lrvhname(\omega))\leq (\ell'\cdot \varepsilon,k'\cdot\varepsilon)+(\varepsilon,\varepsilon)
\]
This together with equation~(\ref{eq:approx-two-h}) from page~\pageref{eq:approx-two-h} gives the desired result:
\begin{align*}
p\cdot (\lraname(f_{\omega}),\lrvhname(f_{\omega},u)) +(1-p)\cdot  (\lraname(f'_{\omega}),\lrvhname(f'_{\omega},u))
 & \leq p\cdot \left((\ell\cdot \varepsilon,k\cdot \varepsilon)+(\varepsilon,\varepsilon)\right)+(1-p)\cdot \left((\ell'\cdot \varepsilon,k'\cdot \varepsilon)+(\varepsilon,\varepsilon)\right) \\
  & \leq  (\Ex{\zeta}{s_0}{\lraname|R_C}, \Ex{\zeta}{s_0}{\lrvhname|R_C})+(\varepsilon,\varepsilon)
\end{align*}
This finishes the proof of the first item of Lemma~\ref{lemm:hybrid-finite}.

We continue with the proof of the second item of Lemma~\ref{lemm:hybrid-finite}.
Assume that the system $L_H^\zeta$ has a solution $\bar{y}_a,\bar{x}_a,\bar{x}'_a$ for every $a\in A$.
We define two memoryless strategies $\kappa$ and $\kappa'$ as follows: Given $s\in S$ and $a\in \act{s}$, we define
\[
\kappa(s)(a)=\bar{x}_a\ /\ \sum_{b\in \act{s}} \bar{x}_b
\qquad
\text{and}
\qquad
\kappa'(s)(a)=\bar{x}'_a\ /\ \sum_{b\in \act{s}} \bar{x}'_b
\]
respectively.

Using similar arguments as in~\cite{BBCFK:MDP-two-views} it can be shown that there is a 3-state stochastic update strategy $\xi$ with memory elements $m_1,m_2,m'_2$ satisfying the following: 
A run of $G^{\xi}$ starts in $s_0$ with a fixed initial distribution on memory elements. In $m_1$ the strategy plays according to a fixed memoryless strategy until the memory changes either to $m_2$, or to $m'_2$.
In $m_2$ (or in $m'_2$), the strategy $\xi$ plays according to $\kappa$ (or according to $\kappa'$, resp.) and never changes its memory element. The key ingredient is that for every BSCC $D$ of $G^{\kappa}$ we have that
\[
\Prb^{\xi}_{s_0}(\text{switch to }\kappa\text{ in }D)\quad = \quad \sum_{a\in D\cap A} \bar{x}_a\quad =:\quad \bar{x}_D
\]
and for every BSCC $D'$ of $G^{\kappa'}$ we have that
\[
\Prb^{\xi}_{s_0}(\text{switch to }\kappa'\text{ in }D')\quad = \quad \sum_{a\in D'\cap A} \bar{x}'_a\quad =:\quad \bar{x}'_{D'}
\]
Here $\Prb^{\xi}_{s_0}(\text{switch to }\kappa\text{ in }D')$ (or $\Prb^{\xi}_{s_0}(\text{switch to }\kappa'\text{ in }D')$) is the probaibility that $\xi$ switches its state to $m_2$ (or to $m'_2$) in one of the states of $D$ (or $D'$).

Given a BSCC $D$ of $G^{\xi}$, almost all runs $\omega$ of $G^{\xi}_{s_0}$ that stay in $D$ with the memory element $m_2$ have the frequency of $a\in D\cap A$ equal to $\bar{x}_a / \bar{x}_D$. Thus $\lraname(\omega)=\sum_{a\in D\cap A} \bar{x}_a / \bar{x}_D\cdot r(a)$. Similarly, if the BSCC is $D'$ and the memory element is $m'_2$, then $\lraname(\omega)=\sum_{a\in D'\cap A}\bar{x}'_a / \bar{x}'_{D'}\cdot r(a)$. Thus we have the following 
desired equalities: (1)~Equality for $u$
\begin{eqnarray*}
\Ex{\xi}{s_0}{\lraname} & = & \sum_{D\text{ is a BSCC of }G^{\kappa}} \Prb^{\xi}_{s_0}(\text{switch to }\kappa\text{ in }D)\cdot \sum_{a\in D\cap A} \bar{x}_a / \bar{x}_D \cdot r(a) +\\
 & & \quad + \sum_{D'\text{ is a BSCC of }G^{\kappa'}} \Prb^{\xi}_{s_0}(\text{switch to }\kappa'\text{ in }D')\cdot \sum_{a\in D'\cap A} \bar{x}'_a / \bar{x}'_{D'} \cdot r(a) \\
 & = & \sum_{C\in \Mec(G)} \Big(\sum_{a\in C\cap A} \bar{x}_a\cdot r(a) + \sum_{a\in C\cap A} \bar{x}'_a\cdot r(a)\Big)\\
 & = & u; 
\end{eqnarray*}
and (2)~Equality for $v$
\begin{eqnarray*}
\Ex{\xi}{s_0}{\lrvhname} & = & 
\sum_{D\text{ is a BSCC of }G^{\kappa}} \Prb^{\xi}_{s_0}(\text{switch to }\kappa\text{ in }D)\cdot \sum_{a\in D\cap A} \bar{x}_a / \bar{x}_D \cdot (r(a)-\Ex{\xi}{s_0}{\lraname})^2 \\
& & \quad + \sum_{D'\text{ is a BSCC of }G^{\kappa'}} \Prb^{\xi}_{s_0}(\text{switch to }\kappa'\text{ in }D')\cdot \sum_{a\in D'\cap A} \bar{x}'_a / \bar{x}'_{D'} \cdot (r(a)-\Ex{\xi}{s_0}{\lraname})^2 \\
& = & \sum_{D\text{ is a BSCC of }G^{\kappa}} \Prb^{\xi}_{s_0}(\text{switch to }\kappa\text{ in }D)\cdot \sum_{a\in D\cap A} \bar{x}_a / \bar{x}_D \cdot (r(a)-u)^2 \\
& & \quad + \sum_{D'\text{ is a BSCC of }G^{\kappa'}} \Prb^{\xi}_{s_0}(\text{switch to }\kappa'\text{ in }D')\cdot \sum_{a\in D'\cap A} \bar{x}'_a / \bar{x}'_{D'} \cdot (r(a)-u)^2 \\
& = & \sum_{C \in\Mec(G)}\left(\sum_{a\in C\cap A} \bar{x}_a \cdot (r(a)-u)^2 + \sum_{a\in C\cap A} \bar{x}'_a \cdot (r(a)-u)^2\right) \\
& = & v;
\end{eqnarray*}
The desired result follows.

\subsubsection{First item of Proposition~\ref{prop:hybrid-details} supposing finite-memory strategies exist}
\label{app-hybrid-details-a}
Let $\zeta$ be a strategy such that the following 
two conditions hold:
\[
(1)\ \Ex{\zeta}{s_0}{\lraname} = \ovk{u} \leq u; \qquad
(2)\ \Ex{\zeta}{s_0}{\lrvhname} = \ovk{v} \leq v. 
\]
By Proposition~\ref{prop:hybrid-finite-strategy} without loss of generality 
the strategy $\zeta$ is a finite-memory strategy.
Since $\zeta$ is a finite-memory strategy, the frequencies are well-defined,
and for an action $a\in A$, let
\[
f(a)
\coloneqq
\lim_{\ell\to\infty}
\frac{1}{\ell} \sum_{t=0}^{\ell-1} \Pr{\zeta}{s_0}{A_t=a} 
\]
denote the frequency of action $a$.
We will first show that setting
\(
x_a \coloneqq f(a)
\)
for all $a\in A$ satisfies  Eqns.~(\ref{eq:xah}), Eqns.~(\ref{eq:rewh}) and Eqns.~(\ref{eq:varh}) 
of~$L_H$.

\smallskip\noindent{\em Satisfying Eqns~\ref{eq:xah}.} 
To prove that Eqns.~(\ref{eq:xah}) are satisfied, it suffices to show that
for all $s\in S$ we have 
\[
\sum_{a\in A} f(a)\cdot \delta(a)(s) = \sum_{a \in \act{s}} f(a).
\]
We establish this below:
\begin{eqnarray*}
\sum_{a\in A} f(a)\cdot \delta(a)(s) & = &
  \sum_{a\in A}\lim_{\ell\to\infty} \frac{1}{\ell} \sum_{t=0}^{\ell-1} \Pr{\zeta}{s_0}{A_t=a}
   \cdot \delta(a)(s)\\
   & = & 
   \lim_{\ell\to\infty} \frac{1}{\ell} \sum_{t=0}^{\ell-1} 
   \sum_{a\in A} \Pr{\zeta}{s_0}{A_t=a}
   \cdot \delta(a)(s) \\
   & = &
   \lim_{\ell\to\infty} \frac{1}{\ell} \sum_{t=0}^{\ell-1} 
   \Pr{\zeta}{s_0}{S_{t+1}=s} \\
   & = &
   \lim_{\ell\to\infty} \frac{1}{\ell} \sum_{t=0}^{\ell-1} 
   \Pr{\zeta}{s_0}{S_{t}=s} \\
   & = &
   \lim_{\ell\to\infty} \frac{1}{\ell} \sum_{t=0}^{\ell-1} 
   \sum_{a \in \act{s}}
   \Pr{\zeta}{s_0}{A_{t}=a} \\
   & = &
   \sum_{a \in \act{s}}
   \lim_{\ell\to\infty} \frac{1}{\ell} \sum_{t=0}^{\ell-1} 
   \Pr{\zeta}{s_0}{A_{t}=a} \\
   & = & 
   \sum_{a \in \act{s}}
    f(a)
    \;.
\end{eqnarray*}
Here the first and the seventh equality follow from the definition of $f$.
The second and the sixth equality follow from the linearity of the limit.
The third equality follows by the definition of $\delta$.
The fourth equality is obtained from the following:
\begin{eqnarray*}
\lim_{\ell\to\infty} \frac{1}{\ell} \sum_{t=0}^{\ell-1} 
\Pr{\zeta}{s_0}{S_{t+1}=s}
 -  
\lim_{\ell\to\infty} \frac{1}{\ell} \sum_{t=0}^{\ell-1} 
\Pr{\zeta}{s_0}{S_{t}=s}
& = & 
\displaystyle
\lim_{\ell\to\infty} \frac{1}{\ell} \sum_{t=0}^{\ell-1} 
(
\Pr{\zeta}{s_0}{S_{t+1}=s}
-
\Pr{\zeta}{s_0}{S_{t}=s}
)
\\
& = & 
\displaystyle
\lim_{\ell\to\infty} \frac{1}{\ell}
(
\Pr{\zeta}{s_0}{S_{\ell+1}=s}
-
\Pr{\zeta}{s_0}{S_{1}=s}
)
=0
\end{eqnarray*}

\smallskip\noindent{\em Satisfying Eqns~\ref{eq:rewh}.}
We will show that $\sum_{a \in A} f(a) \cdot r(a) = \ovk{u}$. 
\[
\sum_{a\in A} r(a) \cdot f(a)
\quad =\quad
\sum_{a \in A}r(a)\cdot
\lim_{\ell\to\infty} \frac{1}{\ell} \sum_{t=0}^{\ell-1} 
\Pr{\zeta}{s_0}{A_t=a}
\quad =\quad
\lim_{\ell\to\infty} \frac{1}{\ell} \sum_{t=0}^{\ell-1} 
\sum_{a\in A}r(a)\cdot\Pr{\zeta}{s_0}{A_t=a}
\quad =\quad
\lim_{\ell\to\infty} \frac{1}{\ell} \sum_{t=0}^{\ell-1} 
\Ex{\zeta}{s_0}{r(A_t)}
\quad=\quad \ovk{u}
\;.
\]
Here, the first equality is the definition of $f(a)$;  
the second equality follows from the linearity of the limit; 
the third equality follows by linearity of expectation;
the fourth equality involves exchanging limit and expectation 
and follows from Lebesgue Dominated convergence theorem 
(see, e.g.~\cite[Chapter~4, Section~4]{Royden88}), 
since $|r(A_t)|\leq W$, where $W=\max_{a\in A} |r(a)|$. 
The desired result follows.

\smallskip\noindent{\em Satisfying Eqns~\ref{eq:varh}.} 
We will now show the satisfaction of Eqns~\ref{eq:varh}. 
First we have that
\[
\Ex{\zeta}{s_0}{\lrvhname} =  
\Ex{\zeta}{s_0}{\lim\sup_{\ell \to \infty} \frac{1}{\ell} \sum_{t=0}^{\ell-1}(r(A_t) - \ovk{u})^2} 
 = 
\Ex{\zeta}{s_0}{\lim_{\ell \to \infty} \frac{1}{\ell} \sum_{t=0}^{\ell-1}(r(A_t) - \ovk{u})^2} 
 = 
\lim_{\ell \to \infty} \frac{1}{\ell} \Ex{\zeta}{s_0}{\sum_{t=0}^{\ell-1}(r(A_t) - \ovk{u})^2}. 
\]
The first equality is by definition; the second equality about existence of limit 
follows from the fact that $\zeta$ is a finite-memory strategy; and 
the final equality of exchange of limit and the expectation follows from 
Lebesgue Dominated convergence theorem (see, e.g.~\cite[Chapter~4, Section~4]{Royden88}), 
since $(r(A_t)-\ovk{u})^2 \leq (2\cdot W)^2$, where $W=\max_{a\in A} |r(a)|$. 
We have
\[
\begin{array}{rcl}
\displaystyle 
\lim_{\ell \to \infty} \frac{1}{\ell} \sum_{t=0}^{\ell-1} \Ex{\zeta}{s_0}{(r(A_t) - \ovk{u})^2} 
& = & 
\displaystyle 
\lim_{\ell \to \infty} \frac{1}{\ell} \sum_{t=0}^{\ell-1} 
\Big(\Ex{\zeta}{s_0}{r^2(A_t)} 
-2\cdot \ovk{u} \cdot \Ex{\zeta}{s_0}{r(A_t)}
+ \ovk{u}^2\Big) \\[2.5ex]
& = & 
\displaystyle 
\lim_{\ell \to \infty} \frac{1}{\ell} \sum_{t=0}^{\ell-1} 
\Ex{\zeta}{s_0}{r^2(A_t)} 
-2\cdot \ovk{u} \cdot 
\lim_{\ell \to \infty} \frac{1}{\ell} \sum_{t=0}^{\ell-1} 
\Ex{\zeta}{s_0}{r(A_t)}
+ \ovk{u}^2 \\[3ex]
& = & 
\displaystyle
\sum_{a\in A} r^2(a) \cdot f(a) - 
2\cdot \ovk{u} \cdot \sum_{a \in A} r(a) \cdot f(a) 
+ \ovk{u}^2 \\[3ex]
& = & 
\displaystyle
\sum_{a\in A} r^2(a) \cdot f(a) - \bigg(\sum_{a \in A} r(a) \cdot f(a)\bigg)^2 
\end{array}
\]
The first equality is by rewriting the term within the expectation and by
linearity of expectation; the second equality is by linearity of limit; 
the third equality follows by the equality to show satisfaction of 
Eqns~\ref{eq:rewh} (it follows from the equality for Eqns~\ref{eq:rewh} 
that $\lim_{\ell \to \infty} \frac{1}{\ell} \sum_{t=0}^{\ell-1} \Ex{\zeta}{s_0}{r^2(A_t)} 
=\sum_{a \in A} r^2(a) \cdot f(a)$ by simply considering the reward function 
$r^2$ instead of $r$);
and the final equality follows from the equality to prove Eqns~\ref{eq:rewh}.
Thus we have the following equality:
\[
\sum_{a\in A} r^2(a) \cdot f(a) - \bigg(\sum_{a \in A} r(a) \cdot f(a)\bigg)^2 
 = 
\lim_{\ell \to \infty} \frac{1}{\ell} \sum_{t=0}^{\ell-1} \Ex{\zeta}{s_0}{(r(A_t) - \ovk{u})^2} \\[3ex]
 =  
\Ex{\zeta}{s_0}{\lrvhname} =\ovk{v} \leq v.
\]

Now we have to set the values for $y_\chi$, $\chi\in A\cup S$,
and prove that they satisfy the rest of $L_H$ when the values $f(a)$ are 
assigned to $x_a$. 
By Lemma~\ref{lemma:stay-mec} almost every run of $G^{\zeta}$ eventually 
stays in some MEC of $G$.
For every MEC $C$ of $G$, let $y_C$ be the probability of all runs
in $G^{\zeta}$ that eventually stay in~$C$. Note that
\[
\sum_{a\in A\cap C}
f(a)
\quad =\quad
\sum_{a\in A\cap C}
\lim_{\ell\to\infty} 
\frac{1}{\ell}
\sum_{t=0}^{\ell-1} \Pr\zeta{s_0}{A_t=a}
\quad =\quad
\lim_{\ell\to\infty} 
\frac{1}{\ell}
\sum_{t=0}^{\ell-1}
\sum_{a\in A\cap C}
\Pr\zeta{s_0}{A_t=a}
\quad =\quad
\lim_{\ell\to\infty}
\frac{1}{\ell}
\sum_{t=0}^{\ell-1} \Pr\zeta{s_0}{A_t\in C} 
\quad =\quad y_C \;.
\]
Here the last equality follows from the fact that 
$\lim_{\ell\to\infty} \Pr\zeta{s_0}{A_{{\ell}}\in C}$ 
is equal to the probability of all runs in $G^{\zeta}$ that 
eventually stay in~$C$ 
(recall that almost every run stays eventually in a MEC of $G$) and
the fact that the Ces\`{a}ro sum of a convergent sequence is equal 
to the limit of the sequence.

By the previous paragraph there is $\zeta$ such that $\Pr\zeta{s_0}{\staymec{C}} = \sum_{a\in A\cap C} f(a)$,
so we can define $y_a$ and $y_s$ in the same way as done 
in~\cite[Proposition~2]{BBCFK:MDP-two-views} (this solution is based on the results 
of~\cite{EKVY:multi-objectives}; the proof is exactly the same as the proof of \cite[Proposition~2]{BBCFK:MDP-two-views}, we only skip the part in which the assignment to $x_a$s is defined).
This completes the proof of the desired result.

\subsubsection{Proof that Eqns~\ref{eq:varh} is satisfied by $\sigma$}\label{app-hybrid-details-b}
We argue that the strategy $\sigma$ from~\cite[Proposition~1]{BBCFK:MDP-two-views} satisfies Eqns~\ref{eq:varh}.
We show that for the strategy $\sigma$ we have:
$\Ex{\sigma}{s}{\lrvhname} = \Ex{\sigma}{s}{\lraname_{r^2}} - \Ex{\sigma}{s}{\lraname}^2$.
It follows immediately that Eqns~\ref{eq:varh} is satisfied.
Since $\sigma$ is a finite-memory strategy, all the limit-superior can be replaced
with limits.
Then we use the the equality from Appendix~\ref{app:relation} where we showed that
\[
\Ex{\sigma}{s}{\lrvhname}
=\Ex{\sigma}{s}{\lim_{n\rightarrow \infty}  \frac{1}{n}\sum_{i=0}^{n-1} r(A_i)^2}
   - \Ex{\sigma}{s}{\lraname}^2
\]
which is equal to
$\Ex{\sigma}{s}{\lraname_{r^2}}
   - \Ex{\sigma}{s}{\lraname}^2$.

\subsubsection{Properties of the quadratic constraints of $L_H$.}\label{app-hybrid-semidefinite}
We now establish that the quadratic constraints of $L_H$ (i.e., 
Eqns~\ref{eq:varh}) satisfies that it is a \emph{negative semi-definite} 
constraint of \emph{rank~1}. 
Let us denote by $\vec{x}$ the vector of variables $x_a$, and 
$\vec{r}$ the vector of rewards $r(a)$, for $a \in A$.
Then the quadratic constraint of Eqns~\ref{eq:varh} is specified in matrix notation 
as: $\sum_{a \in A} x_a \cdot r^2(a) - \vec{x}^T\cdot  Q \cdot \vec{x}$, 
where $\vec{x}^T$ is the transpose of $\vec{x}$, 
and  the matrix $Q$ is as follows: $Q_{ij}=r(i)\cdot r(j)$. 
Indeed, we have
$\vec{x}^T\cdot Q \cdot \vec{x} = \vec{z}^T \cdot \vec{x}$
where $\vec{z}_i = \sum_{k\in A} x_k \cdot r(i) \cdot r(k)$
and so
\begin{eqnarray*}
\vec{x}^T\cdot Q \cdot \vec{x} &=&
 \sum_{i\in A} x_i \cdot \sum_{k\in A} x_k \cdot r(i) \cdot r(k)\\
 &=& \bigg(\sum_{i\in A} (x_ir(i))^2\bigg) + \sum_{i\in A} x_i \cdot \sum_{k\in A, k\neq i} x_k \cdot r(i) \cdot r(k)\\
 &=& \bigg(\sum_{i\in A} (x_ir(i))^2\bigg) + \sum_{i\in A} \sum_{k < i} 2\cdot x_i \cdot r(i) \cdot x_k \cdot r(k)\\
 &=& \bigg(\sum_{i\in A} x_ir(i)\bigg)^2
\end{eqnarray*}
where in the last but one equality we use an arbitrary order on $A$, and where the last equality follows by multinomial theorem.

The desired properties of $Q$ are established as follows:
\begin{itemize}
\item \emph{Negative semi-definite.} 
We argue that $Q$ is a positive semi-definite matrix.
A sufficient condition to prove that $Q$ is positive semi-definite 
is to show that for all real vectors $\vec{y}$ we have 
$\vec{y}^T \cdot Q \cdot \vec{y} \geq 0$.
For any real vector $\vec{y}$ we have 
$\vec{y}^T \cdot Q \cdot \vec{y} =(\sum_{a \in A} y_a \cdot r(a))^2 \geq 0$
(as the square of a real-number is always non-negative).
It follows that Eqns~\ref{eq:varh} is a negative semi-definite constraint.

\item \emph{Rank of $Q$ is~1.} 
We now argue that rank of $Q$ is~1. 
We observe that the matrix $Q$ with $Q_{ij}=r_i \cdot r_j$ is the 
outer-product matrix of $\vec{r}$ and $\vec{r}^T$, where $\vec{r}$ and 
$\vec{r}^T$ denote the vector of rewards and its transpose, respectively,
i.e., $Q = \vec{r} \cdot \vec{r}^T$.
Since $Q$ is obtained from a single vector (and its transpose) it follows
that $Q$ has rank~1.

\end{itemize}

\newcommand{\AS}{\mathsf{AlmostSure}}
\newcommand{\Reach}{\mathsf{Reach}}

\subsection{Details for Section~\ref{sect-zero}}
Some of our algorithms will be based on the notion of almost-sure winning for 
reachability and coB\"uchi objectives.

\smallskip\noindent{\bf Almost-sure winning, reachability and 
coB\"uchi objectives.} 
An objective $\Phi$ defines a set of runs.
For a set $B \subseteq A$ of actions, we  
(i)~recall the reachability objective $\Reach(B)$ that
specifies the set of runs  $\pat=s_1 a_1 s_2 a_2\ldots$ such that for some  
$i\geq 0$ we have $a_i \in B$ (i.e., some action from $B$ is visited at least 
once); and (ii)~define the coB\"uchi objective $\coBuchi(B)$ that 
specifies the set of runs $\pat=s_1 a_1 s_2 a_2\ldots$ such that for some $i\geq 0$ for all $j \geq i$ 
we have $a_j \in B$ (i.e., actions not in $B$ are visited finitely often).
Given an objective $\Phi$, a state $s$ is an \emph{almost-sure} winning state 
for the objective if there exists a strategy $\sigma$ (called an almost-sure winning
strategy) to ensure the objective
with probability~1, i.e., $\Pr{\sigma}{s}{\Phi}=1$.
We recall some basic results related to almost-sure winning for reachability and
coB\"uchi objectives.

\begin{theorem}[\cite{CH12,CJH04}]\label{thrm:almost}
For reachability and coB\"uchi objectives whether a state is almost-sure 
winning can be decided in polynomial time (in time $O((|S|\cdot |A|)^2)$) 
using discrete graph theoretic algorithms.
Moreover, both for reachability and coB\"uchi objectives, if there is an 
almost-sure winning strategy, then there is a memoryless pure almost-sure 
winning strategy.
\end{theorem}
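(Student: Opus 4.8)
I would prove both parts by the standard attractor-style fixpoint arguments for MDPs, doing the reachability objective first and then reducing the coB\"uchi objective to it; in both cases the fixpoint computation is a discrete graph procedure and the winning strategy it produces is memoryless and pure.

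\textbf{Reachability.} First I would reduce $\Reach(B)$ to reaching a set $T\subseteq S$ of target \emph{states}: for every $a\in B$ add a fresh absorbing state reached with probability $1$ from $\src{a}$ via $a$, and let $T$ be the set of these new states; this at most doubles the size. Then I claim the almost-sure winning set $W$ is the largest set with $T\subseteq W$ such that (i) every $s\in W\setminus T$ has an action $a\in\act{s}$ whose support $\{s' : \delta(a)(s')>0\}$ is contained in $W$, and (ii) in the sub-MDP $G_W$ obtained by keeping only $W$ and only such ``$W$-safe'' actions, every state of $W$ can reach $T$ by a finite path. This $W$ is computed by the obvious decreasing iteration: start with $W_0=S$ and repeatedly let $W_{i+1}$ be the set of states of $W_i$ that can reach $T$ in the graph of $G_{W_i}$ (which automatically discards states having no $W_i$-safe action); it stabilizes after at most $|S|$ rounds. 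For soundness, from any $s\in W$ the memoryless pure strategy that in each $s'\in W\setminus T$ plays a $W$-safe action on a shortest $G_W$-path to $T$ never leaves $W$ and, from every reachable configuration, hits $T$ within $|S|$ steps with probability at least $p_{\min}^{|S|}>0$ (where $p_{\min}$ is the least positive transition probability), hence reaches $T$ almost surely. For completeness, a discarded state $s\notin W$ was dropped in some round $i$, and I would argue by backward induction on $i$ that every (history-dependent, randomized) strategy from $s$ has positive probability of avoiding $T$ forever: either the current $W_i$ cannot be left without positive probability of landing in an already-discarded state, or $T$ is simply unreachable inside $G_{W_i}$. This yields both the polynomial-time algorithm and the memoryless pure winning strategy.

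\textbf{CoB\"uchi.} For $\coBuchi(B)$ I would first compute, by a greatest fixpoint, the largest set $S_B$ of states each having a $B$-action with support in $S_B$; inside the sub-MDP $G|_B$ induced by $S_B$ and its $S_B$-safe $B$-actions, take the end-component decomposition, and let $\mathcal{U}$ be the union of the state sets of these end components. The claim is that $s$ is almost-sure winning for $\coBuchi(B)$ iff $s$ is almost-sure winning for $\Reach(\mathcal{U})$ with $\mathcal{U}$ viewed as a target set of states. For ``$\Leftarrow$'', reach $\mathcal{U}$ almost surely by the strategy above and, on entering a state of an end component $(T',B')$ of $G|_B$, switch to a memoryless pure strategy keeping the play inside $T'$ forever using only $B'$-actions (such a strategy exists in any end component and can be chosen so that $T'$ is a single recurrent class); from then on only $B$-actions occur, so $\coBuchi(B)$ holds. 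For ``$\Rightarrow$'', by Lemma~\ref{lemma:stay-mec} almost every run eventually stays in some MEC, and on a run satisfying $\coBuchi(B)$ the actions taken infinitely often form an end component all of whose actions lie in $B$; hence the run eventually enters and stays in $\mathcal{U}$, so any strategy achieving $\coBuchi(B)$ with probability $1$ also achieves $\Reach(\mathcal{U})$ with probability $1$. Gluing the reachability strategy (memoryless pure on $S\setminus\mathcal{U}$) to the ``stay'' strategies (memoryless pure on $\mathcal{U}$) gives a single memoryless pure almost-sure winning strategy, and the safety fixpoint, the end-component decomposition, and the almost-sure reachability computation all run within $O((|S|\cdot|A|)^2)$.

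\textbf{Main obstacle.} The algorithmics are routine; the delicate point is the completeness direction of the reachability characterization --- showing a discarded state is genuinely losing against \emph{every} strategy --- and, for coB\"uchi, the careful interplay with the MEC structure via Lemma~\ref{lemma:stay-mec}, noting that it is membership in an arbitrary end component (not merely a MEC) of $G|_B$ that matters. A clean way to handle completeness uniformly is to describe $S\setminus W$ as the set from which a positional spoiling counter-strategy forces, with probability bounded away from $0$ at each visit, a step towards states already known to be losing, and then to close the argument with a Borel--Cantelli / zero-one style estimate.
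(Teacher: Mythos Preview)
The paper does not prove this theorem: it is stated as a citation of known results from \cite{CH12,CJH04} and used as a black box, so there is no ``paper's own proof'' to compare against. Your proposal is the standard attractor/end-component argument that underlies those references, and it is essentially correct.

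One small imprecision worth tightening in the coB\"uchi direction: the fact you actually need for ``$\Rightarrow$'' is not Lemma~\ref{lemma:stay-mec} (which speaks about eventually staying in a MEC of $G$) but the finer statement that for almost every run the set of states and actions visited infinitely often forms an end component of $G$. From that, together with the coB\"uchi assumption, you get an end component $(T',B')$ with $B'\subseteq B$; then the greatest-fixpoint property of $S_B$ gives $T'\subseteq S_B$, and hence $(T',B')$ is an end component of $G|_B$, so $T'\subseteq\mathcal{U}$. You clearly have this in mind (your ``Main obstacle'' paragraph says exactly the right thing), but the body of the argument should invoke the right lemma rather than Lemma~\ref{lemma:stay-mec}.
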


\smallskip\noindent{\bf Basic facts.} We will also use the following basic 
fact about \emph{finite} Markov chains. 
Given a Markov chain, and a state $s$:
(i)~\emph{(Fact~1).} 
The local variance is zero iff for every bottom scc reachable from $s$ there 
exists a reward value $r^*$ such that all rewards of the bottom scc is $r^*$.
positive.  
(ii)~\emph{(Fact~2).} 
The hybrid variance is zero iff there exists a reward value $r^*$ such that 
for every bottom scc reachable from $s$ all rewards of the bottom scc is $r^*$.
(iii)~\emph{(Fact~3).} 
The global variance is zero iff there exists a number $y$ such that 
for every bottom scc reachable from $s$ the expected mean-payoff value of 
the bottom scc is $y$.

\subsubsection{Zero Hybrid Variance}\label{app-zero-hybrid}

We establish the correctness of our algorithm with the following lemma.
\begin{lemma}\label{lemm:zero-hybrid}
Given an MDP $G=(S,A,\mathit{Act},\delta)$, a starting state $s$, and a reward function $r$,
the following assertions hold:
\begin{enumerate}
\item If $\beta$ is the output of the algorithm, then there is a strategy 
to ensure that the expectation is at most $\beta$ and the hybrid variance is zero.

\item If there is a strategy to ensure that the expectation is at most $\beta^*$ and 
the hybrid variance is zero, then the output $\beta$ of the algorithm
satisfies that $\beta \leq \beta^*$.
\end{enumerate}
\end{lemma}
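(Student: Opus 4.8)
The statement to prove is Lemma~\ref{lemm:zero-hybrid}, which establishes the correctness of the three-step algorithm for zero hybrid variance. The plan is to prove the two assertions separately, with Fact~2 (the hybrid variance is zero iff there is a single reward value $r^*$ such that every bottom SCC reachable from $s$ in the induced Markov chain uses only that reward) as the pivotal structural tool. For assertion~1, suppose the algorithm outputs $\beta = \beta_i$, where $\beta_i$ is the least reward value such that the objective $\coBuchi(A_i)$ (eventually only actions with reward $\beta_i$ are played) is almost-surely winnable, with $A_i = \{a \mid r(a) = \beta_i\}$. By Theorem~\ref{thrm:almost} there is a memoryless pure almost-sure winning strategy $\sigma$ for $\coBuchi(A_i)$. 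First I would argue that under $\sigma$, almost every run eventually plays only actions from $A_i$, hence for almost every run the partial averages of $r(A_t)$ converge to $\beta_i$, so $\Ex{\sigma}{s}{\lraname} = \beta_i$; moreover every bottom SCC of $G^\sigma_s$ uses only reward $\beta_i$, so by Fact~2 the hybrid variance is zero. This gives a strategy with expectation exactly $\beta_i = \beta$ and zero hybrid variance.

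For assertion~2, suppose there is some strategy $\zeta$ achieving expectation at most $\beta^*$ and zero hybrid variance. By Fact~2 (applied to the Markov chain $G^\zeta_s$, appropriately projecting bottom SCCs of the play back to actions of $G$), there is a single reward value $r^* = \beta_j$ for some $j$ such that almost every run under $\zeta$ eventually plays only actions with reward $\beta_j$; consequently $\Ex{\zeta}{s}{\lraname} = \beta_j$, so $\beta_j \le \beta^*$. The key observation is then that $\zeta$ itself witnesses that $\coBuchi(A_j)$ is almost-surely winnable from $s$: almost every $\zeta$-run eventually stays in actions of reward $\beta_j$, i.e. in $A_j$. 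Hence $j$ is one of the indices the algorithm considers as feasible, so the least feasible index $i$ satisfies $\beta_i \le \beta_j \le \beta^*$, giving $\beta = \beta_i \le \beta^*$ as required.

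The main obstacle, and the step deserving the most care, is the precise translation between bottom SCCs of the \emph{play} $G^\zeta_s$ (a Markov chain over $S \times \mem \times A$, possibly with infinite memory since $\zeta$ is arbitrary) and sets of actions of the \emph{MDP} $G$, so that Fact~2 can be applied and so that the existence of an almost-sure $\coBuchi(A_j)$ strategy can genuinely be extracted. For a general (infinite-memory) $\zeta$ one cannot literally speak of bottom SCCs, so I would instead argue directly: zero hybrid variance forces, for almost every run $\pat$, that $\limsup_n \frac1n \sum_{t<n}(r(A_t(\pat)) - \Ex{\zeta}{s}{\lraname})^2 = 0$, which forces $r(A_t(\pat)) \to \Ex{\zeta}{s}{\lraname}$ along a density-one set of indices; combined with the fact that rewards take finitely many values $\beta_1 < \dots < \beta_n$, this pins $\Ex{\zeta}{s}{\lraname}$ to some $\beta_j$ and shows actions of reward $\ne \beta_j$ have density zero — and in fact (using that the $\beta_i$ are isolated and that any non-$A_j$ action contributes a fixed positive amount infinitely often unless played finitely often) that almost every run plays non-$A_j$ actions only finitely often. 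This last refinement is exactly an almost-sure $\coBuchi(A_j)$ guarantee, closing the argument. The remaining pieces — that all computations (ordering rewards, and deciding almost-sure $\coBuchi$) run in $O((|S|\cdot|A|)^2)$ time — follow immediately from Theorem~\ref{thrm:almost}, yielding the first item of Theorem~\ref{thm:zero}.
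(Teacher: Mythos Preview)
Your argument for assertion~1 is correct and matches the paper's.

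Your argument for assertion~2 has a genuine gap. From $\Ex{\zeta}{s}{\lrvhname}=0$ you correctly deduce that $\lrvhname(\pat)=0$ for almost every run $\pat$, that $\Ex{\zeta}{s}{\lraname}$ must equal some reward value $\beta_j$, and that along almost every run the set of indices where a non-$A_j$ action is played has density zero. But the parenthetical step---from ``density zero'' to ``finitely often''---is false. Consider a single state with two self-loop actions $a$ (reward $0$) and $b$ (reward $1$), and the deterministic strategy that plays $b$ exactly at the perfect-square times and $a$ otherwise. Then $\lraname(\pat)=0$, $\lrvhname(\pat)=\limsup_n \tfrac{1}{n}\#\{t<n: A_t=b\}=0$, yet $b$ is played infinitely often, so $\pat\notin\coBuchi(A_0)$. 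Hence $\zeta$ itself need \emph{not} witness almost-sure $\coBuchi(A_j)$, and your ``key observation'' fails.

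The paper closes this gap by first invoking Proposition~\ref{prop:hybrid-finite-strategy}: from any $\zeta$ achieving $(\le\beta^*,0)$ one obtains a \emph{finite-memory} strategy $\sigma$ achieving the same. For finite-memory $\sigma$ the play $G^\sigma_s$ is a finite Markov chain with genuine bottom SCCs, and Fact~2 then forces every reachable bottom SCC to use a single reward $\beta_j$ (with $\beta_j=\Ex{\sigma}{s}{\lraname}\le\beta^*$). Since actions outside a bottom SCC are visited only finitely often almost surely, $\sigma$ \emph{does} almost-surely win $\coBuchi(A_j)$, and the algorithm outputs some $\beta\le\beta_j\le\beta^*$. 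So the missing ingredient in your proof is precisely this reduction to finite memory; once you add it, the rest of your outline goes through.
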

\begin{proof} 
The proofs of the items are as follows:
\begin{enumerate}
\item If the output of the algorithm is $\beta$, 
then consider $A'$ to be the set of actions with reward $\beta$. 
By step (2) of the algorithm we have that there exists an almost-sure
winning strategy for the objective $\coBuchi(A')$, and by Theorem~\ref{thrm:almost}
there exists a memoryless pure almost-sure winning strategy $\sigma$ for 
the coB\"uchi objective.
Since $\sigma$ is an almost-sure winning strategy for the coB\"uchi objective, 
it follows that in the Markov chain $G_s^\sigma$ every bottom scc $C$ reachable 
from $s$ consists of reward $\beta$ only.
Thus the expectation given the strategy $\sigma$ is $\beta$, and by Fact~2 
for Markov chains the hybrid variance is zero.

\item Consider a strategy to ensure that the expectation is at most 
$\beta^*$ with hybrid variance zero.
By the results of Proposition~\ref{prop:hybrid-finite-strategy}
there is a finite-memory strategy $\sigma$ to ensure expectation $\beta^*$ 
with hybrid variance zero. 
Given the strategy $\sigma$, if there exists an action $a$ with reward other
than $\beta^*$ that appear in a bottom scc, then the hybrid variance
is greater than zero (follows from Fact~2 for Markov chains). 
Thus every bottom scc in $G_s^\sigma$ that is reachable from $s$ consists
of reward $\beta^*$ only. 
Hence $\sigma$ is also an almost-sure winning strategy from $s$ for the objective
$\coBuchi(A^*)$, where $A^*$ is the set of actions with reward $\beta^*$.
Let $\beta^*=\beta_j$, because $\beta_j$ satisfies the requirement of step (2)
of the algorithm, we get that the output of the algorithm is a number
$\beta \leq \beta^*$. 

\end{enumerate}
The desired result follows.
\end{proof}

For reader's convenience, a formal description of the algorithm is given as Algorithm~\ref{algo:hybrid}.

\begin{algorithm}[t]
\caption{\bf Zero Hybrid Variance}
\label{algo:hybrid}
{ 
\begin{tabbing}
aa \= aa \= aaa \= aaa \= aaa \= aaa \= aaa \= aaa \kill
\> {\bf Input :} An MDP $G=(S,A,\mathit{Act},\delta)$, a starting state $s$, and a reward function $r$. \\
\> {\bf Output:} A reward value $\beta$ or \texttt{NO}. \\
\> 1. Sort the reward values $r(a)$ for $a \in A$ in an increasing order $\beta_1 < \beta_2 < \ldots < \beta_n$; \\
\> 2. $i:= 1$; \\
\> 3. {\bf repeat }   \\
\>\> 3.1. Let $A_i$ be the set of actions with reward $\beta_i$;  \\
\>\> 3.2. {\bf if} there exists an almost-sure winning strategy for $\coBuchi(A_i)$ \\
\>\>\> {\bf return} $\beta_i$; \\
\>\> 3.3 {\bf if} $i=n$  \\
\>\>\> {\bf return} \texttt{NO}; \\
\>\> 3.4 $i:=i+1$; \\
\end{tabbing}
}
\end{algorithm}

\subsubsection{Zero Local Variance}\label{app-zero-local}
For a state $s$, let $\alpha(s)$ denote the minimal expectation that can be ensured 
along with zero local variance.

Our goal is to show that $\ovk{\beta}(s)=\alpha(s)$. 
We first describe the two-step computation of $\ovk{\beta}(s)$.
\begin{enumerate}
\item Compute the set of states $U$ such that there is an almost-sure winning 
strategy for the objective $\Reach(T)$.
\item Consider the sub-MDP of $\ovk{G}$ induced by the set $U$ which is 
described as follows: $(U,A,\mathit{Act}_U,\delta)$ such that for all $s \in U$ 
we have $\mathit{Act}_U(s)=\{ a \in \mathit{Act}(s)\mid\text{for all $s'$, if } 
\delta(a)(s')>0, \text{ then } s' \in U\}$. 
In the sub-MDP compute the minimal expected payoff for the cumulative reward,
and this computation is similar to computation of optimal values for MDPs
with reachability objectives and can be achieved in polynomial time with 
linear programming.
\end{enumerate}
Note that by construction every new action $a_s$ has negative reward and all other 
actions have zero reward. 
A memoryless pure almost-sure winning strategy for a state $s$ in $U$ to reach 
$T$ ensures that the expected cumulative reward is negative, and hence
$\wh{\beta}(s)<0$ for all $s \in U$.
Also observe that if $U$ is left, then almost-sure reachability to $T$ 
cannot be ensured. 
Hence any strategy that ensures almost-sure reachability to $T$ must ensure
that $U$ is not left. 
We now claim that any memoryless pure optimal strategy in the sub-MDP 
for the cumulative reward also ensures almost-sure reachability to $T$.
Consider a memoryless pure optimal strategy $\sigma$ for the cumulative reward. 
Since every state in $T_S$ is an absorbing state (state with a self-loop) 
every bottom scc $C$ in the Markov chain is either contained in $T_S$ 
or does not intersect with $T_S$.
If there is a bottom scc $C$ that does not intersect with $T_S$, 
then the expected cumulative reward in the bottom scc is zero, 
and this is a contradiction that $\sigma$ is an optimal strategy and 
for all $s \in U$ we have $\wh{\beta}(s)<0$.
It follows that every bottom scc in the Markov chain is contained in $T_S$
and hence almost-sure reachability to $T$ is ensured.
Hence it follows that $\wh{\beta}(s)$ can be computed in polynomial time,
and thus $\ovk{\beta}(s)$ can be computed in polynomial time.
In the following two lemmas we show that $\alpha(s)=\ovk{\beta}(s)$.

\begin{lemma}\label{lemm:zero-local1}
For all states $s$ we have $\alpha(s) \geq \ovk{\beta}(s)$.
\end{lemma}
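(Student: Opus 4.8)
\noindent\emph{Proof proposal.}
The plan is to show that whenever a strategy achieves some expectation $e$ with zero expected local variance, then $e\ge\ovk{\beta}(s)$; taking the infimum over all such strategies then gives $\alpha(s)\ge\ovk{\beta}(s)$. (If no state has a finite $\beta$-value, then $T=\emptyset$, no strategy has zero local variance, $\alpha(s)=+\infty$, and there is nothing to prove, so assume otherwise.) First I would fix a strategy $\zeta$ with $\Ex{\zeta}{s}{\lrvlname}=0$ and, using Proposition~\ref{prop:local-main}, replace it by a finite-memory ($3$-memory) strategy $\sigma$ with $\Ex{\sigma}{s}{\lraname}\le\Ex{\zeta}{s}{\lraname}$ and $\Ex{\sigma}{s}{\lrvlname}\le 0$; since local variance is non-negative, $\Ex{\sigma}{s}{\lrvlname}=0$. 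As $G^\sigma_s$ is now a finite Markov chain, almost every run eventually stays in some BSCC $W$ of $G^\sigma_s$ (Lemma~\ref{lemma:stay-mec}), and by Fact~1 the assumption $\Ex{\sigma}{s}{\lrvlname}=0$ forces every reachable BSCC $W$ to be \emph{constant}: all actions occurring in $W$ carry one and the same reward $r^\ast_W$. Consequently the mean payoff of almost every run is exactly the constant $r^\ast_W$ of the BSCC it enters.

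Let me index runs by the location $\ell$ of $G^\sigma_s$ at which they first enter a BSCC, write $t(\ell)$ for the state component of $\ell$, $W(\ell)$ for its BSCC, and $q_\ell$ for the probability of this event; then $\sum_\ell q_\ell=1$ and
\[
\Ex{\sigma}{s}{\lraname}\;=\;\sum_\ell q_\ell\cdot r^\ast_{W(\ell)}\,.
\]
The key observation is that the sub-chain of $G^\sigma_s$ formed by $W(\ell)$, read from $\ell$, is a finite Markov chain in which every action has reward $r^\ast_{W(\ell)}$; it therefore witnesses a strategy in $G$ starting from state $t(\ell)$ that achieves mean payoff $r^\ast_{W(\ell)}$ with zero hybrid variance (almost all runs have the same, constant, reward, cf.\ Fact~2). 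Hence $\beta(t(\ell))$ is finite and $\beta(t(\ell))\le r^\ast_{W(\ell)}$, so in $\ovk{G}$ the state $\ovk{t(\ell)}$ and the action $a_{t(\ell)}\in T$ are available.

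Now I would define a history-dependent strategy $\tau$ in $\ovk{G}$ that plays exactly as $\sigma$ (using only actions of $A$, which have reward $0$ in $\ovk{G}$) until the run first enters a BSCC of $G^\sigma_s$, say at state $t$, and then plays $a_t$, entering the absorbing state $\ovk{t}$. Since almost every run of $G^\sigma_s$ enters a BSCC, $\tau$ ensures $\Reach(T)$ with probability one, and the only reward-bearing step along any run is the single action $a_t$, so the expected cumulative reward of $\tau$ is
\[
\sum_\ell q_\ell\bigl(\beta(t(\ell))-M\bigr)\;\le\;\sum_\ell q_\ell\bigl(r^\ast_{W(\ell)}-M\bigr)\;=\;\Ex{\sigma}{s}{\lraname}-M\,.
\]
Since $\wh{\beta}(s)$ is the minimal expected cumulative reward over strategies ensuring almost-sure reachability to $T$, we get $\wh{\beta}(s)\le\Ex{\sigma}{s}{\lraname}-M$, i.e. $\ovk{\beta}(s)=\wh{\beta}(s)+M\le\Ex{\sigma}{s}{\lraname}\le\Ex{\zeta}{s}{\lraname}$; taking the infimum over $\zeta$ yields $\ovk{\beta}(s)\le\alpha(s)$.

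The step that needs the most care is the bookkeeping linking $\sigma$ and $\tau$: making precise the partition of runs by their first-entry location into a BSCC, checking that $\Ex{\sigma}{s}{\lraname}$ really decomposes as $\sum_\ell q_\ell r^\ast_{W(\ell)}$, and that this quantity dominates the cumulative reward of $\tau$ termwise via $\beta(t(\ell))\le r^\ast_{W(\ell)}$. This rests on three ingredients, each already available: the finite-memory reduction of Proposition~\ref{prop:local-main} (so BSCCs are well-defined and exhaustive), Fact~1 for finite Markov chains (constancy of BSCCs under zero local variance), and the elementary observation that a constant-reward BSCC certifies a zero-hybrid-variance strategy of value $r^\ast_W$ from each of its entry states. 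None of these is deep, so the work is purely in writing the conditioning argument cleanly.
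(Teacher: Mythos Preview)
Your proposal is correct and follows essentially the same route as the paper's own proof: reduce to a finite-memory strategy via Proposition~\ref{prop:local-main}, use Fact~1 to force constant rewards in each reachable BSCC, note that this constant upper-bounds $\beta$ at the entry state, and then build a strategy in $\ovk{G}$ that mimics $\sigma$ until the BSCC is hit and switches to $a_t$. The only cosmetic difference is that you argue for an arbitrary zero-variance strategy and pass to the infimum, whereas the paper directly takes a strategy realizing $\alpha(s)$; your bookkeeping of the first-entry decomposition is in fact a bit more explicit than the paper's.
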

\begin{proof}
We only need to consider the case when from $s$ zero local variance 
can be ensured.
Consider a strategy that ensures expectation $\alpha(s)$ along 
with zero local variance, and by the results of Proposition~\ref{prop:local-main} 
there is a witness
finite-memory strategy $\sigma^*$.
Consider the Markov chain $G_s^{\sigma^*}$.
Consider a bottom scc $C$ of the Markov chain reachable from $s$ 
and we establish the following properties:
\begin{enumerate}
\item Every reward in the bottom scc must be the same. 
Otherwise the local variance is positive (by Fact~1 for Markov chains). 

\item Let $r^*$ be the reward of the bottom scc. 
We claim that for all states $s'$ that appears in the bottom 
scc we have $\beta(s') \leq r^*$. 
Otherwise if $\beta(s')>r^*$, playing according the strategy $\sigma$ 
in the bottom scc from $s'$ we ensure zero hybrid variance with 
expectation $r^*$ contradicting that $\beta(s')$ is the minimal
expectation along with zero hybrid variance.

\end{enumerate}
It follows that in every bottom scc $C$ of the Markov chain 
the reward $r^*$ of the bottom scc satisfy that $r^* \geq \beta(s')$,
for every $s'$ that appears in $C$. 
Also observe that the strategy $\sigma^*$ ensures almost-sure reachability
to the set $T_S$ of states where zero hybrid variance can be ensured.
We construct a strategy $\sigma$ in MDP $\ovk{G}$ as follows: the strategy 
plays as $\sigma^*$ till a bottom scc is reached, and as soon as a bottom 
scc $C$ is reached at state $s'$, the strategy in $\ovk{G}$ chooses the 
action $a_{s'}$ to proceed to the state $\ovk{s}'$.
The strategy ensures that the cumulative reward in $\ovk{G}$ is at most 
$\alpha(s) - M$, i.e., $\alpha(s)-M \geq \wh{\beta}(s)$.
It follows that $\alpha(s) \geq \ovk{\beta}(s)$.
\end{proof}

\begin{lemma}\label{lemm:zero-local2}
For all states $s$ we have $\alpha(s) \leq \beta^*(s)$.
\end{lemma}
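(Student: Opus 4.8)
The plan is to establish $\alpha(s)\le\ovk{\beta}(s)$ by producing, whenever zero local variance is achievable from $s$, a strategy in $G$ that attains expected mean payoff exactly $\ovk{\beta}(s)$ with zero expected local variance; together with Lemma~\ref{lemm:zero-local1} this gives $\alpha(s)=\ovk{\beta}(s)$. First I would dispose of the degenerate case: if no state from which zero hybrid variance is achievable can be reached almost surely from $s$, then zero local variance is impossible from $s$ (by Proposition~\ref{prop:local-main} any zero-local-variance strategy may be taken finite-memory, and then Fact~1 for finite Markov chains forces it to stay eventually in a bottom scc of constant reward $r^{*}$, from whose states looping witnesses zero hybrid variance with expectation $r^{*}$, a contradiction); in that case both $\alpha(s)$ and $\ovk{\beta}(s)$ equal $+\infty$ and there is nothing to prove. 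So assume zero local variance is achievable from $s$, which is precisely the condition that $T$ is reachable almost surely in $\ovk{G}$.

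Next I would take the memoryless pure strategy $\hat{\sigma}$ in $\ovk{G}$ that is optimal for the expected cumulative reward subject to almost-sure reachability of $T$; its existence and the fact that it genuinely reaches $T$ almost surely --- so along almost every run it performs exactly one action $a_{s'}$ and afterwards only the zero-reward self-loop at $\ovk{s}'$ --- were established in the discussion preceding the lemma. For each state $s'$ with $\beta(s')<\infty$ I would fix, via the zero hybrid variance algorithm (Lemma~\ref{lemm:zero-hybrid}) and the memoryless pure coB\"uchi almost-sure winning strategy it uses (Theorem~\ref{thrm:almost}), a strategy $\sigma_{s'}$ in $G$ from $s'$ with expected mean payoff $\beta(s')$ and zero hybrid variance; since $\sigma_{s'}$ wins $\coBuchi(A_i)$ for the set $A_i$ of actions whose reward is $\beta(s')$ almost surely, along almost every run of $G^{\sigma_{s'}}_{s'}$ all but finitely many rewards equal $\beta(s')$. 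The witness strategy $\sigma$ in $G$ then copies the action choices of $\hat{\sigma}$ (these are ordinary actions of $G$) up to the step at which $\hat{\sigma}$ would choose some $a_{s'}$ in a state $s'$, and from that moment plays $\sigma_{s'}$ started in $s'$; by the above $\sigma$ performs this switch almost surely.

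It then remains to evaluate the two quantities for $\sigma$. Since $\lraname$ and $\lrvlname$ are insensitive to finite prefixes (a prefix contributes only an $O(1)$ amount to the partial sums divided by $n$), conditioned on $\sigma$ switching in $s'$ almost every run $\pat$ has all but finitely many rewards equal to $\beta(s')$, hence $\lra{\pat}=\beta(s')$ and $\lrvl{\pat}=0$. Therefore $\Ex{\sigma}{s}{\lrvlname}=0$ and $\Ex{\sigma}{s}{\lraname}=\sum_{s'}\Pr{\hat{\sigma}}{s}{a_{s'}\text{ is eventually performed}}\cdot\beta(s')$. On the other hand, in $\ovk{G}$ the action $a_{s'}$ carries reward $\beta(s')-M$ and every other action carries reward $0$, and almost every run under $\hat{\sigma}$ performs exactly one such action; so the expected cumulative reward of $\hat{\sigma}$ equals $\sum_{s'}\Pr{\hat{\sigma}}{s}{a_{s'}\text{ is eventually performed}}\cdot(\beta(s')-M)=\Ex{\sigma}{s}{\lraname}-M$, which by optimality of $\hat{\sigma}$ is $\wh{\beta}(s)$. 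Hence $\Ex{\sigma}{s}{\lraname}=\wh{\beta}(s)+M=\ovk{\beta}(s)$, so $\alpha(s)\le\ovk{\beta}(s)$. The points needing care are the exact matching between $\wh{\beta}(s)$ in $\ovk{G}$ and the mean-payoff expectation of $\sigma$ in $G$, and the observation that the finite reachability prefix affects neither $\lra{\pat}$ nor $\lrvl{\pat}$; the rest is bookkeeping already set up in the discussion preceding the lemma.
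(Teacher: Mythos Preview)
Your proposal is correct and follows essentially the same approach as the paper: take an optimal memoryless strategy for cumulative reward in $\ovk{G}$ (which the preceding discussion shows reaches $T$ almost surely), and in $G$ mimic it until the moment it would choose some $a_{s'}$, then switch to the zero-hybrid-variance strategy from $s'$. Your version is more detailed than the paper's --- you explicitly handle the degenerate case, justify via the coB\"uchi argument why the tail has constant reward $\beta(s')$, and carry out the bookkeeping showing $\Ex{\sigma}{s}{\lraname}=\ovk{\beta}(s)$ exactly (the paper only asserts ``at most'') --- but the underlying construction is identical.
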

\begin{proof}
Consider a witness memoryless pure strategy $\sigma^*$ in $\ovk{G}$ that achieves 
the optimal cumulative reward value. 
We construct a witness strategy $\sigma$ for zero local variance in $G$ as
follows: play as $\sigma^*$ till the set $T$ is reached (note that $\sigma^*$ ensures
almost-sure reachability to $T$), and after $T$ is reached, if a state 
$\ovk{s}$ is reached, then switch to the memoryless pure strategy from $s$
to ensure expectation at most $\beta(s)$ with zero hybrid variance. 
The strategy $\sigma$ ensures that every bottom scc of the 
resulting  Markov chain consists of only one reward value.   
Hence the local variance is zero.
The expectation given strategy $\sigma$ is at most $\beta^*(s)$.
Hence the desired result follows.
\end{proof}

\subsubsection{Zero Global Variance}\label{app-zero-global}

The following lemma shows that in a MEC, any expectation in the interval is 
realizable with zero global variance.

\begin{lemma}\label{lemm:zero-global}
Given an MDP $G=(S,A,\mathit{Act},\delta)$, a starting state $s$, and a reward function $r$,
the following assertions hold:
\begin{enumerate}
\item If $\ell$ is the output of the algorithm, then there is a strategy 
to ensure that the expectation is at most $\ell$ and the global variance is zero.

\item If there is a strategy to ensure that the expectation is at most $\ell^*$ and 
the global variance is zero, then the output $\ell$ of the algorithm
satisfies that $\ell \leq \ell^*$.
\end{enumerate}
\end{lemma}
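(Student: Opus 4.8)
\emph{The plan.} I would establish the two items of Lemma~\ref{lemm:zero-global} separately, using Lemma~\ref{lem-x_c-almost-surely}, Theorem~\ref{thrm:almost} on almost-sure reachability, the $2$-memory sufficiency of Theorem~\ref{thm:global}, and Fact~3 for finite Markov chains. For Item~1, suppose the algorithm outputs $\ell=\ell_i$, so that $\mathcal{C}_i=\{C_j\mid\alpha_{C_j}\le\ell_i\le\beta_{C_j}\}$ and, by step~(4) and Theorem~\ref{thrm:almost}, there is a memoryless pure strategy $\sigma_{\mathrm{reach}}$ that almost surely reaches a state of $\bigcup_{C_j\in\mathcal{C}_i}C_j$. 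For each $C_j\in\mathcal{C}_i$ I apply Lemma~\ref{lem-x_c-almost-surely} with $z_{C_j}=\ell_i$ to obtain a memoryless randomized strategy $\sigma^{C_j}$ that stays in $C_j$ and makes $\lraname=\ell_i$ almost surely from every state of $C_j\cap S$. I then let $\sigma$ play $\sigma_{\mathrm{reach}}$ until the first visit to a state of some $C_j\in\mathcal{C}_i$, and switch to $\sigma^{C_j}$ forever afterwards. The switch occurs exactly once and almost surely in finite time, and since the mean payoff of a run ignores finite prefixes, almost every run of $G^{\sigma}_s$ has mean payoff $\ell_i$; hence $\Ex{\sigma}{s}{\lraname}=\ell_i\le\ell$ and $\Va{\sigma}{s}{\lraname}=0$, which proves Item~1.

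\emph{Item 2.} Suppose some strategy ensures expectation at most $\ell^*$ with zero global variance; by Theorem~\ref{thm:global} we may take it to be a finite-memory strategy $\sigma^*$, so $G^{\sigma^*}_s$ is a finite Markov chain. Zero variance together with Fact~3 yields a number $y$ such that every bottom SCC reachable from $s$ has expected mean payoff $y$, and since almost every run ends in such a bottom SCC and attains mean payoff $y$ there, $\Ex{\sigma^*}{s}{\lraname}=y\le\ell^*$. Each such bottom SCC $B$ projects onto an end component of $G$, hence lies in some MEC $C_{j(B)}$, and the behaviour of $\sigma^*$ within $B$ induces a strategy that stays in $C_{j(B)}$ with expected mean payoff $y$, so $\alpha_{C_{j(B)}}\le y\le\beta_{C_{j(B)}}$, i.e.\ $C_{j(B)}\in\mathcal{C}_y:=\{C\in\Mec(G)\mid\alpha_C\le y\le\beta_C\}$; consequently $\sigma^*$ witnesses almost-sure reachability to $\bigcup_{C\in\mathcal{C}_y}C$. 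Now let $\ell_k$ be the largest of the sorted values $\ell_1\le\cdots\le\ell_n$ with $\ell_k\le y$ (it exists since $\mathcal{C}_y\neq\emptyset$ and $\alpha_C\le y$ for every $C\in\mathcal{C}_y$). For $C\in\mathcal{C}_y$ one has $\alpha_C\le\ell_k$ (because $\alpha_C$ is one of the $\ell$'s and $\alpha_C\le y$) and $\ell_k\le y\le\beta_C$, hence $\mathcal{C}_y\subseteq\mathcal{C}_k$ and $\bigcup_{C\in\mathcal{C}_y}C\subseteq\bigcup_{C_j\in\mathcal{C}_k}C_j$; so $\sigma^*$ also witnesses almost-sure reachability to the latter set, index $k$ is a valid choice in step~(4), the algorithm does not output \texttt{NO}, and since it picks the least valid index $i\le k$ we obtain $\ell=\ell_i\le\ell_k\le y\le\ell^*$.

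\emph{Main obstacle.} The reachability and finite-Markov-chain bookkeeping are routine; the delicate points are, in Item~1, realising \emph{exactly} the value $\ell_i$ almost surely inside each MEC of $\mathcal{C}_i$ (which is precisely the content of Lemma~\ref{lem-x_c-almost-surely}), and, in Item~2, the claim that a bottom SCC $B$ of $G^{\sigma^*}_s$ has its common run mean payoff $y$ in $[\alpha_{C_{j(B)}},\beta_{C_{j(B)}}]$. The latter I would justify by noting that any mean payoff realised within a sub-end-component of a MEC $C$ is the expected mean payoff of some (possibly memory-using) strategy on $C$ and so lies between $\alpha_C$ and $\beta_C$; making precise that ``$\sigma^*$ restricted to $B$'' gives such a $C_{j(B)}$-strategy attaining $y$ from every state of $B$ is where I expect most of the care to go.
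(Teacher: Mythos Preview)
Your proposal is correct and follows essentially the same route as the paper: Item~1 composes an almost-sure reachability strategy (Theorem~\ref{thrm:almost}) with the MEC-local strategies of Lemma~\ref{lem-x_c-almost-surely}, and Item~2 passes to a finite-memory witness via Theorem~\ref{thm:global}, invokes Fact~3 to obtain a common bottom-SCC value $y$, and then locates a suitable index among the sorted $\alpha$-values. Your choice of index in Item~2 (the largest $\ell_k\le y$, yielding the containment $\mathcal{C}_y\subseteq\mathcal{C}_k$) is in fact tighter than the paper's choice $\ell'=\min\{\alpha_C:C\in\mathcal{C}\}$, for which the asserted containment $\mathcal{C}\subseteq\mathcal{C}_{\ell'}$ need not hold in general, so your argument is the cleaner of the two.
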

\begin{proof} 
The proof of the items are as follows:
\begin{enumerate}
\item If the output of the algorithm is $\ell$, 
then consider $\mathcal{C}$ to be the set of MEC's whose interval contains $\ell$. 
Let $A'=\bigcup_{C_j \in \mathcal{C}} C_j$.
By step~(4)(b) of the algorithm we have that there exists an almost-sure
winning strategy for the objective $\Reach(A')$, and by Theorem~\ref{thrm:almost}
there exists a memoryless pure almost-sure winning strategy $\sigma_R$ for 
the reachability objective.
We consider a strategy as follows: 
(i)~play $\sigma_R$ until an end-component in $\mathcal{C}$ is reached;
(ii)~once $A'$ is reached, consider a MEC $C_j$ that is reached 
and switch to the memoryless randomized strategy 
$\sigma_\ell$ of Lemma~\ref{lem-x_c-almost-surely} to ensure that every bottom 
scc obtained in $C_j$ by fixing $\sigma_\ell$ has expected mean-payoff 
exactly $\ell$ (i.e., it ensures expectation $\ell$ with zero global variance).
Since $\sigma$ is an almost-sure winning strategy for the reachability objective to the 
MECs in $\mathcal{C}$, and once the MECs are reached the strategy $\sigma_\ell$ 
ensures that every bottom scc of the Markov chain has expectation exactly 
$\ell$, it follows that the expectation is $\ell$ and the global variance is 
zero.

\item Consider a strategy to ensure that the expectation is at most 
$\ell^*$ and the global variance zero.
By the results of Theorem~\ref{thm:global} 
there is a finite-memory strategy $\sigma$ to ensure expectation $\ell^*$ 
with global variance zero. 
Given the strategy $\sigma$, consider the Markov chain $G_s^\sigma$. 
Let $\widehat{\mathcal{C}}=\{ \widehat{C} \mid \widehat{C} \text{ is a bottom scc reachable
from $s$ in } G_s^\sigma\}$.
Since the global variance is zero and the expectation is $\ell^*$, every 
bottom scc $\widehat{C} \in \widehat{\mathcal{C}}$ must have that the expectation 
is exactly $\ell^*$.
Let 
\[
\begin{array}{rcl}
\mathcal{C} & = & \{ C \mid \mbox{ $C$ is a MEC and there exists 
$\widehat{C} \in \widehat{\mathcal{C}}$ such that the associated end component} \\
& & 
\mbox{ of $\widehat{C}$ is contained in $C$} \}.
\end{array}
\]
For every $C \in \mathcal{C}$ we have $\ell^* \in [\alpha_C,\beta_C]$, where 
$[\alpha_C,\beta_C]$ is the interval of $C$. 
Moreover, the strategy $\sigma$ is also a witness almost-sure winning strategy 
for the reachability objective $\Reach(A')$, where 
$A'=\bigcup_{C \in \mathcal{C}} C$. 
Let $\ell' =\min \{ \alpha_C \mid \ell \text{ is the minimal expectation of } 
C \in \mathcal{C} \}$.
Since for every $C \in \mathcal{C}$ we have $\ell^* \in [\alpha_C,\beta_C]$, it follows
that $\ell'\leq \ell^*$.
Observe that if the algorithm checks the value $\ell'$ in step (4) (say $\ell'=\ell_i$),
then the condition in step (4)(3) is true true, as $A' \subseteq \bigcup_{C_j \in \mathcal{C}_i} C_j$ and $\sigma$ will be a witness
almost-sure winning strategy to reach $\bigcup_{C_j \in \mathcal{C}_i} C_j$. 
Thus the algorithm must retrun a value 
$\ell \leq \ell' \leq \ell^*$.
\end{enumerate}
The desired result follows.
\end{proof}

The above lemma ensures the correctness and the complexity analysis is as follows:
(i)~the MEC decomposition for MDPs can be computed in polynomial time~\cite{CH11,CH12} 
(hence step~1 is polynomial);
(ii)~the minimal and maximal expectation can be computed in polynomial time 
by linear programming to solve MDPs with mean-payoff objectives~\cite{Puterman:book} 
(thus step~2 is polynomial); and
(iii)~sorting (step~3) and deciding existence of almost-sure winning strategies for 
reachability objectives can be achieved in polynomial time~\cite{CH12,CJH04}.
It follows that the algorithm runs in polynomial time.

For reader's convenience, the formal description of the algorithm is given as Algorithm~\ref{algo:global}.
\begin{algorithm}[t]
\caption{\bf Zero Global Variance}
\label{algo:global}
{ 
\begin{tabbing}
aa \= aa \= aaa \= aaa \= aaa \= aaa \= aaa \= aaa \kill
\> {\bf Input :} An MDP $G=(S,A,\mathit{Act},\delta)$, a starting state $s$, and a reward function $r$. \\
\> {\bf Output:} A reward value $\beta$ or \texttt{NO}. \\
\> 1. Compute the MEC decomposition of the MDP and let the MECs be $C_1,C_2,\ldots,C_n$. \\ 
\> 2. For every MEC $C_i$ compute the minimal expectation $\alpha_{C_i}$ and the maximal \\ 
\>\> expectation $\beta_{C_i}$ that can be ensured in the MDP induced by the MEC $C_i$; \\
\> 3. Sort the values $\alpha_{C_i}$ in a non-decreasing order $\ell_1 \leq \ell_2 \leq \ldots \leq \ell_n$; \\
\> 4. $i:= 1$; \\
\> 5. {\bf repeat }   \\
\>\> 5.1. Let $\mathcal{C}_i=\{C_j \mid \alpha_{C_j} \leq \ell_i \leq \beta_{C_j}\}$ be the MEC's whose interval contains $\ell_i$; \\
\>\> 5.2. Let $A_i= \bigcup_{C_j \in \mathcal{C}_i} C_j$ be the union of the MEC's in $\mathcal{C}_i$; \\
\>\> 5.3. {\bf if} there exists an almost-sure winning strategy for $\Reach(A_i)$ \\
\>\>\> {\bf return} $\ell_i$; \\
\>\> 5.4 {\bf if} $i=n$  \\
\>\>\> {\bf return} \texttt{NO}; \\
\>\> 5.5 $i:=i+1$; \\
\end{tabbing}
}
\end{algorithm}

}
\end{document}